\documentclass[11pt]{article}
\usepackage{slantsc}
\usepackage{lmodern}
\usepackage[hypertexnames=false,pagebackref,colorlinks=true,citecolor=blue]{hyperref}
\usepackage{amsmath}
\usepackage{amsthm}
\usepackage{thmtools}
\usepackage{amssymb}
\usepackage{graphicx}
\usepackage{multicol}
\usepackage{multirow}
\usepackage{color}
\usepackage[dvips,letterpaper,margin=1in,bottom=1in]{geometry}
\usepackage[capitalize]{cleveref}

\usepackage[utf8]{inputenc}
\usepackage[english]{babel}
\usepackage{mathtools}
\usepackage{quantikz}
\usepackage{comment}
\usepackage{booktabs}

\usepackage{interval}
\intervalconfig{soft open fences}

\newtheorem{theorem}{Theorem}[section]

\newtheorem{lemma}[theorem]{Lemma}
\newtheorem{corollary}[theorem]{Corollary}
\newtheorem{proposition}[theorem]{Proposition}

\newtheorem{definition}[theorem]{Definition}

\renewcommand{\braket}[2]{\left< #1 \vphantom{#2} \middle| #2 \vphantom{#1} \right>}
\newcommand{\ketbra}[2]{\ensuremath{\ket{#1}\!\bra{#2}}}

\DeclarePairedDelimiter\rbra{\lparen}{\rparen}
\DeclarePairedDelimiter\sbra{\lbrack}{\rbrack}
\DeclarePairedDelimiter\cbra{\{}{\}}
\DeclarePairedDelimiter\abs{\lvert}{\rvert}
\DeclarePairedDelimiter\Abs{\lVert}{\rVert}

\DeclarePairedDelimiter\floor{\lfloor}{\rfloor}
\let\ket\relax
\DeclarePairedDelimiter\ket{\lvert}{\rangle}
\let\bra\relax
\DeclarePairedDelimiter\bra{\langle}{\rvert}

\newcommand{\tr} {\mathrm{Tr}}
\newcommand{\poly} {\operatorname{poly}}
\newcommand{\diag} {\operatorname{diag}}
\newcommand{\polylog} {\operatorname{polylog}}

\usepackage{enumitem}

\newcommand{\ignore}[1]{}

\DeclareMathOperator*{\Prob}{\mathrm{Pr}}
\DeclareMathOperator*{\E}{\mathbb{E}}
\DeclareMathOperator*{\Ex}{\mathbb{E}}

\newcommand{\R}{\mathbb R}
\newcommand{\C}{\mathbb C}
\newcommand{\N}{\mathbb N}
\newcommand{\Z}{\mathbb Z}

\newcommand{\eps}{\varepsilon}

\newcommand{\calA}{\mathcal{A}}

\newcommand{\calE}{\mathcal{E}}
\newcommand{\calF}{\mathcal{F}}

\newcommand{\calI}{\mathcal{I}}
\newcommand{\calJ}{\mathcal{J}}

\newcommand{\calO}{\mathcal{O}}
\newcommand{\calP}{\mathcal{P}}

\newcommand{\calU}{\mathcal{U}}

\DeclarePairedDelimiterX\diverg[2]{(}{)}{#1 \,\|\, #2}

\usepackage{algcompatible}
\algnewcommand{\algorithmicparameter}{ \textbf{Parameter:}}
\algnewcommand{\PARAMETER}[1]{%
  \item[\algorithmicparameter] #1
}
\usepackage{algorithm}
\makeatletter
\edef\ftype@algorithm{\the\c@float@type}
\makeatother

\usepackage{footnotehyper} 
\makesavenoteenv{table}  

\newcommand{\footremember}[2]{%
    \footnote{#2}
    \newcounter{#1}
    \setcounter{#1}{\value{footnote}}%
}

\usepackage{tikz}

\newcommand{\heavy}{\textrm{heavy}}
\newcommand{\tail}{\textrm{tail}}
\newcommand{\FAIL}{\mathsf{FAIL}}
\newcommand{\epsest}{\eps_{\mathrm{est}}}
\newcommand{\delest}{\delta_{\mathrm{est}}}

\newcommand{\epsFP}{\eps_{\mathrm{FP}}}
\newcommand{\delFP}{\delta_{\mathrm{FP}}}

\newcommand{\muth}{\mu_{\mathrm{th}}}

\newcommand{\restrictedmu}{\mu|_{\calI_\heavy}}
\newcommand{\barrestrictedmu}{\bar{\mu}|_{\calI_\heavy}}
\newcommand{\barrestrictednu}{\bar{\nu}|_{\calI_\heavy}}

\newcommand{\SIJ}{\sigma_i\sigma_j\sigma_i}
\newcommand{\SIJs}[1][s]{(\SIJ)^{#1}}
\newcommand{\pSIJ}{p(\SIJ)}

\newcommand{\KBp}[1]{\ketbra{\psi_{#1}}{\psi_{#1}}}

\usepackage{thmtools,thm-restate} 
\usepackage{regexpatch}
\makeatletter
\xpatchcmd\thmt@restatable{%
\csname #2\@xa\endcsname\ifx\@nx#1\@nx\else[{#1}]\fi
}{%
\ifthmt@thisistheone
\csname #2\@xa\endcsname\ifx\@nx#1\@nx\else[{#1}]\fi
\else
\csname #2\@xa\endcsname[{restated}]
\fi}{}{}
\makeatother

\title{Towards Optimal Quantum Estimators for State Frame Potential}

\author{Jinge Bao\footremember{1}{University of Edinburgh. \href{mailto:jingebao1011@gmail.com}{\nolinkurl{jingebao1011@gmail.com}}}
\and
Wang Fang\footremember{2}{United Arab Emirates University and University of Edinburgh. \href{mailto:njuwfang@gmail.com}{\nolinkurl{njuwfang@gmail.com}}}
\and
Yoshifumi Nakata\footremember{3}{Institute of Science Tokyo and Kyoto University. \href{mailto:nakata@comp.isct.ac.jp}
{\nolinkurl{nakata@comp.isct.ac.jp}}}
\and
Qisheng Wang\footremember{4}{Shanghai Jiao Tong University. \href{mailto:QishengWang1994@gmail.com}{\nolinkurl{QishengWang1994@gmail.com}}}
}

\AddToHook{cmd/maketitle/before}{%
  \AddToHookNext{shipout/foreground}{%
    \begin{tikzpicture}[remember picture,overlay]
      \node[anchor=east]
        at ([xshift=-2cm,yshift=-2cm]current page.north east) {YITP-26-89};
    \end{tikzpicture}%
  }%
}

\date{}

\begin{document}

\maketitle

\begin{abstract}
    The state frame potential is a standard diagnostic of how closely a quantum state ensemble approximates Haar randomness.
    In this work, we study the problem of estimating the state frame potential of order $t$ to within additive error $\varepsilon$ under three progressively weaker access models: (i) query access to a multi-state-preparation oracle, (ii) general sample access, and (iii) single-copy sample access.
    In the query model, we establish a near-optimal query complexity of $\widetilde{\Theta}(\sqrt{t}/\varepsilon)$, yielding a quadratic improvement in the dependence on $t$ over the previous best result of \hyperlink{cite.NTKD25}{Nakata, Takeuchi, Kliesch, and Darmawan (\textit{PRX Quantum} 2025)}.
    In the general sample model, we establish the optimal sample complexity $\Theta(t/\varepsilon^2)$.
    In the single-copy sample model, we present a \emph{store-and-estimate} approach whose sample complexity depends on the R\'enyi entropy of the ensemble weights.
    As an application, we use the single-copy algorithm to assess the randomness of projected state ensembles, where the entropy term becomes the observational R\'{e}nyi entropy associated with measuring one subsystem.
\end{abstract}

\clearpage

\tableofcontents

\clearpage

\section{Introduction}\label{sec:introduction}

Quantum randomness serves as a fundamental resource in quantum information processing.
Two canonical examples are Haar-random unitaries and Haar-random states, drawn, respectively, from the Haar measure on the unitary group and the unitarily invariant measure on the unit sphere of a Hilbert space.
Applications of quantum randomness range from algorithms~\cite{Sen06,BH13,BIS+18,AAB+19,BFNV19}, quantum cryptography~\cite{HLSW04,AS04,Kre21,MY22,AQY22}, and sensing~\cite{KRT17,KL17,KZG16,OAG+16}, to quantum communication~\cite{Dev05,DW04,ADHW09,DBWR14,SDTR13,HOW05,HOW07,NHMW17,NWY21,WN23}.
For theoretical physics, quantum randomness provides new insights to thermalization~\cite{PSW06,LPSW09,DRHRW16,KYI20,IH22,BDP23}, the holographic principle~\cite{HP07,NWK23,NMK25} and scrambling~\cite{SS08,LSH+13,MSS16,RY17,MSE+24,PCMCH24} and quantum machine learning~\cite{SJA19,HSCC22}.
For experiments, it has been used to study complex quantum many-body dynamics~\cite{CSM+23,MCS+23} and randomized benchmarking~\cite{EAZ05,KLR+08,MGE11,MGE12,KBC+14,SBM+16,GKL+21,OWE19,HRO+22,HKR22,SCC+24}.

However, a generic $n$-qubit pure state requires exponentially many parameters to describe and, in general, exponential resources to prepare. This motivates efficiently implementable substitutes for Haar randomness, including pseudorandom states~\cite{JLS18}, approximate unitary designs~\cite{MPSY24,SHH25,CSBH25}, and pseudorandom unitaries~\cite{MPSY24,MH25,SHH25}.
Here, we focus on moment-based randomness for state ensembles.
An ensemble is called a \emph{state $t$-design} if its order-$t$ moment operator coincides with that of the Haar measure~\cite{AE07,Mel24}. It therefore reproduces Haar-random statistics through order $t$.
State designs and their approximate variants also arise naturally in physical systems.
For instance, projectively measuring one subsystem of a bipartite pure state and conditioning the complementary subsystem on the measurement outcome produces a \emph{projected state ensemble}~\cite{HC22,CMH+23}.
Projected ensembles have been studied in connection with deep thermalization and emergent state designs~\cite{IH23,Varikuti:2024xeq,MSE+24}, as well as information scrambling~\cite{IH23,MCR26}. They have also been used for random-state preparation and fidelity benchmarking~\cite{CSM+23}, and for randomness conversion and shadow tomography~\cite{MHS+25}.
An ensemble need not be exactly Haar-random to be useful, motivating the problem of quantifying its moment-level randomness.

A widely used diagnostic of randomness in state ensembles is the \emph{state frame potential}, which is defined in terms of pairwise state overlaps.
Given an ensemble $\calE = \cbra{\rbra{\mu_i,\ket{\psi_i}}}$, in which $\ket{\psi_i}$ is sampled with probability $\mu_i$, the state frame potential of order $t$ is
\begin{align}
    \calF_t\rbra{\calE} = \E_{\ket{\psi},\ket{\phi}\sim\mu}\sbra*{\abs*{\braket{\psi}{\phi}}^{2t}},
\end{align} 
where $\ket{\psi}$ and $\ket{\phi}$ are sampled independently according to $\mu$.
The order-$t$ state frame potential is bounded below by the corresponding Haar value, with equality if and only if $\calE$ is a state $t$-design~\cite{NTKD25}.
State-design diagnostics closely related to the frame potential have been used to benchmark analog quantum simulators~\cite{CSM+23,SCC+24} and to study scrambling and deep thermalization~\cite{IH23,Varikuti:2024xeq,MSE+24,MCR26}; the frame potential itself has been used to quantify randomness in projected ensembles~\cite{MHS+25}.

Despite the widespread use of the frame potential, algorithms for its estimation and the associated computational complexity have only recently been studied~\cite{NTKD25}.
In this work, we consider three access models for the estimation task and determine the (near-)optimal cost of estimating $\calF_t\rbra{\calE}$ in each model.
The three access models are: (i) the \emph{query model}, where one is given a multi-state-preparation oracle for the states~\cite{NTKD25} and, for a weighted ensemble, coherent sampling-oracle access to its weights~\cite{Bel19}; (ii) the \emph{general sample model}, where one can obtain multiple copies of $\ket{\psi}$ with $\ket{\psi}\sim\mu$; and (iii) the \emph{single-copy sample model}, where each sample provides a single labeled copy $\ket{\psi}\sim\mu$.

These models correspond to different levels of experimental control.
The single-copy sample model applies when copies of the same state cannot be produced on demand, as is often the case for projected state ensembles obtained by measuring part of an entangled state.
The general sample model applies when each sampled state can be reproduced multiple times, for example, by running a randomized circuit several times with the same random seed.
Coherently implementing such circuit access realizes a multi-state-preparation oracle and hence the query model.
Previous work~\cite{NTKD25} considered only the query model and gave an $O\rbra{t/\eps}$-query algorithm without a matching lower bound.

\section{Results}\label{sec:main_results}

In this work, we give algorithms and corresponding lower bounds under all three access models:
\begin{itemize}
    \item[(i)] \textit{Query model}. We give an algorithm with query complexity $O\rbra{\sqrt{t\log\rbra{1/\eps}}/\eps}$, which improves the dependence on $t$ quadratically over~\cite{NTKD25}, and an almost matching lower bound $\Omega\rbra{\sqrt{t}/\eps}$. 
    \item[(ii)] \textit{General sample model}. We give an algorithm using $O\rbra{t/\eps^2}$ samples and a matching lower bound $\Omega\rbra{t/\eps^2}$, which settles the optimal sample complexity in this model.
    \item[(iii)] \textit{Single-copy sample model}. We give a \emph{store-and-estimate} algorithm whose sample complexity is controlled by the R\'enyi-$\alpha$ entropy $\mathrm{H}_\alpha\rbra{\mu}$ of the ensemble weights. This entropy dependence has a direct operational meaning: ensembles concentrated on a small effective support can be estimated with a number of samples independent of the Hilbert-space dimension. The universal lower bound $\Omega\rbra{t/\eps^2}$ applies in this model as well.
\end{itemize}
\cref{tab:summary} compares our results with prior work.

As an application, we use the single-copy estimation algorithm to assess the randomness of projected state ensembles.
In this setting, the entropy term in the sample-complexity upper bound becomes the observational R\'{e}nyi entropy~\cite{Safranek:2019nwk,Safranek:2019wml,Safranek:2020tgg,Sinha:2023rwr} associated with measuring one subsystem, making the bound directly interpretable in experimentally relevant settings.

\begin{table}[htbp]
    \caption{Resource bounds for estimating the state frame potential $\calF_t\rbra{\calE}$ to additive error $\eps$ with constant success probability. The query and general-sample upper bounds match the corresponding lower bounds up to logarithmic factors. The single-copy upper bound additionally depends on the entropy of the ensemble weights.}
    \label{tab:summary}
    \centering
    \renewcommand{\arraystretch}{1.2}
    \begin{tabular}{l c c c}
    \toprule
    \begin{tabular}{c} Access model \end{tabular}
    & \begin{tabular}{c} Prior work \end{tabular}
    & \begin{tabular}{c} Upper bound \end{tabular}
    & \begin{tabular}{c} Lower bound \end{tabular} \\
    \midrule
    \begin{tabular}{c} Query \end{tabular}
    & \begin{tabular}{c} $O\rbra{t/\eps}$ \\ \cite{NTKD25} \end{tabular}
    & \begin{tabular}{c} $O\rbra{\sqrt{t\log\rbra{1/\eps}}/\eps}$ \\ (\cref{thm:query-main}) \end{tabular}
    & \begin{tabular}{c} $\Omega\rbra{\sqrt{t}/\eps}$ \\ (\cref{thm:query-main}) \end{tabular} \\
    \midrule
    \begin{tabular}{c} General sample \end{tabular}
    & \begin{tabular}{c} --- \end{tabular}
    & \begin{tabular}{c} $O\rbra{t/\eps^2}$ \\ (\cref{thm:sample-main}) \end{tabular}
    & \begin{tabular}{c} $\Omega\rbra{t/\eps^2}$ \\ (\cref{thm:sample-main}) \end{tabular} \\
    \midrule
    \begin{tabular}{c} Single-copy sample \end{tabular}
    & \begin{tabular}{c} --- \end{tabular}
    & \begin{tabular}{c} $\widetilde{O}\rbra*{\frac{t}{\eps^2} + \frac{2^{\mathrm{H}_\alpha\rbra{\mu}}}{\eps^{2-1/\rbra{\alpha-1}}}}$ \\ (\cref{Thm:one_shot_state_frame_potential_informal}) \end{tabular}
    & \begin{tabular}{c} $\Omega\rbra{t/\eps^2}$ \\ (\cref{thm:sample-main}) \end{tabular}\\
    \bottomrule
    \end{tabular}
\end{table}

In the remainder of this section, we present our main results and techniques for estimating the state frame potential in three access models (\cref{sec:query_model_main_result,sec:general_sample_model_main_result,sec:single-copy_model_main_result}). We then apply the single-copy result to projected state ensembles in \cref{sec:projected_ensembles_main_result}.

\subsection{Query model}\label{sec:query_model_main_result}

The query model is natural when the ensemble of interest is specified by a known preparation procedure: the states $\ket{\psi_i}$ may be produced by compiled circuits, digitally simulated Hamiltonian evolutions, or any other procedure whose preparation unitary can be queried coherently. In this regime, the relevant cost is the number of invocations of the state-preparation procedure.

To describe our quantum algorithm for estimating the state frame potential, we assume the following unitary oracle. The multi-state-preparation oracle for $K$ states $\ket{\psi_0}, \ket{\psi_1}, \dots, \ket{\psi_{K-1}}$ is defined by
\begin{align}\label{eq:multi-state_preparation_oracle}
    \calO_S = \sum_{i=0}^{K-1} \ketbra{i}{i} \otimes \calO_i,
\end{align}
where $\mathcal{O}_i\ket{0}=\ket{\psi_i}$ for each $0\leq i<K$.
The multi-state-preparation oracle $\calO_S$ can be viewed as a multiplexer of the individual state-preparation oracles $\mathcal{O}_i$. This type of access is also assumed in~\cite{NTKD25}.
This standard state-preparation assumption is also used in~\cite{BJ98,ATS07,AS05,Bel19,vACGN23,Wan24}.
Our algorithm also works with weaker forms of access, such as purified quantum query access~\cite{GL20}, which is commonly used in property testing of mixed states~\cite{Wat02,BASTS10,GL20,GHS21,SH21,WZC+23,RASW23,GP22,WZ24,Liu25,CWZ25}.
In the pure-state setting, this weaker oracle may prepare $\mathcal{O}_i \ket{0} = \ket{\psi_i} \ket{\phi_i}$, where $\ket{\phi_i}$ is an arbitrary pure state (see also~\cite[Equation (5)]{FW25}).

Following~\cite[Definition 7]{NTKD25}, we say that the ensemble has a computationally efficient description if $K = 2^{\poly\rbra{n}}$ and there exists a polynomial-time classical partial function that, on input $i$, outputs a classical description of a $\poly\rbra{n}$-size circuit implementing $\calO_i$.

To deal with the problem of estimating the weighted state frame potential, we also assume the following sampling oracle for discrete probability distributions.
The sampling oracle for a distribution $\mu$ on the sample space $\cbra{0, 1, 2, \dots, K-1}$ is defined by
\begin{align}\label{eq:sampling_oracle}
    \calO_\mu \ket{0} = \sum_{i=0}^{K-1} \sqrt{\mu_i} \ket{i}.
\end{align}
Nonuniform weights arise naturally for weighted complex-projective designs~\cite{Zhu22}. Here, coherent access to those weights is provided by the sampling oracle $\calO_\mu$.
This type of oracle is commonly used to generate a quantum sample from a probability distribution~\cite{BJ98,ATS07,AS05,Bel19}.
For complexity-theoretic purposes, we assume that $\calO_\mu$ has a $\poly\rbra{n}$-size circuit description when $K = 2^{\poly\rbra{n}}$.
When the ensemble is uniform and $K$ is a power of two, so that $\mu_i = 1/K$ as in~\cite{NTKD25}, $\calO_\mu$ can be implemented simply as $H^{\otimes \log_2\rbra{K}}$. Other cardinalities can be handled by padding the index register.

The query complexity of a quantum algorithm for estimating the state frame potential is measured by the number of queries to \mbox{(controlled-)$\calO_S$}, \mbox{(controlled-)$\calO_\mu$}, and their inverses. 
For convenience, a query to a unitary oracle $U$ means a query to controlled-$U$ or controlled-$U^\dagger$.

Our query algorithm is based on a reformulation: the overlap quantity $\abs{\braket{\psi}{\phi}}^{2t}$ equals the trace of a subnormalized density operator built from $\rho=\ketbra{\psi}{\psi}$ and $\sigma=\ketbra{\phi}{\phi}$, namely $\rbra{\rho\sigma\rho}^{s}\rho\rbra{\rho\sigma\rho}^{s}$ for even $t=2s$ and an analogous expression for odd $t$. Trace estimation then reduces to preparing this subnormalized operator, which we accomplish by composing block-encoding primitives with quantum singular value transformation~\cite{GSLW19}.
The quadratic speedup in $t$ arises from a polynomial-approximation fact that $x^s$ admits a degree-$O\rbra{\sqrt{s\log\rbra{1/\delta}}}$ approximation on $\interval{-1}{1}$~\cite{SV14}.
The resulting query-complexity bounds are stated below. The algorithm and its analysis are deferred to~\cref{sec:query_model}.

\begin{theorem}[Near-optimal quantum algorithm for estimating state frame potential with query access] \label{thm:query-main}
    Let $\calE = \cbra{\rbra{\mu_i, \ket{\psi_i}}}_{i=0}^{K-1}$ be a discrete ensemble on $\calP$ and $t \in \N$. In the query model, the state frame potential $\calF_t\rbra{\calE}$ can be estimated to within additive error $\eps$ using $O\rbra{\sqrt{t \log\rbra{1/\varepsilon}}/\varepsilon}$ queries to $\mathcal{O}_S$ (and $\mathcal{O}_\mu$).
    On the other hand, $\Omega\rbra{\sqrt{t}/\varepsilon}$ queries to $\mathcal{O}_S$ are necessary. 
\end{theorem}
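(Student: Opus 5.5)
For the upper bound, I will follow the route sketched before the statement. First, using one query each to $\calO_\mu$ and $\calO_S$, I prepare the purification $\sum_i \sqrt{\mu_i}\ket{i}\ket{\psi_i}$. Tracing out the index register gives $\bar\rho=\sum_i\mu_i\ketbra{\psi_i}{\psi_i}$. Running this twice independently, on registers $A$ and $B$, gives purifications of $\bar\rho_A$ and $\bar\rho_B$. The pure-state identity $\abs{\braket{\psi}{\phi}}^{2t}=\tr\sbra{(\rho\sigma\rho)^{s}\rho(\rho\sigma\rho)^{s}}$ for $t=2s$ (and, for odd $t=2s+1$, the analogous expression with one extra factor of $\sigma$) averages cleanly. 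So I will not work with the mixtures directly. Instead I will work with controlled-reflection block-encodings of the projectors $\rho=\ketbra{\psi_i}{\psi_i}$ and $\sigma=\ketbra{\psi_j}{\psi_j}$, obtained conditioned on the coherent indices $i,j$. Each such block-encoding is $\calO_S(2\ketbra{0}{0}-I)\calO_S^\dagger$, controlled on the index register, so it costs $O(1)$ queries.

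Next I multiply these to get a block-encoding of $\rho\sigma\rho$, whose eigenvalues are $0$ and $\abs{\braket{\psi_i}{\psi_j}}^2\in\interval{0}{1}$. I then apply QSVT with a polynomial $p$ of degree $d=O(\sqrt{s\log(1/\delta)})$ that approximates $x^{s}$ uniformly on $\interval{-1}{1}$ to error $\delta$ and satisfies $\abs{p}\le 1$; this is the Sachdeva--Vishnoi construction~\cite{SV14}, rescaled slightly so that it stays bounded by $1$. The result is a block-encoding $U$ of $p(\rho\sigma\rho)$ using $O(d)$ queries. Applying $U$ to $\ket{\psi_i}$ (prepared with one more query) and projecting onto the good block succeeds with probability $\Abs{p(\rho\sigma\rho)\ket{\psi_i}}^2=\abs{\braket{\psi_i}{\psi_j}}^{4s}\pm O(\delta)$. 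For odd $t$ I multiply in an extra block-encoding of $\sigma$, which gives $\abs{\braket{\psi_i}{\psi_j}}^{2(2s+1)}$. Since $i$ and $j$ are in coherent superposition with weights $\mu_i\mu_j$, the overall success probability is $\calF_t(\calE)\pm O(\delta)$. Choosing $\delta=\Theta(\eps)$ and running amplitude estimation to additive precision $\eps$ takes $O(1/\eps)$ repetitions of a circuit of cost $O(\sqrt{t\log(1/\eps)})$, which gives the claimed total. The main technical care goes into two points: that the good-branch probability is exactly the squared norm, so the error from $p$ adds only linearly in $\delta$; and that the index registers remain unmeasured without causing interference, which holds because they control the reflections and are never rotated.

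For the lower bound $\Omega(\sqrt t/\eps)$, I will reduce from a problem with a known query lower bound. A natural choice uses two states whose overlap depends on a hidden parameter. Take $K=2$ with uniform weights, $\ket{\psi_0}$ fixed, and $\ket{\psi_1}=\cos\theta\ket{\psi_0}+\sin\theta\ket{\perp}$. Then $\calF_t=\tfrac12+\tfrac12\cos^{2t}\theta$. Pick $\theta\approx 1/\sqrt t$, where $\cos^{2t}\theta\approx e^{-t\theta^2}$ is order one and has derivative of order $\sqrt t$ in $\theta$. Distinguishing $\theta$ from $\theta'=\theta+\Theta(\eps/\sqrt t)$ then requires estimating $\calF_t$ to precision $\eps$. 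The oracle $\calO_1$ can be taken to be a rotation by angle $\theta$ in a two-dimensional subspace. By the standard Heisenberg/phase-estimation bound (hybrid argument: query lower bounds for distinguishing unitaries at operator distance $\Delta$ scale as $\Omega(1/\Delta)$), distinguishing $\theta$ from $\theta'$ needs $\Omega(\sqrt t/\eps)$ queries.

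The step I expect to be the main obstacle is making this lower bound rigorous for controlled queries and inverses, together with the purification freedom. I will apply the hybrid argument directly: each query changes the final state by at most $\Abs{\calO^{(\theta)}-\calO^{(\theta')}}=O(\abs{\theta-\theta'})$, and constant distinguishing advantage needs total change $\Omega(1)$.
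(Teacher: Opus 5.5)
Your proposal is correct. The upper bound is essentially the paper's construction: block-encode the two projectors with $O(1)$ queries, multiply to get $\rho\sigma\rho$, apply QSVT with the Sachdeva--Vishnoi approximation of $x^s$, apply the result to $\ket{\psi_i}$, and amplitude-estimate the good-branch probability coherently over $(i,j)$. For pure states, $\Abs{p(\rho\sigma\rho)\ket{\psi_i}}^2$ is exactly the trace of the subnormalized operator $P\sigma_iP^\dagger$ that the paper prepares. Your odd-$t$ variant yields the same quantity as the paper's $P\sigma_i\sigma_j\sigma_iP^\dagger$.

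One small correction: $\calO_i(2\ketbra{0}{0}-I)\calO_i^\dagger=2\rho-I$ block-encodes $2\rho-I$, not $\rho$. You get $\rho=(I+R)/2$ by sandwiching the controlled reflection $R$ between Hadamards on its control qubit, still with $O(1)$ queries. The paper instead uses the standard block-encoding of a density operator from a purification.

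For the lower bound you use the paper's hard instance, written in different coordinates, but a different tool. The paper's $\ket{\psi_\pm}=\sqrt{p_0^\pm}\ket{0}+\sqrt{p_1^\pm}\ket{1}$ with $p_1^\pm=(1\mp\eps)/t$ is your $\cos\theta\ket{0}+\sin\theta\ket{1}$ with $\theta\approx 1/\sqrt{t}$ and $\abs{\theta_+-\theta_-}=\Theta(\eps/\sqrt{t})$.

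\textbf{Paper's route.} It views $U_\pm$ as sampling oracles for $P^\pm$ and invokes Belovs' $\Omega(1/d_{\textup{H}})$ distinguishing bound, using $d_{\textup{H}}(P^+,P^-)\le\eps/\sqrt{t-1}$. It then lower-bounds the gap $\calF_t(\calE_+)-\calF_t(\calE_-)\ge\eps/9$ by Bernoulli's inequality.

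\textbf{Your route.} You fix the oracle to be a plane rotation and use $\Abs{\calO_S^{(\theta)}-\calO_S^{(\theta')}}\le\abs{\theta-\theta'}$. This also holds for controlled versions and inverses, and $\calO_\mu$ is the same in both cases. You then apply the hybrid argument directly, with the gap coming from $2t\cos^{2t-1}\theta\sin\theta=\Theta(\sqrt{t})$ near $\theta=1/\sqrt{t}$. This is rigorous and self-contained. The purification freedom you worry about is no obstacle: a lower bound needs only one hard pair of valid oracles, and you get to choose it.

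The two bounds agree here, because $d_{\textup{H}}(P^+,P^-)=\sqrt{1-\cos(\theta_+-\theta_-)}\approx\abs{\theta_+-\theta_-}/\sqrt{2}$. The paper's route is a short reduction to a cited distribution-distinguishing theorem. Yours avoids that machinery and makes the $1/\abs{\theta_+-\theta_-}$ scaling explicit.
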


To our knowledge, \cref{thm:query-main} gives the first near-optimal quantum query algorithm for estimating the state frame potential, improving the previous best general bound of $O\rbra{t/\varepsilon}$~\cite{NTKD25}.
The algorithm is also time-efficient. For ensembles of $n$-qubit states, it uses $O\rbra{n\sqrt{t\log\rbra{1/\varepsilon}}/\varepsilon}$ two-qubit gates, improving the previous bound of $O\rbra{nt/\varepsilon}$~\cite{NTKD25}.
It is therefore more efficient than the approach of~\cite{NTKD25} when $t = \Omega\rbra{\log\rbra{1/\varepsilon}}$, with the improvement particularly transparent for constant $\varepsilon$.

From the perspective of computational complexity, \cref{thm:query-main} shows that the quantum query complexity of estimating the state frame potential is $\widetilde{\Theta}\rbra{\sqrt{t}/\varepsilon}$. 
This nearly resolves the query-complexity question raised in~\cite{NTKD25}.
The key step is to reformulate the pure-state problem in terms of mixed-state operators, which enables the use of polynomial approximation and quantum singular value transformation.

As an application of \cref{thm:query-main}, we further obtain an improved quantum algorithm for estimating the unitary frame potential (see \cref{def:unitary_frame_potential}). Here, the unitary oracle takes the form
\begin{align}
    \mathcal{O}_S^U = \sum_{i=0}^{K-1} \ketbra{i}{i} \otimes U_i,
\end{align}
which is obtained by replacing the state-preparation oracles $\calO_i$ in \cref{eq:multi-state_preparation_oracle} with the unitaries $U_i$ from the ensemble $\calE^U=\cbra{\rbra{\mu_i,U_i}}$.

\begin{corollary}[Query algorithm for estimating unitary frame potential] \label{coro:unitary-main}
    Let $\calE^U = \cbra{\rbra{\mu_i, U_i}}_{i=0}^{K-1}$ be a discrete ensemble on $\calU$ and $t \in \N$. In the query model, the unitary frame potential $\calF_t^U\rbra{\calE^U}$ can be estimated to within additive error $\eps$ using $O\rbra{\sqrt{t\log\rbra{1/\varepsilon}} 4^{nt}/\varepsilon}$ queries to $\mathcal{O}_S^U$.
\end{corollary}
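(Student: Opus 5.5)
The plan is to reduce the unitary frame potential to a state frame potential through the Choi–Jamiołkowski isomorphism, and then invoke \cref{thm:query-main}. Recall that the unitary frame potential is
\[
\calF_t^U(\calE^U)=\E_{U,V}\bigl[\abs{\tr(U^\dagger V)}^{2t}\bigr].
\]
For each $U_i$ on $n$ qubits, define the $2n$-qubit Choi state
\[
\ket{\Phi_{U_i}}=(I\otimes U_i)\ket{\Phi},\qquad \ket{\Phi}=2^{-n/2}\sum_{x}\ket{x}\ket{x}.
\]
A direct computation gives
\[
\braket{\Phi_{U_i}}{\Phi_{U_j}}=2^{-n}\tr(U_i^\dagger U_j),
\]
and hence
\[
\abs{\braket{\Phi_{U_i}}{\Phi_{U_j}}}^{2t}=4^{-nt}\abs{\tr(U_i^\dagger U_j)}^{2t}.
\]
Therefore $\calF_t^U(\calE^U)=4^{nt}\,\calF_t(\calE^\Phi)$, where $\calE^\Phi=\{(\mu_i,\ket{\Phi_{U_i}})\}$ is a discrete ensemble on $\calP$ (for $2n$-qubit states).

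Next we build the query access needed by \cref{thm:query-main}. Define the multi-state-preparation oracle for $\calE^\Phi$ as
\[
\calO_S^\Phi=(I\otimes I\otimes W)^\dagger\,(I\otimes\calO_S^U\ \text{acting on the second half})\,(I\otimes I\otimes W),
\]
or more simply $\calO_S^\Phi=(I_{\mathrm{idx}}\otimes I\otimes \calO_S^U\text{-controlled part})\cdot(I_{\mathrm{idx}}\otimes W)$, where $W$ is a fixed $O(n)$-gate circuit (Hadamards and CNOTs) preparing $\ket{\Phi}$ from $\ket{0}$. Then $\calO_S^\Phi$ maps $\ket{i}\ket{0}\mapsto\ket{i}\ket{\Phi_{U_i}}$ using one query to $\calO_S^U$, with $U_i$ applied to the second register and controlled by the index register. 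Controlled versions and inverses of $\calO_S^\Phi$ likewise cost one query to controlled-$\calO_S^U$ or its inverse. The weight oracle $\calO_\mu$ is used unchanged, and for uniform ensembles it is just Hadamards.

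Finally, apply \cref{thm:query-main} to $\calE^\Phi$ with target precision $\eps'=\eps/4^{nt}$. This yields an estimate of $\calF_t(\calE^\Phi)$ to within $\eps'$ using
\[
O\bigl(\sqrt{t\log(1/\eps')}/\eps'\bigr)=O\bigl(\sqrt{t\log(4^{nt}/\eps)}\,4^{nt}/\eps\bigr)
\]
queries. Multiplying the estimate by $4^{nt}$ gives additive error $\eps$ for $\calF_t^U$.

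The main obstacle is the logarithmic factor: $\log(1/\eps')=\log(1/\eps)+2nt\log 2$, which produces an extra $\sqrt{nt}$-type term rather than the claimed $\sqrt{t\log(1/\eps)}$. I would first try to remove it by inspecting where $\log(1/\eps)$ enters the proof of \cref{thm:query-main}. There, the degree-$O(\sqrt{s\log(1/\delta)})$ approximation of $x^s$ is needed with $\delta$ comparable to the additive error, and the amplitude-estimation stage sets the $1/\eps'$ cost. Since the quantity being estimated is itself at most $1$, it may suffice to take $\delta$ relative to the final error $\eps$ after rescaling. Alternatively, one can note that the trivial truncation $\delta\ge 4^{-nt}$ is harmless when $t\ge\log(1/\eps)$. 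If neither argument works, I would state the bound with $\log(4^{nt}/\eps)$, which matches the corollary up to this polylogarithmic factor.
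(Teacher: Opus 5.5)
Your reduction is the natural one, and it is evidently what the paper intends. The paper gives no separate proof of this corollary; it only calls it an application of \cref{thm:query-main}. Your ingredients are correct: $\braket{\Phi_{U_i}}{\Phi_{U_j}}=2^{-n}\tr(U_i^\dagger U_j)$, hence $\calF_t^U(\calE^U)=4^{nt}\calF_t(\calE^\Phi)$, and $\ket{i}\ket{0}\mapsto\ket{i}(I\otimes U_i)\ket{\Phi}$ costs one query to $\calO_S^U$ after a query-free preparation of $\ket{\Phi}$. (Discard your first, conjugated formula for $\calO_S^\Phi$, since the outer $W^\dagger$ would undo the preparation of $\ket{\Phi}$; your ``more simply'' version is the right one.)

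The gap is the last step. What this argument proves is
\[
O\Bigl(\sqrt{t\log(4^{nt}/\eps)}\cdot 4^{nt}/\eps\Bigr)=O\Bigl(\sqrt{t\,(nt+\log(1/\eps))}\cdot 4^{nt}/\eps\Bigr),
\]
not the stated bound. The difference is not a harmless polylogarithmic factor. It is $\Omega(\sqrt{nt})$, which cancels exactly the $\sqrt{t}$ gain over the $O(t\,4^{nt}/\eps)$ bound of \cite{NTKD25} that the corollary is meant to deliver. In fact, once $\eps_1\le\eps 4^{-nt}$, the Sachdeva--Vishnoi degree $\sqrt{s\log(1/\eps_1)}$ is already of order at least $\sqrt{n}\,s\ge s$. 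You are then better off implementing $x^s$ exactly, and the route collapses to $O(t\,4^{nt}/\eps)$.

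Neither of your repairs works. In \cref{prop:qsvt_approx_even} the approximation error enters additively in $\calF_t(\calE^\Phi)$ and is then multiplied by $4^{nt}$, and this is not an artifact of the analysis. Take $\calE^U=\{(\tfrac12,I),(\tfrac12,e^{i\theta Z}\otimes I)\}$. The squared Choi overlap of the off-diagonal pair is $x=\cos^2\theta$, which sweeps $[0,1]$. For pure states the quantity the algorithm estimates for a pair is $p(x)^2$ (even $t=2s$), and $p$ is fixed independently of the ensemble. So a single polynomial must satisfy $\abs{p(x)^2-x^{2s}}=O(\eps 4^{-nt})$ on all of $[0,1]$.

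On $[1/2,1]$ we have $x^{2s}\ge 4^{-s}\gg\eps 4^{-nt}$. This forces $p$ to have a constant sign $\pm$ there, and $\abs{\pm p(x)-x^s}=O(\eps\,2^{-(4n-1)s})$, which is $O(\eps\,8^{-s})$ for $n\ge 1$. But if $\deg p<s$, then $x^s\mp p(x)$ is monic of degree $s$. By Chebyshev's extremal property, its maximum absolute value on an interval of length $1/2$ is at least $2\cdot 8^{-s}$. So for $\eps$ below a constant, no polynomial of degree less than $s$ suffices, and choosing $\delta$ ``relative to $\eps$'' cannot help.

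Your second idea fails for the same reason. The requirement is the upper bound $\delta\lesssim\eps 4^{-nt}$, so $\log(1/\delta)\ge 2nt$ regardless of how $t$ compares with $\log(1/\eps)$.

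The paper's stated bound appears to come from substituting $\eps\to\eps/4^{nt}$ only in the $1/\eps$ factor of \cref{thm:query-main} while keeping $\log(1/\eps)$. Nothing in the paper closes this gap. What your argument, and the paper's evident one, actually establishes is the weaker bound displayed above, and you should not present it as matching the corollary.
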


\cref{coro:unitary-main} improves the previous bound of $O\rbra{t 4^{nt}/\varepsilon}$~\cite{NTKD25} when $t = \Omega\rbra{\log\rbra{1/\varepsilon}}$.

\subsection{General sample model}\label{sec:general_sample_model_main_result}

The general sample model assumes that one can obtain multiple copies of a state sampled from the ensemble. Our algorithm builds on the generalized SWAP test~\cite{EAO+02,KLL+17}. Applying this test to two independent $t$-fold samples gives an unbiased estimator of the state frame potential.

The optimal sample-complexity bounds are stated below. The algorithm and proofs are given in~\cref{sec:general_sample_model}.

\begin{theorem}[Optimal quantum algorithm for estimating state frame potential with general sample access] \label{thm:sample-main}
    Let $\calE = \cbra{\rbra{\mu_i, \ket{\psi_i}}}_{i=0}^{K-1}$ be a discrete ensemble on $\calP$ and $t \in \N$. In the general sample model, the state frame potential $\calF_t\rbra{\calE}$ can be estimated to within additive error $\eps$, with probability at least $3/4$, using $O\rbra{t/\varepsilon^2}$ samples.
    On the other hand, $\Omega\rbra{t/\varepsilon^2}$ samples are necessary. 
\end{theorem}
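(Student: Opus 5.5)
For the upper bound, each round of the estimator draws two independent samples $\ket{\psi}^{\otimes t}$ and $\ket{\phi}^{\otimes t}$, using $2t$ copies in total, and applies the SWAP test to the pair of $t$-fold registers, i.e.\ the generalized SWAP test of~\cite{EAO+02,KLL+17} controlled on one ancilla. The test accepts with probability $\frac{1}{2}\rbra*{1+\abs{\braket{\psi}{\phi}}^{2t}}$, because $\tr\rbra*{\rbra{\ketbra{\psi}{\psi}}^{\otimes t}\rbra{\ketbra{\phi}{\phi}}^{\otimes t}} = \abs{\braket{\psi}{\phi}}^{2t}$. Let $X\in\cbra{\pm 1}$ be the $\pm1$ outcome. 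Averaging over $\psi,\phi\sim\mu$ gives $\E[X]=\calF_t\rbra{\calE}$ exactly, and $\abs{X}\le 1$. Repeating $m=O\rbra{1/\eps^2}$ times and taking the empirical mean therefore gives additive error $\eps$ with probability $3/4$ by Chebyshev's inequality. Since each round uses $2t$ copies, i.e.\ two samples each of $t$ copies, this is $O\rbra{t/\eps^2}$ sample-copies. Here I count a sample as one copy of a state drawn from $\mu$, with multiple copies of the same draw allowed.

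For the lower bound I would prove $\Omega\rbra{1/\eps^2}$ and $\Omega\rbra{t/\eps^2}$ separately. The first alone does not give the product, so the real goal is a single construction in which the copy count must scale with $t$. I would use a reduction to distinguishing two ensembles whose frame potentials differ by $\Theta\rbra{\eps}$. A natural pair is $\calE_0=\cbra{\ket{0}}$ together with a small fraction of orthogonal state, against a pair of states at angle $\theta$ with $\cos^{2t}\theta$ tuned. Concretely, compare the single-state ensemble $\{\ket{0}\}$, with $\calF_t=1$, against the uniform mixture of $\ket{\pm\theta}=\cos\theta\ket0\pm\sin\theta\ket1$. Its frame potential is $\frac12\rbra{1+\cos^{2t}\rbra{2\theta}}$, which is too far from $1$, so I would instead use the one-state ensembles $\cbra{\ket{0}}$ and $\cbra{\ket{\theta}}$. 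Those both have $\calF_t=1$ and are useless, so the weights must carry the signal.

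The cleaner route is a weighted two-point ensemble. Take states $\ket{0}$ and $\ket{\theta}$ with weights $(1-p,p)$. Then $\calF_t=(1-p)^2+p^2+2p(1-p)\cos^{2t}\theta$. Choose $\cos^{2t}\theta=1/2$, i.e.\ $\theta^2\approx \ln 2/t$, and compare $p=1/2$ with $p=1/2+\eps'$. This changes $\calF_t$ only at second order in $\eps'$, so instead compare $p$ near a value where the derivative is nonzero, e.g.\ $p_0=1/4$ versus $p_1=1/4+\eps$, which changes $\calF_t$ by $\Theta\rbra{\eps}$.

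Distinguishing these from $N$ total copies is the main step. The key observation is that $\ket0$ and $\ket\theta$ have overlap $1-\Theta\rbra{1/t}$. A draw observed with $k$ copies is the mixture $(1-p)\ketbra{0}{0}^{\otimes k}+p\ketbra{\theta}{\theta}^{\otimes k}$. Learning $p$ to precision $\eps$ requires $\Omega\rbra{1/\eps^2}$ draws, each of which must be identified well enough. The total information about $p$ from a draw with $k$ copies is bounded by the classical Fisher information for $p$, multiplied by the probability of discriminating $\ket0^{\otimes k}$ from $\ket\theta^{\otimes k}$, which is roughly $\min\rbra{1,k\theta^2}=\min\rbra{1,k/t}$.

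I would bound the quantum Fisher information of each block: for $k$ copies it is $O\rbra{\min(1,k/t)}$. Summing over blocks gives total information $O\rbra{\sum_j \min(1,k_j/t)}\le O\rbra{N/t}$. Distinguishing requires information $\Omega\rbra{1/\eps^2}$, so $N=\Omega\rbra{t/\eps^2}$. This can be made rigorous by using the fidelity or Hellinger distance between the $N$-copy product states and the tensorization of the Bures metric. It also covers adaptive strategies, because draws are independent across blocks.

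The main obstacle is the per-block bound. I would compute the fidelity $F$ between $\rho_{p_0}^{(k)}$ and $\rho_{p_1}^{(k)}$ on the two-dimensional span of $\ket0^{\otimes k}$ and $\ket\theta^{\otimes k}$, whose overlap is $c=\cos^k\theta$. I expect $1-F=O\rbra{\eps^2(1-c^2)}=O\rbra{\eps^2\min(1,k/t)}$, then multiply fidelities over blocks.
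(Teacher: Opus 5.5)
Your upper bound is fine and essentially matches the paper's. The paper applies the cyclic-shift test to $2t$ alternating copies with Hoeffding; you use the SWAP test between two $t$-fold blocks (which the paper itself uses in the single-copy algorithm) with Chebyshev, and both give $O\rbra{t/\eps^2}$.

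The gap is in the lower-bound instance. You put the whole signal in the weight $p$ while the states $\ket{0}$ and $\ket{\theta}$ are fixed and known, so hardness rests entirely on the learner not knowing which state a draw produced. In the paper's sample models a draw comes with its label $i$. The single-copy model is explicitly labeled, and it is presented as the weaker of the two sample models: they differ only in how many copies of one draw you may take, and the $\Omega\rbra{t/\eps^2}$ bound is asserted to hold there too. With labels your instance is easy. $O\rbra{1/\eps^2}$ draws of one copy each estimate $p$ to within $\eps$ from the labels alone, and then $\calF_t = 1-p+p^2$ is known to within $\eps$ with no quantum processing. So your construction yields only $\Omega\rbra{1/\eps^2}$. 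Your per-block calculation is right in an unlabeled model: with both fidelities squared, $1-F = \rbra{1-\cos^{2k}\theta}\rbra{1-F_{\mathrm{cl}}}$ holds exactly, where $F_{\mathrm{cl}}$ is the fidelity of the two Bernoulli distributions. Even there, though, your remark that independence of draws covers adaptive strategies needs proof. If block sizes are chosen adaptively, the final state is not a fixed tensor product, so multiplicativity of fidelity does not apply directly.

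The missing idea is to perturb the state rather than the weight, so that labels carry no information. The paper takes $\calE_\pm = \cbra{\rbra{\frac12,\ket{0}},\rbra{\frac12,\ket{\psi_\pm}}}$ with $\ket{\psi_\pm} = \sqrt{p_0^\pm}\ket{0}+\sqrt{p_1^\pm}\ket{1}$ and $p_0^\pm = 1-\frac1t\pm\frac{\eps}{t}$. Then $\calF_t\rbra{\calE_\pm} = \frac12+\frac12\rbra{p_0^\pm}^t$, and the two values differ by at least $\eps/9$. Any labeled, possibly adaptive, sampling algorithm using $N$ copies can be simulated from $N$ copies of the unknown $\ket{\psi_\pm}$. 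Flip a private fair coin per draw, which serves as the label, and supply $\ket{0}$ for free on one outcome. Since $1-\abs{\braket{\psi_+}{\psi_-}}$ is the squared Hellinger distance of the underlying Bernoulli distributions, it is at most $\eps^2/\rbra{t-1}$. The Helstrom bound then forces $N = \Omega\rbra{t/\eps^2}$, with adaptivity handled automatically.
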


\cref{thm:sample-main} gives an \textit{optimal} quantum algorithm for estimating the state frame potential under general sample access.
For ensembles of $n$-qubit states, the algorithm uses $O\rbra{nt/\varepsilon^2}$ two-qubit quantum gates.
Although its gate complexity does not improve on the query-model approaches in \cref{thm:query-main} and~\cite{NTKD25}, the general sample model requires less coherent control: it needs only identical copies of states drawn from the ensemble.

\subsection{Single-copy sample model}\label{sec:single-copy_model_main_result}
In the single-copy sample model, each sample contains only one labeled copy of a state drawn from the ensemble. We give an explicit estimation algorithm under the assumption that the R\'{e}nyi entropy of the weight distribution is known in advance. The full algorithm appears in~\cref{S:single_copy_state_frame_potential}. Here we summarize its main idea.

Our estimation algorithm consists of two stages. In the first stage, we repeatedly sample from the ensemble to identify the \emph{heavy set}, namely the set of elements whose probabilities exceed a prescribed threshold, and store sufficiently many copies of the corresponding states. In the second stage, we apply the generalized SWAP test to these stored states. The contribution from the low-probability elements is controlled by the choice of the threshold, while the statistical error is controlled by the number of stored copies. This allows us to estimate the frame potential to any desired accuracy.

The resulting sample-complexity bound is stated below and proved in~\cref{S:single_copy_state_frame_potential}.

\begin{theorem}[Quantum algorithm for estimating state frame potential with single-copy sample access]\label{Thm:one_shot_state_frame_potential_informal}
    Let $\calE = \cbra{\rbra{\mu_i, \ket{\psi_i}}}_{i=0}^{K-1}$ be a discrete ensemble on $\calP$, $t \in \N$, and $\alpha >1$. In the single-copy sample model, the state frame potential $\calF_t(\calE)$ can be estimated to within additive error $\eps$ with sample complexity
    \[
        \widetilde{O}\rbra*{\frac{t}{\eps^2}+\frac{2^{\mathrm{H}_{\alpha}(\mu)}}{\eps^{2-\frac{1}{\alpha-1}}}},
    \]
    where $\mathrm{H}_\alpha\rbra{\mu} \coloneqq \rbra{1-\alpha}^{-1}\log \sum_{i=0}^{K-1} \mu_i^{\alpha}$ is the R\'{e}nyi-$\alpha$ entropy of the probability distribution $\mu$.
\end{theorem}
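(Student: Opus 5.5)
The proof follows the store-and-estimate strategy: split $\calF_t$ into a heavy part, estimated with stored copies, and a light part, which we show is negligible. Write $\calF_t(\calE)=\sum_{i,j}\mu_i\mu_j\abs{\braket{\psi_i}{\psi_j}}^{2t}$. Fix a threshold $\muth$ and let $\calI_\heavy=\cbra{i:\mu_i\ge\muth}$. The first step is to bound the light contribution. Every term involving some $i\notin\calI_\heavy$ is at most $2\sum_{i\notin\calI_\heavy}\mu_i$. By the R\'enyi bound, $\mu_i<\muth$ implies $\mu_i\le\mu_i^{\alpha}\muth^{1-\alpha}$, so
\begin{align}
    \sum_{i:\mu_i<\muth}\mu_i\le\muth^{1-\alpha}\sum_i\mu_i^\alpha=\muth^{1-\alpha}2^{(1-\alpha)\mathrm{H}_\alpha(\mu)}=\rbra*{\muth 2^{\mathrm{H}_\alpha(\mu)}}^{-(\alpha-1)}.
\end{align}
Choosing $\muth=2^{-\mathrm{H}_\alpha(\mu)}\eps^{-1/(\alpha-1)}\cdot c$ makes this at most $\eps/8$. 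Hence $\calF_t$ is within $\eps/4$ of the restricted quantity $\sum_{i,j\in\calI_\heavy}\mu_i\mu_j\abs{\braket{\psi_i}{\psi_j}}^{2t}$. This truncation relies on knowing $\mathrm{H}_\alpha$, which is why the entropy is assumed known.

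Second, the storing stage. Draw $N$ labeled single-copy samples and keep every copy, grouped by label. For each heavy $i$, the count $N_i$ concentrates around $N\mu_i$. We want $\Omega(t/\eps^2)$ effective pair-tests in total and at least $t$ copies of every heavy state. The latter needs $N\muth\gtrsim t\log(\abs{\calI_\heavy})$, using a Chernoff bound and a union bound over $\abs{\calI_\heavy}\le 1/\muth$ heavy elements. This gives $N=\widetilde{O}(t/\muth)$. We also use the empirical frequencies $\hat\mu_i=N_i/N$, which estimate $\mu_i$ to additive accuracy sufficient for an $\eps/4$ error after $N=O(1/\eps^2)$ samples. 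The label alone is enough to check heaviness, by requiring $\hat\mu_i\ge\muth/2$; elements that are misclassified are only borderline ones, and their mass is absorbed in the tail bound with a constant adjustment.

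Third, the estimation stage. We form an unbiased estimator by drawing stored samples in sequence and regrouping them into $t$-fold blocks of identical labels. A simpler route is to take a fresh stream of single-copy samples $(i,\ket{\psi_i})$ and $(j,\ket{\psi_j})$ and pair each with $t-1$ stored copies of the same label, giving $t$-fold states. The generalized SWAP test from \cref{thm:sample-main} then produces a bounded, unbiased estimate of $\abs{\braket{\psi_i}{\psi_j}}^{2t}$, restricted to heavy labels. Averaging over $O(t/\eps^2)$ such trials, and using the variance analysis of \cref{thm:sample-main}, gives error $\eps/4$. The stored copies are consumed, so we need about $(t/\eps^2)\mu_i\cdot t$ copies of label $i$. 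This is supplied by $N=\widetilde O(t/\eps^2\cdot t)$, which is too many. Instead, I would follow \cref{thm:sample-main}'s actual estimator and use its $O(t/\eps^2)$ bound on total copies, amortised across labels in proportion to $\mu_i$. The natural stream then provides copies in the right proportions, together with an additive $\widetilde O(1/\muth)$ term ensuring every heavy label has enough copies to form blocks.

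Substituting $1/\muth=2^{\mathrm{H}_\alpha(\mu)}\eps^{1/(\alpha-1)}$, the total is $\widetilde O\rbra{t/\eps^2+2^{\mathrm{H}_\alpha}\eps^{1/(\alpha-1)}\cdot\eps^{-2}}$, which matches the stated bound. I expect the main obstacle to be this copy-accounting step. The difficulty is showing that blocks of $t$ identical copies can be formed for heavy labels while wasting only an $O(t)$-per-label overhead. If not handled carefully, this overhead inflates the bound to $t/\muth$ rather than $t/\eps^2+1/(\muth\eps^{2})$. The first thing I would try is accounting for the leftover copies per label at most $t$, and bounding the resulting bias by the weight of labels with fewer than $t$ copies.
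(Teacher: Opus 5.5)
Your plan uses the same store-and-estimate architecture as the paper, but its first step is false. For $\alpha>1$ the map $x\mapsto x^{1-\alpha}$ is decreasing. So $\mu_i<\muth$ gives $\mu_i^{1-\alpha}>\muth^{1-\alpha}$, i.e.\ $\mu_i>\mu_i^{\alpha}\muth^{1-\alpha}$, and your tail inequality runs the wrong way. For example, take $\mu$ uniform on $K$ labels. Then $2^{\mathrm{H}_\alpha(\mu)}=K$, and your threshold $\muth=c\,\eps^{-1/(\alpha-1)}/K$ exceeds $1/K$. The heavy set is empty, so the truncation error is all of $\calF_t(\calE)$. R\'enyi entropies of order $\alpha>1$ bound the mass of heavy elements, not of light ones. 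The truncation is correct for $\alpha\in(0,1)$, with the very same formula $\muth=2^{-\mathrm{H}_\alpha(\mu)}\eps^{-1/(\alpha-1)}$.

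The paper's proof of \cref{Lemma:p_heavy_large} contains exactly the same reversed inequality, so you cannot repair this by following the paper. In fact, for $\alpha$ near $1$ the displayed bound is false. Take $t=2$, qubit states, and uniform weights on $K$ labels. Compare $\ket{\psi_i}=\ket{f(i)}$ for a uniformly random $f\colon[K]\to\{0,1\}$, where $\calF_2\ge 1/2$, with i.i.d.\ Haar-random $\ket{\psi_i}$, where $\calF_2\approx 1/3$ with high probability. Average over the ensemble and condition on no repeated label. Both families then give uniform labels together with the state $(I/2)^{\otimes N}$. Hence $\Omega(\sqrt{K})$ samples are needed for any $\eps<1/20$. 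With $\alpha<5/4$ and $K=\eps^{-1/(\alpha-1)}$, this exceeds the claimed $\widetilde{O}(\eps^{-2})$.

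Second, the obstacle you flag at the end is genuine, and the paper does not overcome it either. Forming even one $t$-fold block for a label of weight about $\muth$ costs about $t/\muth$ samples, and the paper's algorithm pays this. \cref{Lemma:FAIL} requires $N\ge\frac{2t}{\muth}\rbra{6\ln\frac{2}{\delta_{\FAIL}\muth}+1}$. The closing ``dominant conditions'' step in the proof of \cref{Thm:one_shot_state_frame_potential} then drops this term, which is justified only when $t\lesssim\eps^{-2}$ or $\muth\gtrsim\eps^{2}$, up to logarithms. So there is no accounting trick to borrow, and your final sum, which also omits $\widetilde{O}(t/\muth)$, is not established.

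Separately, nothing in your argument produces the $1/(\eps^2\muth)$ term; it appears only at the substitution. In the paper it arises because the SWAP-test labels are drawn from the empirical block distribution $\hat\nu$. By \cref{Lemma:Frame_Potential_Trace_Norm} one then needs $\Abs{\mu-\hat\nu}_1=O(\eps)$. \cref{Lemma:Ell1_Distance_1} secures this by making each of up to $1/\muth$ heavy counts accurate to a factor $1\pm O(\eps)$. Your claim that $O(1/\eps^2)$ samples give frequencies ``sufficient for an $\eps/4$ error'' controls each $\hat\mu_i$ separately, not the $\ell_1$ error.

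Finally, the ``fresh stream'' variant you discarded is sound. Only $M=O(1/\eps^2)$ SWAP tests are needed, because each outcome is a bounded bit. Your $O(t/\eps^2)$ trial count is what inflated the copy count to $t^2/\eps^2$. With the correct count it consumes $O(t/\eps^2)$ stored copies in aggregate. Because its labels come directly from $\mu$, it needs no weight estimation at all, though it still pays the $t/\muth$ overhead.
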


The entropy-dependent term in~\cref{Thm:one_shot_state_frame_potential_informal} reflects the effective support size of the ensemble weights. In particular, for sufficiently concentrated distributions, the state frame potential can be estimated with a number of samples independent of the Hilbert-space dimension.

In general, however, the R\'{e}nyi entropy of the underlying distribution $\mu$ is not known in advance.
Removing this assumption would require estimating or upper-bounding the relevant R\'{e}nyi entropy from samples. Establishing a distribution-free guarantee for such a preliminary stage is left for future work.

\subsection{Projected state ensembles}\label{sec:projected_ensembles_main_result}

As an application of the estimation algorithm in the single-copy sample model, we consider the problem of estimating the state frame potential of a projected state ensemble generated by measuring subsystem $\mathsf{B}$ of a given pure state $\ket{\Psi}_{\mathsf{AB}}$~\cite{HC22,CMH+23}.

\begin{definition}[Projected state ensemble]
    Let $\ket{\Psi}_{\mathsf{AB}}$ be a pure state on $\mathsf{AB}$, and let $E=\cbra*{\ket{e_i}}_{i=0}^{d_{\mathsf{B}}-1}$ be an orthonormal basis of $\mathsf{B}$, where $d_{\mathsf{B}}$ is the dimension of $\mathsf{B}$.
    The projected state ensemble of $\ket{\Psi}_{\mathsf{AB}}$ with respect to the basis $E$ is the ensemble $\calE_{\Psi,E} \coloneqq \cbra{\rbra{\mu_i, \ket{\psi_i}_{\mathsf{A}}}}_{i=0}^{d_{\mathsf{B}}-1}$, where
    \begin{align}
        &\mu_i 
        = \tr\rbra*{\rbra*{ I_{\mathsf{A}} \otimes \ketbra{e_i}{e_i}_{\mathsf{B}}}\ketbra{\Psi}{\Psi}_{\mathsf{AB}}}, \\
        &\ket{\psi_i}_{\mathsf{A}} 
        = \frac{\rbra*{I_{\mathsf{A}} \otimes \bra{e_i}_{\mathsf{B}}}\ket{\Psi}_{\mathsf{AB}}}{\sqrt{\mu_i}} .
    \end{align}
    Here $I_{\mathsf{A}}$ denotes the identity operator on $\mathsf{A}$. Outcomes with $\mu_i=0$ are omitted from the ensemble.
\end{definition}

It is important to condition on, and retain, the measurement outcome $i$, which keeps the state in $\mathsf{A}$ pure.
If the outcome is ignored, the state in $\mathsf{A}$ is described by the averaged state $\Psi_{\mathsf{A}} = \tr_{\mathsf{B}}\rbra{\ketbra{\Psi}{\Psi}_{\mathsf{AB}}}$, and the ensemble structure is lost.

Projected state ensembles have recently attracted considerable attention in quantum many-body physics, especially as probes of scrambling dynamics and deep thermalization~\cite{IH23,Varikuti:2024xeq,MSE+24,MCR26}, where the pure state $\ket{\Psi}_{\mathsf{AB}}$ is generated by the Hamiltonian dynamics of a complex many-body system.
Projected state ensembles are also relevant to quantum information: related protocols have been used to benchmark analog quantum simulators~\cite{CSM+23,SCC+24} and to improve shadow tomography~\cite{MHS+25}.

A basic challenge is to determine whether the projected state ensemble generated from a given state $\ket{\Psi}_{\mathsf{AB}}$ is sufficiently random.
Except for a few analytically tractable classes of states~\cite{HC22,CL22,IH23}, this question is generally difficult to address analytically.
Here, we apply our single-copy estimation algorithm to quantify the randomness of projected state ensembles through the state frame potential.

Since a projected state ensemble $\calE_{\Psi, E}$ depends on both the state $\ket{\Psi}_{\mathsf{AB}}$ and the measurement $E$, its sample cost should also depend on $E$.
This dependence is naturally captured by \emph{observational entropy}, which describes the entropy of a quantum system relative to a specified observation or coarse-graining~\cite{Safranek:2019nwk,Safranek:2019wml,Safranek:2020tgg,Sinha:2023rwr}.
Indeed, the probabilities $\cbra{\mu_i}_i$ are exactly the outcome probabilities obtained by measuring
\begin{align}
    \Psi_{\mathsf{B}} = \tr_{\mathsf{A}}\rbra*{\ketbra{\Psi}{\Psi}_{\mathsf{AB}}}
\end{align}
in the basis $E$. Hence, the R\'{e}nyi entropy appearing in the sample complexity in~\cref{Thm:one_shot_state_frame_potential_informal} is the observational R\'{e}nyi entropy of $\Psi_{\mathsf{B}}$ with respect to $E$.

Specializing~\cref{Thm:one_shot_state_frame_potential_informal} to the collision entropy ($\alpha = 2$) gives the following sample upper bound.

\begin{corollary}[Quantum algorithm for estimating state frame potential of projected state ensemble]
    Let $\ket{\Psi}_{\mathsf{AB}}$ be a pure state on a bipartite system $\mathsf{AB}$, and let $E=\cbra{\ket{e_i}}_{i=0}^{d_{\mathsf{B}}-1}$ be an orthonormal basis of $\mathsf{B}$. Let $\calE_{\Psi,E}$ be the projected state ensemble induced by $\ket{\Psi}_{\mathsf{AB}}$ and $E$. Then, for any $t \in \N$ and $\varepsilon \in \interval[open left]{0}{1}$, there exists an explicit algorithm that estimates $\calF_t\rbra{\calE_{\Psi,E}}$ to within additive error $\varepsilon$ with constant success probability, using
    \[
        \widetilde{O}\rbra*{
        \frac{t}{\varepsilon^2}
        + \frac{2^{\mathrm{H}_{E}^{(2)}(\mathsf{B})_{\Psi}}}{\varepsilon}
        }
    \]
    copies of $\ket{\Psi}_{\mathsf{AB}}$. Here, $
        \mathrm{H}_{E}^{(2)}(\mathsf{B})_{\Psi} \coloneqq -\log \sum_{i=0}^{d_{\mathsf{B}}-1}\mu_i^2
    $
    is the observational collision entropy of $\Psi_{\mathsf{B}}=\tr_{\mathsf{A}}\rbra{\ketbra{\Psi}{\Psi}_{\mathsf{AB}}}$ with respect to the measurement basis $E$, where $\mu_i=\bra{e_i}\Psi_{\mathsf B }\ket{e_i}$.
\end{corollary}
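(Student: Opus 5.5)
The plan is to reduce this corollary directly to \cref{Thm:one_shot_state_frame_potential_informal} with $\alpha=2$. The reduction has three parts: realizing the single-copy sample oracle from copies of $\ket{\Psi}_{\mathsf{AB}}$, identifying the R\'enyi entropy with the observational one, and simplifying the exponent.

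\emph{Step 1 (simulating the sample model).} Take one copy of $\ket{\Psi}_{\mathsf{AB}}$ and measure $\mathsf{B}$ in the basis $E$. By the definition of the projected ensemble, outcome $i$ occurs with probability $\tr\rbra{(I_{\mathsf A}\otimes\ketbra{e_i}{e_i})\ketbra{\Psi}{\Psi}}=\mu_i$, and the post-measurement state of $\mathsf{A}$ is exactly $\ket{\psi_i}_{\mathsf A}$. Outcomes with $\mu_i=0$ never occur, which matches their omission from the ensemble. One copy of $\ket{\Psi}_{\mathsf{AB}}$ therefore yields exactly one labeled sample $(i,\ket{\psi_i})$ with $i\sim\mu$. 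This is precisely the single-copy sample access assumed in \cref{Thm:one_shot_state_frame_potential_informal}. Hence the number of copies of $\ket{\Psi}_{\mathsf{AB}}$ consumed equals the sample complexity of that algorithm, and the estimator and its success probability carry over unchanged.

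\emph{Step 2 (entropy identification).} The distribution satisfies $\mu_i=\bra{e_i}\Psi_{\mathsf B}\ket{e_i}$, since the partial trace over $\mathsf A$ commutes with the $\mathsf B$-local projector. Consequently $\mathrm{H}_2(\mu)=-\log\sum_i\mu_i^2=\mathrm{H}^{(2)}_E(\mathsf B)_\Psi$ by definition.

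\emph{Step 3 (exponent).} With $\alpha=2$ we have $2-\tfrac{1}{\alpha-1}=1$, so the bound of \cref{Thm:one_shot_state_frame_potential_informal} becomes $\widetilde O\rbra{t/\eps^2+2^{\mathrm{H}^{(2)}_E(\mathsf B)_\Psi}/\eps}$, which is the claimed bound. The hypothesis $\eps\in(0,1]$ is only needed so that the theorem's regime applies.

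The only potential obstacle is bookkeeping about what the theorem assumes. It takes the R\'enyi entropy (or an upper bound on it) as known in advance. The corollary inherits this assumption: the algorithm should be given an upper bound on $\mathrm{H}^{(2)}_E(\mathsf B)_\Psi$, and the complexity is then stated in terms of that bound. I would state this explicitly and otherwise treat the reduction as immediate.
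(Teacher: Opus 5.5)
Your three steps are exactly the paper's argument. The paper also notes that measuring $\mathsf B$ in $E$ produces a labeled single-copy sample with $\mu_i=\bra{e_i}\Psi_{\mathsf B}\ket{e_i}$, and then specializes \cref{Thm:one_shot_state_frame_potential_informal} to $\alpha=2$.

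\textbf{The gap.} The trouble is the step you call immediate: that theorem is not valid for $\alpha>1$, so invoking it at $\alpha=2$ establishes nothing. The error is in \cref{Lemma:p_heavy_large}. It claims $\mu\rbra{\calI_\heavy}\geq 1-\eta$ by using $\mu_i^{1-\alpha}<\muth^{1-\alpha}$ whenever $\mu_i<\muth$. For $\alpha>1$ the map $x\mapsto x^{1-\alpha}$ is decreasing, so this inequality runs the other way. Concretely, take $\mu$ uniform on $K>1/\eta$ points with $\alpha=2$. Then $\muth=1/(\eta K)>1/K$, so the heavy set is empty. R\'enyi entropies of order $\alpha>1$ are dominated by the large weights and cannot bound the mass of a light tail.

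\textbf{The corollary itself is false.} Let $\mathsf A$ be a qubit and set $\ket{\Psi}=\tfrac{1}{\sqrt2}\ket{0}\ket{e_0}+\tfrac{1}{\sqrt{2K}}\sum_{j=1}^{K}e^{i\theta_j}\ket{\psi_j}\ket{e_j}$.
\begin{itemize}
\item \emph{Entropy stays bounded.} Here $\sum_i\mu_i^2=\tfrac14+\tfrac1{4K}$, so $2^{\mathrm{H}^{(2)}_E(\mathsf B)_\Psi}<4$. The claimed cost at $t=2$, $\eps=1/20$ is therefore a constant independent of $K$.
\item \emph{Two families with separated potentials.} Draw the $\ket{\psi_j}$ either (a) i.i.d.\ Haar or (b) i.i.d.\ uniformly from $\cbra{\ket0,\ket1}$. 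In both cases the phases $\theta_j$ are independent and uniform; they change $\ket\Psi$ but not the ensemble. As $K\to\infty$, $\calF_2\rbra{\calE_{\Psi,E}}$ concentrates at $\tfrac14+\tfrac16+\tfrac1{12}=\tfrac12$ in case (a) and at $\tfrac14+\tfrac14+\tfrac18=\tfrac58$ in case (b).
\item \emph{The copies are nearly indistinguishable.} The component of $\ket{\Psi}^{\otimes N}$ in which some light label $j\geq1$ appears twice has squared norm at most $\binom N2/(4K)$. On the complementary component, the phase-averaged state depends on each $\ket{\psi_j}$ only through $\Ex\ketbra{\psi_j}{\psi_j}=I/2$, which is the same in (a) and (b). So the averaged $N$-copy states are $O(N/\sqrt K)$-close in trace distance.
\item \emph{Conclusion.} The gap $1/8$ exceeds $2\eps$, so any estimator that succeeds with constant probability on every input would distinguish (a) from (b). This requires $\Omega(\sqrt K)$ copies, even with arbitrary joint measurements on $\mathsf{AB}$. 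No bound depending only on $t$, $\eps$ and $\mathrm{H}_\alpha$ with $\alpha>1$ can hold.
\end{itemize}

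\textbf{What survives.} For $\alpha\in(0,1)$ the same threshold $\muth=2^{-\mathrm{H}_\alpha(\mu)}\eta^{1/(1-\alpha)}$ does give tail mass below $\eta$, because now $\mu_i^{1-\alpha}<\muth^{1-\alpha}$ is correct. The rest of the proof then goes through and gives $\widetilde O\rbra{t/\eps^2+2^{\mathrm{H}_\alpha(\mu)}\eps^{-2-1/(1-\alpha)}}$; note that $2-\tfrac{1}{\alpha-1}=2+\tfrac{1}{1-\alpha}$. Your reduction then yields a correct projected-ensemble statement in terms of observational R\'enyi entropy of order $\alpha<1$. For example, $\alpha=1/2$ gives $\widetilde O\rbra{t/\eps^2+2^{\mathrm{H}_{1/2}}/\eps^{4}}$. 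In that regime, supplying only an upper bound on the entropy does suffice, as you suggested. The collision-entropy bound with $1/\eps$ cannot be obtained this way.
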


This result shows that the state frame potential of a projected state ensemble can be assessed with a sample cost controlled by the observational collision entropy of the measurement used to generate the ensemble. Projected state ensembles with low observational collision entropy provide a natural regime in which randomness, quantified by the frame potential, can be efficiently estimated.

\section{Discussion}\label{sec:discussion}

This work provides efficient quantum algorithms for estimating the state frame potential in three access models, ordered by decreasing experimental control. In the query and general sample models, our upper and lower bounds match up to logarithmic factors. In the single-copy model, the upper bound additionally depends on the entropy of the ensemble weights.
By settling the \mbox{(near-)optimal} scaling of this fundamental estimation problem in the query and general sample models, our algorithms make higher-order randomness benchmarks tractable with fewer oracle calls or samples than prior approaches such as~\cite{NTKD25}.
In particular, the single-copy algorithm is suited to projected-ensemble experiments in analog quantum simulators, where post-selected measurement records furnish only one copy per outcome and the observational-entropy bound directly controls the experimental cost.

Our results suggest several directions for future work.
A first natural question is whether the remaining logarithmic factor in the query upper bound can be removed. For the single-copy model, an important open problem is to determine whether the entropy-dependent term is necessary.

Although this work also provides an algorithm for estimating the unitary frame potential, this is not our main focus.
Given the comparable importance of the unitary and state frame potentials, it would be natural to develop sample algorithms for estimating the unitary frame potential and to establish matching lower bounds.

The two sample models considered in this work represent two extreme cases: when estimating the order-$t$ frame potential, one is allowed either to prepare $t$-fold copies of a state or only a single copy at a time.
Our upper bounds indicate that the leading dependence on $t$ does not change substantially between these two regimes, that is, whether or not one can prepare several identical copies in succession.
This is operationally meaningful because, at least for the leading $t$-dependent term, preparing multiple identical copies does not improve the sample complexity.

Finally, in the single-copy sample model, our sample complexity is bounded in terms of the R\'{e}nyi entropy of the underlying distribution of the ensemble, which is unknown in advance.
As we have already mentioned, we may use the heavy part of $\mu$, obtained in our algorithm, to estimate the R\'{e}nyi entropy.
Analyzing this strategy and determining its additional computational cost are left for future work.

\section{Preliminaries}\label{sec:prelim}

We introduce our notation in~\cref{sec:notations}. We then briefly review state and unitary designs and formally define their frame potentials in~\cref{sec:design_theory}. Finally, we review the quantum algorithmic tools used in our algorithms in~\cref{sec:algorithmic_tools}.

\subsection{Notations}\label{sec:notations}

We denote the sets of natural numbers, integers, real numbers, and complex numbers by $\N$, $\Z$, $\R$, and $\C$, respectively. We use $\C^{2^n \times 2^n}$ to denote the set of $2^n$-by-$2^n$ complex matrices.
For a vector $\ket{\psi}$, its Euclidean norm is $\Abs{\ket{\psi}}=\Abs{\bra{\psi}}=\sqrt{\braket{\psi}{\psi}}$.
For a linear operator $A$, we denote by $\abs{A}$ the unique positive square root of the positive semidefinite operator $A^\dagger A$, where $A^\dagger$ is the Hermitian conjugate of $A$.
For $p \in \interval[open right]{1}{\infty}$, the Schatten $p$-norm of a linear operator $A$ is $\Abs{A}_p = \rbra{\tr\rbra{\abs{A}^p}}^{1/p}$.
As $p$ tends to $\infty$, this becomes the operator norm $\Abs{A}_{\infty}$, or simply $\Abs{A}$.
A function $f \colon \R \to \R$ can be extended to a matrix function of a Hermitian operator $A$ with spectral decomposition $A = V \Sigma V^{\dagger}$ by setting $f\rbra{A}=V f\rbra{\Sigma} V^{\dagger}$, where $\Sigma = \diag\rbra{\lambda_1,\ldots,\lambda_n}$ and $f\rbra{\Sigma} = \diag\rbra{f\rbra{\lambda_1},\ldots,f\rbra{\lambda_n}}$.
We denote by $\R\sbra{x}$ the set of polynomials with real coefficients.
Sans-serif font is used to denote registers.
We use $\widetilde{O}(\cdot)$ to suppress polylogarithmic factors in its argument. That is, $\widetilde{O}\rbra{f\rbra{n}} = O\rbra{ f\rbra{n}\polylog\rbra{ f\rbra{n} } }$.

\subsection{State and unitary design}\label{sec:design_theory}

Assume that $K = 2^k$ is a power of $2$.
Denote by $\calP$ and $\calU$ the sets of pure states and unitary operators, respectively.
Denote by $\calE=\cbra{\rbra{\mu_i,\ket{\psi_i}}}_{i=0}^{K-1}$ the ensemble of pure states associated with a probability distribution $\mu=\rbra{\mu_0,\ldots,\mu_{K-1}}$, where $\ket{\psi_i}$ is sampled with probability $\mu_i$.
Similarly, we denote by $\calE^U=\cbra{\rbra{\mu_i,U_i}}_{i=0}^{K-1}$ the ensemble of unitaries associated with a probability distribution $\mu=\rbra{\mu_0,\ldots,\mu_{K-1}}$, where $U_i$ is sampled with probability $\mu_i$.
We use $\ket{\psi} \sim \mu$ or $U \sim \mu$ to denote that the state $\ket{\psi}$ or the unitary $U$ is sampled from distribution $\mu$.
These collections may contain repeated elements. Sets $\cbra{\ket{\psi_i}}_{i=0}^{K-1}$ and $\cbra{U_i}_{i=0}^{K-1}$ may be multisets.
When there is no ambiguity, we suppress the index range and write $\cbra{\rbra{\mu_i,\ket{\psi_i}}}$.

A state $t$-design is defined as follows~\cite{Mel24,NTKD25}.

\begin{definition}[State $t$-design]
    Fix $t \in \N$. An ensemble $\calE=\cbra{\rbra{\mu_i,\ket{\psi_i}}}$ is called a state $t$-design if
    \begin{align}
        \Ex_{\ket{\psi} \sim \mu}\sbra*{\ketbra{\psi}{\psi}^{\otimes t}} = \Ex_{\ket{\phi} \sim \mu_H}\sbra*{\ketbra{\phi}{\phi}^{\otimes t}} ,
    \end{align}
    where $\mu_H$ is the uniform measure over $\calP$.
\end{definition}

For pure-state ensembles, allowing nonuniform probabilities gives a weighted state $t$-design. Equivalently, this is a weighted complex-projective $t$-design~\cite[Definition 2.1]{RS07}. Uniform multisets form the special case in which multiplicities encode the weights. We use the term ``state $t$-design'' for both weighted and unweighted ensembles.

The extent to which an ensemble approximates a state design is quantified by its state frame potential, defined as follows~\cite[Definition 2]{NTKD25}.

\begin{definition}[State frame potential]
    Fix $t \in \N$. For an ensemble $\calE=\cbra{\rbra{\mu_i,\ket{\psi_i}}}$, its state frame potential of order $t$ is defined as
    \begin{align}
        \calF_t\rbra*{\calE} = \Abs*{\Ex_{\ket{\psi} \sim  \mu}\sbra*{\ketbra{\psi}{\psi}^{\otimes t}}}_2^2  =  \Ex_{\ket{\psi},\ket{\phi} \sim \mu}\sbra*{\abs*{\braket{\psi}{\phi}}^{2t}} .
    \end{align}
\end{definition}

When the ensemble is discrete and has cardinality $K$, i.e., $\calE = \cbra{\rbra{\mu_i, \ket{\psi_i}}}_{i = 0}^{K-1}$, the state frame potential reduces to
\begin{equation}
    \calF_t(\calE) = \sum_{i=0}^{K-1}\sum_{j=0}^{K-1} \mu_i \mu_j \left| \braket{\psi_i}{\psi_j} \right|^{2t}.
\end{equation}

An analogous notion is that of a unitary $t$-design. As in the state-design setting, we do not distinguish between weighted and unweighted unitary designs~\cite{RS09}.

\begin{definition}[Unitary $t$-design]
    Fix $t \in \N$. An ensemble $\calE^U=\cbra{\rbra{\mu_i,U_i}}$ is called a unitary $t$-design if
    \begin{align}
        \Ex_{U \sim \mu}\sbra{U^{\otimes t} \otimes U^{\ast \otimes t}} 
        = \Ex_{U \sim \mu_H}\sbra{U^{\otimes t} \otimes U^{\ast \otimes t}}
    \end{align}
    where $U^\ast$ denotes the complex conjugate of $U$, and $\mu_H$ is the Haar measure.
\end{definition}

Similarly, the extent to which an ensemble approximates a unitary design is quantified by its unitary frame potential, defined as follows~\cite[Definition 4]{NTKD25}.

\begin{definition}[Unitary frame potential]\label{def:unitary_frame_potential}
    Fix $t \in \N$. For an ensemble $\calE^U=\cbra{\rbra{\mu_i,U_i}}$, its unitary frame potential of order $t$ is defined as
    \begin{align}
        \calF_t^U\rbra*{\calE^U} = \Ex_{U,V \sim \mu} \sbra*{\abs*{\tr\rbra*{U^\dagger V}}^{2t}}.
    \end{align}

\end{definition}

When the ensemble is discrete and has cardinality $K$, i.e., $\calE^U = \cbra{\rbra{\mu_i, U_i}}_{i=0}^{K-1}$, the unitary frame potential reduces to
\begin{align}
    \calF_t^U\rbra*{\calE^U} = \sum_{i=0}^{K-1}\sum_{j=0}^{K-1} \mu_i \mu_j \abs*{\tr\rbra*{U_i^\dagger U_j}}^{2t}.
\end{align}

\subsection{Quantum algorithmic tools}\label{sec:algorithmic_tools}

We review two basic algorithmic tools used later in this work: block-encoding in \cref{sec:block-encoding} and quantum singular value transformation in \cref{sec:qsvt}.

\subsubsection{Block-encoding}\label{sec:block-encoding}
A block-encoding embeds a non-unitary linear operator $A$ as the upper-left corner of a larger unitary $B$, so that quantum-circuit primitives acting on unitaries can be lifted to act on $A$.
This is the technical bridge used throughout our query algorithm: it lets us promote a density operator $\rho$ (or a product such as $\rho \sigma \rho$) into a unitary operation on a slightly enlarged register, after which polynomial functions of $\rho$ are accessible through quantum singular value transformation and traces become estimable through amplitude estimation.

\begin{definition}[Block-encoding \cite{GSLW19}]
    Suppose that $A$ is an $n$-qubit linear operator.
    For a real number $\alpha>0$, an error parameter $\eps\geq0$, and a nonnegative integer $a$, an $\rbra{n+a}$-qubit unitary operator $B$ is said to be an $\rbra{\alpha,a,\eps}$-block-encoding of $A$ if
    \begin{align}
        \Abs*{\alpha\bra{0}^{\otimes a}B\ket{0}^{\otimes a}-A} \leq \eps.
    \end{align}
\end{definition}

We next formalize what it means for a unitary to prepare a subnormalized density operator.

\begin{definition}[Subnormalized density operator]
    A positive semidefinite operator $A$ is called a subnormalized density operator if $0 \leq \tr\rbra{A} \leq 1$.
    Moreover, a density operator is a subnormalized density operator with unit trace.
    An $\rbra{n+a+b}$-qubit unitary $U$ is said to prepare the $n$-qubit subnormalized density operator $A$ if, for $\ket{\sigma}=U\ket{0}_{n+a+b}$, the reduced $\rbra{n+a}$-qubit density operator $\tr_b\rbra{\ketbra{\sigma}{\sigma}}$ is a $\rbra{1,a,0}$-block-encoding of $A$.
    Moreover, $a=0$ when $A$ is a normalized density operator.
\end{definition}

Based on this definition, we can generalize our state-preparation oracle to density operators.

\begin{definition}[Mixed state-preparation oracle]
    The mixed state-preparation oracle for $K$ density operators $\sigma_0, \sigma_1, \ldots, \sigma_{K-1}$ is defined by
    \begin{align}
        \calO_S = \sum_{i=0}^{K-1} \ketbra{i}{i} \otimes \calO_{\sigma_i},
    \end{align}
    where each $\calO_{\sigma_k}$ prepares the density operator $\sigma_k$.
    The (pure) state-preparation oracle is a special case of the mixed state-preparation oracle.
\end{definition}

We use the following result to convert purified access to a density operator into a block-encoding.
\begin{lemma}[Block-encoding of density operator,~{\cite[Lemma 25]{GSLW19}}]\label{lem:BE_density_operator}
    Suppose $\rho$ is an $n$-qubit density operator with purified access $\calO_\rho$ which is an $\rbra{n+a}$-qubit unitary that prepares a purification of $\rho$.
    One can construct a unitary $U_\rho$ that is an $\rbra{1,n+a,0}$-block-encoding of $\rho$, using one invocation each of $\calO_\rho$ and $\calO_\rho^\dagger$.
\end{lemma}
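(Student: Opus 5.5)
The plan is to follow the standard construction from \cite[Lemma 25]{GSLW19}, which rests on a simple observation about partial traces. Write the purification prepared by $\calO_\rho$ as
\begin{align}
    \ket{\rho} = \calO_\rho\ket{0}_{n+a},
\end{align}
where the first $a$ qubits form a purifying register $\mathsf{E}$ and the last $n$ qubits form the system register $\mathsf{S}$, so that $\tr_{\mathsf{E}}\rbra{\ketbra{\rho}{\rho}} = \rho$. I would add a fresh $n$-qubit register $\mathsf{S}'$ and define
\begin{align}
    U_\rho = \rbra*{\calO_\rho^\dagger \otimes I_{\mathsf{S}'}}\rbra*{I_{\mathsf{E}} \otimes \mathrm{SWAP}_{\mathsf{S},\mathsf{S}'}}\rbra*{\calO_\rho \otimes I_{\mathsf{S}'}}.
\end{align}
This acts on $2n+a$ qubits. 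The $n+a$ qubits $\mathsf{E}\mathsf{S}$ play the role of the ancilla to be projected onto $\ket{0}$, and $\mathsf{S}'$ carries the encoded operator. It uses exactly one call to $\calO_\rho$ and one to $\calO_\rho^\dagger$, and the swap is free.

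The key step is to compute $\rbra{\bra{0}_{\mathsf{ES}}\otimes I_{\mathsf{S}'}}U_\rho\rbra{\ket{0}_{\mathsf{ES}}\otimes I_{\mathsf{S}'}}$. This operator equals $\bra{\rho}_{\mathsf{ES}}\,\mathrm{SWAP}_{\mathsf{S},\mathsf{S}'}\,\ket{\rho}_{\mathsf{ES}}$, viewed as an operator on $\mathsf{S}'$. To evaluate it, expand $\ket{\rho} = \sum_{e,x} c_{e,x}\ket{e}_{\mathsf{E}}\ket{x}_{\mathsf{S}}$ and take matrix elements against $\ket{y}_{\mathsf{S}'}$ and $\ket{z}_{\mathsf{S}'}$. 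The swap exchanges $x$ with the $\mathsf{S}'$ index, which yields
\begin{align}
    \sum_e c_{e,y}\,\overline{c_{e,z}} = \bra{y}\tr_{\mathsf{E}}\rbra{\ketbra{\rho}{\rho}}\ket{z} = \bra{y}\rho\ket{z}.
\end{align}
So the top-left block is exactly $\rho$, and $U_\rho$ is a $\rbra{1,n+a,0}$-block-encoding of $\rho$. Unitarity of $U_\rho$ is immediate because it is a product of unitaries.

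The only step needing care is the index bookkeeping in this matrix-element computation, in particular the complex conjugation and the order of $y$ and $z$. If the order came out transposed, the result would be $\rho^T$ rather than $\rho$. I would check this against a product purification $\ket{\rho} = \ket{\phi}\ket{\psi}$, where the block must equal $\ketbra{\psi}{\psi}$. The remaining step is to confirm that the register conventions match the definition of ``prepare'' given above, with $b = n+a - n = a$ purifying qubits.
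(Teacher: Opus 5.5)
Your construction $U_\rho = (\calO_\rho^\dagger \otimes I_{\mathsf{S}'})(I_{\mathsf{E}} \otimes \mathrm{SWAP}_{\mathsf{S},\mathsf{S}'})(\calO_\rho \otimes I_{\mathsf{S}'})$, with $\mathsf{ES}$ serving as the $n+a$ ancilla qubits, is exactly the construction of \cite[Lemma 25]{GSLW19}, which the paper cites without reproducing a proof. Your matrix-element computation is correct: the swap sends $\ket{x}_{\mathsf{S}}\ket{z}_{\mathsf{S}'}$ to $\ket{z}_{\mathsf{S}}\ket{x}_{\mathsf{S}'}$, which gives $\sum_e c_{e,y}\overline{c_{e,z}} = \bra{y}\rho\ket{z}$ and not its transpose, so the top-left block is exactly $\rho$ and the proof is complete.
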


The following lemma constructs a block-encoding of the product of two matrices from block-encodings of the factors.
\begin{lemma}[Product of block-encoded matrices,~{\cite[Lemma 53 in the full version]{GSLW19}}]\label{lem:product_BE}
    Suppose $U$ is an $\rbra{\alpha,a,\eps_a}$-block-encoding of an $n$-qubit operator $A$ and $V$ is an $\rbra{\beta,b,\eps_b}$-block-encoding of an $n$-qubit operator $B$.
    Then $\mathsf{BEProduct}\rbra{U,V} \coloneqq \Tilde{U}=\rbra{U \otimes I_b}\rbra{V \otimes I_a}$ is an $\rbra{\alpha\beta,a+b,\alpha\eps_b+\beta\eps_a}$-block-encoding of $AB$.
\end{lemma}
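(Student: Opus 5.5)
Since \cref{lem:product_BE} is quoted from~\cite{GSLW19}, I would simply re-derive it by a direct computation of the top-left block of $\tilde U=\rbra{U\otimes I_b}\rbra{V\otimes I_a}$. Throughout, $U\otimes I_b$ means $U$ acting on the system register and its own ancilla register $\mathsf{a}$, with identity on a fresh ancilla register $\mathsf{b}$. Likewise, $V\otimes I_a$ acts on the system register and $\mathsf{b}$, with identity on $\mathsf{a}$.

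The first step is to compute $\bra{0}^{\otimes(a+b)}\tilde U\ket{0}^{\otimes(a+b)}$. Since $V\otimes I_a$ acts trivially on $\mathsf{a}$, the projector $\ket{0}_{\mathsf a}$ passes through it, and similarly $\bra{0}_{\mathsf b}$ passes through $U\otimes I_b$. This gives
\begin{align}
\bra{0}_{\mathsf a}\bra{0}_{\mathsf b}\tilde U\ket{0}_{\mathsf a}\ket{0}_{\mathsf b}
= \rbra*{\bra{0}_{\mathsf a}U\ket{0}_{\mathsf a}}\rbra*{\bra{0}_{\mathsf b}V\ket{0}_{\mathsf b}}.
\end{align}
The point is that the $\mathsf{a}$-ancilla is untouched between the two factors, so no cross terms survive. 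Writing $\tilde A=\alpha\bra{0}U\ket{0}$ and $\tilde B=\beta\bra{0}V\ket{0}$, we get $\alpha\beta\bra{0}\tilde U\ket{0}=\tilde A\tilde B$.

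The second step is the error bound, using the telescoping identity
\begin{align}
\tilde A\tilde B-AB=\tilde A\rbra{\tilde B-B}+\rbra{\tilde A-A}B .
\end{align}
We have $\Abs{\tilde A}\le\alpha$ because $\tilde A$ is $\alpha$ times a sub-block of a unitary. This bounds the first term by $\alpha\eps_b$. For the second term, I would use $\Abs{B}\le\beta$, but this needs care. From the definition, we only know $\Abs{B}\le\Abs{\tilde B}+\eps_b\le\beta+\eps_b$.

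This is the one genuine obstacle. The naive split yields $\alpha\eps_b+\beta\eps_a+\eps_a\eps_b$ rather than the stated bound. To avoid the extra term, I would instead split the other way:
\begin{align}
\tilde A\tilde B-AB=\rbra{\tilde A-A}\tilde B+A\rbra{\tilde B-B}.
\end{align}
This gives $\beta\eps_a+\Abs{A}\eps_b$, which has the same issue, now in $\Abs{A}$. Since both orderings leave a second-order term, I would adopt the convention of~\cite{GSLW19}. There, the lemma is stated exactly as given, and the implicit assumption is that $\Abs{A}\le\alpha$ and $\Abs{B}\le\beta$. Alternatively, one can note that the bound $\alpha\eps_b+\beta\eps_a+\eps_a\eps_b$ is what the computation literally yields, and that in this paper all applications use $\eps_a=\eps_b=0$ (exact block-encodings of density operators from \cref{lem:BE_density_operator}). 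In that case, the bound is exactly $0$ and the statement holds as written.

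Finally, the number of ancillas is $a+b$ by construction. The normalization is $\alpha\beta$ by the computation in the first step, which completes the argument.
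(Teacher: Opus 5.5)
Your top-left-block computation is exactly the argument of~\cite{GSLW19}, which the paper cites without reproducing. Your caveat is also right, and it is a genuine defect of the statement rather than an artifact of how you split the difference: taking $U=V=I$, $\alpha=\beta=1$, $A=(1+\eps_a)I$ and $B=(1+\eps_b)I$ makes $U,V$ valid $(1,a,\eps_a)$- and $(1,b,\eps_b)$-block-encodings, yet $\Abs{\alpha\beta\bra{0}\tilde U\ket{0}-AB}=\eps_a+\eps_b+\eps_a\eps_b$, so the sharp unconditional bound is $\alpha\eps_b+\beta\eps_a+\eps_a\eps_b$. The stated bound is recovered under just one of $\Abs{A}\le\alpha$ or $\Abs{B}\le\beta$ (you do not need both), and every use of the lemma in the paper has $\eps_a=\eps_b=0$, where it holds as written.
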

The following lemma constructs a unitary describing the evolution of a subnormalized density operator.

\begin{lemma}[{\cite[Lemma II.2]{WGL+24}}]\label{lem:sandwiched_product}
    Suppose $U$ is an $\rbra{n+a}$-qubit unitary operator that prepares an $n$-qubit subnormalized density operator $A$ and $V$ is an $\rbra{n+b}$-qubit unitary operator which is a $\rbra{1,b,0}$-block-encoding of $B$.
    Then $\Tilde{U}=\mathsf{SubDensityOpEvolution}\rbra{V,U}=\rbra{V \otimes I_a}\rbra{U \otimes I_b}$ is an $\rbra{n+a+b}$-qubit unitary operator that prepares the $n$-qubit subnormalized density operator $B A B^\dagger$.
\end{lemma}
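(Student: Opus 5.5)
The plan is to unfold the definition of ``prepares a subnormalized density operator'' and check directly that the flagged block of the new reduced state equals $BAB^\dagger$. The only structural fact needed is that $V$ acts only on the system and its own fresh ancillas. It therefore commutes with the partial trace over the purifying register of $U$ and with projections on $U$'s flag register.

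\textbf{Setting up registers.} Since $U$ prepares $A$, I split its $a$ ancilla qubits into two registers: a flag register $\mathsf{F}$ of $a_1$ qubits and a purifying register $\mathsf{P}$ of $a_2$ qubits. Write $\ket{\sigma}=U\ket{0}$ and $\tau \coloneqq \tr_{\mathsf{P}}\rbra{\ketbra{\sigma}{\sigma}}$, an operator on $n+a_1$ qubits. By hypothesis,
\begin{align}
    \rbra{\bra{0}_{\mathsf{F}}\otimes I_n}\,\tau\,\rbra{\ket{0}_{\mathsf{F}}\otimes I_n} = A .
\end{align}
Let $\mathsf{G}$ denote the $b$ ancilla qubits of $V$. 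By hypothesis,
\begin{align}
    \rbra{\bra{0}_{\mathsf{G}}\otimes I_n}\,V\,\rbra{\ket{0}_{\mathsf{G}}\otimes I_n}=B .
\end{align}

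\textbf{Main computation.} The output of $\Tilde{U}$ on $\ket{0}$ is $\ket{\sigma'}=\rbra{V\otimes I_{\mathsf{F}\mathsf{P}}}\rbra{\ket{\sigma}\otimes\ket{0}_{\mathsf{G}}}$. I keep $\mathsf{P}$ as the purifying register and take $\mathsf{F}\mathsf{G}$ as the new flag register of $a_1+b$ qubits. Because $V$ does not act on $\mathsf{P}$, the partial trace passes through the conjugation:
\begin{align}
    \tr_{\mathsf{P}}\rbra{\ketbra{\sigma'}{\sigma'}} = \rbra{V\otimes I_{\mathsf{F}}}\rbra{\tau\otimes\ketbra{0}{0}_{\mathsf{G}}}\rbra{V^\dagger\otimes I_{\mathsf{F}}} .
\end{align}
I then sandwich this by $\bra{0}_{\mathsf{F}}\bra{0}_{\mathsf{G}}$ and $\ket{0}_{\mathsf{F}}\ket{0}_{\mathsf{G}}$. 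The $\mathsf{F}$-projection commutes with $V$, so it can be moved inward and converts $\tau$ into $A$. The $\mathsf{G}$-projections combine with the initial $\ket{0}_{\mathsf{G}}$ on each side to give $\bra{0}_{\mathsf{G}}V\ket{0}_{\mathsf{G}}=B$ on the left and $\bra{0}_{\mathsf{G}}V^\dagger\ket{0}_{\mathsf{G}}=B^\dagger$ on the right. This yields exactly $BAB^\dagger$.

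\textbf{Remaining checks.} By the definition, the reduced operator is therefore a $\rbra{1,a_1+b,0}$-block-encoding of $BAB^\dagger$, so $\Tilde{U}$ prepares $BAB^\dagger$. The result $BAB^\dagger$ is positive semidefinite because $A\ge 0$. Its trace is at most $1$: since $\Abs{B}\le 1$ (as $B$ is a block of a unitary), $\tr\rbra{BAB^\dagger}\le\tr\rbra{A}\le 1$. It is also a diagonal block of a density operator, which gives the same bound. The total qubit count is $n+a+b$, as claimed.

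The main obstacle is purely bookkeeping. One has to keep the tensor-factor ordering straight and to separate the flag part of $U$'s ancillas from its purifying part. That separation is what lets the trace over $\mathsf{P}$ commute with $V$ while the projections are taken on $\mathsf{F}\mathsf{G}$.
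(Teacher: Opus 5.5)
Your proof is correct. Splitting $U$'s ancillas into a flag register $\mathsf{F}$ and a purifying register $\mathsf{P}$, then commuting $V$ past $\tr_{\mathsf{P}}$ and the $\bra{0}_{\mathsf{F}}$ projection, lets the $\mathsf{G}$-projections produce $B$ on the left and $B^\dagger$ on the right, which gives exactly $BAB^\dagger$ with flag register $\mathsf{F}\mathsf{G}$ and purifier $\mathsf{P}$. The paper gives no proof of its own and simply imports the statement from \cite[Lemma II.2]{WGL+24}, so your direct unfolding of the definitions serves as a self-contained verification rather than an alternative to anything in the text.
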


\subsubsection{Quantum singular value transformation}\label{sec:qsvt}
Given access to a block-encoding of a Hermitian operator $A=\sum_{i \in \sbra{n}}\lambda_i\ketbra{v_i}{v_i}$, QSVT can implement a block-encoding of $p\rbra{A}$ for suitable polynomials $p \in \R\sbra{x}$.

\begin{theorem}[Quantum singular value transformation {\cite[Theorem 31]{GSLW19}}]\label{thm:qsvt}
    Suppose that $A$ is Hermitian and that an $\rbra{\alpha,a,\eps}$-block-encoding $U$ of $A$ is given.
    Let $\delta\in\interval[open]{0}{1}$, and let $p\in\R\sbra{x}$ be a polynomial of degree $d$ such that $\abs{p\rbra{x}}\leq1/2$ for $x\in\sbra{-1,1}$.
    Then there is a quantum unitary operator $\Tilde{U}=\mathsf{EigenTrans}\rbra{U,p,\delta}$ that is an $\rbra{1,a+2,4d\sqrt{\eps/\alpha}+\delta}$-block-encoding of $p\rbra{A/\alpha}$.
    The circuit uses
    \begin{itemize}
        \item $d$ applications of $U$ or $U^\dagger$ and one application of controlled-$U$, and
        \item $O\rbra{\rbra{a+1}d}$ additional one- and two-qubit gates.
    \end{itemize}
    Moreover, the classical description of $\Tilde{U}$ can be computed in time $\poly\rbra{d,\log\rbra{1/\delta}}$.
\end{theorem}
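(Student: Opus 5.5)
This is the quantum singular value transformation theorem of \cite[Theorem 31]{GSLW19}. I would reprove it in three layers: first the exact case for a polynomial of definite parity, then a general real polynomial, and finally the robustness to $\eps$ and $\delta$. Rescaling reduces everything to $\alpha=1$: if $U$ is an $\rbra{\alpha,a,\eps}$-block-encoding of $A$, it is a $\rbra{1,a,\eps/\alpha}$-block-encoding of $A/\alpha$. So I would assume $\Abs{\bra{0}^{\otimes a}U\ket{0}^{\otimes a}-A/\alpha}\le\eps'$ with $\eps'=\eps/\alpha$.

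\textbf{Exact, definite-parity case.} Suppose $\eps'=0$ and $P\in\C[x]$ has degree $d$, parity $d \bmod 2$, and satisfies $\abs{P}\le1$ on $[-1,1]$ together with the complementarity condition of quantum signal processing. Write $\Pi=\ketbra{0}{0}^{\otimes a}\otimes I$. I would use the alternating phase sequence
\[
U_\Phi=e^{i\phi_1(2\Pi-I)}U\,e^{i\phi_2(2\Pi-I)}U^\dagger\cdots,
\]
where each reflection phase $e^{i\phi(2\Pi-I)}$ costs one extra ancilla and a controlled-NOT on $\Pi$. By Jordan's lemma (qubitization), the space decomposes into invariant subspaces of dimension at most two, one for each eigenvector of $A$. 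On each such subspace $U$ acts as a reflection parametrized by the eigenvalue $\lambda$, and the phases act as $Z$-rotations. The top-left entry of $U_\Phi$ on the subspace for $\lambda$ is therefore the single-qubit QSP polynomial $P(\lambda)$. The existence of phases $\Phi$ realizing any admissible $P$ is the standard QSP characterization, proved by induction on degree via a degree-reducing factorization. This yields a $\rbra{1,a+1,0}$-block-encoding of $P(A)$ using $d$ alternating applications of $U$ and $U^\dagger$.

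\textbf{General real polynomial.} For a real $p$ of mixed parity, I would split $p=p_{\mathrm{even}}+p_{\mathrm{odd}}$. Since $\abs{p}\le1/2$ on $[-1,1]$, each part also satisfies $\abs{p_{\mathrm{even}}},\abs{p_{\mathrm{odd}}}\le1/2$, because $p(\pm x)$ average and difference. For a real polynomial of definite parity bounded by $1$, QSP gives a complex $P$ with $\mathrm{Re}\,P=p_{\mathrm{par}}$. Taking $\rbra{U_\Phi+U_{-\Phi}}/2$ via one extra ancilla in a linear combination of unitaries extracts the real part.

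The two parity circuits share most of their structure: they differ only in the final factor, and their phase sequences can be interleaved under a control qubit. Combining the even and odd parts by a second LCU with weights $1/2$ thus adds one more ancilla, giving $a+2$ ancillas in total. It also explains the single controlled-$U$ and the $O\rbra{(a+1)d}$ extra gates, which come from the multi-controlled reflections about $\Pi$. The factor $1/2$ in the hypothesis $\abs{p}\le1/2$ is exactly what absorbs the normalization of this combination.

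\textbf{Robustness and main obstacle.} With $\eps'>0$ and the classically computed phases $\tilde\Phi$, the output approximates $p(A/\alpha)$ with error $4d\sqrt{\eps/\alpha}+\delta$, and I expect proving this bound to be the main obstacle. The subtlety is that the block-encoded $\tilde A$ is $\eps'$-close to $A/\alpha$, but the naive bound on $\Abs{p(\tilde A)-p(A/\alpha)}$ carries a Markov-type factor $d^2\eps'$. To get $\sqrt{\eps'}$ instead, I would argue at the level of the unitaries. I would construct a genuinely exact block-encoding $U'$ of $A/\alpha$ (after a harmless relaxation) with $\Abs{U-U'}\le O(\sqrt{\eps'})$, via a polar-decomposition/dilation argument: perturbing a contraction's dilation by $\eps'$ moves the unitary by $O(\sqrt{\eps'})$. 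A telescoping sum over the $d$ uses of $U$ then gives $4d\sqrt{\eps'}$.

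The remaining $\delta$ accounts for approximating the exact QSP phases. I would obtain phases $\tilde\Phi$ in time $\poly\rbra{d,\log(1/\delta)}$ by root-finding on the complementary polynomial, then bound the induced polynomial error by $\delta$ using Lipschitz continuity of the phase-to-polynomial map.
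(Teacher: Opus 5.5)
The paper gives no proof of this statement; it is quoted from \cite[Theorem 31]{GSLW19}. Your outline is essentially the GSLW argument: parity-definite QSP via qubitization, the real part from $\Phi$ and $-\Phi$, an LCU of $2p_{\mathrm{even}}$ and $2p_{\mathrm{odd}}$ with weights $1/2$, robustness by a dilation plus hybrid argument, and classical phase computation. As written, however, it does not give the stated ancilla count. You spend one qubit on the reflections $e^{i\phi(2\Pi-I)}$, a second on the LCU forming $(U_\Phi+U_{-\Phi})/2$, and a third on the even/odd LCU, which totals $a+3$, not $a+2$. The missing observation is that the real-part step needs no extra qubit. The gadget $C_\Pi\mathrm{NOT}\cdot e^{-i\phi Z}\cdot C_\Pi\mathrm{NOT}$ on the phase qubit implements $e^{i\phi(2\Pi-I)}$ when that qubit is $\ket{0}$ and $e^{-i\phi(2\Pi-I)}$ when it is $\ket{1}$. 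Preparing it in $\ket{+}$ and projecting onto $\bra{+}$ therefore yields $(U_\Phi+U_{-\Phi})/2$ directly, and only the parity LCU adds a qubit.

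There are two smaller points. First, your justification of the $\poly(d,\log(1/\delta))$ phase computation uses continuity in the wrong direction. Lipschitz continuity of $\Phi\mapsto P_\Phi$ controls perturbations of an already valid phase sequence. What you must show is the converse: that layer-stripping a numerically computed, and hence only approximately complementary, pair $(P,Q)$ returns phases close to valid ones. That needs a separate finite-precision analysis of the root-finding and stripping steps.

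Second, the ``harmless relaxation'' is not available; the statement implicitly requires $\Abs{A}\le\alpha$. Take $U=I$, $A=(1+\eps)I$ and $p=T_d/2$. Then $\Abs{p(A)}$ grows exponentially in $d\sqrt{\eps}$, while every block of a unitary has norm at most $1$. Under $\Abs{A}\le\alpha$ your robustness step is sound, and the paper only uses the case $\eps=0$, $\Abs{A}\le 1=\alpha$. By the CS decomposition, $U$ equals the canonical dilation of $\bra{0}U\ket{0}$ up to unitaries acting on the range of $I-\Pi$. Replacing that block by $A/\alpha$ moves the unitary by at most $\eps'+\sqrt{2\eps'}$, using $\Abs{\sqrt X-\sqrt Y}\le\sqrt{\Abs{X-Y}}$ for $X,Y\ge 0$. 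The hybrid argument over the $d$ uses of $U$ then gives $(1+\sqrt2)\,d\sqrt{\eps'}\le 4d\sqrt{\eps'}$ for $\eps'\le 1$; larger $\eps'$ is trivial.
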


We also need the following results on polynomial approximations to monomials and on quantum amplitude estimation.

\begin{theorem}[Polynomial approximation for monomials {\cite[Theorem 3.3]{SV14}}]\label{thm:poly_approx_monomials}
    For any integer $q \geq 1$ and real number $\eps \in \interval[open]{0}{1}$, there is an efficiently computable polynomial $p \in \R\sbra{x}$ of degree $O\rbra{\sqrt{q\log\rbra{1/\eps}}}$ and parity $q \bmod 2 $ such that $\abs{p\rbra{x}-x^q} \leq \eps$ and $\abs{p\rbra{x}} \leq 1$ for all $x \in \interval[open]{-1}{1}$.
\end{theorem}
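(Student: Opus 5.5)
The approach is to express $x^q$ exactly as an average of Chebyshev polynomials indexed by the endpoint of a random walk, and then truncate that average at the typical displacement $O(\sqrt{q\log(1/\eps)})$. Let $T_k$ denote the degree-$k$ Chebyshev polynomial of the first kind. I will use three facts: $T_1(x)=x$; $T_{-k}\coloneqq T_k$; and the product identity $T_a T_b = \tfrac12\rbra{T_{a+b}+T_{a-b}}$.

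\textbf{Step 1: exact representation of $x^q$.} Let $Y_q=\sum_{j=1}^q \xi_j$, where the $\xi_j\in\cbra{\pm1}$ are i.i.d.\ uniform signs. Induction on $q$ gives
\begin{align}
    x^q=\Ex\sbra*{T_{Y_q}(x)}=\sum_{k=-q}^{q}\Prob\sbra{Y_q=k}\,T_{\abs{k}}(x).
\end{align}
The inductive step is
\begin{align}
    x\cdot \Ex\sbra{T_{Y_{q-1}}(x)}=\Ex\sbra*{\tfrac12\rbra{T_{Y_{q-1}+1}(x)+T_{Y_{q-1}-1}(x)}}=\Ex\sbra{T_{Y_q}(x)}.
\end{align}
The probabilities $\Prob\sbra{Y_q=k}=2^{-q}\binom{q}{(q+k)/2}$ are explicit, so the coefficients are efficiently computable.

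\textbf{Step 2: truncation.} Set $d=\left\lceil\sqrt{2q\ln(2/\eps)}\right\rceil$ and define
\begin{align}
    p(x)=\sum_{\abs{k}\le d}\Prob\sbra{Y_q=k}\,T_{\abs{k}}(x).
\end{align}
The degree of $p$ is at most $d=O\rbra{\sqrt{q\log(1/\eps)}}$. Since $\abs{T_k(x)}\le1$ on $\interval{-1}{1}$, the approximation error satisfies
\begin{align}
    \abs{p(x)-x^q}\le\Prob\sbra{\abs{Y_q}>d}\le 2e^{-d^2/(2q)}\le\eps
\end{align}
by Hoeffding's inequality. Likewise, $\abs{p(x)}\le\sum_{\abs{k}\le d}\Prob\sbra{Y_q=k}\le1$ on $\interval{-1}{1}$.

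\textbf{Step 3: parity.} The walk $Y_q$ takes only values $k\equiv q \pmod 2$. Each $T_{\abs{k}}$ has parity $\abs{k}\bmod 2$, so $p$ has parity $q\bmod 2$.

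There is no real obstacle here. The only nontrivial insight is the random-walk identity in Step 1. The one point to check carefully is that, in the inductive step, the reflection $T_{-k}=T_k$ correctly handles a walk that crosses zero, so that $T_{a-b}$ makes sense when $a<b$. This holds because $T_k(\cos\theta)=\cos(k\theta)$ is even in $k$.
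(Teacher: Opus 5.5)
Your proposal is correct and is essentially the proof of \cite[Theorem 3.3]{SV14}, which the paper cites rather than reproves: the identity $x^q=\Ex\sbra{T_{Y_q}(x)}$ for a $\pm1$ random walk $Y_q$, truncation to $\abs{Y_q}\le d$, and the tail bound $2e^{-d^2/(2q)}$, with boundedness and parity following exactly as you argue. One minor point: your degree $\lceil\sqrt{2q\ln(2/\eps)}\rceil$ is $O\rbra{\sqrt{q\log(1/\eps)}}$ only when $\eps$ is bounded away from $1$, which is the convention implicit in the statement and holds where the paper applies it ($\eps_1=\eps/4$).
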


\begin{theorem}[Quantum amplitude estimation~{\cite[Theorem 12]{BHMT02}}]\label{thm:amp_est}
    Suppose $U$ is a unitary operator such that
    \begin{align}
        U\ket{0} = \sqrt{p}\ket{0}\ket{\phi_0}+\sqrt{1-p}\ket{1}\ket{\phi_1},
    \end{align}
    where $\ket{\phi_0}$ and $\ket{\phi_1}$ are normalized pure states.
    There is a quantum algorithm $\mathsf{AmpEst}\rbra{U,\eps,\delta}$ that outputs an estimate of $p$ to within additive error $\eps$ with success probability at least $1-\delta$ using $O\rbra{\frac{1}{\eps}\log\rbra{\frac{1}{\delta}}}$ queries to $U$.
\end{theorem}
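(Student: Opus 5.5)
We would prove \cref{thm:amp_est} by the standard route of \cite{BHMT02}: build a Grover-type iterate, run phase estimation on it, and boost the success probability with a median. Write $\ket{\Psi}=U\ket{0}=\sqrt{p}\ket{0}\ket{\phi_0}+\sqrt{1-p}\ket{1}\ket{\phi_1}$, choose $\theta\in\sbra{0,\pi/2}$ with $\sin^2\theta=p$, and set $\ket{\Psi_0}=\ket{0}\ket{\phi_0}$ and $\ket{\Psi_1}=\ket{1}\ket{\phi_1}$. Define
\begin{align}
    Q = -U S_0 U^\dagger S_\chi ,
\end{align}
where $S_0=I-2\ketbra{0}{0}$ is the reflection about the all-zero state and $S_\chi = I-2\rbra{\ketbra{0}{0}\otimes I}$ flips the sign on the ``good'' subspace, where the first qubit is $\ket{0}$. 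Each application of $Q$ uses one query to $U$ and one to $U^\dagger$, together with reflections that need no queries.

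The first step is the two-dimensional analysis. The plane $\mathrm{span}\cbra{\ket{\Psi_0},\ket{\Psi_1}}$ is invariant under $Q$. In the basis $\rbra{\ket{\Psi_1},\ket{\Psi_0}}$, $Q$ acts as a rotation by angle $2\theta$. Its eigenvectors $\ket{\Psi_\pm}=\rbra{\ket{\Psi_1}\pm i\ket{\Psi_0}}/\sqrt{2}$ have eigenvalues $e^{\pm 2i\theta}$. The initial state decomposes as
\begin{align}
    \ket{\Psi}=\tfrac{-i}{\sqrt{2}}\rbra*{e^{i\theta}\ket{\Psi_+}-e^{-i\theta}\ket{\Psi_-}},
\end{align}
which is a superposition of exactly two eigenvectors.

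The second step is to run phase estimation with $m=\lceil\log_2 M\rceil$ ancilla qubits, using controlled-$Q^{j}$ for $j<M$, which costs $O\rbra{M}$ queries. The outcome $y$ gives $\tilde\theta=\pi y/M$, and we output $\tilde p=\sin^2\tilde\theta$. Since $\sin^2$ takes the same value at $\theta$ and $\pi-\theta$, it does not matter which eigenvector is projected onto. The key lemma is a non-exact-phase estimate: with probability at least $8/\pi^2$, we get $\abs{\tilde\theta-\theta}\le\pi/M$. This follows from bounding the Fej\'er-type sum $\abs{\frac1M\sum_j e^{2\pi i j\Delta}}^2$ for the two nearest grid points. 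This Fourier-sum bound is the main technical obstacle.

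Given that bound, the trigonometric identity $\sin^2 a-\sin^2 b=\sin\rbra{a+b}\sin\rbra{a-b}$ yields
\begin{align}
    \abs{\tilde p-p}\le 2\pi\frac{\sqrt{p\rbra{1-p}}}{M}+\frac{\pi^2}{M^2}.
\end{align}
Choosing $M=\lceil c/\eps\rceil$ for a suitable constant $c$ makes this at most $\eps$.

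The final step amplifies the constant success probability $8/\pi^2>1/2$ to $1-\delta$. We repeat the procedure $O\rbra{\log\rbra{1/\delta}}$ times and output the median of the estimates. By a Chernoff--Hoeffding bound, the median lies within $\eps$ of $p$ unless at least half the runs fail, and this happens with probability at most $\delta$. The total query count is $O\rbra{\frac1\eps\log\frac1\delta}$, as claimed.
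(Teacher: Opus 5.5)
Your outline is the standard Brassard--H{\o}yer--Mosca--Tapp argument (Grover iterate, phase estimation with the $8/\pi^2$ two-nearest-grid-points bound, the $\sin^2$ error estimate, then median amplification for the $\log(1/\delta)$ factor), which is the argument behind the cited \cite[Theorem 12]{BHMT02}; the paper gives no proof of its own and simply imports the result. One harmless slip: your $\ket{\Psi_0}$ is the good component, so the eigenvectors with eigenvalues $e^{\pm 2i\theta}$ are $\rbra{\ket{\Psi_0}\pm i\ket{\Psi_1}}/\sqrt{2}$, not $\rbra{\ket{\Psi_1}\pm i\ket{\Psi_0}}/\sqrt{2}$ (you kept BHMT's labels, where $\ket{\Psi_1}$ is the good state), and only with that swap does your displayed decomposition of $\ket{\Psi}$ hold; the rest of the argument uses only the equal-weight two-eigenvector structure and is unaffected.
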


\section{Query model}\label{sec:query_model}

In this section, we establish nearly optimal upper and lower bounds on the query complexity of estimating the state frame potential of a given state ensemble. In the query model, the algorithm has access to a multi-state-preparation oracle (see \cref{eq:multi-state_preparation_oracle}) and a sampling oracle (see \cref{eq:sampling_oracle}). Query complexity is measured by the total number of calls to $\calO_S$ and $\calO_\mu$, including calls to their controlled versions and inverses.
The upper and lower bounds are established in \cref{sec:query_upper} and \cref{sec:query_lower}, respectively.

\subsection{Upper bound}\label{sec:query_upper}

The frame potential is built from $2t$-th moments of pairwise overlaps, so a naive approach would estimate each overlap moment by an order-$t$ product of state-preparation calls, yielding the same linear dependence on $t$ as the algorithm of~\cite{NTKD25}.
Our key observation is that the same quantity admits a representation as the trace of a low-rank polynomial in two density operators, allowing the linear scaling to be replaced by the degree of a polynomial approximation of $x^s$, which is only $O\rbra{\sqrt{s}}$ up to logarithmic factors~\cite{SV14}.

We present the resulting algorithm, which admits the query complexity $O\rbra{\sqrt{t\log\rbra{1/\eps}}/\eps}$, a quadratic speedup in $t$ over the previous quantum algorithm in~\cite{NTKD25}.
The construction is described in terms of the target unitary $U_{i}$ of the state-preparation oracle.

We first show how to construct the crucial unitaries in \cref{sec:construct_unitary}.
The algorithm and the analysis of its query complexity are described in \cref{sec:algorithm}. We then connect the algorithm to the estimation of the state frame potential.

\subsubsection{Subnormalized-density-operator construction}\label{sec:construct_unitary}

After the trace reformulation introduced above, estimating $\calF_t\rbra{\calE}$ reduces to estimating the traces of the \emph{pivot matrices} defined below. These are subnormalized density operators built from $\sigma_i$ and $\sigma_j$.
The goal of this section is to construct a quantum circuit that prepares each pivot matrix as the subnormalized state of a designated register.
Its trace can then be read out by amplitude estimation in the next step.
The construction has two parity-dependent variants because the polynomial $\abs{\braket{\psi}{\phi}}^{2t}$ has the structurally distinct trace representations stated in~\cref{eq:single_matrix_even,eq:single_matrix_odd} below.

We show the common building block.
We then present separate constructions for even and odd $t$.
For fixed $s \in \N$ and each pair of density operators $\sigma_i,\sigma_j$, where $i,j \in \cbra{0,\ldots,K-1}$, define the pivot matrices $M_{i,j,s}^{\textup{even}}$ and $M_{i,j,s}^{\textup{odd}}$ by
\begin{align}\label{eq:single_matrix_even}
    M_{i,j,s}^{\textup{even}} = \rbra*{\sigma_i\sigma_j\sigma_i}^s \sigma_i \rbra*{\sigma_i\sigma_j\sigma_i}^s
\end{align}
and
\begin{align}\label{eq:single_matrix_odd}
    M_{i,j,s}^{\textup{odd}} = \rbra*{\sigma_i\sigma_j\sigma_i}^s \sigma_i \sigma_j \sigma_i \rbra*{\sigma_i\sigma_j\sigma_i}^s.
\end{align}
For an ensemble of density operators $\calE =\cbra{\rbra{\mu_i,\sigma_i}}_{i=0}^{K-1}$, define the corresponding mixed pivot matrices by
\begin{align}\label{eq:dist_matrix_even}
    M_{\calE,s}^{\textup{even}} = \sum_{i}\sum_j\mu_i \mu_j M_{i,j,s}^{\textup{even}}
\end{align}
and 
\begin{align}\label{eq:dist_matrix_odd}
    M_{\calE,s}^{\textup{odd}} = \sum_{i}\sum_j\mu_i \mu_j M_{i,j,s}^{\textup{odd}}.
\end{align}

The following lemma describes the evolution of the subnormalized density operator.
\begin{lemma}\label{lem:trace_est_subnormalized}
    Given an $\rbra{n+a}$-qubit unitary $U$ that prepares an $n$-qubit subnormalized density operator $A$, we have
    \begin{align}
        U_{\mathsf{NA_1A_2}}\ket{0}_{\mathsf{NA_1A_2}} = \sqrt{\tr\rbra*{A}}\ket{0}_{\mathsf{A_1}}\ket{\phi_0}_{\mathsf{NA_2}} 
        + \sqrt{1-\tr\rbra*{A}} \ket{0^\perp_{\mathsf{A_1}}}_{\mathsf{NA_1A_2}},
    \end{align}
    where $\mathsf{N}$, $\mathsf{A_1}$, and $\mathsf{A_2}$ are $n$-, $a_1$-, and $a_2$-qubit registers, respectively, with $a=a_1+a_2$, and $\Abs{\bra{0}_{\mathsf{A_1}}\ket{0^\perp_{\mathsf{A_1}}}_{\mathsf{NA_1A_2}}}=0$.
    Moreover, let $\mathsf{T}$ be a single-qubit system, and 
    \begin{align}
        V_{\mathsf{A_1T}} = \left(X_{\mathsf{A_1}} \otimes X_{\mathsf{T}}\right)\textup{Ctrl}_{\mathsf{A_1}}\textup{-}X_{\mathsf{T}}
        \left(X_{\mathsf{A_1}} \otimes I_{\mathsf{T}}\right)
    \end{align}
    where $X_{\mathsf{A}_1} \coloneqq X^{\otimes a_1}$, and $\textup{Ctrl}_{\mathsf{A_1}}\textup{-}X_{\mathsf{T}}$ is the controlled unitary on $\mathsf{A_1T}$ that applies $X_{\mathsf{T}}$ only when $\mathsf{A_1}$ is in the state $\ket{1\dots1}_{\mathsf{A_1}}$. It then follows that
    \begin{align}
        V_{\mathsf{A_1T}}\rbra*{\ket{0}_{\mathsf{T}}\otimes U_{\mathsf{NA_1A_2}}\ket{0}_{\mathsf{NA_1A_2}}}
        = \sqrt{\tr\rbra*{A}}\ket{0}_{\mathsf{T}}\ket{0}_{\mathsf{A_1}}\ket{\phi_0}_{\mathsf{NA_2}} + \sqrt{1-\tr\rbra*{A}} \ket{1}_{\mathsf{T}}\ket{0^\perp_{\mathsf{A_1}}}_{\mathsf{NA_1A_2}}.
\end{align}
\end{lemma}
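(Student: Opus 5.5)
The proof has two parts. First we decompose the output of $U$ according to the $\mathsf{A_1}$ register, using the definition of preparing a subnormalized density operator. Then we track the action of $V$ on each branch.

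\emph{The decomposition.} Write $\ket{\sigma}=U\ket{0}_{\mathsf{NA_1A_2}}$. Here I interpret the $a$ ancilla qubits as follows: $\mathsf{A_1}$ is the block-encoding ancilla, so $a_1$ plays the role of the block-encoding parameter, and $\mathsf{A_2}$ is the purifying register (the ``$b$'' qubits in the definition). By definition, $\tr_{\mathsf{A_2}}\rbra{\ketbra{\sigma}{\sigma}}$ is a $\rbra{1,a_1,0}$-block-encoding of $A$, i.e.
\[
\rbra{\bra{0}_{\mathsf{A_1}}\otimes I_{\mathsf{N}}}\,\tr_{\mathsf{A_2}}\rbra{\ketbra{\sigma}{\sigma}}\,\rbra{\ket{0}_{\mathsf{A_1}}\otimes I_{\mathsf{N}}}=A .
\]
Insert the resolution of identity $I_{\mathsf{A_1}}=\ketbra{0}{0}_{\mathsf{A_1}}+\rbra{I-\ketbra{0}{0}}_{\mathsf{A_1}}$ and split $\ket{\sigma}=\ket{0}_{\mathsf{A_1}}\ket{\tilde\phi_0}_{\mathsf{NA_2}}+\ket{\tilde\phi_\perp}$, where $\ket{\tilde\phi_0}=\bra{0}_{\mathsf{A_1}}\ket{\sigma}$ and $\ket{\tilde\phi_\perp}$ is supported on $\mathsf{A_1}$-states orthogonal to $\ket{0}$. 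Then
\[
\Abs{\ket{\tilde\phi_0}}^2=\tr_{\mathsf{N}}\tr_{\mathsf{A_2}}\rbra{\ketbra{\tilde\phi_0}{\tilde\phi_0}}=\tr\rbra{A}.
\]
Normalize to get $\ket{\phi_0}$ and $\ket{0^\perp_{\mathsf{A_1}}}$, with amplitudes $\sqrt{\tr\rbra{A}}$ and $\sqrt{1-\tr\rbra{A}}$. We may take these amplitudes real and nonnegative by absorbing phases into the normalized states. If $\tr\rbra{A}\in\{0,1\}$, choose either state arbitrarily. The orthogonality $\bra{0}_{\mathsf{A_1}}\ket{0^\perp_{\mathsf{A_1}}}=0$ then holds by construction.

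\emph{Action of $V$.} Check it on computational basis states $\ket{x}_{\mathsf{A_1}}\ket{0}_{\mathsf{T}}$:
\begin{itemize}
\item First, $X_{\mathsf{A_1}}$ maps $\ket{x}$ to $\ket{\bar x}$.
\item The controlled gate flips $\mathsf{T}$ iff $\bar x=1\cdots1$, i.e.\ iff $x=0\cdots0$.
\item Finally, $X_{\mathsf{A_1}}\otimes X_{\mathsf{T}}$ restores $\ket{x}$ and flips $\mathsf{T}$ once more.
\end{itemize}
So for $x=0$ the $\mathsf{T}$ register ends in $\ket{0}$ (flipped twice), and for $x\neq 0$ it ends in $\ket{1}$ (flipped once). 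Hence $V$ acts as $\ket{x}\ket{0}_{\mathsf{T}}\mapsto\ket{x}\ket{[x\neq0]}_{\mathsf{T}}$. By linearity, $V$ leaves the $\ket{0}_{\mathsf{A_1}}$ branch with $\mathsf{T}=\ket{0}$ and tags the orthogonal branch with $\ket{1}_{\mathsf{T}}$. Since $V$ acts only on $\mathsf{A_1T}$, the $\mathsf{NA_2}$ content is untouched, which gives the claimed identity.

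The only delicate point is the decomposition step: identifying which ancilla qubits are the block-encoding ancilla and which are the purification, and verifying that the squared norm of the $\ket{0}_{\mathsf{A_1}}$ branch equals $\tr\rbra{A}$. The gate-level check is routine.
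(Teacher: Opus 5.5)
Your proposal is correct and follows the paper's proof essentially step for step. You use the same split of $U\ket{0}$ via $I_{\mathsf{A_1}}=\ketbra{0}{0}_{\mathsf{A_1}}+\Pi^{\neq 0}_{\mathsf{A_1}}$, the same identification of the squared norm of the $\ket{0}_{\mathsf{A_1}}$ branch with $\tr\rbra{A}$ through the block-encoding condition, and the same fact that $V$ acts as $\ketbra{0}{0}_{\mathsf{A_1}}\otimes I_{\mathsf{T}}+\Pi^{\neq 0}_{\mathsf{A_1}}\otimes X_{\mathsf{T}}$ (you verify this gate by gate, whereas the paper simply asserts it), and your explicit treatment of the degenerate cases $\tr\rbra{A}\in\{0,1\}$ is slightly more careful than the paper, which divides by $\sqrt{\tr\rbra{A}}$ and $\sqrt{1-\tr\rbra{A}}$ without comment.
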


\begin{proof}
    Without loss of generality, assume that $U$ prepares an $\rbra{n+a_1}$-qubit density operator $\sigma$ and $U\ket{0}_{\mathsf{NA_1A_2}} = \ket{\sigma}_{\mathsf{NA_1A_2}}$, where $a=a_1+a_2$ and $\ket{\sigma}$ is a purification of $\sigma$.
    Using the identity $I_{\mathsf{A_1}} = \ketbra{0}{0}_{\mathsf{A_1}} + \Pi^{\neq 0}_{\mathsf{A_1}}$, where $\Pi^{\neq 0}_{\mathsf{A_1}} = \sum_{j=1}^{2^{a_1}-1}\ketbra{j}{j}_{\mathsf{A_1}}$, we have
    \begin{align}
        U\ket{0}_{\mathsf{NA_1A_2}}
        = \ket{0}_{\mathsf{A_1}} \otimes \bra{0}_{\mathsf{A_1}}\ket{\sigma}_{\mathsf{NA_1A_2}} + \Pi^{\neq 0}_{\mathsf{A_1}}\ket{\sigma}_{\mathsf{NA_1A_2}}.
    \end{align}
    Clearly, the former term is orthogonal to the latter term. To normalize each term, we compute $\Abs{ \bra{0}_{\mathsf{A_1}}\ket{\sigma}_{\mathsf{NA_1A_2}}}$, which is
    \begin{align}
        \Abs*{ \bra{0}_{\mathsf{A_1}}\ket{\sigma}_{\mathsf{NA_1A_2}} }
        & = \sqrt{\tr\rbra*{\bra{0}_{\mathsf{A_1}}\ketbra{\sigma}{\sigma}_{\mathsf{NA_1A_2}}\ket{0}_{\mathsf{A_1}} }} \\
        & = \sqrt{\tr\rbra*{\bra{0}_{\mathsf{A_1}}\sigma_{\mathsf{NA_1}}\ket{0}_{\mathsf{A_1}} }} \\
        & = \sqrt{\tr\rbra*{ A } },
    \end{align}
    where the last equality follows because $U$ prepares a block-encoding of $A$.
    
    By defining normalized state vectors as
    \begin{align}
        \ket{\phi_0}_{\mathsf{NA_2}} = \frac{\bra{0}_{\mathsf{A_1}}\ket{\sigma}_{\mathsf{NA_1A_2}}}{\Abs*{\bra{0}_{\mathsf{A_1}}\ket{\sigma}_{\mathsf{NA_1A_2}} }}
        = \frac{\bra{0}_{\mathsf{A_1}}\ket{\sigma}_{\mathsf{NA_1A_2}}}{\sqrt{\tr\rbra*{A}}}
    \end{align}
    and
    \begin{align}
        \ket{0^\perp_{\mathsf{A_1}}}_{\mathsf{NA_1A_2}} = \frac{\Pi^{\neq 0}_{\mathsf{A_1}}\ket{\sigma}_{\mathsf{NA_1A_2}}}{\sqrt{1-\tr\rbra*{A}}},
    \end{align}
    we obtain
    \begin{align}
        U\ket{0}_{\mathsf{NA_1A_2}}
        = \sqrt{\tr\rbra*{A}}\ket{0}_{\mathsf{A_1}}\ket{\phi_0}_{\mathsf{NA_2}}
        + \sqrt{1-\tr\rbra*{A}} \ket{0^\perp_{\mathsf{A_1}}}_{\mathsf{NA_1A_2}}.
    \end{align}
    Note that $\Abs{\bra{0}_{\mathsf{A_1}}\ket{0^\perp_{\mathsf{A_1}}}_{\mathsf{NA_1A_2}}} = 0$ by construction.
    
    The last statement follows from $V_{\mathsf{A_1T}} = \ketbra{0}{0}_{\mathsf{A_1}} \otimes I_{\mathsf{T}} + \Pi^{\neq 0}_{\mathsf{A_1}} \otimes X_{\mathsf{T}}$, which follows directly from the definition.
\end{proof}

\paragraph{Common gadget.}

The two parity-specific pivot matrices share a common factor $\rbra{\sigma_i\sigma_j\sigma_i}^s$.
We therefore first build a block-encoding of this shared factor, which both subsequent constructions reuse.

First, we construct a block-encoding of $\sigma_i$ for each $i \in \cbra{0,\ldots,K-1}$.
Given the $\rbra{n+n_\sigma}$-qubit unitary $\calO_{\sigma_i}$ providing purified access to $\sigma_i$, \cref{lem:BE_density_operator} constructs a unitary $U_{\sigma_i}$ that is a $\rbra{1,n+n_\sigma,0}$-block-encoding of $\sigma_i$.
Note that each $U_{\sigma_i}$ takes $O\rbra{1}$ queries to $\calO_{\sigma_i}$.

For $i,j \in \cbra{0,\ldots,K-1}$, we next construct a block-encoding of the subnormalized density operator $A_{i,j} = \sigma_i\sigma_j\sigma_i$.
We use \cref{lem:product_BE} to construct the $\rbra{4n+3n_\sigma}$-qubit unitary operator $U_{A_{i,j}}$, which is $\rbra{1,3n+3n_\sigma,0}$-block-encoding of subnormalized density operator $A_{i,j}$.

For $i,j \in \cbra{0,\ldots,K-1}$, we then construct a block-encoding of $\frac{1}{2}p\rbra{\sigma_i\sigma_j\sigma_i}$, where $p\rbra{x} \approx x^s$.
By~\cref{thm:poly_approx_monomials}, there exists a polynomial $p \in \R\sbra{x}$ of degree $O\rbra{\sqrt{s\log\rbra{1/\eps_1}}}$ and parity $s \bmod 2$ such that
\begin{align}\label{eq:monomial_approx}
    \abs*{p\rbra*{x}-x^s} \leq \eps_1 \textup{ for } x \in \interval[open]{-1}{1}
\end{align}
and
\begin{align}\label{eq:monomial_bound}
    \abs*{p\rbra*{x}} \leq 1 \textup{ for } x \in \interval[open]{-1}{1}.
\end{align}

By~\cref{thm:qsvt}, with the polynomial $\frac{1}{2}p$, $\alpha \coloneqq 1$, $a \coloneqq 3n+3n_\sigma$ and $\eps \coloneqq 0$, we can implement the unitary operator $U_{p\rbra{A_{i,j}}}$ which is a $\rbra{1,3n+3n_\sigma+2,\delta_2}$-block-encoding of $\frac{1}{2}p\rbra{\sigma_i\sigma_j\sigma_i}$, by using $O\rbra{\sqrt{s\log\rbra{\frac{1}{\eps_1}}}}$ queries to $U_{A_{i,j}}$ and the circuit description of $U_{p\rbra{A_{i,j}}}$ can be computed in classical time $\poly\rbra{s,\log{\frac{1}{\delta_2}}}$.
Suppose that $U_{p\rbra{A_{i,j}}}$ is the $\rbra{1,3n+3n_\sigma+2,0}$-block-encoding of $P_{i,j}$ that satisfies
\begin{align}\label{eq:monomial_round}
    \Abs*{P_{i,j}-\frac{1}{2}p\rbra*{\sigma_i\sigma_j\sigma_i}} \leq \delta_2 \quad \textup{ and } \quad \Abs{P_{i,j}} \leq 1.
\end{align}

\paragraph{Even order gadget.}

For even $t = 2s$, the pivot matrix has the symmetric form
\begin{equation}
    \rbra{\sigma_i\sigma_j\sigma_i}^s \sigma_i \rbra*{\sigma_i\sigma_j\sigma_i}^s.
\end{equation}
Sandwiching $\sigma_i$ by the common factor on both sides produces a subnormalized density operator that we can prepare directly.

For $i,j \in \cbra{0,\ldots,K-1}$, construct the unitary $U_{B_{i,j}}^{\textup{even}}$ that approximately prepares
\begin{equation}
    \frac{1}{4}p\rbra{\sigma_i\sigma_j\sigma_i}\sigma_i{p\rbra{\sigma_i\sigma_j\sigma_i}}^\dagger.
\end{equation}
Because $\calO_{\sigma_i}$ is an $\rbra{n+n_\sigma}$-qubit unitary that prepares $\sigma_i$, \cref{lem:sandwiched_product} lets us construct $U_{B_{i,j}}^{\textup{even}}=\rbra{U_{p\rbra{A_{i,j}}} \otimes I}\rbra{\calO_{\sigma_i} \otimes I}$. This unitary prepares the $n$-qubit subnormalized density operator
\begin{align}\label{eq:evolution_even}
    E_{i,j}^{\textup{even}} = P_{i,j}\sigma_i{P_{i,j}}^\dagger.
\end{align} 

By~\cref{lem:trace_est_subnormalized}, we obtain the following result.
\begin{proposition}\label{prop:construct_ub_even}
    Suppose that the $\rbra{4n+4n_\sigma}$-qubit unitary operator $U_{B_{i,j}}^{\textup{even}}$ can prepare $\rbra{n+n_{\eta}}$-qubit density operator $\eta$ where $4n+4n_\sigma=n+n_{\eta}+n_{\bar{\eta}}$, which is a $\rbra{1,n_\eta,0}$-block-encoding of $E_{i,j}^{\textup{even}}$.
    Then
    \begin{align}
        U_{B_{i,j}}^{\textup{even}}\ket{0}_{\mathsf{N}\mathsf{N}_{\eta}\mathsf{N}_{\bar{\eta}}}
        = \sqrt{\tr\rbra{E_{i,j}^{\textup{even}}}}\ket{0}_{\mathsf{N}_\eta}\ket{\phi_0}_{\mathsf{N }\mathsf{N}_{\bar{\eta}}}
        + \sqrt{1-\tr\rbra{E_{i,j}^{\textup{even}}}}\ket{0^\perp_{\mathsf{N}_\eta}}_{\mathsf{N}\mathsf{N}_\eta\mathsf{N}_{\bar{\eta}}} ,
    \end{align}    
    where $\mathsf{N}$, $\mathsf{N}_{\eta}$, and $\mathsf{N}_{\bar{\eta}}$ are $n$-, $n_{\eta}$-, and $n_{\bar{\eta}}$-qubit registers, respectively, and $\Abs{\bra{0}_{\mathsf{N}_{\eta}}\ket{0^\perp_{\mathsf{N}_{\eta}}}_{\mathsf{N}\mathsf{N}_{\eta}\mathsf{N}_{\bar{\eta}}}}=0$.
\end{proposition}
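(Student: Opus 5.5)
This is a direct specialization of \cref{lem:trace_est_subnormalized}. The plan is to check that $U_{B_{i,j}}^{\textup{even}}$ satisfies that lemma's hypotheses, with the register split $a_1 = n_\eta$ and $a_2 = n_{\bar\eta}$, and then read off the decomposition.

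\textbf{Step 1: $U_{B_{i,j}}^{\textup{even}}$ prepares $E_{i,j}^{\textup{even}}$ in the paper's sense.} By \cref{eq:monomial_round}, $U_{p\rbra{A_{i,j}}}$ is a $\rbra{1,3n+3n_\sigma+2,0}$-block-encoding of $P_{i,j}$. By construction, $\calO_{\sigma_i}$ is an $\rbra{n+n_\sigma}$-qubit unitary that prepares the normalized density operator $\sigma_i$ (with $a=0$). \cref{lem:sandwiched_product} therefore shows that $\rbra{U_{p\rbra{A_{i,j}}} \otimes I}\rbra{\calO_{\sigma_i} \otimes I}$ prepares the subnormalized density operator $P_{i,j}\sigma_i P_{i,j}^\dagger = E_{i,j}^{\textup{even}}$.

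Two facts make $E_{i,j}^{\textup{even}}$ a genuine subnormalized density operator: it is positive semidefinite as a congruence of $\sigma_i \ge 0$, and $\tr\rbra{E_{i,j}^{\textup{even}}} \le \Abs{P_{i,j}}^2\tr\rbra{\sigma_i} \le 1$ because $\Abs{P_{i,j}}\le 1$. Unwinding the definition, the reduced state $\eta$ on $\mathsf{N}\mathsf{N}_\eta$, obtained by tracing out the purifying register $\mathsf{N}_{\bar\eta}$, is a $\rbra{1,n_\eta,0}$-block-encoding of $E_{i,j}^{\textup{even}}$. Here $n_\eta = 3n+3n_\sigma+2$ collects the block-encoding ancillas. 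This is exactly what the proposition assumes.

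\textbf{Step 2: count qubits so the registers match.} The total width is $n + n_\eta + n_{\bar\eta} = 4n+4n_\sigma$, as stated. If the ancilla count of $U_{p\rbra{A_{i,j}}}$ differs from $3n+3n_\sigma$ by the two QSVT qubits, I will absorb the discrepancy into $\mathsf{N}_{\bar\eta}$ or pad with idle qubits. This is harmless, since idle $\ket{0}$ qubits can be assigned to $\mathsf{N}_{\bar\eta}$ without changing the block-encoded operator.

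\textbf{Step 3: apply \cref{lem:trace_est_subnormalized}.} Take $U = U_{B_{i,j}}^{\textup{even}}$, $A = E_{i,j}^{\textup{even}}$, $\mathsf{A_1}=\mathsf{N}_\eta$, and $\mathsf{A_2}=\mathsf{N}_{\bar\eta}$. The lemma gives the claimed decomposition: $\ket{\phi_0}$ is the normalized projection onto $\ket{0}_{\mathsf{N}_\eta}$, and $\ket{0^\perp_{\mathsf{N}_\eta}}$ is the normalized complement, so the orthogonality $\Abs{\bra{0}_{\mathsf{N}_\eta}\ket{0^\perp_{\mathsf{N}_\eta}}}=0$ holds.

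The only real care point is Step 1: confirming that \cref{lem:sandwiched_product} applies with $B=P_{i,j}$ (an exact block-encoding) and that the register ordering and tensor padding align. The rest is bookkeeping.
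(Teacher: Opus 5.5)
Your proposal is correct and follows the paper's route exactly: the paper derives this proposition in one line from \cref{lem:trace_est_subnormalized} with $\mathsf{A_1}=\mathsf{N}_\eta$ and $\mathsf{A_2}=\mathsf{N}_{\bar\eta}$, and your Steps~1--2 only re-verify the preparation property that the proposition already assumes. One side remark: the two QSVT ancillas make the true width $4n+4n_\sigma+2$, and padding cannot shrink it, but this is only a bookkeeping slip in the paper's qubit count and does not matter, since \cref{lem:trace_est_subnormalized} holds for any register split.
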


To apply~\cref{thm:amp_est}, append a new single-qubit register $\ket{0}_{\mathsf{T}}$.
We furthermore construct the following unitary $U_{C_{i,j}}^{\textup{even}}$ such that
\begin{align}
    U_{C_{i,j}}^{\textup{even}}
    = \rbra*{X^{\otimes n_{\eta}} \otimes I_{\mathsf{N}\mathsf{N}_{\bar{\eta}}} \otimes X_{\mathsf{T}}} \rbra*{\textup{Ctrl}_{\mathsf{N}_\eta}\textup{-}X_{\mathsf{T}} \otimes I_{\mathsf{N}\mathsf{N}_{\bar{\eta}}}}
    \rbra*{X^{\otimes n_{\eta}} \otimes I_{\mathsf{N}\mathsf{N}_{\bar{\eta}}} \otimes I_{\mathsf{T}}} \rbra*{U_{B_{i,j}}^{\textup{even}} \otimes I_{\mathsf{T}}} .
\end{align}

More specifically, the unitary operator $U_{C_{i,j}}^{\textup{even}}$ works as follows.

\begin{proposition}\label{prop:construct_uc_even}
    The unitary operator $U_{C_{i,j}}^{\textup{even}}=\mathsf{AmpPurification}\rbra{U_{B_{i,j}}^{\textup{even}}}$ prepares the following state
    \begin{align}
        U_{C_{i,j}}^{\textup{even}}\ket{0}_{\mathsf{N}\mathsf{N}_{\eta}\mathsf{N}_{\bar{\eta}}\mathsf{T}}
        = \sqrt{\tr\rbra*{E_{i,j}^{\textup{even}}}}\ket{0}_{\mathsf{T}}\ket{\xi_0}_{\mathsf{N}\mathsf{N}_{\eta}\mathsf{N}_{\bar{\eta}}}
        +
        \sqrt{1-\tr\rbra*{E_{i,j}^{\textup{even}}}}\ket{1}_{\mathsf{T}}\ket{\xi_1}_{\mathsf{N}\mathsf{N}_\eta\mathsf{N}_{\bar{\eta}}} ,
    \end{align}
    where $\braket{\xi_0}{\xi_1}=0$.
\end{proposition}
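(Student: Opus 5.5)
This follows almost directly from \cref{prop:construct_ub_even} combined with the second part of \cref{lem:trace_est_subnormalized}. The unitary $U_{C_{i,j}}^{\textup{even}}$ is by definition
\[
\rbra*{V_{\mathsf{N}_\eta\mathsf{T}} \otimes I_{\mathsf{N}\mathsf{N}_{\bar\eta}}}\rbra*{U_{B_{i,j}}^{\textup{even}}\otimes I_{\mathsf{T}}}.
\]
Here $V_{\mathsf{N}_\eta\mathsf{T}}$ is exactly the gadget $V_{\mathsf{A_1T}}$ of \cref{lem:trace_est_subnormalized}, with $\mathsf{A_1}=\mathsf{N}_\eta$ and $\mathsf{A_2}=\mathsf{N}_{\bar\eta}$. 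So the plan is to first verify this identification term by term against the defining product. The outer factor $X^{\otimes n_\eta}\otimes X_{\mathsf{T}}$, the middle controlled gate $\textup{Ctrl}_{\mathsf{N}_\eta}\textup{-}X_{\mathsf{T}}$, and the inner factor $X^{\otimes n_\eta}\otimes I_{\mathsf{T}}$ coincide with the three factors of $V_{\mathsf{A_1T}}$.

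Next I would establish the operator identity
\[
V_{\mathsf{A_1T}}=\ketbra{0}{0}_{\mathsf{A_1}}\otimes I_{\mathsf{T}}+\Pi^{\neq0}_{\mathsf{A_1}}\otimes X_{\mathsf{T}},
\]
which the lemma asserts. Conjugating the controlled gate by $X^{\otimes a_1}$ turns the control on $\ket{1\cdots1}$ into a control on $\ket{0\cdots0}$. The extra $X_{\mathsf{T}}$ then flips $\mathsf{T}$ on all other control values, which gives the identity. This is the only step needing an actual check.

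Then apply this operator to $\ket{0}_{\mathsf{T}}\otimes U_{B_{i,j}}^{\textup{even}}\ket{0}$, using the decomposition from \cref{prop:construct_ub_even}. The first branch is supported on $\ket{0}_{\mathsf{N}_\eta}$, so it keeps $\mathsf{T}$ in $\ket{0}$. The second branch is annihilated by $\bra{0}_{\mathsf{N}_\eta}$, so it lies in the range of $\Pi^{\neq0}_{\mathsf{N}_\eta}$ and flips $\mathsf{T}$ to $\ket{1}$. Setting
\[
\ket{\xi_0}=\ket{0}_{\mathsf{N}_\eta}\ket{\phi_0}_{\mathsf{N}\mathsf{N}_{\bar\eta}},\qquad \ket{\xi_1}=\ket{0^\perp_{\mathsf{N}_\eta}},
\]
gives the claimed form with the amplitudes $\sqrt{\tr\rbra{E_{i,j}^{\textup{even}}}}$ and $\sqrt{1-\tr\rbra{E_{i,j}^{\textup{even}}}}$.

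Orthogonality $\braket{\xi_0}{\xi_1}=0$ follows because $\ket{\xi_0}$ lies in the $\ket{0}_{\mathsf{N}_\eta}$ subspace and $\ket{\xi_1}$ has zero overlap with it. In any case the two branches are already orthogonal through the $\mathsf{T}$ register, which is all that \cref{thm:amp_est} needs. There is one degenerate case: if $\tr\rbra{E_{i,j}^{\textup{even}}}\in\cbra{0,1}$, the corresponding normalized vector is undefined, and I would choose it arbitrarily, since its coefficient vanishes. I expect no real obstacle; the main care is the register bookkeeping in identifying $V$ with the defining product.
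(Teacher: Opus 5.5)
Your proposal is correct and is essentially the paper's argument. The paper pushes the decomposition of \cref{prop:construct_ub_even} through the three gate layers of $U_{C_{i,j}}^{\textup{even}}$ by expanding $\ket{0^\perp_{\mathsf{N}_\eta}}$ in the computational basis (coefficients $\beta_{l,r,m}$ with $r\geq 1$); this is just a layer-by-layer check of the same identity $V=\ketbra{0}{0}_{\mathsf{N}_\eta}\otimes I_{\mathsf{T}}+\Pi^{\neq 0}_{\mathsf{N}_\eta}\otimes X_{\mathsf{T}}$ that you use, and it arrives at the same choice $\ket{\xi_0}=\ket{0}_{\mathsf{N}_\eta}\ket{\phi_0}_{\mathsf{N}\mathsf{N}_{\bar{\eta}}}$ and $\ket{\xi_1}=\ket{0^\perp_{\mathsf{N}_\eta}}_{\mathsf{N}\mathsf{N}_\eta\mathsf{N}_{\bar{\eta}}}$.
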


\begin{proof}
    By~\cref{prop:construct_ub_even}, suppose that $\ket{0^\perp_{\mathsf{N}_\eta}}_{\mathsf{N}\mathsf{N}_\eta\mathsf{N}_{\bar{\eta}}}$ has the following expansion:
    \begin{align}
        \ket{0^\perp_{\mathsf{N}_\eta}}_{\mathsf{N}\mathsf{N}_\eta\mathsf{N}_{\bar{\eta}}}
        = \sum_{l=0}^{2^n-1}\sum_{r=1}^{2^{n_\eta}-1}\sum_{m=0}^{2^{n_{\bar{\eta}}}-1}\frac{\beta_{l,r,m}}{\sqrt{1-\tr\rbra*{E_{i,j}^{\textup{even}}}}}\ket{l}_{\mathsf{N}}\ket{r}_{\mathsf{N}_\eta}\ket{m}_{\mathsf{N}_{\bar{\eta}}}.
    \end{align}
    
    We expand the expression step by step:
    \begin{align}
        & U_{C_{i,j}}^{\textup{even}}\ket{0}_{\mathsf{N}\mathsf{N}_{\eta}\mathsf{N}_{\bar{\eta}}\mathsf{T}} \\
        = {} & \rbra*{X^{\otimes n_{\eta}} \otimes I_{\mathsf{N}\mathsf{N}_{\bar{\eta}}} \otimes X_{\mathsf{T}}} \rbra*{\textup{Ctrl}_{\mathsf{N}_\eta}\textup{-}X_{\mathsf{T}} \otimes I_{\mathsf{N}\mathsf{N}_{\bar{\eta}}}}\rbra*{X^{\otimes n_{\eta}} \otimes I_{\mathsf{N}\mathsf{N}_{\bar{\eta}}} \otimes I_{\mathsf{T}}} \rbra*{U_{B_{i,j}}^{\textup{even}} \otimes I_{\mathsf{T}}} \ket{0}_{\mathsf{N}\mathsf{N}_{\eta}\mathsf{N}_{\bar{\eta}}\mathsf{T}}\\
        = {} & \rbra*{X^{\otimes n_{\eta}} \otimes I_{\mathsf{N}\mathsf{N}_{\bar{\eta}}} \otimes X_{\mathsf{T}}} \rbra*{\textup{Ctrl}_{\mathsf{N}_\eta}\textup{-}X_{\mathsf{T}} \otimes I_{\mathsf{N}\mathsf{N}_{\bar{\eta}}}}\rbra*{X^{\otimes n_{\eta}} \otimes I_{\mathsf{N}\mathsf{N}_{\bar{\eta}}} \otimes I_{\mathsf{T}}} \nonumber \\
        &\rbra*{\sqrt{\tr\rbra{E_{i,j}^{\textup{even}}}}\ket{0}_{\mathsf{N}_\eta}\ket{\phi_0}_{\mathsf{N}\mathsf{N}_{\bar{\eta}}} \otimes \ket{0}_{\mathsf{T}} +  \sum_{l=0}^{2^n-1}\sum_{r=1}^{2^{n_\eta}-1}\sum_{m=0}^{2^{n_{\bar{\eta}}}-1}\beta_{l,r,m}\ket{r}_{\mathsf{N}_\eta}\ket{l}_{\mathsf{N}}\ket{m}_{\mathsf{N}_{\bar{\eta}}} \otimes \ket{0}_{\mathsf{T}}}\\
        = {} & \rbra*{X^{\otimes n_{\eta}} \otimes I_{\mathsf{N}\mathsf{N}_{\bar{\eta}}} \otimes X_{\mathsf{T}}} \rbra*{\textup{Ctrl}_{\mathsf{N}_\eta}\textup{-}X_{\mathsf{T}} \otimes I_{\mathsf{N}\mathsf{N}_{\bar{\eta}}}} \nonumber \\
        &\rbra*{\sqrt{\tr\rbra*{E_{i,j}^{\textup{even}}}}\ket{\bar{1}}_{\mathsf{N}_\eta}\ket{\phi_0}_{\mathsf{N}\mathsf{N}_{\bar{\eta}}} \otimes \ket{0}_{\mathsf{T}} + \sum_{l=0}^{2^n-1}\sum_{r=1}^{2^{n_\eta}-1}\sum_{m=0}^{2^{n_{\bar{\eta}}}-1}\beta_{l,r,m}\ket{\neg r}_{\mathsf{N}_\eta}\ket{l}_{\mathsf{N}}\ket{m}_{\mathsf{N}_{\bar{\eta}}} \otimes \ket{0}_{\mathsf{T}}}\\
        = {} & \rbra*{X^{\otimes n_{\eta}} \otimes I_{\mathsf{N}\mathsf{N}_{\bar{\eta}}} \otimes X_{\mathsf{T}}} \nonumber \\ 
        &\rbra*{\sqrt{\tr\rbra*{E_{i,j}^{\textup{even}}}}\ket{\bar{1}}_{\mathsf{N}_\eta}\ket{\phi_0}_{\mathsf{N}\mathsf{N}_{\bar{\eta}}} \otimes \ket{1}_{\mathsf{T}} + \sum_{l=0}^{2^n-1}\sum_{r=1}^{2^{n_\eta}-1}\sum_{m=0}^{2^{n_{\bar{\eta}}}-1}\beta_{l,r,m}\ket{\neg r}_{\mathsf{N}_\eta}\ket{l}_{\mathsf{N}}\ket{m}_{\mathsf{N}_{\bar{\eta}}} \otimes \ket{0}_{\mathsf{T}}}\\
        = {} & \sqrt{\tr\rbra*{E_{i,j}^{\textup{even}}}}\ket{0}_{\mathsf{N}_\eta}\ket{\phi_0}_{\mathsf{N}\mathsf{N}_{\bar{\eta}}} \otimes \ket{0}_{\mathsf{T}} + \sum_{l=0}^{2^n-1}\sum_{r=1}^{2^{n_\eta}-1}\sum_{m=0}^{2^{n_{\bar{\eta}}}-1}\beta_{l,r,m}\ket{r}_{\mathsf{N}_\eta}\ket{l}_{\mathsf{N}}\ket{m}_{\mathsf{N}_{\bar{\eta}}} \otimes \ket{1}_{\mathsf{T}}\\
        = {} & \sqrt{\tr\rbra*{E_{i,j}^{\textup{even}}}}\ket{0}_{\mathsf{T}}\ket{0}_{\mathsf{N}_\eta}\ket{\phi_0}_{\mathsf{N}\mathsf{N}_{\bar{\eta}}} + \sqrt{1-\tr\rbra*{E_{i,j}^{\textup{even}}}}\ket{1}_{\mathsf{T}}\ket{0^\perp_{\mathsf{N}_\eta}}_{\mathsf{N}\mathsf{N}_{\eta}\mathsf{N}_{\bar{\eta}}},
    \end{align}
    where we have rearranged the registers in the last line.
    Renaming $\ket{\xi_0}_{\mathsf{N}\mathsf{N}_\eta\mathsf{N}_{\bar{\eta}}} = \ket{0}_{\mathsf{N}_\eta}\ket{\phi_0}_{\mathsf{N}\mathsf{N}_{\bar{\eta}}}$ and $\ket{\xi_1}_{\mathsf{N}\mathsf{N}_\eta\mathsf{N}_{\bar{\eta}}} = \ket{0^\perp_{\mathsf{N}_\eta}}_{\mathsf{N}\mathsf{N}_\eta\mathsf{N}_{\bar{\eta}}}$ gives the desired result.
\end{proof}

\paragraph{Odd-order gadget.}

For odd $t = 2s+1$, the pivot matrix is $\rbra*{\sigma_i\sigma_j\sigma_i}^s \sigma_i \sigma_j \sigma_i \rbra*{\sigma_i\sigma_j\sigma_i}^s$, i.e., an extra length-three product appears between the symmetric sandwiches.
Compared to the gadget for the even case, the evolution is now on the subnormalized density operator $\sigma_i\sigma_j\sigma_i$ rather than $\sigma_i$.

By~\cref{lem:sandwiched_product}, we can construct the $\rbra{2n+2n_\sigma}$-qubit unitary $U_{D_{i,j}}$ that prepares the subnormalized density operator $\sigma_i\sigma_j\sigma_i$.
For $i,j \in \cbra{0,\ldots,K-1}$, we construct the unitary $U_{B_{i,j}}^{\textup{odd}}$ that approximately prepares $\frac{1}{4}p\rbra{\sigma_i\sigma_j\sigma_i}\rbra{\sigma_i\sigma_j\sigma_i}{p\rbra{\sigma_i\sigma_j\sigma_i}}^\dagger$.
Applying~\cref{lem:sandwiched_product} again to the unitary $U_{D_{i,j}}$, which prepares the $n$-qubit subnormalized density operator $D_{i,j}$, we construct $U_{B_{i,j}}^{\textup{odd}}=\rbra{U_{p\rbra{A_{i,j}}} \otimes I}\rbra{U_{D_{i,j}} \otimes I}$. This unitary prepares the $n$-qubit subnormalized density operator
\begin{align}\label{eq:evolution_odd}
    E_{i,j}^{\textup{odd}} = P_{i,j}\rbra*{\sigma_i\sigma_j\sigma_i}{P_{i,j}}^\dagger.
\end{align} 

By~\cref{lem:trace_est_subnormalized}, we obtain the following result.
\begin{proposition}\label{prop:construct_ub_odd}
    Suppose that the unitary operator $U_{B_{i,j}}^{\textup{odd}}$ can prepare $\rbra{n+n_{\eta}}$-qubit density operator $\eta$ where $5n+5n_\sigma=n+n_{\eta}+n_{\bar{\eta}}$, which is a $\rbra{1,n_\eta,0}$-block-encoding of $E_{i,j}^{\textup{odd}}$.
    Then
    \begin{align}
        U_{B_{i,j}}^{\textup{odd}}\ket{0}_{\mathsf{N}\mathsf{N}_{\eta}\mathsf{N}_{\bar{\eta}}}
        = \sqrt{\tr\rbra{E_{i,j}^{\textup{odd}}}}\ket{0}_{\mathsf{N}_\eta}\ket{\phi_0}_{\mathsf{N}\mathsf{N}_{\bar{\eta}}}
+
        \sqrt{1-\tr\rbra{E_{i,j}^{\textup{odd}}}}\ket{0^\perp_{\mathsf{N}_\eta}}_{\mathsf{N}\mathsf{N}_\eta\mathsf{N}_{\bar{\eta}}} ,
    \end{align}    
    where $\mathsf{N}$, $\mathsf{N}_{\eta}$, $\mathsf{N}_{\bar{\eta}}$ are $n$-, $n_{\eta}$- and $n_{\bar{\eta}}$-qubit registers respectively and $\Abs*{\bra{0}_{\mathsf{N}_{\eta}}\ket{0^\perp_{\mathsf{N}_{\eta}}}_{\mathsf{N}\mathsf{N}_{\eta}\mathsf{N}_{\bar{\eta}}}}=0$.
\end{proposition}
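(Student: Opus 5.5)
This is the odd-order analogue of \cref{prop:construct_ub_even}, and I would prove it the same way: as a direct instance of \cref{lem:trace_est_subnormalized}. First I identify the registers. $U_{B_{i,j}}^{\textup{odd}}=\rbra{U_{p\rbra{A_{i,j}}} \otimes I}\rbra{U_{D_{i,j}} \otimes I}$ is a $\rbra{5n+5n_\sigma}$-qubit unitary. By \cref{lem:sandwiched_product}, it prepares the $n$-qubit subnormalized density operator $E_{i,j}^{\textup{odd}}=P_{i,j}\rbra{\sigma_i\sigma_j\sigma_i}P_{i,j}^\dagger$. The two inputs to that lemma are checked as follows.
\begin{itemize}
    \item $U_{D_{i,j}}$ prepares $D_{i,j}=\sigma_i\sigma_j\sigma_i$. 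This follows from \cref{lem:sandwiched_product} applied with $\calO_{\sigma_j}$ preparing $\sigma_j$ and $U_{\sigma_i}$ being a $\rbra{1,n+n_\sigma,0}$-block-encoding of $\sigma_i=\sigma_i^\dagger$.
    \item $U_{p\rbra{A_{i,j}}}$ is a $\rbra{1,\cdot,0}$-block-encoding of $P_{i,j}$, which is how $P_{i,j}$ was defined.
\end{itemize}

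Next I fix the register split. Following the definition of ``prepares a subnormalized density operator'', the reduced state on $\mathsf{N}\mathsf{N}_\eta$ of $U_{B_{i,j}}^{\textup{odd}}\ket{0}$ is an $\rbra{n+n_\eta}$-qubit density operator $\eta$. This $\eta$ is a $\rbra{1,n_\eta,0}$-block-encoding of $E_{i,j}^{\textup{odd}}$, and the remaining $n_{\bar\eta}$ qubits purify it. The qubit count gives $5n+5n_\sigma=n+n_\eta+n_{\bar\eta}$.

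Then I apply \cref{lem:trace_est_subnormalized} with $U \coloneqq U_{B_{i,j}}^{\textup{odd}}$, $A\coloneqq E_{i,j}^{\textup{odd}}$, $\mathsf{A_1}\coloneqq\mathsf{N}_\eta$ and $\mathsf{A_2}\coloneqq\mathsf{N}_{\bar\eta}$. Concretely, I insert $I_{\mathsf{N}_\eta}=\ketbra{0}{0}_{\mathsf{N}_\eta}+\Pi^{\neq0}_{\mathsf{N}_\eta}$ to split the output state into two orthogonal pieces. The squared norm of the first piece is
\[
\tr\rbra{\bra{0}_{\mathsf{N}_\eta}\eta\ket{0}_{\mathsf{N}_\eta}}=\tr\rbra{E_{i,j}^{\textup{odd}}},
\]
and the squared norm of the second piece is $1-\tr\rbra{E_{i,j}^{\textup{odd}}}$. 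Normalizing the two pieces gives $\ket{\phi_0}$ and $\ket{0^\perp_{\mathsf{N}_\eta}}$. The orthogonality $\Abs{\bra{0}_{\mathsf{N}_\eta}\ket{0^\perp_{\mathsf{N}_\eta}}}=0$ is immediate because $\bra{0}\Pi^{\neq0}=0$.

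The only step needing care is confirming that $E_{i,j}^{\textup{odd}}$ really is a subnormalized density operator, so that the lemma applies and the amplitudes are well defined.
\begin{itemize}
    \item Positivity: $E_{i,j}^{\textup{odd}}$ is positive semidefinite because $\sigma_i\sigma_j\sigma_i \succeq 0$ and conjugation by $P_{i,j}$ preserves positivity.
    \item Trace bound: using $\Abs{P_{i,j}}\le1$ from \cref{eq:monomial_round}, I get $\tr\rbra{E_{i,j}^{\textup{odd}}}\le\tr\rbra{\sigma_i\sigma_j\sigma_i}\le1$.
\end{itemize}
Once this holds, the remainder is bookkeeping of registers identical to the even case.
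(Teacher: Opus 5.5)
Your proposal is correct and follows the paper's route: the paper obtains this proposition as a direct instance of \cref{lem:trace_est_subnormalized} with $\mathsf{A_1}=\mathsf{N}_\eta$ and $\mathsf{A_2}=\mathsf{N}_{\bar\eta}$, which is what you do, re-deriving that lemma's identity-insertion argument inline. Your separate subnormalization check is harmless but not needed, since $E_{i,j}^{\textup{odd}}=\bra{0}_{\mathsf{N}_\eta}\eta\ket{0}_{\mathsf{N}_\eta}$ with $\eta$ a density operator is automatically positive semidefinite with trace at most one.
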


To apply~\cref{thm:amp_est}, append a new single-qubit register $\ket{0}_{\mathsf{T}}$.
We furthermore construct the following unitary $U_{C_{i,j}}^{\textup{odd}}$ such that
\begin{align}
    U_{C_{i,j}}^{\textup{odd}}
    = {} & \rbra*{X^{\otimes n_{\eta}} \otimes I_{\mathsf{N}\mathsf{N}_{\bar{\eta}}} \otimes X_{\mathsf{T}}} \rbra*{\textup{Ctrl}_{\mathsf{N}_\eta}\textup{-}X_{\mathsf{T}} \otimes I_{\mathsf{N}\mathsf{N}_{\bar{\eta}}}}
    \rbra*{X^{\otimes n_{\eta}} \otimes I_{\mathsf{N}\mathsf{N}_{\bar{\eta}}} \otimes I_{\mathsf{T}}} \rbra*{U_{B_{i,j}}^{\textup{odd}} \otimes I_{\mathsf{T}}} .
\end{align}

More specifically, the unitary operator $U_{C_{i,j}}^{\textup{odd}}$ works as follows.

\begin{proposition}\label{prop:construct_uc_odd}
    The unitary operator $U_{C_{i,j}}^{\textup{odd}}=\mathsf{AmpPurification}\rbra{U_{B_{i,j}}^{\textup{odd}}}$ prepares the following state
    \begin{align}
        U_{C_{i,j}}^{\textup{odd}}\ket{0}_{\mathsf{N}\mathsf{N}_{\eta}\mathsf{N}_{\bar{\eta}}\mathsf{T}} 
        = \sqrt{\tr\rbra*{E_{i,j}^{\textup{odd}}}}\ket{0}_{\mathsf{T}}\ket{\xi_0}_{\mathsf{N}\mathsf{N}_\eta\mathsf{N}_{\bar{\eta}}}
        + \sqrt{1-\tr\rbra*{E_{i,j}^{\textup{odd}}}}\ket{1}_{\mathsf{T}}\ket{\xi_1}_{\mathsf{N}\mathsf{N}_\eta\mathsf{N}_{\bar{\eta}}}
    \end{align}
    where $\braket{\xi_0}{\xi_1}=0$.
\end{proposition}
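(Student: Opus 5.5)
The proof will follow the argument for \cref{prop:construct_uc_even} verbatim, with $E_{i,j}^{\textup{even}}$ replaced by $E_{i,j}^{\textup{odd}}$ and \cref{prop:construct_ub_even} replaced by \cref{prop:construct_ub_odd}. I will start from the decomposition guaranteed by \cref{prop:construct_ub_odd}:
\[
U_{B_{i,j}}^{\textup{odd}}\ket{0}_{\mathsf{N}\mathsf{N}_{\eta}\mathsf{N}_{\bar{\eta}}} = \sqrt{\tr\rbra{E_{i,j}^{\textup{odd}}}}\ket{0}_{\mathsf{N}_\eta}\ket{\phi_0}_{\mathsf{N}\mathsf{N}_{\bar{\eta}}} + \sqrt{1-\tr\rbra{E_{i,j}^{\textup{odd}}}}\ket{0^\perp_{\mathsf{N}_\eta}}.
\]
Here the second component is supported on computational basis states of $\mathsf{N}_\eta$ different from $\ket{0}$. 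This is the content of the condition $\Abs{\bra{0}_{\mathsf{N}_\eta}\ket{0^\perp_{\mathsf{N}_\eta}}}=0$. I will write this component explicitly as a sum over indices $r\neq 0$ on $\mathsf{N}_\eta$, with coefficients $\beta_{l,r,m}$.

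Next I will observe that the three gates applied after $U_{B_{i,j}}^{\textup{odd}}\otimes I_{\mathsf{T}}$ are exactly the unitary $V_{\mathsf{A_1T}}$ of \cref{lem:trace_est_subnormalized}, taking $\mathsf{A_1}=\mathsf{N}_\eta$. That lemma shows $V=\ketbra{0}{0}_{\mathsf{N}_\eta}\otimes I_{\mathsf{T}}+\Pi^{\neq0}_{\mathsf{N}_\eta}\otimes X_{\mathsf{T}}$. To see this directly: $X^{\otimes n_\eta}$ maps $\ket{0}$ to $\ket{1\cdots1}$ and maps every $r\neq0$ to some $\neg r\neq 1\cdots1$. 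The controlled-$X$ therefore flips $\mathsf{T}$ only on the $\ket{0}$ branch. The final $X^{\otimes n_\eta}\otimes X_{\mathsf{T}}$ restores $\mathsf{N}_\eta$ and flips $\mathsf{T}$ again. The net effect is that $\mathsf{T}$ stays $\ket{0}$ on the $\ket{0}_{\mathsf{N}_\eta}$ branch and becomes $\ket{1}$ on the complementary branch.

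Applying this to the decomposition and setting $\ket{\xi_0}=\ket{0}_{\mathsf{N}_\eta}\ket{\phi_0}$ and $\ket{\xi_1}=\ket{0^\perp_{\mathsf{N}_\eta}}$ gives the claimed form. Orthogonality $\braket{\xi_0}{\xi_1}=0$ holds because the two vectors have disjoint supports on $\mathsf{N}_\eta$; it is also implied by the distinct $\mathsf{T}$ labels. The only point needing care is the register bookkeeping: the qubit count is now $5n+5n_\sigma$ rather than $4n+4n_\sigma$, since $U_{D_{i,j}}$ contributes an extra copy of the ancillas. This count affects only how $\mathsf{N}_\eta$ and $\mathsf{N}_{\bar\eta}$ are sized, not the argument itself.
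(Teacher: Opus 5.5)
Your proposal is correct and takes the same route as the paper, whose proof of this proposition is simply ``analogous to'' \cref{prop:construct_uc_even}. That is: decompose $U_{B_{i,j}}^{\textup{odd}}\ket{0}$ via \cref{prop:construct_ub_odd}, then track how the $X^{\otimes n_\eta}$, controlled-$X$, $X^{\otimes n_\eta}\otimes X_{\mathsf{T}}$ sequence leaves $\mathsf{T}$ in $\ket{0}$ on the $\ket{0}_{\mathsf{N}_\eta}$ branch and flips it elsewhere. Identifying that sequence with $V_{\mathsf{A_1T}}=\ketbra{0}{0}_{\mathsf{N}_\eta}\otimes I_{\mathsf{T}}+\Pi^{\neq0}_{\mathsf{N}_\eta}\otimes X_{\mathsf{T}}$ from \cref{lem:trace_est_subnormalized} is just a compact restatement of the same computation.
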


The proof is analogous to that of~\cref{prop:construct_uc_even}.

\subsubsection{Algorithms}\label{sec:algorithm}

To help clarify our algorithm, we use the following structural theorem.

\begin{lemma}\label{lem:input_model_equivalence}
    Let $\calO_{\mathrm{ctrl}}=\sum_{k=0}^{K-1}\ketbra{k}{k}\otimes\calO_k$.
    For any quantum algorithm $\calA\sbra{\calO_i,\calO_j}$ that queries $\calO_i$ and $\calO_j$, there is a coherently controlled algorithm $\calA^\prime$ using $\calO_{\mathrm{ctrl}}$ such that
    \begin{align}
        \calA^\prime\sbra{\calO_{\mathrm{ctrl}}}
        = \sum_{i,j=0}^{K-1}\ketbra{i}{i}\otimes\ketbra{j}{j}
        \otimes\calA\sbra{\calO_i,\calO_j}.
    \end{align}
    Moreover, $\calA^\prime$ makes the same number of oracle queries as $\calA$.
\end{lemma}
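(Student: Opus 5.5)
The plan is a direct gate-by-gate simulation. Fix a circuit representation of $\calA\sbra{\calO_i,\calO_j}$ as a sequence
\[
W_L Q_L W_{L-1}\cdots W_1 Q_1 W_0 ,
\]
where each $W_\ell$ is a fixed oracle-independent unitary on the algorithm's workspace. Each $Q_\ell$ is one query: $\calO_i$ or $\calO_j$, or a controlled version or inverse, acting on some register. The algorithm $\calA'$ adds two index registers, $\mathsf{I}$ and $\mathsf{J}$, each of dimension $K$. It applies every $W_\ell$ unchanged, tensored with $I_{\mathsf{I}}\otimes I_{\mathsf{J}}$. Each query $Q_\ell$ is then replaced by one call to $\calO_{\mathrm{ctrl}}$, with the control wired to $\mathsf{I}$ or to $\mathsf{J}$ according to whether $Q_\ell$ is a call to $\calO_i$ or to $\calO_j$. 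The target of that call is the register $Q_\ell$ acts on.

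The first step is to check each replaced query. Suppose $\calO_{\mathrm{ctrl}}$ is applied with control $\mathsf{I}$, so the operator acts as the identity on $\mathsf{J}$. Then it equals
\[
\sum_{i,j}\ketbra{i}{i}_{\mathsf{I}}\otimes\ketbra{j}{j}_{\mathsf{J}}\otimes\calO_i .
\]
The symmetric expression holds with control $\mathsf{J}$, giving $\calO_j$ on the target. The other query types are handled as follows.

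\begin{itemize}
\item Inverse queries use $\calO_{\mathrm{ctrl}}^\dagger=\sum_k\ketbra{k}{k}\otimes\calO_k^\dagger$. This counts as one query under the convention stated earlier.
\item Controlled queries use the controlled version of $\calO_{\mathrm{ctrl}}$. This equals $\sum_k \ketbra{k}{k}\otimes \textup{Ctrl-}\calO_k$, because the control on $\mathsf{I}$ commutes with the extra control qubit.
\end{itemize}

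In every case, each replaced gate has the block form $\sum_{i,j}\ketbra{i}{i}\otimes\ketbra{j}{j}\otimes G_\ell^{(i,j)}$, where $G_\ell^{(i,j)}$ is exactly the gate that $\calA\sbra{\calO_i,\calO_j}$ applies at that position. The unmodified gates $I\otimes I\otimes W_\ell$ also have this form, with $G^{(i,j)}=W_\ell$.

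The second step is the composition. Operators that are block diagonal in the same orthonormal basis $\cbra{\ket{i}\ket{j}}$ multiply blockwise, since $\ketbra{i}{i}\ketbra{i'}{i'}=\delta_{ii'}\ketbra{i}{i}$. Hence the whole product equals $\sum_{i,j}\ketbra{i}{i}\otimes\ketbra{j}{j}\otimes\calA\sbra{\calO_i,\calO_j}$. The query count is preserved because the substitution is one-for-one.

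The only point needing care is the bookkeeping when $\calA$ queries $\calO_i$ and $\calO_j$ as separately usable oracles: each query must be tagged by which oracle it calls. This tagging is fixed by the circuit of $\calA$, so the wiring of $\calA'$ is oracle-independent. If $\calA$ is adaptive through intermediate measurements, I would first defer those measurements to obtain a unitary circuit, then apply the same argument.
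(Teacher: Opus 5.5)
Your proposal is correct and follows essentially the same route as the paper. Both write $\calA$ as alternating oracle-independent unitaries and queries tagged by oracle, replace each query by $\calO_{\mathrm{ctrl}}$ controlled on the $\mathsf{I}$ or $\mathsf{J}$ register, and use blockwise multiplication of operators that are block-diagonal in $\cbra{\ket{i}\ket{j}}$; your explicit handling of inverse and controlled queries, and of intermediate measurements via deferral, covers cases the paper leaves implicit, since it only treats $Q_t\in\cbra{\calO_i,\calO_j}$.
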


\begin{proof}
    Assume that $\calA$ is a $T$-query quantum algorithm of the form
    \begin{align}
        \calA\sbra*{\calO_i,\calO_j} = U_T Q_T U_{T-1} Q_{T-1} \ldots U_1 Q_1 U_0
    \end{align}
    where $Q_t$, for $t=1,\ldots,T$, are query unitaries.
    Suppose there are two index sets $\calI, \calJ \subseteq \cbra{1,2,\ldots,T}$ where $\calI \cup \calJ = \cbra{1,2,\ldots,T}$ and $\calI \cap \calJ = \emptyset$ such that $Q_t = \calO_i$ if $t \in \calI$ and $Q_t = \calO_j$ if $t \in \calJ$. 
        
    Now append two additional registers, each containing $\log\rbra{K}$ qubits.
    \begin{align}
        \sum_{i=0}^{K-1}\sum_{j=0}^{K-1} \ketbra{i}{i} \otimes \ketbra{j}{j} \otimes \calA\sbra*{\calO_i,\calO_j}
        = {} &\rbra*{\sum_{i=0}^{K-1} \ketbra{i}{i} \otimes \sum_{j=0}^{K-1}\ketbra{j}{j} \otimes \rbra*{U_T Q_T}}\\
        & \rbra*{\sum_{i=0}^{K-1} \ketbra{i}{i} \otimes \sum_{j=0}^{K-1}\ketbra{j}{j} \otimes \rbra*{U_{T-1} Q_{T-1}}}\\
        & \ldots \\
        & \rbra*{\sum_{i=0}^{K-1} \ketbra{i}{i} \otimes \sum_{j=0}^{K-1}\ketbra{j}{j} \otimes \rbra*{U_1Q_1}}\\
        & \rbra*{I \otimes I \otimes U_{0}} .
    \end{align}
        
    Let
    \begin{align}
        V_t = \rbra*{\sum_{i=0}^{K-1} \ketbra{i}{i} \otimes \sum_{j=0}^{K-1}\ketbra{j}{j} \otimes U_tQ_t} V_{t-1}
    \end{align}
    for all $t = 1,\ldots,T$, and 
    \begin{align}
        V_0 = I \otimes I \otimes U_0.        
    \end{align}
    
    For every $t=1, \ldots, T$,
    \begin{align}
        V_t
        = & \rbra*{\sum_{i=0}^{K-1} \ketbra{i}{i} \otimes \sum_{j=0}^{K-1}\ketbra{j}{j} \otimes U_t Q_t} V_{t-1}\\
        = & \rbra*{I \otimes I \otimes U_t} \rbra*{\sum_{i=0}^{K-1} \ketbra{i}{i} \otimes \sum_{j=0}^{K-1}\ketbra{j}{j} \otimes Q_t} \\
        = & U_t^\prime Q_t^\prime V_{t-1}
    \end{align}
    where $U_t^\prime = \rbra{I \otimes I \otimes U_t}$ for $t = 0, \ldots, T$ and $Q_t^\prime = \sum_{i=0}^{K-1} \ketbra{i}{i} \otimes \sum_{j=0}^{K-1}\ketbra{j}{j} \otimes Q_t$ for $t=1, \ldots, T$.
    Moreover,
    \begin{align}
        Q_t^\prime =
        \begin{cases}
            \sum_{i=0}^{K-1} \ketbra{i}{i} \otimes I \otimes \calO_i, & \text{if } t \in \calI,\\
            I \otimes \sum_{j=0}^{K-1} \ketbra{j}{j} \otimes \calO_j, & \text{if } t \in \calJ.\\
        \end{cases}
    \end{align}
    The constructed quantum algorithm $\calA^\prime$ is
    \begin{align}
        \calA^\prime\sbra*{\sum_{k=0}^{K-1}\ketbra{k}{k}\otimes \calO_k} = U_T^\prime Q_T^\prime U_{T-1}^\prime Q_{T-1}^\prime \ldots U_1^\prime Q_1^\prime U_0^\prime
    \end{align}
    It is evident that the number of queries to $\sum_{k=0}^{K-1} \ketbra{k}{k} \otimes \calO_k$ is $T$.
\end{proof}

By~\cref{lem:input_model_equivalence}, we can describe the algorithm using the uncontrolled version $\calO_{\sigma_i}$ instead of $\sum_k\ketbra{k}{k}\otimes \calO_k$.
Our algorithms for estimating the trace of the pivot matrices are as follows.

\paragraph{Estimating the trace of an even-order pivot matrix.}

We describe the algorithm as follows and formally state it in \cref{alg:frame_potential_even}.

\begin{algorithm}[htbp]
    \caption{$\mathsf{TraceEvenOrderPivotMatrix}\rbra{\calO_\mu, \calO_S, t, \eps}$}
    \label{alg:frame_potential_even}
    \begin{algorithmic}[1]
        \REQUIRE Input oracles $\calO_\mu$, $\calO_S$, order $t \in \N$, and accuracy $\eps \in \interval[open]{0}{1}$.
        \ENSURE An estimate of $\tr\rbra{M_{\calE,s}^{\textup{even}}}$ within additive error $\eps$.

        \STATE $\eps_{\textup{AE}} \gets \eps/16$.
        \STATE $\delta_2 \gets \eps/16$.
        \STATE $\eps_1 \gets \eps/4$.
        \STATE $\delta_{\textup{AE}} \gets 1/4$.
        \STATE $s \gets t/2$.
        \STATE Let $p$ be the polynomial specified in \cref{thm:poly_approx_monomials} that approximates $x^s$ (with $\eps_1$).
        
        \FOR {$k=0,\ldots,K-1$}
            \STATE Construct unitary $U_{\sigma_k}$ (\cref{lem:BE_density_operator}).
            \Comment{$\rbra{1, n+n_\sigma, 0}$-block-encoding of $\sigma_k$.}
        \ENDFOR

        \FOR {$i=0,\ldots,K-1$ and $j=0,\ldots,K-1$}
            \STATE $U_{A_{i,j}} \gets \mathsf{BEProduct}\rbra{U_{\sigma_i},\mathsf{BEProduct}\rbra{U_{\sigma_j},U_{\sigma_i}}}$ (\cref{lem:product_BE})
            \Comment{$\rbra{1, 3n+3n_\sigma, 0}$-block-encoding of $A_{i,j}=\sigma_i\sigma_j\sigma_i$}
            \STATE $U_{p\rbra{A_{i,j}}} \gets \mathsf{EigenTrans}\rbra{U_{A_{i,j}},p,\delta_2}$.
            \Comment{$\rbra{1,3n+3n_\sigma+2,\delta_2}$-block-encoding of $\frac{1}{2}p\rbra{\sigma_i\sigma_j\sigma_i}$}
            \STATE $U_{B_{i,j}}^{\textup{even}} \gets \mathsf{SubDensityOpEvolution}\rbra{U_{p\rbra{A_{i,j}}},\calO_{\sigma_i}}$.
            \STATE $U_{C_{i,j}}^{\textup{even}} \gets \mathsf{AmpPurification}\rbra{U_{B_{i,j}}^{\textup{even}}}$.
        \ENDFOR

        \STATE Construct $U = \rbra{\textup{Ctrl}\textup{-}U^{\textup{even}}_{C_{ij}}}\rbra{\calO_{\mu} \otimes \calO_{\mu} \otimes I}$.

        \Comment{ $\calO_{\mu}$ act on registers $\mathsf{I},\mathsf{J}$ and ${\textup{Ctrl}\textup{-}U^{\textup{even}}_{C_{ij}}}$ on target registers $\mathsf{W},\mathsf{T}$ controlled on registers $\mathsf{I},\mathsf{J}$.}

        \STATE $\widetilde{X} \gets \mathsf{AmpEst}\rbra{U,\eps_{\textup{AE}},\delta_{\textup{AE}}}$.
        
        \STATE \textbf{return} $4\widetilde{X}$.
    \end{algorithmic}
\end{algorithm}

Starting from $\ket{0}_{\mathsf{I}}\ket{0}_{\mathsf{J}}$, we use $\calO_{\mu}$ to sample the pair $\rbra{i,j}$ according to the product distribution $\mu\otimes\mu$.
\begin{align}
    \rbra*{\calO_\mu \otimes \calO_\mu}\ket{0}_{\mathsf{I}}\ket{0}_{\mathsf{J}} = \sum_{i=0}^{K-1}\sum_{j=0}^{K-1}{\sqrt{\mu_i\mu_j}}\ket{i}_{\mathsf{I}}\ket{j}_{\mathsf{J}} .
\end{align}

Append another working space $\ket{0}_{\mathsf{T}} \ket{0}_{\mathsf{W}}$ where $\mathsf{W}$ is a $\rbra{4n+4n_\sigma}$-qubit register and $\mathsf{T}$ is one-qubit register. 
Then we apply the controlled-$U_{C_{i,j}}^{\textup{even}}$ on registers $\mathsf{T}$ and $\mathsf{W}$ controlled on registers $\mathsf{I}$ and $\mathsf{J}$ by~\cref{prop:construct_uc_even}.
Then we have
\begin{align}\label{eq:prepare_mixed_evan_pre}
    \sum_{i=0}^{K-1}\sum_{j=0}^{K-1}\sqrt{\mu_i\mu_j}\ket{i}_{\mathsf{I}}\ket{j}_{\mathsf{J}} \rbra*{ \sqrt{\tr\rbra*{E_{i,j}^{\textup{even}}}}\ket{0}_{\mathsf{T}}\ket{\xi_{i,j,0}}_{\mathsf{W}} 
    + \sqrt{1-\tr\rbra*{E_{i,j}^{\textup{even}}}}\ket{1}_
    {\mathsf{T}}\ket{\xi_{i,j,1}}_{\mathsf{W}} }, 
\end{align}
where $\braket{\xi_{i,j,0}}{\xi_{i,j,1}} = 0$.

Let $X=\sum_{i,j}\mu_i\mu_j \tr\rbra{E_{i,j}^{\textup{even}}}$, and let $\ket{\zeta_0}$ and $\ket{\zeta_1}$ be pure states on $\mathsf{IJW}$ defined as
\begin{align}
    \ket{\zeta_0}_{\mathsf{IJW}}
    = X^{-1/2} \sum_{i,j} \sqrt{\mu_i \mu_j \tr\rbra*{ E_{i,j}^{\textup{even}}}} \ket{i}_{\mathsf{I}}\ket{j}_{\mathsf{J}}\ket{\xi_{i,j,0}}_{\mathsf{W}}
\end{align}
and
\begin{align}
    \ket{\zeta_1}_{\mathsf{IJW}}
    = \rbra*{1-X}^{-1/2} \sum_{i,j} \sqrt{\mu_i \mu_j \rbra*{1- \tr\rbra*{ E_{i,j}^{\textup{even}}}}} \ket{i}_{\mathsf{I}}\ket{j}_{\mathsf{J}}\ket{\xi_{i,j,1}}_{\mathsf{W}},
\end{align}
respectively.
Using these states, \cref{eq:prepare_mixed_evan_pre} can be rewritten as
\begin{align}
    \sqrt{X}\ket{0}_{\mathsf{T}}\ket{\zeta_{0}}_{\mathsf{IJW}}+\sqrt{1-X}\ket{1}_{\mathsf{T}}\ket{\zeta_{1}}_{\mathsf{IJW}}.
\end{align}
If $X\in\cbra{0,1}$, the state multiplying the zero-amplitude branch may be chosen arbitrarily.

We then apply amplitude estimation (\cref{thm:amp_est}) to estimate the probability $X$ of observing $\ket{0}_{\mathsf{T}}$ to within additive error $\eps_{\textup{AE}}$, with success probability at least $1-\delta_{\textup{AE}}$. Denote the output by $\tilde{X}$.

\begin{align}
    \Prob\sbra*{\abs*{\tilde{X}-X} \leq \eps_{\textup{AE}}} \geq 1-\delta_{\textup{AE}}.
\end{align}

We return the estimate of $4\tilde{X}$ as the output of our algorithm.

Let $\eps_1$, $\delta_2$, $\eps_{\textup{AE}}$, $\delta_{\textup{AE}}$ be parameters to decide later. We list the following proposition to prove the correctness of our algorithm.

\begin{proposition}[QSVT rounding error of even order]\label{prop:qsvt_round_even}
    Let $p$ be the polynomial in \cref{eq:monomial_approx,eq:monomial_bound} and $\delta_2 \in \interval[open]{0}{1}$. For any two density operators $\sigma_i$ and $\sigma_j$,
    \begin{align}
        \abs*{\tr\rbra*{E_{i,j}^{\textup{even}}} - \tr\rbra*{\frac{1}{4}p\rbra*{\sigma_i\sigma_j\sigma_i} \sigma_i p\rbra{\sigma_i\sigma_j\sigma_i}^\dagger}} \leq 2\delta_2.
    \end{align}
\end{proposition}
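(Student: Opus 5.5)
We write $Q \coloneqq \frac{1}{2}p\rbra{\sigma_i\sigma_j\sigma_i}$, so the target quantity is $\tr\rbra{Q\sigma_i Q^\dagger}$, while $E_{i,j}^{\textup{even}} = P_{i,j}\sigma_i P_{i,j}^\dagger$ by \cref{eq:evolution_even}. The plan is a telescoping argument. We insert the mixed term $P_{i,j}\sigma_i Q^\dagger$ and write
\begin{align}
    P_{i,j}\sigma_i P_{i,j}^\dagger - Q\sigma_i Q^\dagger
    = P_{i,j}\sigma_i\rbra*{P_{i,j}-Q}^\dagger + \rbra*{P_{i,j}-Q}\sigma_i Q^\dagger .
\end{align}
We then take the trace and bound each term using the trace-norm/operator-norm Hölder inequality $\abs{\tr\rbra{XYZ}} \leq \Abs{X}\,\Abs{Y}_1\,\Abs{Z}$.

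First, we collect the norm bounds that are needed:
\begin{itemize}
    \item $\Abs{\sigma_i}_1 = 1$, since $\sigma_i$ is a density operator.
    \item $\Abs{P_{i,j}} \leq 1$ and $\Abs{P_{i,j}-Q} \leq \delta_2$, both from \cref{eq:monomial_round}.
    \item $\Abs{Q} \leq 1/2$. The operator $\sigma_i\sigma_j\sigma_i$ is positive semidefinite with operator norm at most $1$, so its spectrum lies in $\interval{0}{1}$. By \cref{eq:monomial_bound} (extended to the closed interval by continuity of the polynomial), $\abs{p} \leq 1$ there, so $\Abs{p\rbra{\sigma_i\sigma_j\sigma_i}} \leq 1$.
\end{itemize}

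Plugging these in, the first term contributes at most $\Abs{P_{i,j}}\cdot\Abs{\sigma_i}_1\cdot\Abs{P_{i,j}-Q} \leq \delta_2$. The second term contributes at most $\delta_2\cdot 1\cdot \frac{1}{2} \leq \delta_2$. The total is therefore at most $2\delta_2$ (in fact $\frac{3}{2}\delta_2$).

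The only mildly delicate points are the following.
\begin{itemize}
    \item The approximation guarantee in \cref{thm:poly_approx_monomials} is stated on the open interval $\interval[open]{-1}{1}$, while the spectrum of $\sigma_i\sigma_j\sigma_i$ may include the endpoint $1$ (for example, when $\sigma_i=\sigma_j$ is pure). This is handled by continuity of $p$.
    \item The statement should be read as comparing against $\tr\rbra{Q\sigma_iQ^\dagger}$, which has the factor $\frac{1}{4}$ correctly absorbed into $Q$.
\end{itemize}
No further obstacle is expected.
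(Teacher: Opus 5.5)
Your proposal is correct and follows essentially the same argument as the paper. The paper also telescopes through a mixed term, though it uses $Q_{i,j}\sigma_i P_{i,j}^\dagger$ where you use $P_{i,j}\sigma_i Q_{i,j}^\dagger$, and it bounds both pieces by H\"older with $\Abs{\sigma_i}_1=1$, $\Abs{P_{i,j}-Q_{i,j}}\leq\delta_2$ and contraction bounds. Your sharper $\Abs{Q_{i,j}}\leq 1/2$, giving $\tfrac{3}{2}\delta_2$, and your continuity remark for the endpoint of the spectrum are valid refinements that the paper leaves implicit.
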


\begin{proof}
    By \cref{eq:evolution_even}, we know
    \begin{align}
        \tr\rbra*{E_{i,j}^{\textup{even}}} = \tr\rbra*{P_{i,j}\sigma_i{P_{i,j}}^\dagger}.
    \end{align}
    Let $Q_{i,j}=\frac{1}{2}p\rbra{\sigma_i\sigma_j\sigma_i}$. We bound the difference as
    \begin{align}
        &\abs*{\tr\rbra*{P_{i,j}\sigma_i{P_{i,j}}^\dagger} - \tr\rbra*{Q_{i,j}\sigma_i Q_{i,j}^\dagger}} \nonumber \\
        &\leq \Abs*{\rbra{P_{i,j}-Q_{i,j}}\sigma_iP_{i,j}^\dagger}_1
        + \Abs*{Q_{i,j}\sigma_i\rbra{P_{i,j}^\dagger-Q_{i,j}^\dagger}}_1 \\
        &\leq \Abs*{P_{i,j}-Q_{i,j}}\Abs*{\sigma_i}_1\Abs*{P_{i,j}}
        + \Abs*{Q_{i,j}}\Abs*{\sigma_i}_1\Abs*{P_{i,j}-Q_{i,j}} \\
        &\leq 2\delta_2.
    \end{align}
    Here we used H\"older's inequality, $\Abs{\sigma_i}_1=1$, and the contraction bounds $\Abs{P_{i,j}},\Abs{Q_{i,j}}\leq1$.
\end{proof}

\begin{proposition}[QSVT approximation error of even order]\label{prop:qsvt_approx_even}
    Let $p$ be the polynomial in \cref{eq:monomial_approx,eq:monomial_bound} and $\eps_1 \in \interval[open]{0}{1}$. For any two density operators $\sigma_i$ and $\sigma_j$,
    \begin{align}
        \abs*{\tr\rbra*{p\rbra*{\sigma_i\sigma_j\sigma_i}\sigma_ip\rbra{\sigma_i\sigma_j\sigma_i}} - \tr\rbra*{\rbra*{\sigma_i\sigma_j\sigma_i}^s\sigma_i\rbra*{\sigma_i\sigma_j\sigma_i}^s}} \leq 2\eps_1.
    \end{align}
\end{proposition}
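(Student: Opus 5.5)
The plan mirrors the telescoping argument of \cref{prop:qsvt_round_even}, now at the level of the polynomial approximation rather than the QSVT rounding. Write $A = \sigma_i\sigma_j\sigma_i$ and note first that $A$ is Hermitian and positive semidefinite, since $A = \sigma_i^{1/2}\rbra{\sigma_i^{1/2}\sigma_j\sigma_i^{1/2}}\sigma_i^{1/2}$ and $\sigma_i^{1/2}\sigma_j\sigma_i^{1/2} \succeq 0$. Moreover $\Abs{A} \leq \Abs{\sigma_i}^2\Abs{\sigma_j} \leq 1$, so the spectrum of $A$ lies in $\interval{0}{1}$.

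Next we transfer the scalar bounds to operator bounds. Since $A$ is Hermitian, $p\rbra{A}$ and $A^s$ are defined by the spectral calculus of \cref{sec:notations}, and $p$ has real coefficients, so $p\rbra{A}$ is Hermitian and $p\rbra{A}^\dagger = p\rbra{A}$. By \cref{eq:monomial_approx,eq:monomial_bound} applied eigenvalue-by-eigenvalue,
\begin{align}
    \Abs{p\rbra{A}-A^s} \leq \eps_1, \qquad \Abs{p\rbra{A}} \leq 1, \qquad \Abs{A^s} \leq 1.
\end{align}
There is a small technical point here. \cref{thm:poly_approx_monomials} is stated on the open interval $\interval[open]{-1}{1}$, while the eigenvalue $1$ of $A$ can occur, for instance when $\sigma_i=\sigma_j$ is pure. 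Continuity of the polynomial $p$ and of $x^s$ extends both inequalities to the closed interval, so this causes no difficulty.

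The core step is a telescoping split followed by Hölder's inequality:
\begin{align}
    \abs*{\tr\rbra*{p\rbra{A}\sigma_i p\rbra{A}} - \tr\rbra*{A^s\sigma_i A^s}}
    \leq \Abs*{\rbra{p\rbra{A}-A^s}\sigma_i p\rbra{A}}_1 + \Abs*{A^s\sigma_i\rbra{p\rbra{A}-A^s}}_1 .
\end{align}
Each term is bounded by $\Abs{p\rbra{A}-A^s}\,\Abs{\sigma_i}_1\,\max\rbra{\Abs{p\rbra{A}},\Abs{A^s}} \leq \eps_1$, using $\Abs{\sigma_i}_1 = 1$. Summing the two terms gives the claimed bound $2\eps_1$.

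I expect the only real obstacle to be the spectral justification: confirming that $A$ is Hermitian with spectrum in $\interval{0}{1}$, so that the scalar approximation lifts to an operator-norm approximation, and handling the closed-interval endpoint. The remaining steps are routine.
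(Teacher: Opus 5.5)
Your proposal is correct and is essentially the paper's proof. With $A=\sigma_i\sigma_j\sigma_i$, the paper uses the same telescoping split into $\tr\rbra{\rbra{p\rbra{A}-A^s}\sigma_i p\rbra{A}}$ and $\tr\rbra{A^s\sigma_i\rbra{p\rbra{A}-A^s}}$, then H\"older's inequality with $\Abs{\sigma_i}_1=1$ and the operator-norm bounds from \cref{eq:monomial_approx,eq:monomial_bound}, to get $2\eps_1$. Your extra checks are that $A$ is positive semidefinite with spectrum in $\interval{0}{1}$, and that continuity covers the endpoint $x=1$ excluded by the open interval in \cref{thm:poly_approx_monomials}. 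These make explicit a step the paper leaves implicit when it says the result ``follows from'' those equations.
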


\begin{proof}
    Expand the left-hand side,
    \begin{align}
        &\abs*{\tr\rbra*{\pSIJ\sigma_i\pSIJ}-\tr\rbra*{\SIJs\sigma_i\SIJs}}\\
        & \leq \abs*{\tr\rbra*{\pSIJ\sigma_i\pSIJ}-\tr\rbra*{\SIJs\sigma_i \pSIJ}} \nonumber \\
        & \quad + \abs*{\tr\rbra*{\SIJs \sigma_i \pSIJ}-\tr\rbra*{\SIJs\sigma_i\SIJs}}\\
        & \leq \abs*{\tr\rbra*{\rbra*{\pSIJ - \SIJs} \sigma_i \pSIJ}} + \abs*{\tr\rbra*{\SIJs \sigma_i \rbra*{\pSIJ-\SIJs}}}\\
        & \leq \Abs*{ \rbra*{\pSIJ - \SIJs} \sigma_i \pSIJ }_1 + \Abs*{ \SIJs \sigma_i \rbra*{\pSIJ-\SIJs} }_1\\
        & \leq \Abs*{ \pSIJ - \SIJs }_{\infty} \Abs*{ \sigma_i \pSIJ }_1 + \Abs*{ \SIJs \sigma_i }_1 \Abs*{ \pSIJ-\SIJs }_{\infty}\\
        & \leq 2 \Abs*{\pSIJ - \SIJs}_{\infty}.
    \end{align}
    The last inequality follows from the fact that
    \begin{align}
        \Abs*{\sigma_i \pSIJ }_1 \leq \Abs*{\sigma_i}_1 \Abs*{ \pSIJ }_{\infty} \leq 1,
    \end{align}
    where we have used $\Abs{\sigma_i}_1 = 1$, and $\Abs{ \pSIJ }_{\infty}\leq 1$.
    The result follows from~\cref{eq:monomial_approx,eq:monomial_bound}.
\end{proof}

\begin{proposition}[Amplitude-estimation error of even order]\label{prop:amp_est_even}
    Let $\tilde{X}$, $\eps_1$, $\delta_2$, $\delta_{\textup{AE}}$, $\eps_{\textup{AE}}$, $s$ be the parameters specified in \cref{alg:frame_potential_even}. Then
    \begin{align}
        \abs*{4\tilde{X}-\tr\rbra*{M_{\calE,s}^{\textup{even}}}} \leq  4\eps_{\textup{AE}} + 8\delta_2 + 2\eps_1 .
    \end{align}
\end{proposition}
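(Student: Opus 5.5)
We prove the bound with a triangle inequality through the intermediate quantities the algorithm is designed to track. Write $X=\sum_{i,j}\mu_i\mu_j\tr\rbra{E_{i,j}^{\textup{even}}}$, which is exactly the amplitude estimated by $\mathsf{AmpEst}$. We assume the success event of \cref{thm:amp_est}, which holds with probability at least $1-\delta_{\textup{AE}}$. We then split
\begin{align}
    \abs*{4\tilde{X}-\tr\rbra*{M_{\calE,s}^{\textup{even}}}}
    \leq 4\abs*{\tilde{X}-X}
    + \abs*{4X - \sum_{i,j}\mu_i\mu_j\tr\rbra*{p\rbra{\SIJ}\sigma_i p\rbra{\SIJ}^\dagger}}
    + \abs*{\sum_{i,j}\mu_i\mu_j \tr\rbra*{p\rbra{\SIJ}\sigma_i p\rbra{\SIJ}^\dagger} - \tr\rbra*{M_{\calE,s}^{\textup{even}}}}.
\end{align}

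\textbf{First term.} On the success event of amplitude estimation, $\abs{\tilde{X}-X}\leq\eps_{\textup{AE}}$, so this term is at most $4\eps_{\textup{AE}}$. This uses the state form obtained from \cref{prop:construct_uc_even} and the rewriting of \cref{eq:prepare_mixed_evan_pre} as $\sqrt{X}\ket{0}_{\mathsf{T}}\ket{\zeta_0}+\sqrt{1-X}\ket{1}_{\mathsf{T}}\ket{\zeta_1}$.

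\textbf{Second term.} Factor the $4$ inside the sum. The term becomes $4\sum_{i,j}\mu_i\mu_j\abs{\tr\rbra{E^{\textup{even}}_{i,j}}-\tr\rbra{\frac14 p\rbra{\SIJ}\sigma_i p\rbra{\SIJ}^\dagger}}$. By \cref{prop:qsvt_round_even}, each summand is at most $2\delta_2$. Since $\sum_{i,j}\mu_i\mu_j=1$, the term is at most $8\delta_2$.

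\textbf{Third term.} Since $t=2s$, the definitions \cref{eq:dist_matrix_even} and \cref{eq:single_matrix_even} give $\tr\rbra{M_{\calE,s}^{\textup{even}}}=\sum_{i,j}\mu_i\mu_j\tr\rbra{\SIJs\sigma_i\SIJs}$. The polynomial $p$ is real, so $p\rbra{\SIJ}^\dagger=p\rbra{\SIJ}$. Convexity of the weights together with \cref{prop:qsvt_approx_even} then bounds the term by $2\eps_1$.

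Summing the three bounds gives $4\eps_{\textup{AE}}+8\delta_2+2\eps_1$.

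\textbf{Main obstacle.} The only delicate point is that \cref{prop:qsvt_approx_even} needs $\SIJ$ to have spectrum inside the interval where \cref{eq:monomial_approx,eq:monomial_bound} hold. The operator $\SIJ$ is positive semidefinite with norm at most $1$, so its spectrum lies in $\interval{0}{1}$. The polynomial guarantees are stated on the open interval $\interval[open]{-1}{1}$. The endpoint $1$ can occur only when $\sigma_i=\sigma_j$ are equal pure states, and it is handled by continuity of polynomials. Everything else is bookkeeping with weights summing to one.
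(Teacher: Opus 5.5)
Your proposal is correct and matches the paper's proof: the same triangle inequality through $4X$ and $\sum_{i,j}\mu_i\mu_j\tr\rbra{p\rbra{\SIJ}\sigma_i p\rbra{\SIJ}}$, with amplitude estimation contributing $4\eps_{\textup{AE}}$, \cref{prop:qsvt_round_even} scaled by $4$ contributing $8\delta_2$, and \cref{prop:qsvt_approx_even} averaged over $\mu\otimes\mu$ contributing $2\eps_1$. You also condition explicitly on the amplitude-estimation success event and extend \cref{eq:monomial_approx,eq:monomial_bound} to the endpoint $x=1$ by continuity, which matters because that endpoint occurs for every diagonal pair $i=j$ of a pure-state ensemble; both points are left implicit in the paper.
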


\begin{proof}
    By \cref{eq:dist_matrix_even},
    \begin{align}
        \tr\rbra*{M_{\calE,s}^{\textup{even}}} = \sum_{i}\sum_{j}\mu_i\mu_j\tr\rbra*{\SIJs \sigma_i \SIJs} .
    \end{align}
    According to \cref{thm:amp_est},
    \begin{align}
        \abs*{\tilde{X}-X} = \abs*{\tilde{X}-\sum_{i,j}\mu_i\mu_j\tr\rbra*{E_{i,j}^{\textup{even}}}} \leq \eps_{\textup{AE}} .
    \end{align}
    Moreover,
    \begin{align}
        \abs*{4\tilde{X} - 4\sum_{i,j}\mu_i\mu_j\tr\rbra*{P_{i,j} \sigma_i {P_{i,j}}^\dagger}}  \leq 4\eps_{\textup{AE}} .
    \end{align}
    Combining \cref{prop:qsvt_round_even,prop:qsvt_approx_even}, we bound the triangle inequality term by term and obtain
    \begin{align}
        \abs*{4\tilde{X}-\sum_{i,j}\mu_i\mu_j\tr\rbra*{\SIJs \sigma_i \SIJs}} \leq 4\eps_{\textup{AE}} + 8\delta_2 + 2\eps_1 .
    \end{align}
\end{proof}

\begin{theorem}[Estimate the trace of even-order mixed pivot matrix]\label{thm:main_estimate_even}
    Fix a positive integer $s$.
    Given an ensemble $\calE = \cbra{\rbra{\mu_k,\sigma_k}}$, \cref{alg:frame_potential_even} estimates $\tr\rbra{M_{\calE,s}^{\textup{even}}}$ to within additive error $\eps$ using $O\rbra{\sqrt{s\log{\rbra{1/\eps}}}/\eps}$ queries to $\calO_S$.
\end{theorem}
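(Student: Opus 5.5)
The proof splits into a correctness part and a cost part.

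\emph{Correctness.} I would invoke \cref{prop:amp_est_even} with the parameter choices of \cref{alg:frame_potential_even}: $\eps_{\textup{AE}}=\eps/16$, $\delta_2=\eps/16$, $\eps_1=\eps/4$. This gives
\begin{align}
    \abs*{4\tilde{X}-\tr\rbra*{M_{\calE,s}^{\textup{even}}}} \leq \frac{4\eps}{16}+\frac{8\eps}{16}+\frac{2\eps}{4} = \frac{5\eps}{4}.
\end{align}
This is slightly above $\eps$, so the constants need a small adjustment. I would rescale, for example $\eps_{\textup{AE}}=\eps/32$, $\delta_2=\eps/32$, $\eps_1=\eps/8$, which gives a total error of at most $7\eps/8\le\eps$. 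Only constant factors change. The bound holds with probability at least $1-\delta_{\textup{AE}}=3/4$. Any desired success probability $1-\delta$ follows from the median of $O\rbra{\log\rbra{1/\delta}}$ independent repetitions.

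One more check is needed before \cref{prop:amp_est_even} applies. \cref{thm:qsvt} requires $\abs{p/2}\le 1/2$ on $\interval{-1}{1}$, and this follows from \cref{eq:monomial_bound} (extended to the closed interval by continuity). The eigenvalues of $\sigma_i\sigma_j\sigma_i$ also lie in $\interval{0}{1}$, since it is positive semidefinite with trace at most $1$. These facts justify using \cref{thm:qsvt} with $\alpha=1$, $\eps=0$, so that the block-encoding error is exactly $\delta_2$, as assumed in \cref{eq:monomial_round}.

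\emph{Query count.} I would count queries bottom-up, using \cref{lem:input_model_equivalence} to pass from $\calO_{\sigma_i},\calO_{\sigma_j}$ to controlled access through $\calO_S$ at no extra query cost.
\begin{itemize}
    \item By \cref{lem:BE_density_operator}, each $U_{\sigma_k}$ uses $O\rbra{1}$ queries.
    \item By \cref{lem:product_BE}, $U_{A_{i,j}}$ uses three block-encodings, hence $O\rbra{1}$ queries.
    \item By \cref{thm:qsvt}, $U_{p\rbra{A_{i,j}}}$ uses $d+1$ calls to $U_{A_{i,j}}$ or its inverse or controlled version. Here $d=O\rbra{\sqrt{s\log\rbra{1/\eps_1}}}=O\rbra{\sqrt{s\log\rbra{1/\eps}}}$.
    \item By \cref{lem:sandwiched_product}, $U_{B_{i,j}}^{\textup{even}}$ adds one call to $\calO_{\sigma_i}$. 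The purification step $U_{C_{i,j}}^{\textup{even}}$ adds no queries.
\end{itemize}
The controlled version over $\mathsf{I},\mathsf{J}$, composed with $\calO_\mu\otimes\calO_\mu$, therefore costs $O\rbra{\sqrt{s\log\rbra{1/\eps}}}$ queries to $\calO_S$ and $O\rbra{1}$ queries to $\calO_\mu$. By \cref{thm:amp_est} with constant $\delta_{\textup{AE}}$, amplitude estimation calls $U$ (and its inverse and controlled version) $O\rbra{1/\eps_{\textup{AE}}}=O\rbra{1/\eps}$ times. Multiplying gives $O\rbra{\sqrt{s\log\rbra{1/\eps}}/\eps}$ queries in total.

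\emph{Main subtlety.} The key point is that the amplitude $X$ being estimated is the correct average $\sum_{i,j}\mu_i\mu_j\tr\rbra{E^{\textup{even}}_{i,j}}$. This follows from the orthogonality of the $\ket{i}\ket{j}$ branches, via \cref{prop:construct_uc_even} and the rewriting into $\ket{\zeta_0},\ket{\zeta_1}$ already done in the text. Beyond that, the only care needed is in fixing the error constants so that they sum to at most $\eps$.
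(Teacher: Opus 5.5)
Your proposal is correct and follows the paper's own proof: correctness comes from \cref{prop:amp_est_even}, and the query bound is the QSVT degree $O\rbra{\sqrt{s\log\rbra{1/\eps}}}$ times the $O\rbra{1/\eps}$ amplitude-estimation rounds. You are also right that the paper's stated constants $\eps_{\textup{AE}}=\delta_2=\eps/16$, $\eps_1=\eps/4$ give only $5\eps/4$ and need rescaling; with your choice the total is $\eps/8+\eps/4+\eps/4=5\eps/8$ (not $7\eps/8$ as you wrote), which is still at most $\eps$.
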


\begin{proof}
    Let $\eps_{\textup{AE}} = \eps/16$, $\delta_2 = \eps/16$ and $\eps_1 = \eps/4$, where $\eps \in \interval[open]{0}{1}$.
    By~\cref{prop:amp_est_even}, \cref{alg:frame_potential_even} can estimate $\tr\rbra{M_{\calE,s}^{\textup{even}}}$ to within additive error $\eps$ with probability at least $3/4$.
    Now we analyze the query complexity. The dominating part is from QSVT.
    By our choice of parameters, the degree of the polynomial $p$ is $O\rbra{\sqrt{s\log{\rbra{1/\eps_1}}}}=O\rbra{\sqrt{s\log{\rbra{1/\eps}}}}$, thus the algorithm takes $O\rbra{\sqrt{s\log{\rbra{1/\eps}}}}$ queries to $U_{A_{i,j}}$.
    Each use of $U_{A_{i,j}}$ makes two queries to $U_{\sigma_i}$ and one to $U_{\sigma_j}$, and each $U_{\sigma_k}$ uses $O\rbra{1}$ queries to $\calO_{\sigma_k}$. Multiplying by the number of amplitude-estimation rounds, the total query complexity is $O\rbra{\sqrt{s\log{\rbra{1/\eps}}}/\eps}$.
\end{proof}

\paragraph{Estimating the trace of an odd-order pivot matrix.}

We describe the algorithm as follows and formally state it in \cref{alg:frame_potential_odd}.

\begin{algorithm}[htbp]
    \caption{$\mathsf{TraceOddOrderPivotMatrix}\rbra{\calO_\mu, \calO_S, t, \eps}$}
    \label{alg:frame_potential_odd}
    \begin{algorithmic}[1]
        \REQUIRE Input oracles $\calO_\mu$, $\calO_S$, order $t \in \N$, and accuracy $\eps \in \interval[open]{0}{1}$.
        \ENSURE An estimate of $\tr\rbra{M_{\calE,s}^{\textup{odd}}}$ within additive error $\eps$.

        \STATE $\eps_{\textup{AE}} \gets \eps/16$.
        \STATE $\delta_2 \gets \eps/16$.
        \STATE $\eps_1 \gets \eps/4$
        \STATE $\delta_{\textup{AE}} \gets 1/4$
        \STATE $s \gets \rbra{t-1}/2$
        
        \STATE Let $p$ be the polynomial specified in \cref{thm:poly_approx_monomials} that approximates $x^s$ (with $\eps_1$).
        
        \FOR {$k \gets 0,\ldots,K-1$}
            \STATE Construct unitary $U_{\sigma_k}$ (\cref{lem:BE_density_operator}).
            \Comment{$\rbra{1, n+n_\sigma, 0}$-block-encoding of $\sigma_k$.}
        \ENDFOR

        \FOR {$i \gets 0, \ldots, K-1$ and $j \gets 0, \ldots, K-1$}
            \STATE $U_{A_{i,j}} \gets \mathsf{BEProduct}\rbra{U_{\sigma_i},\mathsf{BEProduct}\rbra{U_{\sigma_j},U_{\sigma_i}}}$ (\cref{lem:product_BE})
            \Comment{$\rbra{1, 3n+3n_\sigma, 0}$-block-encoding of $A_{i,j}=\sigma_i\sigma_j\sigma_i$}
            \STATE $U_{p\rbra{A_{i,j}}} \gets \mathsf{EigenTrans}\rbra{U_{A_{i,j}},p,\delta_2}$.
            \Comment{$\rbra{1,3n+3n_\sigma+2,\delta_2}$-block-encoding of $\frac{1}{2}p\rbra{\sigma_i\sigma_j\sigma_i}$}
            \STATE $U_{D_{i,j}} \gets \mathsf{SubDensityOpEvolution}\rbra{U_{\sigma_i},\calO_{\sigma_j}}$.
            \STATE $U_{B_{i,j}}^{\textup{odd}} \gets \mathsf{SubDensityOpEvolution}\rbra{U_{p\rbra{A_{i,j}}},U_{D_{i,j}}}$.
            \STATE $U_{C_{i,j}}^{\textup{odd}} \gets \mathsf{AmpPurification}\rbra{U_{B_{i,j}}^{\textup{odd}}}$.
        \ENDFOR

        \STATE Construct $U = \rbra{\textup{Ctrl}\textup{-}U^{\textup{odd}}_{C_{ij}}}\rbra{\calO_{\mu} \otimes \calO_{\mu} \otimes I}$.

        \Comment{ $\calO_{\mu}$ act on registers $\mathsf{I},\mathsf{J}$ and $\textup{Ctrl}\textup{-}U^{\textup{odd}}_{C_{ij}}$ on target registers $\mathsf{W},\mathsf{T}$ controlled on registers $\mathsf{I},\mathsf{J}$.}

        \STATE $\widetilde{X} \gets \mathsf{AmpEst}\rbra{U,\eps_{\textup{AE}},\delta_{\textup{AE}}}$.
        
        \STATE \textbf{return} $4\widetilde{X}$.
    \end{algorithmic}
\end{algorithm}

Starting from $\ket{0}_{\mathsf{I}}\ket{0}_{\mathsf{J}}$, we use $\calO_{\mu}$ to sample the pair $\rbra{i,j}$ according to the product distribution $\mu\otimes\mu$.
\begin{align}
    \rbra{\calO_\mu \otimes \calO_\mu}\ket{0}_{\mathsf{I}}\ket{0}_{\mathsf{J}}
    = \sum_{i=0}^{K-1}\sum_{j=0}^{K-1}{\sqrt{\mu_i\mu_j}}\ket{i}_{\mathsf{I}}\ket{j}_{\mathsf{J}} .
\end{align}

Append another working space $\ket{0}_{\mathsf{T}} \ket{0}_{\mathsf{W}}$ where $\mathsf{W}$ is a $\rbra{5n+5n_\sigma}$-qubit register and $\mathsf{T}$ is one-qubit register.
Then we apply the controlled-$U_{C_{i,j}}^{\textup{odd}}$ on registers $\mathsf{T}$ and $\mathsf{W}$ controlled on registers $\mathsf{I}$ and $\mathsf{J}$ by \cref{prop:construct_uc_odd}.
Then we prepare the state
\begin{align}
    \sum_{i=0}^{K-1}\sum_{j=0}^{K-1}\sqrt{\mu_i\mu_j}\ket{i}_{\mathsf{I}}\ket{j}_{\mathsf{J}}
    \rbra*{ \sqrt{\tr\rbra*{E_{i,j}^{\textup{odd}}}}\ket{0}_{\mathsf{T}}\ket{\xi_{i,j,0}}_{\mathsf{W}} +\sqrt{1-\tr\rbra*{E_{i,j}^{\textup{odd}}}}\ket{1}_{\mathsf{T}}\ket{\xi_{i,j,1}}_{\mathsf{W}} } ,
\end{align}
where $\braket{\xi_{i,j,0}}{\xi_{i,j,1}}=0$.
Let $X=\sum_{i,j}\mu_i\mu_j \tr\rbra*{E_{i,j}^{\textup{odd}}}$, and let $\ket{\zeta_0}$ and $\ket{\zeta_1}$ be pure states on $\mathsf{IJW}$ defined as
\begin{align}
    \ket{\zeta_{0}}_{\mathsf{IJW}}
    = X^{-1/2}\sum_{i,j}\sqrt{\mu_i\mu_j \tr\rbra*{E_{i,j}^{\textup{odd}}}}\ket{i}_{\mathsf{I}}\ket{j}_{\mathsf{J}}\ket{\xi_{i,j,0}}_{\mathsf{W}}
\end{align}
and
\begin{align}
    \ket{\zeta_{1}}_{\mathsf{IJW}}
    = \rbra*{1-X}^{-1/2}\sum_{i,j}\sqrt{\mu_i\mu_j \rbra*{1-\tr\rbra*{E_{i,j}^{\textup{odd}}}}}\ket{i}_{\mathsf{I}}\ket{j}_{\mathsf{J}}\ket{\xi_{i,j,1}}_{\mathsf{W}} .
\end{align}
Rearranging gives
\begin{align}\label{eq:prepare_mixed_odd}
    {}&\sqrt{X}\ket{0}_{\mathsf{T}}\ket{\zeta_{0}}_{\mathsf{IJW}} \\
    {}&\quad+ \sqrt{1-X}\ket{1}_{\mathsf{T}}\ket{\zeta_{1}}_{\mathsf{IJW}}.
\end{align}
If $X\in\cbra{0,1}$, the state multiplying the zero-amplitude branch may be chosen arbitrarily.

We then apply amplitude estimation (\cref{thm:amp_est}) to estimate the probability $X$ of observing $\ket{0}_{\mathsf{T}}$ to within additive error $\eps_{\textup{AE}}$, with success probability at least $1-\delta_{\textup{AE}}$.
Suppose the output of the amplitude estimation is $\tilde{X}$.
\begin{align}
    \Prob\sbra*{\abs*{\tilde{X}-X} \leq \eps_{\textup{AE}}} \geq 1 - \delta_{\textup{AE}}.
\end{align}

We return the estimate of $4\tilde{X}$ as the output of our algorithm.

Let $\eps_1$, $\delta_2$, $\eps_{\textup{AE}}$, $\delta_{\textup{AE}}$ be parameters to decide later.
We list the following propositions to prove the correctness of our algorithm.

\begin{proposition}[QSVT rounding error of odd order]\label{prop:qsvt_round_odd}
    Let $p$ be the polynomial in \cref{eq:monomial_approx,eq:monomial_bound}.
    And let $\delta_2 \in \interval[open]{0}{1}$.
    For any two density operators $\sigma_i$ and $\sigma_j$,
    \begin{align}
        \abs*{\tr\rbra*{E_{i,j}^{\textup{odd}}} - \tr\rbra*{\frac{1}{4}p\rbra*{\sigma_i\sigma_j\sigma_i} \rbra*{\sigma_i\sigma_j\sigma_i} p\rbra*{\sigma_i\sigma_j\sigma_i}^\dagger}} \leq 2\delta_2.
    \end{align}
\end{proposition}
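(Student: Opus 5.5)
The plan is to mirror the proof of \cref{prop:qsvt_round_even}, replacing the sandwiched operator $\sigma_i$ by $D_{i,j} \coloneqq \sigma_i\sigma_j\sigma_i$.

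By \cref{eq:evolution_odd} we have $\tr\rbra{E_{i,j}^{\textup{odd}}} = \tr\rbra{P_{i,j} D_{i,j} P_{i,j}^\dagger}$. Set $Q_{i,j} = \frac{1}{2}p\rbra{\sigma_i\sigma_j\sigma_i}$, so that the second trace in the statement equals $\tr\rbra{Q_{i,j} D_{i,j} Q_{i,j}^\dagger}$. Adding and subtracting $\tr\rbra{Q_{i,j} D_{i,j} P_{i,j}^\dagger}$ and applying the triangle inequality bounds the difference by
\begin{align}
    \Abs*{\rbra{P_{i,j}-Q_{i,j}} D_{i,j} P_{i,j}^\dagger}_1 + \Abs*{Q_{i,j} D_{i,j} \rbra{P_{i,j}-Q_{i,j}}^\dagger}_1 .
\end{align}
We then use H\"older's inequality $\Abs{XYZ}_1 \leq \Abs{X}\Abs{Y}_1\Abs{Z}$ on each term. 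By \cref{eq:monomial_round}, $\Abs{P_{i,j}-Q_{i,j}} \leq \delta_2$ and $\Abs{P_{i,j}} \leq 1$. By \cref{eq:monomial_bound}, $\Abs{Q_{i,j}} \leq 1/2 \leq 1$.

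The one step that differs from the even case is bounding $\Abs{D_{i,j}}_1$, which replaces $\Abs{\sigma_i}_1 = 1$. The operator $D_{i,j} = \sigma_i\sigma_j\sigma_i = \rbra{\sigma_j^{1/2}\sigma_i}^\dagger\rbra{\sigma_j^{1/2}\sigma_i}$ is positive semidefinite. Hence
\begin{align}
    \Abs{D_{i,j}}_1 = \tr\rbra{\sigma_i\sigma_j\sigma_i} = \tr\rbra{\sigma_i^2\sigma_j} \leq \Abs{\sigma_i^2}\tr\rbra{\sigma_j} \leq 1,
\end{align}
using $\Abs{\sigma_i} \leq 1$. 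This is also consistent with $D_{i,j}$ being a subnormalized density operator, as \cref{lem:sandwiched_product} already guarantees.

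Combining these bounds, each term is at most $\delta_2$, so the total is at most $2\delta_2$. I expect no real obstacle: the only point needing care is the trace-norm bound on $\sigma_i\sigma_j\sigma_i$, and the positivity argument above settles it.
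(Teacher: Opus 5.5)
Your proposal is correct and follows essentially the same route as the paper, which also reduces to the even-order argument of \cref{prop:qsvt_round_even} with $\sigma_i$ replaced by $\sigma_i\sigma_j\sigma_i$ and uses $\Abs{\sigma_i\sigma_j\sigma_i}_1 \leq 1$. Your positivity-and-cyclicity argument for that trace-norm bound fills in a step the paper asserts without proof.
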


\begin{proof}
    By \cref{eq:evolution_odd},
    \begin{align}
        \tr\rbra*{E_{i,j}^{\textup{odd}}} = \tr\rbra*{P_{i,j} \rbra*{\SIJ} {P_{i,j}}^\dagger}.
    \end{align}
    The proof is analogous to that of \cref{prop:qsvt_round_even}, with $\sigma_i$ replaced by $\rbra{\SIJ}$ and using $\Abs{\SIJ}_1 \leq 1$.
    The result is $2\delta_2$.
\end{proof}

\begin{proposition}[QSVT approximation error of odd order]\label{prop:qsvt_approx_odd}
    Let $p$ be the polynomial in \cref{eq:monomial_approx,eq:monomial_bound}. And let $\eps_1 \in \interval[open]{0}{1}$. For any two density operators $\sigma_i$ and $\sigma_j$
    \begin{align}
        \abs*{ \tr\rbra*{p\rbra*{\sigma_i\sigma_j\sigma_i}\rbra*{\sigma_i\sigma_j\sigma_i}p\rbra*{\sigma_i\sigma_j\sigma_i}}
        -\tr\rbra*{\rbra*{\sigma_i\sigma_j\sigma_i}^s\rbra*{\sigma_i\sigma_j\sigma_i}\rbra*{\sigma_i\sigma_j\sigma_i}^s} } \leq 2\eps_1.
    \end{align}
\end{proposition}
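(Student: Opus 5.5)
We follow the proof of \cref{prop:qsvt_approx_even} essentially verbatim, replacing the middle factor $\sigma_i$ by $\SIJ$. Write $A=\SIJ$, $P=\pSIJ$ and $S=\SIJs$. The first step is a telescoping split:
\begin{align}
    \abs*{\tr\rbra*{PAP}-\tr\rbra*{SAS}}
    \leq \abs*{\tr\rbra*{\rbra*{P-S}AP}} + \abs*{\tr\rbra*{SA\rbra*{P-S}}}.
\end{align}

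We then bound each term using $\abs{\tr\rbra{X}}\leq\Abs{X}_1$ together with H\"older's inequality $\Abs{XY}_1\leq\Abs{X}_\infty\Abs{Y}_1$. For the first term,
\[
\abs*{\tr\rbra*{\rbra*{P-S}AP}} \leq \Abs{P-S}_\infty\Abs{A}_1\Abs{P}_\infty.
\]
For the second term,
\[
\abs*{\tr\rbra*{SA\rbra*{P-S}}} \leq \Abs{S}_\infty\Abs{A}_1\Abs{P-S}_\infty.
\]

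The needed norm facts are as follows.
\begin{itemize}
    \item $A=\sigma_i\sigma_j\sigma_i$ is positive semidefinite, since $\sigma_i$ is Hermitian and $\sigma_j\geq0$. Hence $\Abs{A}_1=\tr\rbra{\sigma_i^2\sigma_j}\leq\Abs{\sigma_i^2}_\infty\tr\rbra{\sigma_j}\leq1$.
    \item The spectrum of $A$ lies in $[0,1]$, because $\Abs{A}_\infty\leq\Abs{\sigma_i}_\infty^2\Abs{\sigma_j}_\infty\leq1$. Consequently $\Abs{S}_\infty=\Abs{A}_\infty^s\leq1$.
    \item By the spectral calculus and \cref{eq:monomial_bound}, $\Abs{P}_\infty\leq1$.
    \item By \cref{eq:monomial_approx}, $\Abs{P-S}_\infty=\max_{\lambda\in\mathrm{spec}(A)}\abs{p\rbra{\lambda}-\lambda^s}\leq\eps_1$.
\end{itemize}

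Each term is therefore at most $\eps_1$, and the two together give the claimed bound $2\eps_1$.

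The only subtle point is the endpoint issue. \cref{thm:poly_approx_monomials} gives the bounds on the open interval $(-1,1)$, but $A$ may have eigenvalue $1$, namely when $\sigma_i=\sigma_j$ is pure. Continuity of polynomials extends both bounds to the closed interval $[-1,1]$, so this causes no trouble. Apart from that, the argument is routine.
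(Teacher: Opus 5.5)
Your proposal is correct and follows the paper's argument: the paper also repeats the telescoping and H\"older proof of \cref{prop:qsvt_approx_even} with $\sigma_i$ replaced by $\sigma_i\sigma_j\sigma_i$, obtaining the bound $2\Abs{p(\sigma_i\sigma_j\sigma_i)-(\sigma_i\sigma_j\sigma_i)^s}$. What you add is detail the paper leaves implicit: the explicit checks that $\Abs{\sigma_i\sigma_j\sigma_i}_1\leq 1$ and that its spectrum lies in $[0,1]$, and the continuity remark extending \cref{eq:monomial_approx,eq:monomial_bound} from $(-1,1)$ to the closed interval.
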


\begin{proof}
    The proof is analogous to that of \cref{prop:qsvt_approx_even}, with $\sigma_i$ replaced by $\rbra{\SIJ}$.
    The result is $2\Abs{\pSIJ - \SIJs}$.
    The result follows from~\cref{eq:monomial_approx,eq:monomial_bound}.
\end{proof}

\begin{proposition}[Amplitude-estimation error of odd order]\label{prop:amp_est_odd}
    Let $\tilde{X}$, $\eps_1$, $\delta_2$, $\delta_{\textup{AE}}$, $\eps_{\textup{AE}}$, $s$ be the parameters specified in \cref{alg:frame_potential_odd}. Then
    \begin{align}
        \abs*{4\tilde{X}-\tr\rbra*{M_{\calE,s}^{\textup{odd}}}}
        \leq 4\eps_{\textup{AE}} + 8\delta_2 + 2\eps_1 .
    \end{align}
\end{proposition}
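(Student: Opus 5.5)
The argument follows the proof of \cref{prop:amp_est_even} step for step, with $\sigma_i$ replaced by $\sigma_i\sigma_j\sigma_i$. First I unfold the definition in \cref{eq:dist_matrix_odd}:
\begin{align}
    \tr\rbra*{M_{\calE,s}^{\textup{odd}}} = \sum_{i,j}\mu_i\mu_j\tr\rbra*{\SIJs\rbra*{\SIJ}\SIJs}.
\end{align}
On the algorithm side, \cref{prop:construct_uc_odd} and the rewriting in \cref{eq:prepare_mixed_odd} show that the amplitude being estimated is exactly $X=\sum_{i,j}\mu_i\mu_j\tr\rbra{E_{i,j}^{\textup{odd}}}$. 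By \cref{thm:amp_est}, on the success event we have $\abs{\tilde X - X}\le\eps_{\textup{AE}}$. Multiplying by $4$ gives $\abs{4\tilde X-4X}\le 4\eps_{\textup{AE}}$.

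Next I chain triangle inequalities. I compare $4X$ with $\sum_{i,j}\mu_i\mu_j\tr\rbra{p\rbra{\SIJ}\rbra{\SIJ}p\rbra{\SIJ}^\dagger}$. \cref{prop:qsvt_round_odd} bounds each term's discrepancy by $2\delta_2$ before the factor of $4$, so this step costs $8\delta_2$. I then compare that expression with $\tr(M_{\calE,s}^{\textup{odd}})$ using \cref{prop:qsvt_approx_odd}, which costs $2\eps_1$ per pair. Since $p$ is real, $p\rbra{\SIJ}$ is Hermitian, so the dagger is harmless. Because the weights satisfy $\sum_{i,j}\mu_i\mu_j=1$, each per-pair bound passes through the convex combination unchanged. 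Summing the three contributions gives $4\eps_{\textup{AE}}+8\delta_2+2\eps_1$.

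The main point to check is that the per-pair lemmas \cref{prop:qsvt_round_odd,prop:qsvt_approx_odd} hold with $\sigma_i\sigma_j\sigma_i$ in the middle. The only property the Hölder step needs is $\Abs{\SIJ}_1\le 1$. This follows from $\Abs{\sigma_i\sigma_j\sigma_i}_1\le\Abs{\sigma_i}_1\Abs{\sigma_j}_\infty\Abs{\sigma_i}_\infty\le 1$, or directly because $\SIJ$ is positive semidefinite with trace $\tr(\sigma_i^2\sigma_j)\le 1$. Everything else is the same bookkeeping as in the even case.
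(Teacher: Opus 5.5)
Your proposal is correct and follows the paper's proof. You unfold $\tr(M_{\calE,s}^{\textup{odd}})$ as a $\mu\otimes\mu$-average, use \cref{thm:amp_est} to get $\abs{4\tilde X-4X}\le 4\eps_{\textup{AE}}$, and chain \cref{prop:qsvt_round_odd} (scaled by $4$) and \cref{prop:qsvt_approx_odd} through the convex combination; your extra checks (the amplitude-estimation success event, Hermiticity of $p(\sigma_i\sigma_j\sigma_i)$, and $\Abs{\sigma_i\sigma_j\sigma_i}_1\le 1$) are details the paper leaves implicit by calling the argument analogous to the even case.
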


\begin{proof}
    By \cref{eq:dist_matrix_odd},
    \begin{align}
        \tr\rbra*{M_{\calE,s}^{\textup{odd}}} = \sum_{i,j}\mu_i\mu_j\tr\rbra*{\SIJs \rbra*{\SIJ} \SIJs}.
    \end{align}
    The proof is analogous to that of \cref{prop:amp_est_even}, with $\sigma_i$ replaced by $\rbra{\SIJ}$.
    Combining \cref{prop:qsvt_round_odd,prop:qsvt_approx_odd} gives
    \begin{align}
        \abs*{4\tilde{X}-\sum_{i,j}\mu_i\mu_j\tr\rbra*{\SIJs \rbra*{\SIJ} \SIJs}} \leq 4\eps_{\textup{AE}} + 8\delta_2 + 2\eps_1 .
    \end{align}
\end{proof}

\begin{theorem}[Estimate the trace of odd-order mixed pivot matrix]\label{thm:main_estimate_odd}
    Fix a positive integer $s$.
    Given an ensemble $\calE = \cbra{\rbra{\mu_k,\sigma_k}}$, \cref{alg:frame_potential_odd} estimates $\tr\rbra{M_{\calE,s}^{\textup{odd}}}$ to within additive error $\eps$ using $O\rbra{\sqrt{s\log{\rbra{1/\eps}}}/\eps}$ queries to $\calO_S$.
\end{theorem}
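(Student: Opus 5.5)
The plan mirrors the proof of \cref{thm:main_estimate_even}, with $\sigma_i$ replaced by $\sigma_i\sigma_j\sigma_i$ throughout. Fix $\eps\in\interval[open]{0}{1}$ and use the parameter choices of \cref{alg:frame_potential_odd}: $\eps_{\textup{AE}}=\eps/16$, $\delta_2=\eps/16$, $\eps_1=\eps/4$ and $\delta_{\textup{AE}}=1/4$. The argument has two parts, correctness and query counting.

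For correctness, I will invoke \cref{prop:amp_est_odd}. On the event that amplitude estimation succeeds, which has probability at least $3/4$ by \cref{thm:amp_est}, it gives
\[
\abs*{4\tilde X-\tr\rbra*{M_{\calE,s}^{\textup{odd}}}}\le 4\eps_{\textup{AE}}+8\delta_2+2\eps_1=\eps/4+\eps/2+\eps/2.
\]
This sum is slightly larger than $\eps$, so I will rescale the constants. Taking $\eps_1=\eps/8$, $\delta_2=\eps/32$ and $\eps_{\textup{AE}}=\eps/16$ gives a total of $\eps/4+\eps/4+\eps/4<\eps$, and affects the complexity only by constant factors. The same adjustment applies to the even case.

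The correctness chain rests on two facts to double-check. First, the operator $U_{D_{i,j}}$ built via \cref{lem:sandwiched_product} from $U_{\sigma_i}$ and $\calO_{\sigma_j}$ prepares $\sigma_i\sigma_j\sigma_i$. Second, $\Abs{\sigma_i\sigma_j\sigma_i}_1\le 1$, which is what the Hölder steps in \cref{prop:qsvt_round_odd,prop:qsvt_approx_odd} need. The state produced by $U$ then has $\ket{0}_{\mathsf T}$-amplitude squared equal to $X=\sum_{i,j}\mu_i\mu_j\tr\rbra{E_{i,j}^{\textup{odd}}}$, by \cref{prop:construct_uc_odd} together with the controlled construction justified by \cref{lem:input_model_equivalence}.

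For the query count, one application of $U$ uses two queries to $\calO_\mu$ and one controlled $U_{C_{i,j}}^{\textup{odd}}$. The latter consists of one $U_{B_{i,j}}^{\textup{odd}}$ plus $O(1)$ gates. The operator $U_{B_{i,j}}^{\textup{odd}}$ in turn consists of one $U_{D_{i,j}}$, costing $O(1)$ queries, and one $U_{p(A_{i,j})}$. By \cref{thm:qsvt} and \cref{thm:poly_approx_monomials}, $U_{p(A_{i,j})}$ uses $O\rbra{\sqrt{s\log(1/\eps_1)}}$ applications of $U_{A_{i,j}}$, each costing $O(1)$ queries to $\calO_S$ by \cref{lem:BE_density_operator,lem:product_BE}. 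Amplitude estimation with $\delta_{\textup{AE}}=1/4$ uses $O(1/\eps)$ applications of $U$. Multiplying, the total is $O\rbra{\sqrt{s\log(1/\eps)}/\eps}$ queries.

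The only genuinely delicate step is making sure the error bookkeeping in \cref{prop:amp_est_odd} is consistent. In particular, the factor $4$ must be tracked: the algorithm outputs $4\tilde X$, and this rescaling multiplies the QSVT rounding error from \cref{prop:qsvt_round_odd} by $4$, giving the $8\delta_2$ term. Everything else is routine composition of the earlier lemmas.
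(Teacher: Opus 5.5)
Your proposal is correct and follows the same route as the paper: correctness comes from \cref{prop:amp_est_odd} on the amplitude-estimation success event. The query count is the QSVT degree $O(\sqrt{s\log(1/\eps)})$ times $O(1)$ queries per $U_{A_{i,j}}$, plus $O(1)$ for $U_{D_{i,j}}$, multiplied by the $O(1/\eps)$ amplitude-estimation rounds. You are also right that the paper's stated parameters give $4\eps_{\textup{AE}}+8\delta_2+2\eps_1 = 5\eps/4$ rather than at most $\eps$, and your rescaling to $\eps_1=\eps/8$, $\delta_2=\eps/32$ fixes this at only constant-factor cost.
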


\begin{proof}
    Let $\eps_{\textup{AE}} = \eps/16$, $\delta_2 = \eps/16$ and $\eps_1 = \eps/4$, where $\eps \in \interval[open]{0}{1}$.
    By~\cref{prop:amp_est_odd}, \cref{alg:frame_potential_odd} can estimate $\tr\rbra{M_{\calE,s}^{\textup{odd}}}$ to within additive error $\eps$ with probability at least $3/4$.
    Now we analyze the query complexity.
    The dominant cost comes from QSVT. The unitary $U_{D_{i,j}}$ can be constructed using $O\rbra{1}$ queries to each of $\calO_i$ and $\calO_j$.
    By our choice of parameters, the degree of the polynomial $p$ is $O\rbra{\sqrt{s\log{\rbra{1/\eps_1}}}}=O\rbra{\sqrt{s\log{\rbra{1/\eps}}}}$, thus the algorithm takes $O\rbra{\sqrt{s\log{\rbra{1/\eps}}}}$ queries to $U_{A_{i,j}}$.
    Each use of $U_{A_{i,j}}$ makes two queries to $U_{\sigma_i}$ and one to $U_{\sigma_j}$, and each $U_{\sigma_k}$ uses $O\rbra{1}$ queries to $\calO_{\sigma_k}$.
    Multiplying by the rounds of amplitude estimation, the total number of queries is $O\rbra{\sqrt{s\log{\rbra{1/\eps}}}/\eps}$.
\end{proof}

\paragraph{State frame potential estimator.} Now we are ready to construct our estimator for state frame potential. The correctness is demonstrated by the following two propositions.

\begin{proposition}[Even order]\label{prop:even_order_pure}
    If each $\sigma_i$ is a pure-state density operator and $t$ is even, then
    \begin{align}
        \calF_t\rbra*{\calE} = \tr\rbra*{M_{\calE,t/2}^{\textup{even}}}.
    \end{align}
\end{proposition}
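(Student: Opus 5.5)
The plan is to reduce everything to a scalar computation for a single pair $(i,j)$ and then sum with weights $\mu_i\mu_j$. By \cref{eq:dist_matrix_even} and linearity of the trace,
\begin{align}
    \tr\rbra*{M_{\calE,t/2}^{\textup{even}}} = \sum_{i,j}\mu_i\mu_j \tr\rbra*{M_{i,j,t/2}^{\textup{even}}},
\end{align}
and the discrete formula for $\calF_t\rbra{\calE}$ gives $\sum_{i,j}\mu_i\mu_j\abs{\braket{\psi_i}{\psi_j}}^{2t}$. It therefore suffices to show, for each pair, that $\tr\rbra{M_{i,j,s}^{\textup{even}}} = \abs{\braket{\psi_i}{\psi_j}}^{4s}$ with $s=t/2$.

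Write $\sigma_i=\ketbra{\psi_i}{\psi_i}$, $\sigma_j=\ketbra{\psi_j}{\psi_j}$ and $c \coloneqq \abs{\braket{\psi_i}{\psi_j}}^2$. The key computation is that
\begin{align}
    \sigma_i\sigma_j\sigma_i = \ket{\psi_i}\braket{\psi_i}{\psi_j}\braket{\psi_j}{\psi_i}\bra{\psi_i} = c\,\sigma_i.
\end{align}
Since $\sigma_i$ is a projector, we get $\rbra{\sigma_i\sigma_j\sigma_i}^s = c^s\sigma_i$ for $s\geq 1$. For $s=0$ one interprets the power as the identity, which also works, since then $M = \sigma_i$ has trace $1 = c^0$.

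Substituting into \cref{eq:single_matrix_even} gives
\begin{align}
    M_{i,j,s}^{\textup{even}} = c^s\sigma_i\,\sigma_i\,c^s\sigma_i = c^{2s}\sigma_i,
\end{align}
so its trace is $c^{2s}=\abs{\braket{\psi_i}{\psi_j}}^{2t}$, using $\tr\sigma_i=1$. Summing against $\mu_i\mu_j$ then yields the claim.

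There is no real obstacle here. The only points needing care are the idempotence $\sigma_i^2=\sigma_i$ for pure states, which collapses the product, and the edge case $s=0$, which the convention above handles. The argument is purely algebraic and does not depend on the QSVT machinery.
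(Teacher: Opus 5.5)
Your proof is correct and follows the paper's approach: you reduce by linearity to a single pair and then show $\tr\rbra{M_{i,j,s}^{\textup{even}}} = \abs{\braket{\psi_i}{\psi_j}}^{4s}$ using purity. The only difference is presentational: the paper peels one factor of $\sigma_i\sigma_j\sigma_i$ off each side per step, picking up $\abs{\braket{\psi_i}{\psi_j}}^4$ each time, whereas your identity $\sigma_i\sigma_j\sigma_i = c\,\sigma_i$ gives $M_{i,j,s}^{\textup{even}} = c^{2s}\sigma_i$ directly.
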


\begin{proof}
    Let $t=2s$.
    Since $\tr\rbra{M_{\calE,s}^{\textup{even}}} = \sum_{i}\sum_{j}\mu_i\mu_j\tr\rbra{M_{i,j,s}^{\textup{even}}}$,
    \begin{align}
        \tr\rbra{M_{i,j,s}^{\textup{even}}}
        & = \tr\rbra*{\SIJs \sigma_i \SIJs} \\
        & = \tr\rbra*{ \rbra*{ \KBp{i}\KBp{j}\KBp{i} }^{s} \KBp{i} \rbra*{ \KBp{i}\KBp{j}\KBp{i} }^{s} }\\
        & = \tr\rbra*{ \rbra*{ \KBp{i}\KBp{j}\KBp{i} }^{s-1} \KBp{i} \rbra*{ \KBp{i}\KBp{j}\KBp{i} }^{s-1} } \cdot \abs*{\braket{\psi_i}{\psi_j}}^4 \\
        & = \abs*{\braket{\psi_i}{\psi_j}}^{4s}.
    \end{align}
    Therefore,
    \begin{align}
        \sum_{i=0}^{K-1}\sum_{j=0}^{K-1}\mu_i\mu_j\tr\rbra*{M_{i,j,s}^{\textup{even}}}
        = \Ex_{i,j \sim \mu}\sbra*{\abs*{\braket{\psi_i}{\psi_j}}^{4s}}
        = \Ex_{i,j \sim \mu}\sbra*{\abs*{\braket{\psi_i}{\psi_j}}^{2t}}
        = \calF_t\rbra{\calE}.
    \end{align}
\end{proof}

\begin{proposition}[Odd order]\label{prop:odd_order_pure}
    If each $\sigma_i$ is a pure-state density operator and $t$ is odd, then
    \begin{align}
        \calF_t\rbra*{\calE} = \tr\rbra*{M_{\calE,(t-1)/2}^{\textup{odd}}}.
    \end{align}
\end{proposition}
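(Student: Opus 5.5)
The plan mirrors the proof of \cref{prop:even_order_pure}. Write $t = 2s+1$ and use linearity of the trace:
\begin{align}
    \tr\rbra{M_{\calE,s}^{\textup{odd}}} = \sum_{i}\sum_{j}\mu_i\mu_j\tr\rbra{M_{i,j,s}^{\textup{odd}}}.
\end{align}
It therefore suffices to show, for each pair $i,j$ with $\sigma_i = \KBp{i}$ and $\sigma_j = \KBp{j}$, that
\begin{align}
    \tr\rbra{M_{i,j,s}^{\textup{odd}}} = \abs{\braket{\psi_i}{\psi_j}}^{2t}.
\end{align}
Summing against $\mu_i\mu_j$ then gives $\E_{i,j\sim\mu}\sbra{\abs{\braket{\psi_i}{\psi_j}}^{2t}} = \calF_t\rbra{\calE}$.

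The key algebraic fact is that for pure states,
\begin{align}
    \SIJ = \ket{\psi_i}\braket{\psi_i}{\psi_j}\braket{\psi_j}{\psi_i}\bra{\psi_i} = c\,\KBp{i}, \qquad c \coloneqq \abs{\braket{\psi_i}{\psi_j}}^2 .
\end{align}
So $\SIJ$ is a nonnegative scalar multiple of the rank-one projector $\KBp{i}$. Because $\KBp{i}$ is idempotent, $\SIJs = c^s\KBp{i}$ for every $s \geq 1$. For $s = 0$ we have $\SIJs = I$, which is harmless here since it only multiplies the middle factor.

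Substituting into $M_{i,j,s}^{\textup{odd}} = \SIJs\rbra{\SIJ}\SIJs$ gives $c^{s}\cdot c\cdot c^{s}\,\KBp{i}$ when $s\ge 1$, and $c\,\KBp{i}$ when $s=0$. Since $\tr\rbra{\KBp{i}} = 1$, in both cases
\begin{align}
    \tr\rbra{M_{i,j,s}^{\textup{odd}}} = c^{2s+1} = \abs{\braket{\psi_i}{\psi_j}}^{2(2s+1)} = \abs{\braket{\psi_i}{\psi_j}}^{2t}.
\end{align}
An equally short alternative is the inductive peeling used in \cref{prop:even_order_pure}. Each outer pair $\rbra{\SIJ}\cdots\rbra{\SIJ}$ contributes a factor $\abs{\braket{\psi_i}{\psi_j}}^4$ under the trace, reducing $s$ by one, until only $\tr\rbra{\SIJ} = \abs{\braket{\psi_i}{\psi_j}}^2$ remains.

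There is no real obstacle. The only points needing care are the edge case $s=0$ (that is, $t=1$), where the power $\SIJs$ must be read as the identity, and the bookkeeping of exponents $2s + 1 = t$ so that the final power is $2t$ rather than $t$.
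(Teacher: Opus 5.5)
Your proof is correct and takes essentially the paper's route: reduce by linearity to a single pair $(i,j)$, then compute $\tr\rbra{M_{i,j,s}^{\textup{odd}}} = \abs{\braket{\psi_i}{\psi_j}}^{4s+2}$ directly using purity. The paper strips off one outer pair of $\sigma_i\sigma_j\sigma_i$ factors at a time; your identity $\sigma_i\sigma_j\sigma_i = \abs{\braket{\psi_i}{\psi_j}}^2\sigma_i$ reaches the same closed form in one step and also covers the $s=0$ case explicitly.
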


\begin{proof}
    Let $t=2s+1$.
    Since $\tr\rbra{M_{\calE,s}^{\textup{odd}}} = \sum_{i}\sum_{j}\mu_i\mu_j\tr\rbra{M_{i,j,s}^{\textup{odd}}}$,
    \begin{align}
        \tr\rbra*{M_{i,j,s}^{\textup{odd}}}& = \tr\rbra*{\SIJs \SIJ \SIJs}\\
        & = \tr\rbra*{ \rbra*{ \KBp{i}\KBp{j}\KBp{i} }^{s} \KBp{i} \KBp{j} \KBp{i} \rbra*{ \KBp{i}\KBp{j}\KBp{i} }^{s} }\\
        & = \tr\rbra*{ \rbra*{ \KBp{i}\KBp{j}\KBp{i} }^{s} \KBp{i} \rbra*{ \KBp{i}\KBp{j}\KBp{i} }^{s} } \cdot \abs*{\braket{\psi_i}{\psi_j}}^{2}\\
        & = \tr\rbra*{ \rbra*{ \KBp{i}\KBp{j}\KBp{i} }^{s-1} \KBp{i} \rbra*{ \KBp{i}\KBp{j}\KBp{i} }^{s-1} } \cdot \abs*{\braket{\psi_i}{\psi_j}}^{6}\\
        & = \abs*{\braket{\psi_i}{\psi_j}}^{4s+2} .
    \end{align}
    Therefore,
    \begin{align}
        \sum_{i=0}^{K-1}\sum_{j=0}^{K-1}\mu_i\mu_j\tr\rbra*{M_{i,j,s}^{\textup{odd}}}
        = \Ex_{i,j \sim \mu}\sbra*{\abs*{\braket{\psi_i}{\psi_j}}^{4s+2}}
        = \Ex_{i,j \sim \mu}\sbra*{\abs*{\braket{\psi_i}{\psi_j}}^{2t}}
        = \calF_t\rbra{\calE}.
    \end{align}
\end{proof}

Combining the subroutines \cref{alg:frame_potential_even,alg:frame_potential_odd} in \cref{sec:algorithm}, we show our algorithm for estimating the state frame potential in \cref{alg:state_frame_potential}.

\begin{algorithm}[htbp]
    \caption{$\mathsf{StateFramePotentialEstimator}\rbra{\calO_\mu, \calO_S,t,\eps}$}
    \label{alg:state_frame_potential}
    \begin{algorithmic}[1]
        \REQUIRE Input oracles $\calO_\mu$, $\calO_S$, order $t \in \N$ and accuracy $\eps \in \interval[open]{0}{1}$.
        \ENSURE An estimate of $\calF_t\rbra{\calE}$ to within additive error $\eps$. 

        \IF {$t$ is even}
            \STATE \textbf{return} $\mathsf{TraceEvenOrderPivotMatrix}\rbra{\calO_\mu, \calO_S, t, \eps}$
            \Comment{\cref{alg:frame_potential_even}}
        \ELSE
            \STATE \textbf{return} $\mathsf{TraceOddOrderPivotMatrix}\rbra{\calO_\mu, \calO_S, t, \eps}$
            \Comment{\cref{alg:frame_potential_odd}}
        \ENDIF 
    \end{algorithmic}
\end{algorithm}

\begin{theorem}[Estimate the state frame potential]\label{thm:potential_upperbound}
    Fix an integer $t \geq 1$. Given a computationally efficient description of an ensemble of quantum states $\calE$, \cref{alg:state_frame_potential} estimates $\calF_t\rbra{\calE}$ to within additive error $\eps$ using $O\rbra{\sqrt{t\log\rbra{1/\eps}}/\eps}$ queries to $\calO_S$.
\end{theorem}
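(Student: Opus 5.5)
The plan is to split on the parity of $t$, exactly as \cref{alg:state_frame_potential} does, and combine the trace-estimation theorems with the pure-state identities. First, each pure state $\ket{\psi_i}$ is viewed as the density operator $\sigma_i=\ketbra{\psi_i}{\psi_i}$. The state-preparation unitary $\calO_i$ is a special case of purified access with $n_\sigma=0$, or $n_\sigma$ equal to the size of the junk register under the weaker oracle $\calO_i\ket{0}=\ket{\psi_i}\ket{\phi_i}$. Hence $\calO_S$ is a mixed state-preparation oracle for the ensemble $\cbra{\rbra{\mu_k,\sigma_k}}$. By \cref{lem:input_model_equivalence}, the per-pair circuits built from $\calO_{\sigma_i},\calO_{\sigma_j}$ can be realized coherently, controlled on the $\mathsf{I},\mathsf{J}$ registers, with the same number of queries to $\calO_S$. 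This makes the controlled-$U_{C_{i,j}}$ step in \cref{alg:frame_potential_even,alg:frame_potential_odd} legitimate. The efficient-description hypothesis is needed only to ensure that these circuits, and the classical QSVT phase computation in \cref{thm:qsvt}, are efficiently constructible. It does not affect the query count.

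\emph{Even case} $t=2s$. By \cref{prop:even_order_pure}, $\calF_t\rbra{\calE}=\tr\rbra{M^{\textup{even}}_{\calE,s}}$. By \cref{thm:main_estimate_even}, \cref{alg:frame_potential_even} outputs an $\eps$-additive estimate of this trace with probability at least $3/4$, using $O\rbra{\sqrt{s\log(1/\eps)}/\eps}=O\rbra{\sqrt{t\log(1/\eps)}/\eps}$ queries. \emph{Odd case} $t=2s+1$. By \cref{prop:odd_order_pure}, $\calF_t\rbra{\calE}=\tr\rbra{M^{\textup{odd}}_{\calE,s}}$, and \cref{thm:main_estimate_odd} gives the same bound, with $s=(t-1)/2\le t$. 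For $t=1$ we have $s=0$, so the polynomial is the constant $1$. Then $U_{A_{i,j}}$ is not needed, and only the $O(1)$ queries in $U_{D_{i,j}}$ per amplitude-estimation round remain, which gives $O(1/\eps)$ and still fits the claimed bound. The queries to $\calO_\mu$ are two per round of amplitude estimation, so they contribute only $O(1/\eps)$.

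The main point needing care is the hypothesis $\abs{p(x)}\le 1$ on the domain where QSVT is applied. The argument $\sigma_i\sigma_j\sigma_i$ is PSD with norm at most $1$, so its spectrum lies in $[0,1]$. \cref{thm:poly_approx_monomials} only guarantees the bound on the open interval $(-1,1)$, but continuity extends it to $[-1,1]$, so \cref{thm:qsvt} applies to $\frac12 p$. A second point is that the amplitude-estimation oracle $U$ of \cref{thm:amp_est} includes controlled $U_{C_{i,j}}$, each costing $O(\deg p)$ queries. Multiplying this by the $O(1/\eps)$ invocations gives the stated bound. Finally, the success probability $3/4$ from $\delta_{\textup{AE}}=1/4$ is the desired constant, and it can be boosted by a median of repetitions if needed.
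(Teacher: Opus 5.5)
Your proposal is correct and follows the paper's proof exactly: correctness comes from \cref{prop:even_order_pure,prop:odd_order_pure}, and the query bound comes from \cref{thm:main_estimate_even,thm:main_estimate_odd} with $s\le t$. Your extra remarks are sound details that the paper's two-line proof leaves implicit: you treat $t=1$ separately, since $s=0$ lies outside the positive-$s$ hypotheses of \cref{thm:main_estimate_odd} and \cref{thm:poly_approx_monomials}, and you extend $\abs{p(x)}\le 1$ from the open interval to $[-1,1]$ by continuity, so that an eigenvalue $1$ of $\sigma_i\sigma_j\sigma_i$ is covered.
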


\begin{proof}
    Correctness follows from~\cref{prop:even_order_pure,prop:odd_order_pure}, and the complexity bound follows from~\cref{thm:main_estimate_even,thm:main_estimate_odd}.
\end{proof}

\subsection{Lower bound}\label{sec:query_lower}

We show a matching lower bound that the algorithm of~\cref{sec:query_upper} is optimal up to this logarithmic factor.
The lower bound rests on a distinguishability argument: a hypothetical algorithm using $o\rbra{\sqrt{t}/\eps}$ queries would distinguish two carefully chosen ensembles whose frame potentials differ by $\Theta\rbra{\eps}$, contradicting the Hellinger-distance-based query lower bound of \cite{Bel19}.

We first present a slightly weaker result about ensembles of constant cardinality.
The construction of hard instances will inform the more general result that follows.

\begin{theorem} \label{thm:query_lower_constant}
    Any quantum algorithm for estimating to within additive error $\varepsilon$ the state frame potential $\calF_t\rbra*{\calE}$ for any ensemble $\calE$ with cardinality $2$ requires $\Omega\rbra{\sqrt{t}/\varepsilon}$ queries to the state-preparation oracle $\calO_S$.
\end{theorem}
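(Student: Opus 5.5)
We reduce estimation to distinguishing two state-preparation oracles and invoke the Hellinger-distance-based lower bound of~\cite{Bel19}. That bound says distinguishing oracles $\calO_S$ and $\calO_S'$ that prepare $\sum_k \sqrt{p_k}\ket{k}$ and $\sum_k\sqrt{q_k}\ket{k}$ requires $\Omega\rbra{1/d_H\rbra{p,q}}$ queries. The concrete route is to rotate one state in the ensemble by a small angle. Then the states and the oracle change by a small amount, while the $2t$-th power of the overlap magnifies the change in the frame potential by a factor about $\sqrt{t}$.

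\textbf{Hard instances.} Consider the uniform ensemble of cardinality $2$ with $\ket{\psi_0}=\ket{0}$ and $\ket{\psi_1}=\cos\theta\ket{0}+\sin\theta\ket{1}$. Then
\[
\calF_t=\tfrac12+\tfrac12\cos^{2t}\theta=\tfrac12+\tfrac12 f\rbra{\theta},\qquad f\rbra{\theta}=\cos^{2t}\theta.
\]
Pick $\theta_0$ with $\cos^2\theta_0=1-1/t$, so $\theta_0\approx 1/\sqrt{t}$. There $f\rbra{\theta_0}\approx e^{-1}$ and $\abs{f'\rbra{\theta_0}}=2t\cos^{2t-1}\theta_0\sin\theta_0=\Theta\rbra{\sqrt{t}}$. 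Let $\calE$ use $\theta_0$ and $\calE'$ use $\theta_0+\delta$ with $\delta=c\,\eps/\sqrt{t}$. The second derivative of $f$ is $O\rbra{t}$ near $\theta_0$, so the curvature correction is $O\rbra{t\delta^2}=O\rbra{\eps^2}$. Hence the frame potentials differ by $\Theta\rbra{\sqrt{t}\,\delta}=\Theta\rbra{\eps}$, exceeding $2\eps$ for a suitable constant $c$. An $\eps$-estimator therefore distinguishes the two ensembles with probability at least $2/3$.

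\textbf{Query lower bound via~\cite{Bel19}.} The oracle for $\ket{\psi_1}$ is the rotation $R\rbra{\theta}$, and the distinguishing task is to tell $R\rbra{\theta_0}$ from $R\rbra{\theta_0+\delta}$ inside $\calO_S$. To put this in the form of~\cite{Bel19}, write $\ket{\psi_1}=\sum_k\sqrt{p_k}\ket{k}$ with $p=\rbra{\cos^2\theta,\sin^2\theta}$. Both amplitudes are nonnegative, so $\calO_1$ is exactly a sampling oracle for the two-point distribution $p$. The oracle $\ket{i}\otimes\calO_i$ is obtained from one query to $\calO_1$, since the controlled version is the only place $\theta$ enters.

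The Hellinger distance is $d_H\rbra{p,q}^2=1-\cos\rbra{\delta}=\Theta\rbra{\delta^2}$, because the fidelity of the two amplitude vectors is $\cos\delta$. So $d_H=\Theta\rbra{\delta}$, and~\cite{Bel19} gives $\Omega\rbra{1/\delta}=\Omega\rbra{\sqrt{t}/\eps}$ queries. Alternatively, the standard hybrid argument gives the same bound: each query to $\calO_S$ versus $\calO_S'$ moves the state by at most $\Abs{R\rbra{\theta_0}-R\rbra{\theta_0+\delta}}=O\rbra{\delta}$, including controlled versions and inverses, so constant distinguishing advantage needs $\Omega\rbra{1/\delta}$ queries.

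\textbf{Main obstacle.} The delicate step is choosing $\theta_0$ so that the derivative of $\cos^{2t}\theta$ is $\Theta\rbra{\sqrt{t}}$ rather than only $O\rbra{1}$. This is the point $\theta_0\sim 1/\sqrt{t}$, where $\cos^{2t}\theta$ transitions from $1$ to $0$. One must also check that the second-order term does not cancel the first-order gap, and that $\delta$ is small enough, which requires $\eps\le c'$ for some constant $c'$. For constant-size $\eps$ the bound becomes $\Omega\rbra{\sqrt{t}}$, which still follows from the same construction with a constant $\delta/\sqrt{t}$ scaling.
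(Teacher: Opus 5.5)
Your proposal is correct and follows essentially the same route as the paper. The paper uses the same ensemble $\cbra{\rbra{\tfrac12,\ket{0}},\rbra{\tfrac12,\ket{\psi_\pm}}}$ with $\ket{\psi_\pm}=\sqrt{p_0^\pm}\ket{0}+\sqrt{p_1^\pm}\ket{1}$ and $p_0^\pm=1-\frac1t\pm\frac{\eps}{t}$. This is your point $\cos^2\theta_0=1-1/t$, perturbed by $\Theta\rbra{\eps/t}$ in probability rather than $\Theta\rbra{\eps/\sqrt{t}}$ in angle. As you do, it embeds $U_\pm$ into $\calO_S$ with one query and applies Belovs' bound with $d_{\textup{H}}\rbra{P^+,P^-}\le\eps/\sqrt{t-1}$. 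The only cosmetic difference is the gap estimate: the paper shows $\calF_t\rbra{\calE_+}-\calF_t\rbra{\calE_-}\ge\eps/9$ (for $t\ge3$) via Bernoulli's inequality and then runs an $\eps/20$-estimator, whereas you use a second-order Taylor bound with a large constant $c$.
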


To prove~\cref{thm:query_lower_constant}, we use the following result of~\cite{Bel19} on the quantum query complexity of distinguishing probability distributions.

\begin{lemma}[{\cite[Theorem 4]{Bel19}}] \label[lemma]{lemma:bel19}
    Let $p$ and $q$ be two probability distributions on $K$ elements, with $\calO_p$ and $\calO_q$ as their sampling oracles.
    For an unknown unitary oracle $\calO$, any quantum algorithm that determines whether $\calO = \calO_p$ or $\calO = \calO_q$, given the promise that one of these two cases holds, requires $\Omega\rbra{1/d_{\textup{H}}\rbra{p, q}}$ queries to $\calO$, where
    \begin{align}
        d_{\textup{H}}\rbra*{p, q} = \sqrt{\frac{1}{2}\sum_{i=0}^{K-1} \rbra*{\sqrt{p_i} - \sqrt{q_i}}^2}
    \end{align}
    is the Hellinger distance. 
\end{lemma}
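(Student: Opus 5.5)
The plan is a hybrid argument in the style of Bennett--Bernstein--Brassard--Vazirani. Since a sampling oracle is only specified by its action on $\ket{0}$, it suffices to exhibit one pair of unitaries $\calO_p,\calO_q$ with $\calO_p\ket{0}=\ket{p}\coloneqq\sum_i\sqrt{p_i}\ket{i}$ and $\calO_q\ket{0}=\ket{q}\coloneqq\sum_i\sqrt{q_i}\ket{i}$ that are close in operator norm. I would then show that no algorithm with few queries can tell them apart.

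\textbf{Step 1 (close completions).} Note that $\braket{p}{q}=\sum_i\sqrt{p_iq_i}=1-d_{\textup{H}}\rbra{p,q}^2$ is real and nonnegative. Fix any unitary $\calO_p$ with $\calO_p\ket{0}=\ket{p}$. Let $R$ be the rotation by angle $\theta$, with $\cos\theta=1-d_{\textup{H}}^2$, in the real plane spanned by $\ket{p}$ and $\ket{q}$ that sends $\ket{p}$ to $\ket{q}$, acting as the identity on the orthogonal complement. Set $\calO_q\coloneqq R\,\calO_p$, so that $\calO_q\ket{0}=\ket{q}$. Then
\begin{align}
    \Abs{\calO_p-\calO_q}=\Abs{I-R}=\abs{1-e^{i\theta}}=2\sin\rbra{\theta/2}=\sqrt{2-2\cos\theta}=\sqrt{2}\,d_{\textup{H}}\rbra{p,q}.
\end{align}
The same bound holds for the inverses, since $\Abs{\calO_p^\dagger-\calO_q^\dagger}=\Abs{\calO_p-\calO_q}$. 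It also holds for the controlled versions, since $\Abs{\ketbra{0}{0}\otimes I+\ketbra{1}{1}\otimes(\calO_p-\calO_q)}$ restricted to the difference part is $\Abs{\calO_p-\calO_q}$.

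\textbf{Step 2 (hybrid argument).} Write a $T$-query algorithm as $U_TQ_T\cdots U_1Q_1U_0\ket{0}$, where each $Q_k$ is a (controlled, possibly inverted) oracle call. Let $\ket{\Phi_p}$ and $\ket{\Phi_q}$ be the final states under the two oracles. Replace the oracle calls one at a time, from $\calO_p$ to $\calO_q$. By the triangle inequality and unitary invariance, each replacement changes the final state by at most $\sqrt{2}\,d_{\textup{H}}$ in Euclidean norm, so
\begin{align}
    \Abs{\ket{\Phi_p}-\ket{\Phi_q}}\leq\sqrt{2}\,T\,d_{\textup{H}}\rbra{p,q}.
\end{align}

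\textbf{Step 3 (distinguishability).} Suppose the algorithm distinguishes the two oracles with success probability at least $2/3$. By Helstrom's bound, the trace distance between the final pure states must then be at least $1/3$. For pure states, the trace distance is at most the Euclidean distance, so $\sqrt{2}\,T\,d_{\textup{H}}\geq1/3$. This gives $T=\Omega\rbra{1/d_{\textup{H}}\rbra{p,q}}$.

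The only delicate point is Step 1: the lower bound must be stated for a specific choice of extensions of the sampling oracles. For arbitrary completions the oracles could be far apart, and they could even be distinguishable through their action on states other than $\ket{0}$. So I would phrase the lemma, as Belovs does, for this adversarially chosen pair of completions. That is exactly what the reduction in \cref{thm:query_lower_constant} needs, since there the hard instances will also be built from these completions.
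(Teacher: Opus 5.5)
Your proof is correct, but it takes a different route from the one the paper relies on. The paper gives no proof of its own and imports the statement from \cite{Bel19}, where the lower bound is obtained with the adversary method (for state-generating input oracles, and for distributions encoded in input strings).

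You instead run an elementary two-point hybrid argument that exploits the freedom to choose the completions. With $\calO_q=R\,\calO_p$, the oracle, its inverse and its controlled version all differ by exactly $\sqrt{2}\,d_{\textup{H}}\rbra{p,q}$ in operator norm. For the controlled case, the relevant difference is $\ketbra{1}{1}\otimes\rbra{\calO_p-\calO_q}$, not the expression you wrote. This yields the explicit bound $T\geq 1/\rbra{3\sqrt{2}\,d_{\textup{H}}\rbra{p,q}}$.

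This is self-contained and exactly what the paper needs. The estimators in \cref{thm:query_lower_constant,thm:query_lower_arbitrary} must work for every valid oracle, hence for $\calO_S^{\pm}$ built from your hard pair. Your caveat that the lemma holds only in this existential form is right: if $p=q$ but the oracles differ away from $\ket{0}$, a single query can separate them.

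Your construction also covers the $K$-element instance. There $\calO_S^{\pm}$ uses $U_\pm U_k$, so the completions must act trivially on the $\ket{j}$ with $j\geq 2$. You get this by taking $\calO_p$ to be a rotation inside $\mathrm{span}\cbra{\ket{0},\ket{1}}$, so that $R$, and hence $\calO_q$, act as the identity elsewhere.

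What the adversary route buys in exchange is robustness when the completions are not yours to choose. For string inputs, the query oracles of two distinct strings are at constant operator-norm distance, so a single-pair hybrid gives only $\Omega\rbra{1}$. An adversary matrix spread over many strings still gives $\Omega\rbra{1/d_{\textup{H}}\rbra{p,q}}$. The same framework also handles many-input problems, and Belovs pairs it with a matching upper bound.
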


Now we are ready to prove \cref{thm:query_lower_constant}.

\begin{proof}[Proof of \cref{thm:query_lower_constant}]
    We use the hard instance taken from \cite{CWYZ26} and consider the problem in \cite{Wan25} that allows us to derive quantum query lower bounds.
    The problem is to distinguish the following two probability distributions $P^\pm$:
    \begin{align}
        p_0^\pm = 1 - \frac{1}{t} \pm \frac{\varepsilon}{t}, \qquad p_1^\pm = \frac{1}{t} \mp \frac{\varepsilon}{t},
    \end{align}
    where $\varepsilon \in \interval[open]{0}{1}$ is a parameter to be determined.
    The sampling oracles for $P^{\pm}$ are $U_{\pm}$ defined by
    \begin{align}\label{def:psi-pm}
        U_{\pm} \ket{0} = \ket{\psi_\pm} \coloneqq \sqrt{p_0^\pm} \ket{0} + \sqrt{p_1^\pm} \ket{1}.
    \end{align}
    It follows from~\cite[Equation (51)]{Wan25} that
    \begin{align}
        d_{\textup{H}}\rbra*{P^+, P^-} \leq \frac{\varepsilon}{\sqrt{t-1}}. 
    \end{align}
    By \cref{lemma:bel19}, any quantum algorithm that distinguishes $P^\pm$ by their sampling oracles requires query complexity $\Omega\rbra{1/d_{\textup{H}}\rbra{P^+, P^-}} = \Omega\rbra{\sqrt{t}/\varepsilon}$.
    
    On the other hand, suppose that there is a quantum algorithm for estimating the state frame potential $\calF_t\rbra{\calE}$ to within additive error $\varepsilon$, using $Q\rbra{t, \varepsilon}$ queries to the state-preparation oracle. 
    Now we consider the two ensembles $\mathcal{E}_{\pm}$:
    \begin{align} \label{eq:def-ensemble}
        \mathcal{E}_{\pm} = \cbra*{\rbra*{\frac{1}{2}, \ket{0}}, \rbra*{\frac{1}{2}, \ket{\psi_{\pm}}}}.
    \end{align}
    Then, for $t \geq 3$, we have
    \begin{align} \label{eq:potential-diff}
        \calF_t\rbra*{\mathcal{E}_+} - \calF_t\rbra*{\mathcal{E}_-} \geq \frac{1}{9} \varepsilon. 
    \end{align}
    To see this, since
    \begin{align}
        \calF_{t}\rbra*{\mathcal{E}_{\pm}}
        & = \frac{1}{4}\rbra*{\abs*{\braket{0}{0}}^{2t} + \abs*{\braket{0}{\psi_{\pm}}}^{2t} + \abs*{\braket{\psi_{\pm}}{0}}^{2t} + \abs*{\braket{\psi_{\pm}}{\psi_{\pm}}}^{2t}} \\
        & = \frac{1}{2}\rbra*{1 + \abs*{\braket{0}{\psi_{\pm}}}^{2t}} \\
        & = \frac{1}{2} + \frac{1}{2}\rbra*{p_0^{\pm}}^t,
    \end{align}
    we have
    \begin{align}\label{eq:fep-fem}       
        \calF_t\rbra*{\mathcal{E}_+} - \calF_t\rbra*{\mathcal{E}_-}
        & = \frac{1}{2}\rbra*{\rbra*{p_0^{+}}^t - \rbra*{p_0^{-}}^t} \\
        & = \frac{1}{2}\rbra*{\rbra*{\frac{p_0^{+}}{p_0^-}}^t - 1}\rbra*{p_0^{-}}^t \\
        & = \frac{1}{2}\rbra*{\rbra*{1+\frac{p_0^{+}-p_0^-}{p_0^-}}^t - 1}\rbra*{p_0^{-}}^t \\
        & \geq \frac{1}{2}\rbra*{1+\frac{p_0^{+}-p_0^-}{p_0^-}\cdot t - 1}\rbra*{p_0^{-}}^t \\
        & = \frac{1}{2}\rbra*{p_0^{+}-p_0^-}\cdot t \cdot \rbra*{p_0^{-}}^{t-1} \\
        & = \varepsilon \cdot \rbra*{1-\frac{1}{t} - \frac{\varepsilon}{t}}^{t-1} \\
        & \geq \varepsilon \cdot \rbra*{1-\frac{2}{t}}^{t-1} \\
        & \geq \frac{1}{9}\cdot \varepsilon,
    \end{align}
    where the first inequality holds because of $p_0^+ > p_0^-$ and $p_0^- > 0$ for $t \geq 3$.
    To estimate the state frame potential of each of the ensembles, the state-preparation oracle is defined by
    \begin{align}
        \mathcal{O}_S^{\pm} = \ketbra{0}{0} \otimes I + \ketbra{1}{1} \otimes U_{\pm},
    \end{align}
    which can be implemented using $1$ query to $U_\pm$, and the sampling oracle for both cases is
    \begin{align}
        \calO_\mu \ket{0} = \frac{1}{\sqrt{2}} \rbra*{ \ket{0} + \ket{1} }.
    \end{align}
    With this, the problem of distinguishing $U_+$ and $U_-$ is reduced to the problem of distinguishing $\mathcal{O}_S^{+}$ and $\mathcal{O}_S^{-}$. 
    By the quantum algorithm for estimating the state frame potential, we can estimate $\calF_{t}\rbra*{\mathcal{E}}$ to within additive error $\frac{1}{20}\varepsilon$ with $Q\rbra{t, \frac{1}{20}\varepsilon}$ queries to $\mathcal{O}_S$, and thus we can determine whether $\mathcal{O}_S = \mathcal{O}_S^+$ or $\mathcal{O}_S = \mathcal{O}_S^-$ according to the estimation results of $\calF_{t}\rbra{\mathcal{E}}$. 
    Therefore,
    \begin{align}
        Q\rbra*{t, \frac{1}{20}\varepsilon}
        \geq \Omega\rbra*{\frac{1}{d_{\textup{H}}\rbra*{P^+, P^-}}}
        = \Omega\rbra*{\frac{\sqrt{t}}{\varepsilon}}.
    \end{align}
    In other words, any quantum algorithm that estimates the state frame potential $\calF_t\rbra{\mathcal{E}}$ to within additive error $\Theta\rbra{\varepsilon}$ requires query complexity $\Omega\rbra{\sqrt{t}/\varepsilon}$. 
\end{proof}

\begin{theorem}\label{thm:query_lower_arbitrary}
For any integer $K \geq 2$, any quantum algorithm for estimating to within additive error $\varepsilon$ the state frame potential $\calF_t\rbra*{\calE}$ for any ensemble $\calE$ with cardinality $K$ requires $\Omega\rbra{\sqrt{t}/\varepsilon}$ queries to the state-preparation oracle $\calO_S$.
\end{theorem}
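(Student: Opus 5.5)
The plan is to reduce to the cardinality-$2$ hard instance of \cref{thm:query_lower_constant} by padding the ensemble to $K$ elements so that the frame-potential gap stays $\Theta\rbra{\varepsilon}$. Keep the hard distributions $P^\pm$, the sampling oracles $U_\pm$, and the states $\ket{\psi_\pm}$ from \cref{def:psi-pm}. Define the $K$-element ensembles
\begin{align}
    \calE^{(K)}_\pm = \cbra*{\rbra*{\tfrac{1}{2}, \ket{\psi_\pm}}} \cup \cbra*{\rbra*{\tfrac{1}{2(K-1)}, \ket{0}}}_{k=1}^{K-1}.
\end{align}
These are repeated copies of $\ket{0}$ carrying total weight $1/2$; multisets are allowed by the conventions in \cref{sec:design_theory}. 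Because $\calF_t$ depends only on the induced measure on states, we get $\calF_t\rbra{\calE^{(K)}_\pm}=\calF_t\rbra{\calE_\pm}$. The gap bound \cref{eq:potential-diff} therefore carries over unchanged: $\calF_t\rbra{\calE^{(K)}_+}-\calF_t\rbra{\calE^{(K)}_-}\ge \varepsilon/9$ for $t\ge3$.

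If one insists on $K$ \emph{distinct} states, I would instead use $K-1$ pairwise-orthogonal states $\ket{2},\ldots,\ket{K}$, which are orthogonal to $\mathrm{span}\cbra{\ket0,\ket1}$ and hence to $\ket{\psi_\pm}$. Give $\ket0$ weight $1/2$, $\ket{\psi_\pm}$ weight $1/4$, and the remaining states total weight $1/4$. The cross terms involving the orthogonal states then vanish, and a direct computation gives
\begin{align}
    \calF_t\rbra*{\calE_\pm} = C + 2\cdot\tfrac12\cdot\tfrac14\rbra*{p_0^\pm}^t,
\end{align}
with $C$ independent of the sign. The difference is then a constant multiple of $\rbra{p_0^+}^t-\rbra{p_0^-}^t$, which is $\Omega\rbra{\varepsilon}$ by the chain \cref{eq:fep-fem}.

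Next I implement the oracles. Take $\calO_S^\pm = \sum_{k\neq k^*}\ketbra{k}{k}\otimes V_k + \ketbra{k^*}{k^*}\otimes U_\pm$, where the $V_k$ are fixed, sign-independent preparations. Each query to $\calO_S^\pm$, including controlled versions and inverses, then costs exactly one query to $U_\pm$ (controlled, or inverse). The sampling oracle $\calO_\mu$ is independent of the sign. For $K$ not a power of two, I pad the index register as the paper allows.

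Now suppose an estimator achieves additive error $\varepsilon/20$ with $Q\rbra{t,\varepsilon/20}$ queries. It then distinguishes $U_+$ from $U_-$, since the gap is at least a constant times $\varepsilon$; with the second construction I rescale the accuracy by the relevant constant. Combining \cref{lemma:bel19} with $d_{\textup H}\rbra{P^+,P^-}\le \varepsilon/\sqrt{t-1}$ gives $Q=\Omega\rbra{\sqrt t/\varepsilon}$. Small $t$ ($t\le 2$) need only $\Omega\rbra{1/\varepsilon}$, which follows from the same argument with $t$ treated as constant, or from the $t\ge3$ case up to constants.

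The main obstacle is the bookkeeping in the gap computation when weights are redistributed. I need to check that the cross terms between the padding states and $\ket{\psi_\pm}$ are sign-independent (orthogonality ensures they vanish), and that the coefficient of $\rbra{p_0^\pm}^t$ stays a constant independent of $K$. This is why I fix the weights of $\ket0$ and $\ket{\psi_\pm}$ at constants and spread only the leftover mass over the padding states.
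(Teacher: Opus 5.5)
Your proof is correct. It reuses the paper's template: the same distributions $P^\pm$, the same Belovs--Hellinger bound, and the same gap chain \cref{eq:fep-fem}. The difference is in how you pad the ensemble to $K$ elements.

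\textbf{The paper's padding.} It uses $2K$ pairwise-distinct states with \emph{uniform} weights $1/(2K)$, namely $(t\ket{0}+\ket{2+k})/\sqrt{t^2+1}$ and $(t\ket{\psi_\pm}+\ket{K+2+k})/\sqrt{t^2+1}$. The orthogonal tails serve three purposes:
\begin{itemize}
\item they make the states distinct;
\item they contribute only the sign-independent overlaps $(t^2+\delta_{jk})/(t^2+1)$;
\item they multiply the informative overlap by $\bigl(t^2/(t^2+1)\bigr)^{2t}$.
\end{itemize}
That factor stays bounded below by a constant precisely because the tail amplitude is of order $1/t$. The result is the gap $\frac{9^5}{10^6}\varepsilon$.

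\textbf{Your padding.} Your multiset version just re-indexes the two-element ensemble, so the gap is literally unchanged. Your orthogonal-padding version keeps a constant fraction of the gap by fixing the weights of $\ket{0}$ and $\ket{\psi_\pm}$ and putting only leftover mass on states orthogonal to $\mathrm{span}\{\ket{0},\ket{1}\}$.

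\textbf{What each buys.} Your route avoids the $\bigl(t^2/(t^2+1)\bigr)^{2t}$ bookkeeping and realizes every cardinality $K\geq 2$ exactly; the paper's instance has $2K$ elements. However, you rely on duplicates or non-uniform weights. So you do not obtain what the paper's construction shows: that the bound already holds for \emph{uniform} ensembles of \emph{distinct} states. That is the setting of \cite{NTKD25}, where $\calO_\mu$ is a trivial layer of Hadamards. Hardness there cannot be attributed to duplication or to a skewed weight distribution. Splitting the indices evenly between copies of $\ket{0}$ and $\ket{\psi_\pm}$ would make your version uniform, but then the states are not distinct. The $1/t$-perturbation trick is what delivers both properties at once.

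\textbf{Small slips.}
\begin{itemize}
\item In the distinct-state variant you need $K-2$ padding states $\ket{2},\ldots,\ket{K-1}$, not $K-1$; otherwise the cardinality is $K+1$.
\item For $K=2$ that variant's weights $\tfrac12+\tfrac14$ do not sum to one. That case must fall back on \cref{thm:query_lower_constant} or on your multiset version.
\item For $t=1$ the hard instance is invalid, since $p_0^-=-\varepsilon<0$. ``Treating $t$ as constant'' therefore means replacing $1/t$ by a constant in $p_0^\pm$, e.g.\ $p_0^\pm=\tfrac12\pm\tfrac{\varepsilon}{2}$.
\item Deducing small $t$ ``from the $t\geq 3$ case up to constants'' does not work as stated: a lower bound for $\calF_3$ says nothing about $\calF_1$. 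The paper only treats $t\geq 3$, so this is a side issue.
\end{itemize}
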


\begin{proof}
The proof is analogous to that of~\cref{thm:query_lower_constant}.
For a positive integer $t$ and a parameter $\eps \in \interval[open]{0}{1}$, we construct two ensembles,
\begin{align}\label{eq:genralized_ensemble}
    \mathcal{E}_{\pm} = \cbra*{ \rbra*{\tfrac{1}{2K}, \tfrac{t\ket{0}+\ket{2}}{\sqrt{t^2+1}}},
    \ldots, \rbra*{\tfrac{1}{2K}, \tfrac{t\ket{0}+\ket{K+1}}{\sqrt{t^2+1}}},
    \rbra*{\tfrac{1}{2K}, \tfrac{t\ket{\psi_\pm}+\ket{K+2}}{\sqrt{t^2+1}}},
    \ldots, \rbra*{\tfrac{1}{2K}, \tfrac{t\ket{\psi_\pm}+\ket{2K+1}}{\sqrt{t^2+1}}} },
\end{align}
where 
\begin{align}
    \ket{\psi_{\pm}} = \sqrt{p_{0}^{\pm}}\ket{0} + \sqrt{p_1^{\pm}}\ket{1}
\end{align}
and
\begin{align}
    p_0^{\pm} = 1-\frac{1}{t} \pm \frac{\varepsilon}{t}, \qquad p_1^{\pm} = \frac{1}{t} \mp \frac{\varepsilon}{t}.
\end{align}    

We first calculate $\calF_t\rbra{\calE_{+}} - \calF_t\rbra{\calE_{-}}$.
Since
\begin{align}
    \calF_t\rbra*{\calE_{\pm}} = {} & \frac{1}{4K^2}\sum_{j,k=0}^{K-1}\biggl( \abs*{\rbra*{\frac{t\bra{0}+\bra{2+j}}{\sqrt{t^2+1}}}\rbra*{\frac{t\ket{0}+\ket{2+k}}{\sqrt{t^2+1}}}}^{2t}\\ 
    & + \abs*{\rbra*{\frac{t\bra{0}+\bra{2+j}}{\sqrt{t^2+1}}}\rbra*{\frac{t\ket{\psi_{\pm}}+\ket{K+2+k}}{\sqrt{t^2+1}}}}^{2t} \\
    & + \abs*{\rbra*{\frac{t\bra{\psi_{\pm}}+\bra{K+2+j}}{\sqrt{t^2+1}}}\rbra*{\frac{t\ket{0}+\ket{2+k}}{\sqrt{t^2+1}}}}^{2t} \\
    & + \abs*{\rbra*{\frac{t\bra{\psi_{\pm}}+\bra{K+2+j}}{\sqrt{t^2+1}}}\rbra*{\frac{t\ket{\psi_{\pm}}+\ket{K+2+k}}{\sqrt{t^2+1}}}}^{2t}\biggr) \\
        = {} & \frac{1}{4K^2}\sum_{j,k=0}^{K-1}\rbra*{\abs*{\frac{t^2+\delta_{j,k}}{t^2+1}}^{2t} + \abs*{\frac{t^2\braket{0}{\psi_{\pm}}}{t^2+1}}^{2t} + \abs*{\frac{t^2\braket{\psi_{\pm}}{0}}{t^2+1}}^{2t} + \abs*{\frac{t^2+\delta_{j,k}}{t^2+1}}^{2t}} \\
        = {} & \frac{1}{4K^2}\sum_{j,k=0}^{K-1}\rbra*{\abs*{\frac{t^2+\delta_{j,k}}{t^2+1}}^{2t} + 2\cdot \rbra*{\frac{t^4\cdot p_0^{\pm}}{\rbra*{t^2+1}^2}}^{t} + \abs*{\frac{t^2+\delta_{j,k}}{t^2+1}}^{2t}},
\end{align}
we have
\begin{align}
    \calF_t\rbra*{\calE_{+}} - \calF_t\rbra*{\calE_{-}}
    = {} & \frac{1}{4K^2}\sum_{j,k=0}^{K-1}\rbra*{2\cdot \rbra*{\frac{t^4\cdot p_0^{+}}{\rbra*{t^2+1}^2}}^{t} - 2\cdot \rbra*{\frac{t^4\cdot p_0^{-}}{\rbra*{t^2+1}^2}}^{t}}\\
    = {} & \frac{1}{2}\rbra*{\rbra*{\frac{t^4\cdot p_0^{+}}{\rbra*{t^2+1}^2}}^{t} - \rbra*{\frac{t^4\cdot p_0^{-}}{\rbra*{t^2+1}^2}}^{t}}.
\end{align}
The same reasoning as in~\cref{eq:fep-fem} yields, for $t \geq 3$,
\begin{align}\label{eq:fep-fem2}
    \calF_t\rbra*{\calE_{+}} - \calF_t\rbra*{\calE_{-}}
    & \geq \frac{1}{2}\rbra*{\frac{t^4\cdot p_0^{+}}{\rbra*{t^2+1}^2} - \frac{t^4\cdot p_0^{-}}{\rbra*{t^2+1}^2}}\cdot t \cdot \rbra*{\frac{t^4\cdot p_0^{-}}{\rbra*{t^2+1}^2}}^{t-1} \\
    & = \frac{1}{2}\rbra*{p_0^+- p_0^-}\cdot t \cdot \frac{t^4}{\rbra*{t^2+1}^2}\cdot \rbra*{\frac{t^4\cdot p_0^{-}}{\rbra*{t^2+1}^2}}^{t-1} \\
    & = \varepsilon \cdot \rbra*{1-\frac{1}{t^2+1}}^{2t}\cdot \rbra*{p_0^{-}}^{t-1} \\
    & = \varepsilon \cdot \rbra*{1-\frac{1}{t^2+1}}^{2t}\cdot \rbra*{1-\frac{1}{t}-\frac{\varepsilon}{t}}^{t-1} \\
    & \geq \varepsilon \cdot \rbra*{1-\frac{1}{t^2+1}}^{2t}\cdot \rbra*{1-\frac{2}{t}}^{t-1}\\
    & \geq \frac{9^5}{10^6}\cdot \varepsilon.
\end{align}
The state preparation oracle for $\calE_{\pm}$ is defined by
\begin{align}
    \calO_S^{\pm} = \sum_{k=0}^{K-1}\ketbra{k}{k}\otimes U_k + \sum_{k=K}^{2K-1}\ketbra{k}{k}\otimes \rbra*{U_{\pm}U_k},
\end{align}
where $U_k$ and $U_\pm$ satisfy
\begin{align}
        U_k\ket{0} = \frac{t}{\sqrt{t^2+1}}\ket{0}+ \frac{1}{\sqrt{t^2+1}}\ket{2+k}
\end{align}
and
\begin{align}
    U_{\pm}\ket{0} = \ket{\psi_{\pm}} = \sqrt{p_0^\pm} \ket{0} + \sqrt{p_1^\pm} \ket{1}.
\end{align}
The implementation of $\calO_S^{\pm}$ requires only $1$ query to $U_{\pm}$.
With the sampling oracle
\begin{align}
    \calO_{\mu}\ket{0} = \frac{1}{\sqrt{2K}}\sum_{j=0}^{2K-1}\ket{j}
\end{align}
for both cases, the problem of distinguishing $U_+$ and $U_-$ is reduced to the problem of determining whether $\calO_S=\calO_S^+$ or $\calO_S = \calO_S^-$.
Since $U_{\pm}$ fits the definition of \cref{def:psi-pm} in the proof of \cref{thm:query_lower_constant}, the desired lower bound follows from the same argument applied to \cref{eq:fep-fem2}. 
\end{proof}

\section{General sample model}\label{sec:general_sample_model}

In experimental settings where one cannot invoke the state-preparation circuit but can collect many identical copies of a state drawn from the ensemble---for example, through repeated runs of the same simulator initialization---the relevant resource is the number of copies consumed rather than the number of oracle calls.
The general sample model formalizes this setting: one may obtain multiple copies of a state $\ket{\psi}$ drawn according to $\mu$.
The frame potential is then directly accessible via a generalized SWAP test applied to two independently drawn $t$-fold samples. Our algorithm builds on this observation.
In this section, we prove the matching sample-complexity bounds $\Theta\rbra{t/\eps^2}$.
The upper and lower bounds are stated in \cref{sec:sample_upper_bound} and \cref{sec:sample_lower_bound} respectively.

\subsection{Upper bound}\label{sec:sample_upper_bound}

Our algorithm is based on the generalized SWAP test of~\cite{EAO+02,KLL+17}.
Define $S_k$ to be the cyclic-permutation operator on $k$ quantum registers, i.e.,
\begin{align}\label{eq:cyclic_permutation_gate}
    S_k = \sum_{j_1,\ldots,j_k=1}^{2^n} C_{j_1,\ldots,j_k},
\end{align}
where 
\begin{align}
    C_{j_1,\ldots,j_k}=\ketbra{j_k}{j_1} \otimes \ketbra{j_1}{j_2} \otimes \ketbra{j_2}{j_3} \otimes \cdots \otimes \ketbra{j_{k-1}}{j_k}.
\end{align}
Equivalently, $S_k$ acts as $S_k\ket{j_1,j_2,\ldots,j_k}=\ket{j_k,j_1,\ldots,j_{k-1}}$. The corresponding generalized SWAP-test circuit is shown in~\cref{fig:swap_test}.

\begin{figure}[ht]
    \centering
    \begin{quantikz}[row sep = {24pt, between origins}]
        \lstick{$a\ket{0} + b\ket{1}$} & \ctrl{1} & \gate{H} & \meter{} & \setwiretype{c} \\
        \lstick{$\rho_1$} & \gate[4]{S_k} & \\
        \lstick{$\rho_2$} &               & \meterD{}\\
        \lstick{$\vdots$}   &               & \rstick{$\vdots$}\\
        \lstick{$\rho_k$} &               & \meterD{}\\
    \end{quantikz}
    \caption{The generalized SWAP-test circuit. Here, $S_k$ is the cyclic-permutation gate defined in~\cref{eq:cyclic_permutation_gate}. The last $k-1$ data registers are discarded (traced out), and the control qubit is measured in the computational basis.}
    \label{fig:swap_test}
\end{figure}

\begin{lemma}[Generalized SWAP test]\label{lem:swap_test}
    Given density operators $\rho_1,\ldots,\rho_k$ and a state $\ket{\phi}=a\ket{0}+b\ket{1}$, there is a quantum algorithm $\mathsf{SwapTest}\rbra*{\ket{\phi},\rho_1,\ldots,\rho_k}$ (see~\cref{fig:swap_test}) that returns the outcome $0$ with probability
    \begin{align}
        \frac{1}{2}\rbra*{1 + ab^\ast\tr\rbra*{\rho_1\rho_2\ldots\rho_k} + a^\ast b \tr\rbra*{\rho_k\rho_{k-1}\ldots\rho_1}},
    \end{align}
    using one copy of each input state $\rho_1,\ldots,\rho_k$.
\end{lemma}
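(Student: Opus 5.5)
The plan is a direct calculation of the circuit in \cref{fig:swap_test}, tracking the joint state of the control qubit and the $k$ data registers until the control is measured. First I will reduce to pure inputs. The acceptance probability is affine in each $\rho_m$ separately, and so is the claimed expression (each trace is multilinear). So it is enough to treat pure product inputs $\rho_m=\ketbra{\phi_m}{\phi_m}$, or to work with $\rho=\rho_1\otimes\cdots\otimes\rho_k$ throughout. I will do the latter, since it avoids the decomposition step.

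The initial state is $\ketbra{\phi}{\phi}\otimes\rho$. The controlled-$S_k$ gate maps this to
\begin{align}
 |a|^2\ketbra{0}{0}\otimes\rho + ab^\ast\ketbra{0}{1}\otimes\rho S_k^\dagger + a^\ast b\ketbra{1}{0}\otimes S_k\rho + |b|^2\ketbra{1}{1}\otimes S_k\rho S_k^\dagger .
\end{align}
Discarding the data registers only means taking the partial trace over all of them, because the measurement acts on the control alone. Cyclicity of the trace gives $\tr(S_k\rho S_k^\dagger)=\tr\rho=1$, so the reduced control state has diagonal entries $|a|^2,|b|^2$ and off-diagonal entries $ab^\ast\tr(\rho S_k^\dagger)$ and $a^\ast b\,\tr(S_k\rho)$. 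After the Hadamard, the probability of outcome $0$ is $\bra{+}\cdot\ket{+}$ of this matrix:
\[
\tfrac12\bigl(1+ab^\ast\tr(\rho S_k^\dagger)+a^\ast b\,\tr(S_k\rho)\bigr),
\]
using $|a|^2+|b|^2=1$.

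The remaining step, and the one where an off-by-reversal mistake is easy, is the identity
\[
\tr\bigl(S_k(\rho_1\otimes\cdots\otimes\rho_k)\bigr)=\tr(\rho_k\rho_{k-1}\cdots\rho_1)
\]
or its reverse, for the convention $S_k\ket{j_1,\ldots,j_k}=\ket{j_k,j_1,\ldots,j_{k-1}}$. I will check it in components. We have $\bra{j_1\ldots j_k}S_k=\bra{j_2,\ldots,j_k,j_1}$, so
\[
\tr(S_k\rho)=\sum_{j}\prod_m(\rho_m)_{j_{m+1},j_m},
\]
with indices taken cyclically. Reading this as a chain of matrix products gives $\tr(\rho_k\cdots\rho_1)$. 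Then
\[
\tr(\rho S_k^\dagger)=\overline{\tr(S_k\rho)}=\tr(\rho_1\cdots\rho_k),
\]
since each $\rho_m$ is Hermitian. Substituting these two traces gives exactly the stated probability. Each input state is used once, which settles the copy count.

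If the index check instead produced the reversed order, the result would match the statement with the roles of the two cross terms exchanged. In that case I would either adjust the convention for which ordering is attached to $ab^\ast$, or note that the two traces are complex conjugates of each other, so the expression is real either way. The main obstacle is this careful bookkeeping of cyclic index orientation; the rest of the argument is routine.
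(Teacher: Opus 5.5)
Your calculation is correct and is the same circuit analysis the paper relies on. The paper takes from \cite{KLL+17} (Eq.~(33)) the diagonal blocks $\rho_\pm$ of the reduced state on the control and first data register, and reads off the outcome-$0$ probability as $\tr(\rho_+)$. You instead trace out all data registers directly and verify the orientation $\tr(S_k\rho)=\tr(\rho_k\cdots\rho_1)$ yourself, which makes your argument self-contained.

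Your closing contingency is moot, since your index check comes out in the stated order. However, the ``real either way'' remark would not have rescued a reversed orientation. Swapping the two traces replaces $2\,\mathrm{Re}\bigl(ab^\ast\tr(\rho_1\cdots\rho_k)\bigr)$ by $2\,\mathrm{Re}\bigl(ab^\ast\overline{\tr(\rho_1\cdots\rho_k)}\bigr)$, and these differ whenever $ab^\ast$ and $\tr(\rho_1\cdots\rho_k)$ are both non-real.
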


\begin{proof}
    By~\cite[Equation (33)]{KLL+17}, after tracing out the last $k-1$ data registers, the diagonal blocks of the reduced state, with respect to the computational basis of the control qubit, are $\rho_+$ and $\rho_-$, where
    \begin{align}
        \rho_+ & {} = \frac{\abs{a}^2\rho_1 + \abs{b}^2\rho_k + ab^\ast\rho_1\rho_2\ldots\rho_k + a^\ast b \rho_k\rho_{k-1}\ldots\rho_1}{2} \\
        \rho_- & {} = \frac{\abs{a}^2\rho_1 + \abs{b}^2\rho_k - ab^\ast\rho_1\rho_2\ldots\rho_k - a^\ast b \rho_k\rho_{k-1}\ldots\rho_1}{2}.
    \end{align}
    Therefore, the probability that the measured outcome is $0$ is $\tr\rbra*{\rho_+}$, as claimed.
\end{proof}

Now we state our algorithm formally in~\cref{alg:state_frame_potential_sample}. 
\begin{algorithm}[htbp]
    \caption{$\mathsf{MultiCopyStateFramePotentialEstimator}\rbra{\calE,t,\eps}$}
    \label{alg:state_frame_potential_sample}
    \begin{algorithmic}[1]
        \REQUIRE An ensemble $\mathcal{E} = \cbra{\rbra{\mu_i, \ket{\psi_i}}}_{i=0}^{K-1}$, order $t \in \N$, and accuracy $\eps \in \interval[open]{0}{1}$.
        \ENSURE An estimate of $\calF_t\rbra{\calE}$ to within additive error $\eps$. 

        \STATE $M \gets \left\lceil \frac{8}{\eps^2}\right\rceil$.
        \FOR {$m=1,\ldots,M$}
            \STATE Sample $i \sim \mu$.
            \STATE Sample $j \sim \mu$.
            \STATE $\ket{\phi} \gets H\ket{0} = \frac{1}{\sqrt{2}}\ket{0}+\frac{1}{\sqrt{2}}\ket{1}$.
            \FOR {$k=0,\ldots,t-1$}
                \STATE $\rho_{2k+1} \gets \ketbra{\psi_i}{\psi_i}$.
                \STATE $\rho_{2k+2} \gets \ketbra{\psi_j}{\psi_j}$.
            \ENDFOR
            \STATE $x_m \gets \mathsf{SwapTest}\rbra{\ket{\phi},\rho_1,\ldots,\rho_{2t}}$. \Comment{$\Prob\sbra{x_m = 0} = \rbra{1 + \abs{\braket{\psi_i}{\psi_j}}^{2t}}/2$.}
        \ENDFOR
        \STATE \textbf{return} $X=1-\frac{2}{M} \sum_{m=1}^M x_m$. 
    \end{algorithmic}
\end{algorithm}

\begin{theorem}
    Fix an integer $t \geq 1$. Given sample access to an ensemble $\calE$ of quantum states, \cref{alg:state_frame_potential_sample} estimates $\calF_t\rbra{\calE}$ to within additive error $\eps$, with probability at least $3/4$, using $O\rbra{t/\varepsilon^2}$ samples from $\calE$.
\end{theorem}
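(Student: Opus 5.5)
The plan has three steps: identify the success probability of one generalized SWAP test, show that the per-round outcome is an unbiased estimator of $\calF_t\rbra{\calE}$, and then apply a standard concentration bound to the average over $M$ rounds before counting samples.

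\textbf{Per-round outcome probability.} Fix a round $m$ with sampled indices $i,j$. Apply \cref{lem:swap_test} with $k=2t$, $a=b=1/\sqrt{2}$ and $\rho_{2k+1}=\ketbra{\psi_i}{\psi_i}$, $\rho_{2k+2}=\ketbra{\psi_j}{\psi_j}$. Then
\begin{align}
    \tr\rbra*{\rho_1\cdots\rho_{2t}}=\tr\rbra*{\rbra*{\ketbra{\psi_i}{\psi_i}\ketbra{\psi_j}{\psi_j}}^t}=\braket{\psi_i}{\psi_j}^t\braket{\psi_j}{\psi_i}^t=\abs*{\braket{\psi_i}{\psi_j}}^{2t}.
\end{align}
The reversed product has the same value, since it equals the complex conjugate of a real number. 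Hence $\Prob\sbra{x_m=0\mid i,j}=\rbra{1+\abs{\braket{\psi_i}{\psi_j}}^{2t}}/2$, and so $\E\sbra{1-2x_m\mid i,j}=\abs{\braket{\psi_i}{\psi_j}}^{2t}$.

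\textbf{Unbiasedness and concentration.} Averaging over independent $i,j\sim\mu$ gives $\E\sbra{1-2x_m}=\calF_t\rbra{\calE}$. The variables $y_m\coloneqq 1-2x_m\in\cbra{-1,1}$ are i.i.d. across rounds, because each round draws fresh $i$, $j$ and fresh copies. The output is $X=\frac{1}{M}\sum_m y_m$. Since $\mathrm{Var}\rbra{y_m}\le 1$, Chebyshev's inequality gives
\begin{align}
    \Prob\sbra{\abs{X-\calF_t\rbra{\calE}}>\eps}\le \frac{1}{M\eps^2}\le \frac{1}{8}\le \frac{1}{4}
\end{align}
for $M=\lceil 8/\eps^2\rceil$. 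Hoeffding's inequality would give the same bound with only a $\log$ dependence on the failure probability.

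\textbf{Sample count.} In the general sample model, each round draws one index $i$ and obtains $t$ copies of $\ket{\psi_i}$, and likewise $t$ copies of $\ket{\psi_j}$. A round therefore costs $2t$ copies, or equivalently two $t$-fold samples. The total cost is $2tM=O\rbra{t/\eps^2}$ copies.

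\textbf{Main obstacle.} The only subtle points are both in the first step. One is verifying the interleaved product identity; the other is making sure the $i$ and $j$ copies alternate correctly so that the cyclic product collapses to $\abs{\braket{\psi_i}{\psi_j}}^{2t}$ instead of some other overlap power. Both reduce to the rank-one computation displayed above, so I do not expect a real difficulty.
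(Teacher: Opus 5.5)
Your proposal is correct and follows the paper's proof essentially step for step. Like the paper, you use \cref{lem:swap_test} to get $\Prob\sbra{x_m=1\mid i,j}=\rbra{1-\abs{\braket{\psi_i}{\psi_j}}^{2t}}/2$, average over $i,j\sim\mu$ to obtain $\calF_t\rbra{\calE}=1-2\Prob\sbra{x_m=1}$, and count $2t$ copies per round. The only difference is that the paper concentrates with Hoeffding's inequality (failure probability $2e^{-4}<1/4$ for $M=\lceil 8/\eps^2\rceil$), while your Chebyshev bound is equally sufficient for success probability $3/4$.
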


\begin{proof}
    We first prove the correctness of our algorithm. 
    Condition on the sampled labels $I=i$ and $J=j$, where $\Prob\sbra{I=i}=\mu_i$ and $\Prob\sbra{J=j}=\mu_j$.
    Set $\ket{\phi}=\rbra{\ket{0}+\ket{1}}/\sqrt{2}$, $\rho_{2k+1} = \ketbra{\psi_i}{\psi_i}$, and $\rho_{2k+2} = \ketbra{\psi_j}{\psi_j}$ for $k = 0, \ldots, t-1$. By~\cref{lem:swap_test},
    \begin{align}
        \Prob\sbra*{x_m = 1 \vert I=i, J=j} = \frac{1 - \abs*{\braket{\psi_i}{\psi_j}}^{2t}}{2}.
    \end{align}
    By the conditional probability formula, we have
    \begin{align}
        2\Prob\sbra*{x_m = 1, I = i, J = j}
        = \Prob\sbra*{I = i, J = j} - \Prob\sbra*{I = i, J = j} \abs*{\braket{\psi_i}{\psi_j}}^{2t}.
    \end{align}
    Rearranging the terms,
    \begin{align}
        \mu_i\mu_j \abs*{\braket{\psi_i}{\psi_j}}^{2t} = \mu_i\mu_j - 2\Prob\sbra*{x_m = 1, I = i, J = j}.
    \end{align}
    Summing over the indices $i$ and $j$ gives
    \begin{align}
        \sum_{i,j} \mu_i\mu_j \abs*{\braket{\psi_i}{\psi_j}}^{2t}
        = \sum_{i,j}\mu_i\mu_j - 2\sum_{i,j}\Prob\sbra*{x_m = 1, I = i, J = j}.
    \end{align}
    By the definition of $\calF_t\rbra{\calE}$, we have
    \begin{align}
        \calF_t\rbra*{\calE} = 1 - 2\Prob\sbra{x_m=1}.
    \end{align}
    By Hoeffding's inequality~\cite{Hoe63}, we have
    \begin{align}
        \Prob\sbra*{\abs*{\calF_t\rbra*{\calE} - X} \geq \eps}
        = \Prob\sbra*{\abs*{\frac{1}{M} \sum_{m=1}^M x_m - \Prob\sbra*{x_m=1}} \geq \frac{\eps}{2}}
        \leq 2\exp\rbra*{-\frac{M\eps^2}{2}}
        = 2e^{-4}
        < \frac{1}{4}.
    \end{align}
    so $M = O\rbra{1/\eps^2}$ iterations suffice. Each iteration uses $t$ copies of each sampled state and hence $2t=O\rbra{t}$ samples in total. Therefore,~\cref{alg:state_frame_potential_sample} uses $O\rbra{t/\eps^2}$ samples from $\calE$.
\end{proof}

\subsection{Lower bound}\label{sec:sample_lower_bound}

We use the Helstrom--Holevo bound~\cite{Hel67,Hol73}.

\begin{lemma}[Helstrom-Holevo bound]\label{lem:helstrom-holevo}
    Let $\rho_0$ and $\rho_1$ be two quantum states, and suppose that an unknown state $\rho$ is chosen uniformly from $\cbra{\rho_0,\rho_1}$.
    For any POVM $M = \cbra{M_0,M_1}$, the success probability $p_{\mathrm{succ}}$ of distinguishing the two cases is bounded by
    \begin{align}
        p_{\mathrm{succ}}
        = \frac{1}{2}\tr\rbra*{M_0\rho_0} + \frac{1}{2}\tr\rbra*{M_1\rho_1}
        \leq \frac{1}{2}\rbra*{1+\frac{1}{2}\Abs*{\rho_0-\rho_1}_1}.
    \end{align}
\end{lemma}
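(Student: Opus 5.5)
The plan is to prove the Helstrom--Holevo bound directly from the spectral decomposition of $\rho_0-\rho_1$. First I use $M_1 = I - M_0$ and $\tr\rbra{\rho_1}=1$ to rewrite the success probability:
\begin{align}
    p_{\mathrm{succ}} = \frac{1}{2}\tr\rbra*{M_0\rho_0} + \frac{1}{2}\rbra*{1-\tr\rbra*{M_0\rho_1}} = \frac{1}{2}\rbra*{1 + \tr\rbra*{M_0\rbra*{\rho_0-\rho_1}}}.
\end{align}
The claim then reduces to showing $\tr\rbra{M_0 \Delta} \leq \frac{1}{2}\Abs{\Delta}_1$ for $\Delta \coloneqq \rho_0-\rho_1$ and every operator $M_0$ with $0 \leq M_0 \leq I$.

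For this inequality I take the Jordan decomposition $\Delta = P - Q$. Here $P,Q \geq 0$ have orthogonal supports and are the positive and negative parts from the spectral decomposition of the Hermitian operator $\Delta$. Two facts follow. Since $\tr\rbra{\Delta} = \tr\rbra{\rho_0}-\tr\rbra{\rho_1} = 0$, we get $\tr\rbra{P}=\tr\rbra{Q}$. Also $\Abs{\Delta}_1 = \tr\rbra{P}+\tr\rbra{Q}$, so $\tr\rbra{P} = \frac{1}{2}\Abs{\Delta}_1$.

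Next I bound the trace term:
\begin{align}
    \tr\rbra*{M_0\Delta} = \tr\rbra*{M_0 P} - \tr\rbra*{M_0 Q} \leq \tr\rbra*{M_0 P} \leq \tr\rbra*{P} = \frac{1}{2}\Abs*{\Delta}_1.
\end{align}
This chain uses two standard facts. First, $\tr\rbra{M_0 Q} = \tr\rbra{Q^{1/2}M_0Q^{1/2}} \geq 0$. Second, $\tr\rbra{(I-M_0)P} \geq 0$, which holds by the same argument with $I-M_0 \geq 0$. Substituting back gives $p_{\mathrm{succ}} \leq \frac{1}{2}\rbra{1+\frac{1}{2}\Abs{\rho_0-\rho_1}_1}$.

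There is no real obstacle here. The only step that needs care is justifying that the trace of a product of two positive semidefinite operators is nonnegative, and the cyclic square-root argument above handles it. Optionally I could add that the bound is attained by taking $M_0$ to be the projector onto the support of $P$, though this is not needed for the inequality.
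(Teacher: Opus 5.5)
Your proof is correct: rewriting $p_{\mathrm{succ}} = \frac{1}{2}\rbra{1+\tr\rbra{M_0\rbra{\rho_0-\rho_1}}}$ and then using the Jordan decomposition $\rho_0-\rho_1 = P-Q$, with $\tr\rbra{P} = \frac{1}{2}\Abs{\rho_0-\rho_1}_1$, is the standard Helstrom argument. The paper does not prove this lemma and simply cites it from Helstrom and Holevo, so your derivation gives the argument that citation stands for.
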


We first prove the result for ensembles of cardinality two.

\begin{theorem}\label{thm:sample_lower_constant}
    Fix an integer $t$ and a parameter $\eps \in \interval[open]{0}{1}$.
    Any quantum algorithm for estimating to within additive error $\eps$ the state frame potential $\calF_t\rbra*{\calE}$ for any ensemble $\calE$ with cardinality $2$ requires $\Omega\rbra{t/\eps^2}$ samples from $\calE$. 
\end{theorem}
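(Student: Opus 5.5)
We reduce to a two-hypothesis state-discrimination problem and apply \cref{lem:helstrom-holevo}, reusing the hard instances from the proof of \cref{thm:query_lower_constant}.

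\textbf{Hard instances.} Take the cardinality-two ensembles $\calE_\pm = \cbra{(\tfrac12,\ket{0}),(\tfrac12,\ket{\psi_\pm})}$ with $\ket{\psi_\pm} = \sqrt{p_0^\pm}\ket{0}+\sqrt{p_1^\pm}\ket{1}$ and $p_0^\pm = 1-\tfrac1t \pm \tfrac{\eps}{t}$. As computed in \cref{eq:fep-fem}, for $t\ge 3$ we have $\calF_t(\calE_+)-\calF_t(\calE_-)\ge \eps/9$. Hence any estimator with additive error $\eps/20$ and success probability $3/4$ distinguishes $\calE_+$ from $\calE_-$ with probability at least $3/4$. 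Small $t$ can be absorbed into constants.

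\textbf{Modelling the data.} In the general sample model, an algorithm using $N$ samples in total receives labeled batches. Each batch is obtained by drawing $i\sim\mu$ and then receiving some number $k$ of copies of $\ket{\psi_i}$, with $\sum k \le N$. The batches may be chosen adaptively, but we can upper bound the algorithm's power by assuming it holds the full joint state. Copies of $\ket{0}$ carry no information. The hypothesis enters only through the copies of $\ket{\psi_\pm}$, so the total state is
\begin{align}
\rho_\pm = \bigotimes_{\text{batches}}\rbra*{\tfrac12\ketbra{0}{0}^{\otimes k}\otimes\ketbra{0}{0}_{\mathrm{lab}} + \tfrac12\ketbra{\psi_\pm}{\psi_\pm}^{\otimes k}\otimes\ketbra{1}{1}_{\mathrm{lab}}}.
\end{align}
This state is dominated in distinguishability by $\ketbra{\psi_\pm}{\psi_\pm}^{\otimes N}$, since the algorithm can always simulate the label-0 batches itself. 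Adaptivity does not help, because an algorithm with at most $N$ copies of $\ket{\psi_\pm}$ can simulate any adaptive sampling procedure.

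\textbf{Trace distance bound.} It remains to bound $\Abs{\psi_+^{\otimes N}-\psi_-^{\otimes N}}_1 = 2\sqrt{1-\abs{\braket{\psi_+}{\psi_-}}^{2N}}$. The overlap is $\braket{\psi_+}{\psi_-} = \sqrt{p_0^+p_0^-}+\sqrt{p_1^+p_1^-} = 1-d_{\mathrm H}(P^+,P^-)^2$, and the earlier bound from \cite[Equation (51)]{Wan25} gives $d_{\mathrm H}\le \eps/\sqrt{t-1}$. So $\abs{\braket{\psi_+}{\psi_-}}^{2N}\ge (1-\eps^2/(t-1))^{2N}$. Distinguishing with probability $3/4$ requires $\tfrac12\Abs{\cdot}_1\ge 1/2$, which forces $N=\Omega(t/\eps^2)$.

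\textbf{Main obstacle.} The delicate step is the second one: rigorously justifying that sample access to the mixture, with adaptive choices of how many copies to take of each sampled state, is no stronger than holding $N$ copies of $\ket{\psi_\pm}$. I would argue this by simulation. Given $N$ copies of $\ket{\psi_\pm}$ and classical coins, the distinguisher flips a fair coin for each requested sample. On outcome 0 it prepares $\ket{0}^{\otimes k}$ itself; on outcome 1 it hands over $k$ of its copies. This exactly reproduces the sample oracle's output distribution while consuming at most $N$ copies. The Helstrom--Holevo bound then applies to the fixed state $\psi_\pm^{\otimes N}$, so the simulation carries the lower bound over to the actual sample model.
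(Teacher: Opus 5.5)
Your proposal is correct and follows the paper's proof. It uses the same cardinality-two ensembles $\calE_\pm$, the same frame-potential gap from \cref{eq:potential-diff} to reduce estimation to distinguishing $\ket{\psi_+}$ from $\ket{\psi_-}$, and the same Helstrom--Holevo bound giving $\Omega\rbra{1/(1-\abs{\braket{\psi_+}{\psi_-}})}=\Omega\rbra{t/\eps^2}$ copies. Your coin-flipping simulation (sample access to $\calE_\pm$ is no stronger than holding $N$ copies of $\ket{\psi_\pm}$, since the $\ket{0}$ branch can be self-prepared) and the explicit $N$-copy trace-distance computation via the Hellinger identity make rigorous steps the paper leaves implicit.
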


\begin{proof}
    We reuse the hard states $\ket{\psi_{\pm}}$ from the proof of~\cref{thm:query_lower_constant}, defined in~\cref{def:psi-pm}.
    Consider the problem of distinguishing $\ket{\psi_+}$ from $\ket{\psi_-}$.
    By~\cref{lem:helstrom-holevo}, distinguishing the two states requires sample complexity
    \begin{align}
        \Omega\rbra*{\frac{1}{1-\abs{\braket{\psi_+}{\psi_-}}}} = \Omega\rbra*{\frac{t}{\varepsilon^2}}. 
    \end{align}
    where $\rho_0 = \ketbra{\psi_+}{\psi_+}$ and $\rho_1 = \ketbra{\psi_-}{\psi_-}$.
    On the other hand, let $\mathcal{E}_{\pm}$ be the ensemble defined by \cref{eq:def-ensemble}.
    Note that \cref{eq:potential-diff} gives $\calF_t\rbra{\mathcal{E}_+} - \calF_t\rbra{\mathcal{E}_-} \geq \Omega\rbra{\varepsilon}$. 
    Consequently, an estimator with additive error $\Theta\rbra{\varepsilon}$ would distinguish $\ket{\psi_+}$ from $\ket{\psi_-}$. Hence, estimating the state frame potential requires $\Omega\rbra{t/\varepsilon^2}$ samples.
\end{proof}

We can also generalize our sample lower bound for any ensemble with an arbitrarily large cardinality.

\begin{theorem} \label{thm:sample_lower_arbitrary}
    Fix an integer $t$ and a parameter $\eps \in \interval[open]{0}{1}$.
    For any integer $K \geq 2$, any quantum algorithm for estimating to within additive error $\eps$ the state frame potential $\calF_t\rbra{\calE}$ of any ensemble $\calE$ with cardinality $K$ requires $\Omega\rbra{t/\eps^2}$ samples from $\calE$.
\end{theorem}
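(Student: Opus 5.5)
The plan is to reduce from \cref{thm:sample_lower_constant}, mirroring how \cref{thm:query_lower_arbitrary} generalizes \cref{thm:query_lower_constant}. For $K\geq 2$ and $t\geq 3$ (small $t$ is covered by the standard $\Omega\rbra{1/\eps^2}$ bound for distinguishing two states at overlap distance $\Theta\rbra{\eps^2}$), take the cardinality-$2K$ ensembles $\calE_\pm$ of \cref{eq:genralized_ensemble}: half the elements are $\rbra{t\ket{0}+\ket{2+k}}/\sqrt{t^2+1}$ and the other half are $\rbra{t\ket{\psi_\pm}+\ket{K+2+k}}/\sqrt{t^2+1}$, all with weight $1/(2K)$. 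Rescale $K$ by a factor two, or split and merge entries (padding with zero-weight or duplicate elements), to get cardinality exactly $K$. By \cref{eq:fep-fem2}, $\calF_t\rbra{\calE_+}-\calF_t\rbra{\calE_-}\geq c\,\eps$ for an absolute constant $c>0$. Hence an estimator with additive error $c\eps/3$ distinguishes $\calE_+$ from $\calE_-$ with probability at least $3/4$.

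Next, I simulate the sampling model for $\calE_\pm$ using copies of the unknown state $\ket{\psi_\pm}$. Each sample from $\calE_\pm$ is drawn by choosing a uniformly random label. If the label lies in the first half, the sample is the known state $U_k\ket{0}$ and can be prepared for free. If it lies in the second half, the sample is obtained by applying the known isometry $\ket{\phi}\mapsto \rbra{t\ket{\phi}+\ket{K+2+k}}/\sqrt{t^2+1}$ to $\ket{\psi_\pm}$; this is valid because $\ket{\psi_\pm}$ is supported on $\mathrm{span}\cbra{\ket{0},\ket{1}}$, orthogonal to $\ket{K+2+k}$. In the general sample model, requests for multiple copies of the same sample are handled the same way, one copy of $\ket{\psi_\pm}$ per copy. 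Thus an algorithm using $N$ samples from $\calE_\pm$ yields a distinguisher consuming at most $N$ copies of $\ket{\psi_\pm}$, that is, a POVM on $\ket{\psi_\pm}^{\otimes N}$ after averaging over the classical randomness.

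Finally, I apply \cref{lem:helstrom-holevo} to $\rho_{0,1}=\ketbra{\psi_\pm}{\psi_\pm}^{\otimes N}$. The trace distance is $\sqrt{1-\abs{\braket{\psi_+}{\psi_-}}^{2N}}$, so success probability $3/4$ forces $N=\Omega\rbra{1/\rbra{1-\abs{\braket{\psi_+}{\psi_-}}^2}}$. The step needing real computation is the overlap estimate $1-\abs{\braket{\psi_+}{\psi_-}}\leq 2d_{\textup{H}}\rbra{P^+,P^-}^2=O\rbra{\eps^2/t}$, which follows because the overlap is exactly the Bhattacharyya coefficient $\sqrt{p_0^+p_0^-}+\sqrt{p_1^+p_1^-}=1-d_{\textup{H}}^2$, together with the bound $d_{\textup{H}}\leq\eps/\sqrt{t-1}$ quoted from \cite{Wan25}. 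This gives $N=\Omega\rbra{t/\eps^2}$. The main obstacle is bookkeeping rather than any conceptual difficulty. One must check that the simulation is faithful when the algorithm adaptively requests repeated copies of the same sampled label. One must also keep the constant in \cref{eq:fep-fem2} uniform in $K$, which holds because the $K$-dependence cancels in the difference.
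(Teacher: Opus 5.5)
Your route is the paper's. You use the same ensembles $\calE_\pm$ of \cref{eq:genralized_ensemble}, the $K$-independent gap from \cref{eq:fep-fem2}, and \cref{lem:helstrom-holevo} applied to $\ket{\psi_\pm}$. Your identification $\braket{\psi_+}{\psi_-}=1-d_{\textup{H}}\rbra{P^+,P^-}^2$ correctly gives $1-\abs{\braket{\psi_+}{\psi_-}}=O\rbra{\eps^2/t}$.

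The paper simply asserts that an estimator with additive error $\Theta\rbra{\eps}$ ``would distinguish $\ket{\psi_+}$ from $\ket{\psi_-}$''. You rightly try to justify this by simulating sample access to $\calE_\pm$ from copies of $\ket{\psi_\pm}$, but the justification you give does not work. The map $\ket{\phi}\mapsto\rbra{t\ket{\phi}+\ket{K+2+k}}/\sqrt{t^2+1}$ is affine, not linear, so it is not an isometry. Orthogonality to $\ket{K+2+k}$ only makes the output normalized. The map is not even well defined on physical states, since $e^{i\theta}\ket{\phi}$ yields a different output ray; this is the standard obstruction to superposing an unknown state with a fixed known one. Nor can any junk-free isometry send $\ket{\psi_\pm}$ to $\ket{\chi_{\pm,k}}\coloneqq\rbra{t\ket{\psi_\pm}+\ket{K+2+k}}/\sqrt{t^2+1}$. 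The reason is that $G\coloneqq\braket{\chi_{+,k}}{\chi_{-,k}}=\rbra{t^2F+1}/\rbra{t^2+1}$ differs from $F\coloneqq\braket{\psi_+}{\psi_-}$.

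The missing idea is that the simulation only has to work on the two known candidates. Since $0<F<G<1$, choose $\ket{j_\pm}\in\C^2$ with $\braket{j_+}{j_-}=F/G$. Then $\cbra{\ket{\psi_+},\ket{\psi_-}}$ and $\cbra{\ket{\chi_{+,k}}\ket{j_+},\ket{\chi_{-,k}}\ket{j_-}}$ have the same Gram matrix. Hence there is an isometry $W_k$, depending only on $t$, $\eps$, and $k$, with $W_k\ket{\psi_\pm}=\ket{\chi_{\pm,k}}\ket{j_\pm}$.

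Apply $W_k$ to a fresh copy of $\ket{\psi_\pm}$ and discard the junk register; this produces exactly one copy of $\ket{\chi_{\pm,k}}$. The same step also serves adaptively repeated requests for the same label, so $N$ samples of $\calE_\pm$ cost at most $N$ copies of $\ket{\psi_\pm}$. For non-adaptive access, you could instead bound the fidelity between the labeled sample states of $\calE_+$ and $\calE_-$ directly, using $1-G\leq 1-F$. With this repair, the rest of your argument, including the Helstrom--Holevo computation, goes through.
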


\begin{proof}
    The proof is analogous to that of~\cref{thm:sample_lower_constant}.
    Let $\calE_\pm$ be the ensemble defined by \cref{eq:genralized_ensemble} in \cref{thm:query_lower_arbitrary}.
    Note that $\calF_t\rbra{\calE_+} - \calF_t\rbra{\calE_-} \geq \Omega\rbra{\eps}$.
    Consequently, an estimator with additive error $\Theta\rbra{\eps}$ would distinguish $\ket{\psi_+}$ from $\ket{\psi_-}$. Hence, estimating the state frame potential requires $\Omega\rbra{t/\eps^2}$ samples.
\end{proof}

\section{Single-copy sample model} \label{S:single_copy_state_frame_potential}

Throughout this section, we make a slight abuse of notation.
In the main text, an ensemble is specified by weights and states, for example $\calE=\cbra{\rbra{\mu_i,\ket{\psi_i}}}$, and its frame potential is denoted by $\calF_t\rbra{\calE}$.
Here, for a fixed collection of states $\cbra{\ket{\psi_i}}$, we identify a collection of weights $\nu = \cbra{\nu_i}$ with the atomic probability measure on $\calP$ given by
\begin{equation}
    \nu=\sum_i \nu_i \delta_{\ket{\psi_i}},
\end{equation}
where $\delta_{\ket{\psi_i}}$ is the point mass at $\ket{\psi_i}$. We denote the corresponding frame potential by $\calF_t\rbra{\nu}$.
This convention, commonly used in the study of designs, changes only the notation: $\calF_t\rbra{\nu}$ equals the frame potential of the associated state ensemble, and none of the definitions or results are affected.
We retain the ensemble notation $\calF_t\rbra{\calE}$ in theorem and lemma statements and use the measure notation $\calF_t\rbra{\mu}$, $\calF_t\rbra{\hat{\nu}}$, and so forth within proofs.

\subsection{Upper bound}\label{sec:single_copy_sample_upper_bound}

In the single-copy sample model, one obtains only one state $\ket{\psi_i} \sim \mu$ per sample, together with the label $i$.
As a simple strategy for estimating the frame potential in this model, we introduce and analyze a \emph{store-and-estimate} algorithm.
The underlying idea is as follows:
\begin{enumerate}
    \item By repeatedly sampling from $\mu$, one stores sufficiently many copies of $\ket{\psi_i}$ for all $i$ such that $\mu_i \geq \muth$, where $\muth$ is a threshold probability that we choose. The subset of indices satisfying this threshold condition is called the \emph{heavy} set of $\mu$.
    \item Using the stored states in the heavy set, we perform generalized SWAP tests. If sufficiently many copies of every heavy state were stored in the first stage, these tests estimate the frame potential accurately.
\end{enumerate}

Based on this strategy, we design~\cref{alg:single_copy_state_frame_potential_sample}, where the threshold probability is set as
\begin{equation}
    \muth  = \min \cbra*{ 1, 2^{-\mathrm{H}_\alpha\rbra*{\mu}} \eta^{-1/\rbra*{\alpha-1}} },
\end{equation}
where $\eta$ is an algorithmic parameter and $\mathrm{H}_\alpha\rbra{\mu}$ is the R\'{e}nyi-$\alpha$ entropy of $\mu$. We assume that this entropy is known in advance.

The following theorem shows quantitatively that both the estimation error and the failure probability can be made arbitrarily small by choosing $N$ sufficiently large.

\begin{theorem}\label{Thm:one_shot_state_frame_potential}
    By setting the parameter $\eta$ of~\cref{alg:single_copy_state_frame_potential_sample} as $\eta = \eps/216$, it suffices to set the number $N$ of samples as
    \begin{align}\label{Eq:NumberOfSampling}
        N = O \rbra*{ \frac{t}{\eps^2}\log \frac{1}{\delta}  +\frac{2^{\mathrm{H}_{\alpha}\rbra*{\mu}}}{\eps^{2-\frac{1}{\alpha-1}}}
        \rbra*{ \mathrm{H}_\alpha\rbra*{\mu} + \frac{1}{\alpha-1}\log \frac{1}{\eps}  + \log \frac{1}{\delta} } },
    \end{align}
    to guarantee that the algorithm returns an estimate $\hat{\calF}$ satisfying $\abs{\hat{\calF} - \calF_t\rbra{\calE}} \leq \eps$ with probability at least $1-\delta$.
\end{theorem}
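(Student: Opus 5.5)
The plan is to split the error into a bias term, from discarding the light (tail) part of $\mu$, and a statistical term, from the SWAP-test estimates on the stored heavy states. Each gets error budget of order $\eps$. Write $\calI_\heavy=\cbra{i:\mu_i\geq\muth}$. Since $\sum_i\mu_i^\alpha=2^{(1-\alpha)\mathrm{H}_\alpha(\mu)}$, Markov's inequality in the form
\[
\mu(\text{tail})=\sum_{\mu_i<\muth}\mu_i\leq \muth^{1-\alpha}\sum_i\mu_i^\alpha
\]
together with the choice $\muth=2^{-\mathrm{H}_\alpha(\mu)}\eta^{-1/(\alpha-1)}$ gives tail mass at most $\eta$. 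The same inequality gives $\abs{\calI_\heavy}\leq 1/\muth = 2^{\mathrm{H}_\alpha(\mu)}\eta^{1/(\alpha-1)}$. Because every overlap term lies in $[0,1]$, the frame potential of the restricted (and suitably renormalized, or empirically reweighted) measure $\restrictedmu$ differs from $\calF_t(\mu)$ by $O(\eta)$. With $\eta=\eps/216$, this contributes only a small constant fraction of $\eps$.

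First stage (storing). I will show that with $N_1=O\rbra{\muth^{-1}\log(\abs{\calI_\heavy}/\delta)}$ samples, every heavy index appears; this is a coupon-collector bound with a union bound over at most $1/\muth$ heavy indices. With a constant factor more samples, each heavy $i$ appears roughly $N\mu_i$ times by Chernoff. I would also use the empirical frequencies $\hat\nu_i$ of the samples as weights. Hoeffding/Chernoff then controls the weight error: the empirical pair-weight estimate of $\sum\mu_i\mu_j\abs{\braket{\psi_i}{\psi_j}}^{2t}$ is a U-statistic in the labels, bounded in $[0,1]$, and concentrates at rate $O(\sqrt{\log(1/\delta)/N})$. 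Substituting $\muth^{-1}=2^{\mathrm{H}_\alpha}\eps^{-1/(\alpha-1)}$ up to constants, and noting $\log(\abs{\calI_\heavy}/\delta)\lesssim \mathrm{H}_\alpha+\frac{1}{\alpha-1}\log\frac1\eps+\log\frac1\delta$, gives exactly the second term of \cref{Eq:NumberOfSampling}. The extra $1/\eps^2$ factor there, which makes the exponent $2-\frac{1}{\alpha-1}$ rather than $-\frac{1}{\alpha-1}$, comes from requiring enough stored copies of the heavy states to feed $\Theta(\log(1/\delta)/\eps^2)$ SWAP tests drawn proportionally to $\mu|_{\calI_\heavy}$. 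Ensuring each heavy state is stored about $\mu_i\cdot 2t\cdot M$ times, with $M=O(\log(1/\delta)/\eps^2)$, needs $N\gtrsim \max\cbra{tM,\;M/\muth}$ samples up to logs. This yields both terms.

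Second stage (estimation). I will run \cref{alg:state_frame_potential_sample} restricted to the stored states. For $M$ rounds, draw labels $i,j$ from the stored pool (equivalently from $\hat\nu|_{\calI_\heavy}$), take $t$ stored copies of each, and apply \cref{lem:swap_test} with $2t$ registers. Each outcome is a Bernoulli variable whose mean is $(1-\calF_t(\hat\nu|_{\calI_\heavy}))/2$ conditioned on the stored pool. Hoeffding with $M=O(\log(1/\delta)/\eps^2)$ bounds this error by $\eps/3$. The final step is a triangle inequality combining the tail bias, the empirical-weight error, and the SWAP-test error, with each failure probability set to $\delta/3$.

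The main obstacle is the second stage's dependence on the first. Drawing pairs from a finite stored pool without replacement means the SWAP-test rounds are not independent samples from $\mu$. The stored counts must also suffice: a heavy state with $\mu_i$ close to $\muth$ has few copies and might be exhausted. I would handle this by defining the estimation target as $\calF_t$ of the empirical stored measure, so that the rounds are exactly i.i.d. given the pool. I would then use multiplicative Chernoff, which needs $N\mu_i\gtrsim \log(1/(\muth\delta))$, to show every heavy count is within a factor of $2$ of $N\mu_i$. That guarantees enough copies for all rounds and lets the empirical-versus-true weight error be bounded by the U-statistic concentration above.
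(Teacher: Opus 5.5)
The tail bound, which everything else rests on, is false for the stated range $\alpha>1$. The map $x\mapsto x^{1-\alpha}$ is decreasing when $\alpha>1$. So an index with $\mu_i<\muth$ has $\mu_i^{1-\alpha}>\muth^{1-\alpha}$, hence $\mu_i=\mu_i^{\alpha}\mu_i^{1-\alpha}>\mu_i^{\alpha}\muth^{1-\alpha}$: your Markov-type inequality points the wrong way. Applied correctly to the heavy indices, it gives $\mu(\calI_\heavy)\leq\muth^{1-\alpha}\sum_i\mu_i^{\alpha}=\eta$ whenever $\muth<1$. Thus the heavy set carries \emph{at most} an $\eta$-fraction of the mass, and for $\muth=1$ it is empty unless $\mu$ is a point mass. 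For example, for $\mu$ uniform on $K\geq2$ points, $2^{-\mathrm{H}_\alpha(\mu)}=1/K<\muth$ and nothing is heavy. The paper's \cref{Lemma:p_heavy_large} takes exactly the same step, so following the paper will not repair this. The argument is valid only for $\alpha\in(0,1)$. There $1-\alpha>0$, and the same formula, now reading $\muth=2^{-\mathrm{H}_\alpha(\mu)}\eta^{1/(1-\alpha)}$, gives tail mass below $\eta$ and $\abs{\calI_\heavy}\leq\muth^{-1}=2^{\mathrm{H}_\alpha(\mu)}\eta^{-1/(1-\alpha)}$. This matches the displayed exponent $2-\frac{1}{\alpha-1}=2+\frac{1}{1-\alpha}$, with $\frac{1}{1-\alpha}$ replacing $\frac{1}{\alpha-1}$ in the logarithmic term. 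Separately, your substitution $\muth^{-1}=2^{\mathrm{H}_\alpha}\eps^{-1/(\alpha-1)}$ contradicts your own earlier $1/\muth=2^{\mathrm{H}_\alpha(\mu)}\eta^{1/(\alpha-1)}$.

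Your estimation stage (target $\calF_t(\hat{\nu})$, labels i.i.d.\ given the pool, Hoeffding) matches the paper's \cref{Lemma:Estimate} and is fine. The accounting of $N$, however, is not.

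\emph{Storage.} Storing about $2tM\mu_i$ copies of heavy state $i$ needs only $N\gtrsim tM$, so no $M/\muth$ term arises, and your explanation of the factor $\eps^{-2}\muth^{-1}$ does not hold. What storage really needs is that no label is drawn more often than its $\hat{a}_i=\floor{\hat{N}_i/t}$ blocks. A factor-$2$ estimate of $\hat{N}_i$ does not give this. The paper needs $\hat{A}\rbra{\hat{\calI}_\heavy}\geq4M$ and $\hat{a}_{\min}\gtrsim\log\frac{1}{\delta\muth}$ (Chernoff plus a union bound over $\abs{\hat{\calI}_\heavy}\leq2/\muth$ labels), i.e.\ $N\gtrsim tM+\frac{t}{\muth}\log\frac{1}{\delta\muth}$ (\cref{Lemma:preFAIL,Lemma:FAIL_supp}).

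\emph{Where the $\eps^{-2}\muth^{-1}$ term comes from.} In the paper, $\frac{1}{\eps^2\muth}\log\frac{1}{\delta\muth}$ comes from the bias step instead. It uses $\abs{\calF_t(\mu)-\calF_t(\hat{\nu})}\leq2\Abs{\mu-\hat{\nu}}_1$ and requires every heavy count to be within relative error $\eps/4$ of $N\mu_i$ simultaneously (\cref{Lemma:Ell1_Distance_1}). Your U-statistic idea is a different and sharper tool for raw frequencies: by McDiarmid, the plug-in $\calF_t$ of $\hat{N}_i/N$ concentrates at rate $\sqrt{\log(1/\delta)/N}$, independent of support size. But the algorithm's $\hat{\nu}$ is restricted to $\hat{\calI}_\heavy$ and rounded to whole blocks. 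You therefore still need analogues of \cref{Lemma:Ell1_Distance_2,Lemma:Ell1_Distance_3}, and the rounding costs $N\gtrsim t/(\eps\muth)$.

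\emph{These conditions are real and not covered.} A heavy state seen fewer than $t$ times gets no block and drops out of $\hat{\nu}$, and \cref{Eq:NumberOfSampling} does not cover this even for $\alpha\in(0,1)$. Put mass $\frac12$ on one state and $\frac{1}{2K}$ on each of $K$ further orthonormal states, so $\calF_t(\calE)=\frac14+\frac{1}{4K}$. For $K$ and then $t$ large, $N$ of the displayed order gives each of the $K$ states fewer than $t$ copies. Then $\hat{\nu}$ becomes a point mass, and the algorithm outputs $1$ or $\FAIL$. The paper's summary of the ``dominant conditions'' drops these terms as well. Its proof of \cref{Lemma:Bias} also asserts that $N\gtrsim\eps^{-2}\muth^{-1}\log\frac{1}{\delta\muth}$ implies $N\gtrsim t/(\eps\muth)$, which holds only if $t\lesssim\eps^{-1}\log\frac{1}{\delta\muth}$. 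So a correct bound along either route carries an extra term of order $\frac{t}{\muth}\rbra{\frac{1}{\eps}+\log\frac{1}{\delta\muth}}$.
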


\cref{Thm:one_shot_state_frame_potential} follows by controlling three sources of failure or estimation error.

First, the algorithm may return $\FAIL$, which clearly increases the failure probability. The following lemma shows that this rarely occurs if $N$ is sufficiently large. 

\begin{algorithm}[htbp]
    \caption{$\mathsf{SingleCopyStateFramePotentialEstimator}\rbra{\calE,t,\eps,\delta,N,\mathrm{H}_\alpha\rbra{\mu}}$}
    \label{alg:single_copy_state_frame_potential_sample}
    \begin{algorithmic}[1]
        \REQUIRE An ensemble $\calE = \cbra{\rbra{\mu_i,\ket{\psi_i}}}_{i=0}^{K-1}$, order $t \in \N$, accuracy $\eps \in \interval[open]{0}{1}$, failure probability $\delta \in \interval[open]{0}{1}$, sample budget $N$, and the Rényi-$\alpha$ entropy $\mathrm{H}_\alpha\rbra{\mu}$ of $\mu$ for some $\alpha > 1$.
        \ENSURE An estimate $\hat{\calF}$ of $\calF_t\rbra{\calE}$ to within additive error $\varepsilon$.

        \STATE Sample $N$ i.i.d. states $\ket{\psi_{i_1}}, \ldots, \ket{\psi_{i_N}}$ from $\mu$ and store them.
        \STATE Store the ordered list $\Psi_N \coloneqq \rbra{\ket{\psi_{i_1}}, \ldots, \ket{\psi_{i_N}}}$.
        \FOR {$i=0, \ldots, K-1$}
            \STATE $\hat{N}_i \gets \abs{ \cbra{ k \in \cbra{ 1,\ldots, N } \colon i_k=i } }$.
        \ENDFOR
        \STATE $\eta \gets \frac{1}{216}\eps$.
        \STATE Set a threshold probability $\muth \gets \min\cbra{ 1, 2^{-\mathrm{H}_\alpha\rbra{\mu}} \eta^{-1/(\alpha-1)} }$.
        \STATE Define an empirical heavy set $\hat{\calI}_{\heavy} \gets \cbra{ i \colon \hat{N}_i \geq N \muth/2 }$.
        \FOR {$i \in \hat{\calI}_\heavy$}
            \STATE $\hat{a}_i \gets \lfloor \hat{N}_i/t\rfloor$.
            \STATE Form $\hat{a}_i$ blocks of $\ket{\psi_i}^{\otimes t}$ from $\Psi_N$.
        \ENDFOR
        \STATE $\hat{A}\rbra{ \hat{\calI}_\heavy } \gets \sum_{i \in \hat{\calI}_\heavy}\hat{a}_i$.
        \STATE Define a probability distribution $\hat{\nu}$ over $\hat{\calI}_\heavy$ by $\hat{\nu}_i \gets \hat{a}_i/\hat{A}\rbra{ \hat{\calI}_\heavy }$.

        \STATE Set the number of generalized SWAP tests to $M \gets \left\lceil \frac{8}{\eps^2}\ln \frac{8}{\delta}\right\rceil$.
        \STATE Sample i.i.d. labels $L_1, \ldots, L_{2M} \sim \hat{\nu}$.
        \IF{$\exists i\in \hat{\calI}_\heavy$ such that $\abs{ \cbra{ m \colon L_m =i } } > \hat{a}_i$}
            \STATE \textbf{return} $\FAIL$ due to a lack of blocks $\ket{\psi_i}^{\otimes t}$.
        \ELSE
            \FOR {$m=1,\ldots, M$}
                \STATE $\rho \gets \rbra{\ketbra{\psi_{L_{2m-1}}}{\psi_{L_{2m-1}}}}^{\otimes t}$ using an unused block.
                \STATE $\sigma \gets \rbra{\ketbra{\psi_{L_{2m}}}{\psi_{L_{2m}}}}^{\otimes t}$ using an unused block.
                \STATE $x_m \gets \mathsf{SwapTest}\rbra{ H\ket{0}, \rho, \sigma }$.
            \ENDFOR
        \STATE \textbf{return} $\hat{\calF} = 1-\frac{2}{M} \sum_{m=1}^M x_m$.
        \ENDIF
    \end{algorithmic}
\end{algorithm}

\begin{restatable}{lemma}{LemmaFAIL} \label[lemma]{Lemma:FAIL}
    If $N$ satisfies
    \begin{equation}
        N \geq \max \cbra*{ \frac{2t}{\muth} \rbra*{ 6\ln \frac{2}{\delta_\FAIL \muth} + 1 },  2tM + \frac{2t}{\muth} },
    \end{equation}
    then~\cref{alg:single_copy_state_frame_potential_sample} returns $\FAIL$ with probability at most $\delta_\FAIL$.
\end{restatable}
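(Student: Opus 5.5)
The plan is to reduce the event $\FAIL$ to an overflow event for individual heavy labels, and then combine a deterministic lower bound on the number of stored blocks with a multiplicative Chernoff bound and a union bound over the empirical heavy set. Throughout, we condition on the first stage, that is, on the counts $\hat N_i$. After this conditioning, $\hat{\calI}_\heavy$, the block numbers $\hat a_i$, the total $\hat A \coloneqq \hat A(\hat{\calI}_\heavy)$ and the distribution $\hat\nu$ are all fixed. Only the $2M$ i.i.d. labels $L_m\sim\hat\nu$ remain random. For each $i\in\hat{\calI}_\heavy$, the count $C_i\coloneqq\abs{\cbra{m: L_m=i}}$ is then distributed as $\mathrm{Bin}(2M,\hat a_i/\hat A)$. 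Since $\FAIL=\bigcup_{i\in\hat{\calI}_\heavy}\cbra{C_i>\hat a_i}$, it suffices to bound each $\Prob\sbra{C_i>\hat a_i}$ and take a union bound.

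The first step collects deterministic facts that follow from the definition of the empirical heavy set. Every $i\in\hat{\calI}_\heavy$ has $\hat N_i\ge N\muth/2$. Summing these counts gives $\abs{\hat{\calI}_\heavy}\le 2/\muth$, which is the size of the union bound. It also gives $\hat a_i\ge \hat N_i/t-1\ge N\muth/(2t)-1$. Under the first condition on $N$, this yields $\hat a_i\ge 6\ln\rbra{2/(\delta_\FAIL\muth)}$ for every heavy $i$. Next, from $\hat a_i\ge \hat N_i/t-1$ we get $\hat A\ge \frac1t\sum_{i\in\hat{\calI}_\heavy}\hat N_i-\frac{2}{\muth}$. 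Under the second condition $N\ge 2tM+2t/\muth$, this should give $\hat A\ge 2M$, provided the heavy indices capture essentially all $N$ samples. We then have $\E[C_i]=2M\hat a_i/\hat A\le \hat a_i$.

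The second step is the tail bound. We apply the multiplicative Chernoff bound $\Prob\sbra{C_i\ge(1+\gamma)m_i}\le\exp\rbra{-\gamma^2 m_i/(2+\gamma)}$, with $m_i=\E[C_i]$ and threshold $\hat a_i$. This gives a bound of the form $\exp(-c\,\hat a_i)$, which is at most $\delta_\FAIL\muth/2$ by the lower bound on $\hat a_i$; the constant $6$ in the hypothesis is meant to absorb $c$. The union bound over at most $2/\muth$ indices then gives total failure probability at most $\delta_\FAIL$. Since the bound holds for every outcome of the first stage, it also holds unconditionally.

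The main obstacle is twofold. First, $\hat A\ge 2M$ only places the mean at $\hat a_i$ rather than strictly below it, which makes the Chernoff exponent degenerate. Second, $\sum_{i\in\hat{\calI}_\heavy}\hat N_i$ need not be close to $N$, since light indices may absorb much of the sample. To handle both, I would first try to strengthen the lower bound on $\hat A$ so that $2M/\hat A\le 1/2$ on a high-probability event, spending part of $\delta_\FAIL$ on that event. This would give $m_i\le\hat a_i/2$, i.e. $\gamma\ge1$, and a clean bound $\exp(-\hat a_i/3)$. If that fails, the fallback is to couple the with-replacement label draws to drawing $2M$ distinct blocks without replacement from the $\hat A$ stored blocks. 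This coupling never overflows when $\hat A\ge 2M$, so it would suffice to bound the coupling discrepancy using the per-index slack $\hat a_i=\Omega\rbra{\ln(1/(\delta_\FAIL\muth))}$.
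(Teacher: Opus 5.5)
Your architecture is the paper's. Your second step is \cref{Lemma:preFAIL}: it assumes $\hat{A}\rbra{\hat{\calI}_{\heavy}}\geq 4M$, so that $\E\sbra{C_i}\leq\hat a_i/2$. It then obtains $\Prob\sbra{C_i>\hat a_i}\leq e^{-\hat a_i/6}$ and takes a union bound over $\abs{\hat{\calI}_{\heavy}}\leq 2/\muth$ indices using $\hat a_i\geq N\muth/(2t)-1$. With $\gamma\geq 1$, your Chernoff form guarantees exponent $\hat a_i/6$, not $\hat a_i/3$, and this uses up the constant $6$ in the hypothesis entirely.

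The genuine gap is the step you leave open: nothing in your argument lower-bounds $\hat A$. The hypothesis on $N$ cannot do it alone, because it says nothing about how much of the sample lands in $\hat{\calI}_{\heavy}$. The paper closes this in \cref{Lemma:FAIL_supp}, which is essentially your first fix. It shows $\calI_{\heavy}\subseteq\hat{\calI}_{\heavy}$ with high probability (\cref{Lemma:calI_inclusion}) and $\hat N\rbra{\calI_{\heavy}}\geq N/4$ by a Chernoff bound using $\mu\rbra{\calI_{\heavy}}\geq 1-\eta\geq 1/2$ (\cref{Lemma:p_heavy_large}). Hence $\hat A\geq N/(4t)-2/\muth\geq 4M$.

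Even granting this, the argument needs $N\geq 16tM+8t/\muth$ and a split failure budget, giving $\ln\frac{4}{\delta_{\FAIL}\muth}$. The paper's closing step drops the factor $8$ when ``simplifying''. It then appeals to the hypothesis, which only has $\ln\frac{2}{\delta_{\FAIL}\muth}$. So you are right that the stated hypothesis gives at best $\hat A\geq 2M$, with the mean sitting at $\hat a_i$. Your coupling fallback does not escape this. It still presupposes $\hat A\geq 2M$, and bounding the with- versus without-replacement discrepancy is the same overflow estimate, which needs constant-factor slack.

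More fundamentally, the heavy-mass ingredient is unavailable for $\alpha>1$. The proof of \cref{Lemma:p_heavy_large} uses $\mu_i<\muth\Rightarrow\mu_i^{1-\alpha}<\muth^{1-\alpha}$, which reverses when $1-\alpha<0$. (The argument is valid for $\alpha\in(0,1)$ with the same formula for $\muth$.)

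Here is a concrete failure. Take $\alpha=2$, one label of mass $\eta/2$, and spread the remaining mass evenly over $L\gg N^2$ labels. Then $2^{-\mathrm{H}_2(\mu)}\approx\eta^2/4$, so $\muth\approx\eta/4$ and $\mu\rbra{\calI_{\heavy}}=\eta/2$.
\begin{itemize}
\item With high probability no light label appears twice, while the threshold satisfies $N\muth/2>t$. Hence $\hat{\calI}_{\heavy}=\cbra{0}$, $\hat\nu$ is a point mass, and all $2M$ labels equal $0$.
\item $\FAIL$ is therefore avoided only if $\hat a_0\geq 2M$. But $\hat a_0\approx N\eta/(2t)$.
\item Take $N=2tM+2t/\muth$, with $M$ large enough that this term of the maximum dominates. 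Then $\hat a_0\approx M\eta+4$. Even at eight times this $N$, $\hat a_0\approx 8M\eta+32$. Both are far below $2M$, since $\eta\leq 1/216$.
\end{itemize}
So $\FAIL$ occurs with probability close to $1$. The light indices really can absorb almost the whole sample, and the statement as written is false. Neither your plan nor the paper's can be completed without an extra hypothesis guaranteeing that the heavy set carries constant mass.
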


The value of $M$ in~\cref{Lemma:FAIL} need not equal $\eps^{-2}\log\delta^{-1}$, as in the algorithm. The lemma allows arbitrary $M$.

Second, we show that, when the algorithm returns an estimator $\hat{\calF}$, it is close to $\calF_t\rbra{\hat{\nu}}$, assuming that $N$ is accordingly large.
Here, $\hat{\nu}$ is determined in~\cref{alg:single_copy_state_frame_potential_sample} as an empirical probability distribution over the empirical heavy set $\hat{\calI}_\heavy$, which can be naturally extended to a probability distribution over $\cbra{\ket{\psi_i} }_{i=0}^{K-1}$. 

\begin{restatable}{lemma}{LemmaEstimate} \label[lemma]{Lemma:Estimate}
    If $M$ satisfies
    \begin{equation}
        M \geq \frac{2}{\epsest^2} \ln \frac{2}{\delest}, \label{Eq:ESTIMATE}
    \end{equation}
    then the probability that the algorithm returns an estimate $\hat{\calF}$ satisfying $\abs{ \hat{\calF} - \calF_t(\hat{\nu}) } > \epsest$ is at most $\delest$.
\end{restatable}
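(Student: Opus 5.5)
The plan is to condition on everything produced in the first stage of \cref{alg:single_copy_state_frame_potential_sample}: the stored list $\Psi_N$, the counts $\hat{N}_i$, the empirical heavy set $\hat{\calI}_\heavy$, and hence $\hat{\nu}$. Under this conditioning, the labels $L_1,\ldots,L_{2M}$ are i.i.d.\ from $\hat{\nu}$. Assume the algorithm does not return $\FAIL$. Then every pair $\rbra{L_{2m-1},L_{2m}}$ is served by fresh, unused blocks $\ket{\psi_{L_{2m-1}}}^{\otimes t}$ and $\ket{\psi_{L_{2m}}}^{\otimes t}$, and these blocks are exact copies of the corresponding pure states.

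The first step is a single-test computation. By \cref{lem:swap_test}, with $a=b=1/\sqrt{2}$, $\rho=\ketbra{\psi_i}{\psi_i}^{\otimes t}$ and $\sigma=\ketbra{\psi_j}{\psi_j}^{\otimes t}$, the outcome $x_m=1$ occurs with probability $\rbra{1-\tr\rbra{\rho\sigma}}/2 = \rbra{1-\abs{\braket{\psi_i}{\psi_j}}^{2t}}/2$. This is the same two-register SWAP test as in \cref{alg:state_frame_potential_sample}, applied to the $t$-fold tensor blocks. Averaging over $L_{2m-1},L_{2m}\sim\hat{\nu}$ i.i.d.\ gives $\Prob\sbra{x_m=1}=\rbra{1-\calF_t\rbra{\hat{\nu}}}/2$, which is the argument already used in the proof for the general sample model.

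The second step is independence. Because the label pairs are i.i.d.\ and each test consumes disjoint fresh blocks of product states, the bits $x_1,\ldots,x_M$ are i.i.d.\ Bernoulli variables with this mean. This independence is the one delicate point.

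The third step is to handle the $\FAIL$ event cleanly. I would couple the algorithm with an idealized procedure that always has enough blocks and runs the tests on the same labels. The two agree whenever the algorithm does not return $\FAIL$. Therefore the event that the algorithm returns an estimate with $\abs{\hat{\calF}-\calF_t\rbra{\hat{\nu}}}>\epsest$ is contained in the corresponding deviation event for the idealized estimator. This removes any conditioning bias that the non-$\FAIL$ event might otherwise introduce.

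For the idealized estimator, $\hat{\calF}-\calF_t\rbra{\hat{\nu}} = -2\rbra{\frac{1}{M}\sum_m x_m - \Prob\sbra{x_m=1}}$. Hoeffding's inequality \cite{Hoe63} then gives a deviation probability of at most $2\exp\rbra{-2M\rbra{\epsest/2}^2} = 2\exp\rbra{-M\epsest^2/2}$. This is at most $\delest$ exactly when $M\geq \frac{2}{\epsest^2}\ln\frac{2}{\delest}$, which is the hypothesis \cref{Eq:ESTIMATE}.

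Finally, since the bound holds conditionally on every first-stage outcome, it also holds unconditionally. I expect the main obstacle to be the careful statement of the coupling that avoids conditioning on non-$\FAIL$; the computation itself is routine.
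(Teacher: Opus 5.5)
Your proposal is correct and follows the paper's proof. The paper likewise compares with an idealized experiment that runs all $M$ SWAP tests on the same i.i.d.\ labels from $\hat{\nu}$, whose estimator has mean $\calF_t\rbra{\hat{\nu}}$. It then applies Hoeffding to get the bound $2e^{-M\epsest^2/2}\leq\delest$, and observes that the actual algorithm's bad-estimate event is contained in the idealized one because the two coincide whenever $\FAIL$ is not returned; your explicit conditioning on the first stage and the single-test SWAP computation are details the paper leaves implicit.
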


Finally, again assuming sufficiently large $N$, it holds that $\calF_t\rbra{\hat{\nu}} \approx \calF_t\rbra{\calE}$.

\begin{restatable}{lemma}{LemmaBias} \label[lemma]{Lemma:Bias}
    If the parameter $\eta$ is set to $\epsFP/108$ and if $N$ satisfies
    \begin{equation}\label{Eq:BIAS}
        N \geq \frac{1728}{\epsFP^2 \muth} \ln \frac{6}{\delFP \muth},
    \end{equation}
    then $\abs{ \calF_t\rbra{\calE} - \calF_t\rbra{\hat{\nu}} } \leq \epsFP$ with probability at least $1- \delFP$.
\end{restatable}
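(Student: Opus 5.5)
The plan is to compare $\calF_t\rbra{\calE}=\calF_t\rbra{\mu}$ with $\calF_t\rbra{\hat\nu}$ through two intermediate measures. The first is the normalized restriction $\barrestrictedmu$ of $\mu$ to a deterministic heavy set $\calI_\heavy$. The second is $\barrestrictednu$, the analogous object built from the block counts $\hat a_i$. I would take $\calI_\heavy=\cbra{i\colon \mu_i\geq \muth/4}$, so that with high probability $\cbra{i\colon\mu_i\geq\muth}\subseteq\hat{\calI}_\heavy\subseteq\calI_\heavy$. Then split
\[
\abs{\calF_t(\mu)-\calF_t(\hat\nu)}\leq \abs{\calF_t(\mu)-\calF_t(\barrestrictedmu)}+\abs{\calF_t(\barrestrictedmu)-\calF_t(\hat\nu)},
\]
and aim to make each term at most $\epsFP/2$, with failure probability at most $\delFP$ in total.

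\textbf{Deterministic term.} Write $\mu=(1-m)\,\barrestrictedmu+m\,\mu_\tail$, where $m$ is the mass outside the heavy set. Since $\calF_t$ is a quadratic form with entries $\abs{\braket{\psi_i}{\psi_j}}^{2t}\in[0,1]$, expanding gives $\abs{\calF_t(\mu)-\calF_t(\barrestrictedmu)}\leq 2m+m^2\leq 3m$. It therefore suffices to have $m=O(\eta)$. I would try to obtain this from the R\'enyi moment identity $\sum_i\mu_i^\alpha=2^{-(\alpha-1)\mathrm{H}_\alpha(\mu)}$ together with the choice $\muth^{\alpha-1}=2^{-(\alpha-1)\mathrm{H}_\alpha(\mu)}\eta^{-1}$, which converts the threshold into a mass bound of order $\eta$. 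With $\eta=\epsFP/108$, this term is then a small constant times $\epsFP$.

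\textbf{Statistical term.} Here I would use multiplicative Chernoff bounds on each count $\hat N_i\sim\mathrm{Bin}(N,\mu_i)$ and a union bound over the at most $4/\muth$ indices with $\mu_i\geq\muth/8$. Indices with $\mu_i<\muth/8$ are handled by one aggregated Chernoff bound showing they do not reach the threshold $N\muth/2$. The requirement $N\geq \frac{1728}{\epsFP^2\muth}\ln\frac{6}{\delFP\muth}$ then gives, with probability at least $1-\delFP$, relative accuracy $\abs{\hat N_i/N-\mu_i}\leq(\epsFP/C)\,\mu_i$ for every heavy $i$. The floor in $\hat a_i=\lfloor\hat N_i/t\rfloor$ adds relative error at most $t/\hat N_i$, which is small because $\hat N_i\geq N\muth/2$ and $N$ also exceeds the $t/\muth$ scale. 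Together these yield $\Abs{\hat\nu-\barrestrictedmu}_1\leq\epsFP/4$, including the mass of borderline indices that are dropped or kept. Finally, $\abs{\calF_t(\nu)-\calF_t(\nu')}\leq 2\Abs{\nu-\nu'}_1$, again because the kernel is bounded by $1$.

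\textbf{Main obstacle.} I expect the hard step to be the tail-mass bound $m=O(\eta)$. A direct Markov-type estimate from $\sum_i\mu_i^\alpha$ naturally controls the mass \emph{above} a threshold, not below it. I would therefore first check carefully which side of $\muth$ the moment bound actually controls. If the tail mass cannot be bounded directly, my fallback is to bound the tail's contribution to the frame potential rather than its mass: the terms involving a light index are $\sum_{i\text{ light}}\mu_i\mu_j\abs{\braket{\psi_i}{\psi_j}}^{2t}$, and I would try to control these through the collision-type quantity $\sum_i\mu_i^\alpha$ via H\"older's inequality. The constants $108$ and $1728$ would be fixed only at the end, once the three error budgets (tail, counts, flooring) are allocated.
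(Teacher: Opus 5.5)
The step you flagged as the main obstacle is exactly where the argument breaks, and it cannot be repaired: for $\alpha>1$ the tail-mass bound is false, and so is the lemma. Apart from that step, your outline follows the paper's route ($\ell_1$ comparison of $\mu$ and $\hat\nu$ through $\barrestrictedmu$, Chernoff bounds on the counts, a flooring estimate).

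If $\alpha>1$ and $\mu_i\ge\muth$, then $\mu_i^{1-\alpha}\le\muth^{1-\alpha}$, so $\mu_i\le\mu_i^{\alpha}\muth^{1-\alpha}$. Summing over $\calI_\heavy$ with $\muth=2^{-\mathrm{H}_\alpha(\mu)}\eta^{-1/(\alpha-1)}<1$ gives
\[
\mu(\calI_\heavy)\le\muth^{1-\alpha}\sum_i\mu_i^{\alpha}=\eta .
\]
If instead $\muth=1$, then $\calI_\heavy=\emptyset$ unless $\mu$ is a point mass. So the prescribed threshold makes the heavy set carry mass \emph{at most} $\eta$ for every non-degenerate $\mu$; as you suspected, the moment bound controls mass above the threshold. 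The paper's proof has the same hole. \cref{Lemma:p_heavy_large} asserts $\mu(\calI_\heavy)\ge1-\eta$, and its proof uses $\mu_i<\mu_i^{\alpha}\muth^{1-\alpha}$ for $\mu_i<\muth$, which is reversed when $1-\alpha<0$.

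Concrete counterexample to the lemma: take $\mu=(1/2,1/2)$ on $\ket{0},\ket{1}$ with $\alpha=2$. Then $\mathrm{H}_2(\mu)=1$ and $\eta\le1/108$ force $\muth=1$, so $\hat{\calI}_\heavy=\{i:\hat N_i\ge N/2\}$. For any odd $N\ge2t$ satisfying the hypothesis, exactly one label has $\hat N_i>N/2$. Hence $\hat\nu$ is a point mass and $\calF_t(\hat\nu)=1$, while $\calF_t(\calE)=1/2$. This is an error of $1/2$, exceeding $\epsFP$ whenever $\epsFP<1/2$, with probability one.

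Your fallback, bounding the light labels' \emph{contribution} to $\calF_t(\mu)$ via H\"older, cannot rescue this, because $\hat\nu$ is renormalized onto the heavy labels. Indeed $\calF_t(\barrestrictedmu)=\mu(\calI_\heavy)^{-2}\sum_{i,j\in\calI_\heavy}\mu_i\mu_j\abs{\braket{\psi_i}{\psi_j}}^{2t}$, which is inflated by $\mu(\calI_\heavy)^{-2}$ even when light labels contribute nothing. For example, take an atom of mass $p$ plus mass $1-p$ spread evenly over $M$ further states, all mutually orthogonal, with $p$ (tending to $1/2$) chosen so that $\sum_i\mu_i^{\alpha}=2^{-\alpha}$ for all large $M$. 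This family has fixed $\mathrm{H}_\alpha(\mu)=\alpha/(\alpha-1)$ and $\calF_t(\mu)\to1/4$, yet the renormalized atom has frame potential $1$. Since the tail atoms shrink to $0$ while $\mathrm{H}_\alpha(\mu)$ stays fixed, no threshold determined by $\mathrm{H}_\alpha(\mu)$ with $\alpha>1$ can give $\mu(\calI_\heavy)\ge1-O(\epsFP)$, which both your argument and the paper's require.

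The Markov step is valid for $\alpha\in(0,1)$. There $\mu_i<\muth$ gives $\mu_i<\mu_i^{\alpha}\muth^{1-\alpha}$, and the same formula, now reading $\muth=2^{-\mathrm{H}_\alpha(\mu)}\eta^{1/(1-\alpha)}$, yields $\mu(\calI_\heavy^{c})<\eta$. With that change your decomposition can be completed, though the collision-entropy specialization no longer follows from this argument.

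One further caveat applies even then. The flooring step needs $N\gtrsim t/(\epsFP\muth)$, which the $t$-free hypothesis on $N$ does not supply; if $t>N$, every $\hat a_i=0$. Your remark that ``$N$ also exceeds the $t/\muth$ scale'' tacitly adds this assumption. So does the paper's assertion that its dominant condition implies $N\ge\frac{48t}{(1-\eta)\epsFP\muth}$.
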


Proofs of these lemmas are given in~\cref{SS:LemmasFAILEstimateBias}. Together they imply~\cref{Thm:one_shot_state_frame_potential}.

\begin{proof}[Proof of~\cref{Thm:one_shot_state_frame_potential}]
    By setting $\delta_\FAIL = \delta/2$ in~\cref{Lemma:FAIL}, we obtain that, if $N$ satisfies
    \begin{equation}\label{Eq:FAIL_order}
        N = \Omega\rbra*{ \max\cbra*{ \frac{t}{\muth} \log \frac{1}{\delta \muth},  tM + \frac{t}{\muth} } },
    \end{equation}
    then~\cref{alg:single_copy_state_frame_potential_sample} does not return $\FAIL$ with probability at least $1-\delta/2$.
    
    By~\cref{Lemma:Estimate,Lemma:Bias} and the union bound, if $M$ and $N$ satisfy
    \begin{align}
        & M = \Omega\rbra*{ \frac{1}{\eps^2} \log \frac{1}{\delta} }, \label{Eq:estimate_order}\\
        & N = \Omega\rbra*{ \max\cbra*{ \frac{1}{\eps^2 \muth} \log \frac{1}{\delta \muth},\ \frac{1}{\eps} \log \frac{1}{\delta} } }, \label{Eq:bias_order}
    \end{align}
    then
    \begin{align}
        \abs*{ \hat{\calF} -  \calF_t\rbra*{\calE} }
        \leq \abs*{ \calF_t\rbra*{\calE} - \calF_t\rbra*{\hat{\nu}} }  + \abs*{ \hat{\calF} - \calF_t\rbra*{\hat{\nu}} }
        \leq \eps,
    \end{align}
    whenever an estimate is returned, except with probability at most $\delta/2$. Here, we have set $\epsest = \epsFP = \eps/2$ and $\delest = \delFP = \delta/4$. Note that we have used the fact that $\eta = \eps/216$.

    Altogether, if~\cref{Eq:FAIL_order,Eq:estimate_order,Eq:bias_order} simultaneously hold, then the probability that~\cref{alg:single_copy_state_frame_potential_sample} returns $\hat{\calF}$ and $\hat{\calF}$ satisfies $\abs{\hat{\calF} - \calF_t\rbra{\calE} } \leq \eps$ is at least $1 - \delta$. The dominant conditions on $N$ can be summarized as
    \begin{equation}
            N = \Omega\rbra*{ \max\cbra*{\frac{1}{\eps^2 \muth} \log \frac{1}{\delta \muth},\ \frac{t}{\eps^2}\log \frac{1}{\delta} } }.
    \end{equation}
    Note that $\muth \leq 1$. Substituting $\muth^{-1} = \Theta\rbra{ 2^{\mathrm{H}_\alpha\rbra{\mu}} \eps^{1/\rbra{\alpha-1}} }$, we obtain that it suffices to choose $N$ as
    \begin{align}
        N = O \rbra*{ \frac{t}{\eps^2}\log \frac{1}{\delta}  + \frac{2^{\mathrm{H}_{\alpha}\rbra*{\mu}}}{\eps^{2-\frac{1}{\alpha-1}}}
        \rbra*{ \mathrm{H}_\alpha\rbra*{\mu} + \frac{1}{\alpha-1}\log \frac{1}{\eps}  + \log \frac{1}{\delta} } },
    \end{align}
    which is nothing but~\cref{Eq:NumberOfSampling}.
\end{proof}

\subsection{\texorpdfstring{Proof of~\cref{Lemma:FAIL,Lemma:Estimate,Lemma:Bias}}{Proof of Lemmas}}\label{SS:LemmasFAILEstimateBias}

To prove~\cref{Lemma:FAIL,Lemma:Estimate,Lemma:Bias}, we first establish several properties of the heavy set of $\mu$.

\subsubsection{\texorpdfstring{Properties of the heavy set of $\mu$}{Properties of the heavy set of mu}}\label{SSS:PropertiesHeavySet}

In what follows, we denote by $\calI_\heavy$ the true heavy set of $\mu$.
That is, 
\begin{align}
    &\calI_\heavy \coloneqq \cbra*{i \in \cbra*{0, \ldots, K-1} \colon \mu_i \geq \muth }.
\end{align}
Note that this is in general different from the empirical heavy set $\hat{\calI}_\heavy$.
We also use the weight on $\calI_\heavy$,
\begin{align}
    \mu(\calI_\heavy) \coloneqq \sum_{i \in \calI_\heavy} \mu_i,
\end{align}
and the numbers of samples that belong to $\hat{\calI}_\heavy$ and $\calI_\heavy$: 
\begin{align}
    \hat{N}(\hat{\calI}_\heavy) \coloneqq \sum_{i \in \hat{\calI}_\heavy} \hat{N}_i, \text{\ \ and\ \ }
    \hat{N}(\calI_\heavy) \coloneqq \sum_{i \in \calI_\heavy} \hat{N}_i.
\end{align}

An important property of $\calI_\heavy$ is that it contains most of the probability mass of $\mu$, with the omitted mass controlled by the algorithmic parameter $\eta$.

\begin{lemma}\label[lemma]{Lemma:p_heavy_large}
    The probability $\mu\rbra{\calI_\heavy}$ of the heavy set of $\mu$ satisfies $\mu\rbra{\calI_\heavy} \geq 1 - \eta$.
\end{lemma}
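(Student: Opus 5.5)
The plan is a Markov/Chebyshev-type tail bound on the \emph{light} part of $\mu$, the set of indices with $\mu_i < \muth$. Since $\mu\rbra{\calI_\heavy} = 1 - \sum_{i \notin \calI_\heavy}\mu_i$, it suffices to show
\begin{equation*}
    \sum_{i \colon \mu_i < \muth} \mu_i \leq \eta .
\end{equation*}

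The key pointwise inequality holds for $\alpha > 1$. For every light index we have $\mu_i/\muth < 1$, and hence
\begin{equation*}
    \mu_i \leq \mu_i \rbra*{\frac{\mu_i}{\muth}}^{\alpha-1} = \frac{\mu_i^{\alpha}}{\muth^{\alpha-1}} .
\end{equation*}
Summing over the light indices and extending the sum to all $i$, which only adds nonnegative terms, gives
\begin{equation*}
    \sum_{i \colon \mu_i < \muth} \mu_i \leq \muth^{-(\alpha-1)} \sum_{i=0}^{K-1} \mu_i^{\alpha} = \muth^{-(\alpha-1)}\, 2^{-(\alpha-1)\mathrm{H}_\alpha\rbra{\mu}} .
\end{equation*}
The last equality is just the definition $\mathrm{H}_\alpha\rbra{\mu} = \rbra{1-\alpha}^{-1}\log\sum_i \mu_i^\alpha$, rewritten as $\sum_i\mu_i^\alpha = 2^{-(\alpha-1)\mathrm{H}_\alpha\rbra{\mu}}$.

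Next I substitute the threshold. In the uncapped branch, $\muth = 2^{-\mathrm{H}_\alpha\rbra{\mu}}\eta^{-1/(\alpha-1)}$, so
\begin{equation*}
    \muth^{\alpha-1} = 2^{-(\alpha-1)\mathrm{H}_\alpha\rbra{\mu}}\,\eta^{-1}.
\end{equation*}
The right-hand side of the displayed bound then collapses exactly to $\eta$, which gives $\mu\rbra{\calI_\heavy} \geq 1-\eta$.

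The step I expect to be the main obstacle is the capped branch, where $\muth = 1$. This branch occurs precisely when $2^{-(\alpha-1)\mathrm{H}_\alpha\rbra{\mu}} \geq \eta$. Here the same computation only yields the bound $2^{-(\alpha-1)\mathrm{H}_\alpha\rbra{\mu}}$, which is at least $\eta$ rather than at most $\eta$. So the pointwise argument above does not close this case by itself. My first attempt is to check whether this branch is vacuous under the algorithm's parameter choice $\eta = \eps/216$ together with the standing assumptions. If it is not, I will handle it separately by showing that every index with $\mu_i \geq 1$ (so $\mu$ is a point mass) is heavy. Failing that, I will read the lemma as applying in the regime where the minimum is attained by the second argument, and verify that the downstream uses in \cref{Lemma:FAIL,Lemma:Bias} only invoke it there. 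I will make this case analysis explicit so that the uncapped computation above carries the lemma.
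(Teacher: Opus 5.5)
Your pointwise inequality points the wrong way, and the proof cannot be repaired around it. For a light index, $x=\mu_i/\muth\in[0,1)$ and $\alpha-1>0$ give $x^{\alpha-1}<1$. Hence $\mu_i\rbra{\mu_i/\muth}^{\alpha-1}\le\mu_i$, strictly whenever $\mu_i>0$. Equivalently, $t\mapsto t^{1-\alpha}$ is decreasing for $\alpha>1$, so $\mu_i<\muth$ forces $\mu_i^{1-\alpha}>\muth^{1-\alpha}$.

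The correctly oriented inequality holds on the \emph{heavy} indices: $\mu_i\ge\muth$ gives $\mu_i\le\mu_i^{\alpha}\muth^{1-\alpha}$. Your own summation then yields, in the uncapped branch, $\mu\rbra{\calI_{\heavy}}\le\muth^{1-\alpha}\sum_i\mu_i^{\alpha}=\eta$, which is the opposite of the claim. In the capped branch, $\calI_{\heavy}=\{i\colon\mu_i\ge1\}$ is empty unless $\mu$ is a point mass. So for $\eta<1/2$ (in particular $\eta=\eps/216$) the statement fails for every $\mu$ that is not a point mass.

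Concretely, for $\mu$ uniform on $K\ge2$ points, $\muth=\min\{1,\eta^{-1/(\alpha-1)}/K\}>1/K=\mu_i$, so no index is heavy. With $K=1000$, $\alpha=2$, $\eta=0.01$ you are in the uncapped branch with $\muth=0.1$. There your displayed bound puts the light mass at most $0.01$, while it equals $1$.

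You were right that the cap is a problem your computation does not handle. It is not vacuous: it occurs exactly when $\sum_i\mu_i^\alpha\ge\eta$, e.g. for the two-point uniform $\mu$ with $\alpha=2$. But none of your fallbacks can succeed. Moreover, the uncapped computation you rely on to ``carry the lemma'' is precisely the invalid step.

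The paper's proof is the same argument, written as $\mu_i=\mu_i^{\alpha}\mu_i^{1-\alpha}<\mu_i^{\alpha}\muth^{1-\alpha}$. It has the same sign error and passes over the cap silently.

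The argument is correct verbatim for $\alpha\in(0,1)$. There $t\mapsto t^{1-\alpha}$ is increasing. The paper's threshold equals $2^{-\mathrm{H}_\alpha(\mu)}\eta^{1/(1-\alpha)}<1$, so the minimum with $1$ never binds and your case analysis disappears. The summation then gives $\sum_{i\notin\calI_{\heavy}}\mu_i\le\muth^{1-\alpha}\,2^{(1-\alpha)\mathrm{H}_\alpha(\mu)}=\eta$.

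For $\alpha>1$, no threshold of the form $2^{-\mathrm{H}_\alpha(\mu)}f(\eta)$ can work at all. Take one atom of mass $1/2$ plus $m$ atoms of mass $1/(2m)$. Then $\mathrm{H}_\alpha(\mu)$ is bounded uniformly in $m$, yet capturing mass $1-\eta>1/2$ requires $\muth\le1/(2m)$.
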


\begin{proof}[Proof of~\cref{Lemma:p_heavy_large}]
    It suffices to show that $\mu_\tail \coloneqq 1- \mu\rbra{\calI_\heavy} < \eta$. As $\mu_i \geq \muth$ for any $i \in \calI_\heavy$, $\mu_i < \muth$ for any $i \notin \calI_\heavy$. Using this, it follows that, for all $i \notin \calI_\heavy$,
    \begin{equation}
        \mu_i = \mu_i^{\alpha} \mu_i^{1 - \alpha} < \mu_i^{\alpha} \muth^{1-\alpha}.
    \end{equation}
    By taking the summation over all $i \notin \calI_\heavy$, we have
    \begin{equation}
        \mu_\tail < \muth^{1-\alpha} \sum_{i \notin \calI_\heavy} \mu_i^{\alpha} \leq \muth^{1-\alpha} \sum_{i =0}^{K-1} \mu_i^{\alpha}.
    \end{equation}
    As $\sum_i \mu_i^{\alpha} = 2^{\rbra{1-\alpha} \mathrm{H}_{\alpha}\rbra{\mu}}$ and $\muth = \min\cbra{ 1, 2^{-\mathrm{H}_\alpha\rbra{\mu}} \eta^{-1/\rbra{\alpha-1}} }$, we obtain $\mu_\tail < \eta$, completing the proof.   
\end{proof}

We can also bound the cardinalities of $\calI_\heavy$ and $\hat{\calI}_\heavy$ in terms of $\muth$.

\begin{lemma} \label[lemma]{Lemma:Size_calI}
    The cardinalities of $\calI_\heavy$ and $\hat{\calI}_\heavy$ satisfy
    \begin{align}
        \abs*{ \calI_\heavy } \leq 1/\muth, \text{\ \ and\ \ } \abs*{ \hat{\calI}_\heavy } \leq 2/\muth.
    \end{align}
\end{lemma}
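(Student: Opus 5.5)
Both bounds are counting arguments against a total mass constraint. For the true heavy set, every $i \in \calI_\heavy$ satisfies $\mu_i \geq \muth$ by definition. Summing over $\calI_\heavy$ gives
\begin{equation}
    \abs*{\calI_\heavy}\,\muth \leq \sum_{i \in \calI_\heavy} \mu_i = \mu\rbra*{\calI_\heavy} \leq 1,
\end{equation}
so $\abs{\calI_\heavy} \leq 1/\muth$. This step is immediate.

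For the empirical heavy set, I will use the analogous argument with empirical counts in place of probabilities. By the definition in \cref{alg:single_copy_state_frame_potential_sample}, each $i \in \hat{\calI}_\heavy$ satisfies $\hat{N}_i \geq N\muth/2$. The counts $\hat{N}_i$ are nonnegative and sum to exactly $N$ over all $i \in \cbra{0,\ldots,K-1}$, since every one of the $N$ stored samples carries exactly one label. Summing the threshold inequality over $\hat{\calI}_\heavy$ gives
\begin{equation}
    \abs*{\hat{\calI}_\heavy}\,\frac{N\muth}{2} \leq \hat{N}\rbra*{\hat{\calI}_\heavy} \leq \sum_{i=0}^{K-1}\hat{N}_i = N.
\end{equation}
Dividing by $N\muth/2 > 0$ yields $\abs{\hat{\calI}_\heavy} \leq 2/\muth$.

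The only point needing care is that $\muth > 0$, so that the division is legitimate. This holds because $\muth = \min\cbra{1, 2^{-\mathrm{H}_\alpha(\mu)}\eta^{-1/(\alpha-1)}}$ and both entries are strictly positive, given $\eta = \eps/216 > 0$ and finite Rényi entropy.

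The bound for $\hat{\calI}_\heavy$ is deterministic: it holds for every realization of the samples and needs no concentration argument. The factor $2$ comes entirely from the halved threshold $N\muth/2$ in the empirical definition. I do not expect any real obstacle here.
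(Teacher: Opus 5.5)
Your proposal is correct and follows the paper's proof. Both use the same mass-counting argument for $\calI_\heavy$, and for $\hat{\calI}_\heavy$ both sum the threshold $\hat{N}_i \geq N\muth/2$ and combine it with $\hat{N}(\hat{\calI}_\heavy) \leq N$. The only cosmetic difference is that you divide by $N\muth/2$ where the paper divides by $\hat{N}(\hat{\calI}_\heavy)$, so your version also avoids the degenerate case of an empty set.
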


\begin{proof}[Proof of~\cref{Lemma:Size_calI}]
    For any $i \in \calI_\heavy$, $\mu_i \geq \muth$. We hence obtain
    \begin{equation}
        1 = \sum_{i=0}^{K-1}  \mu_i \geq \sum_{i \in \calI_\heavy} \mu_i \geq \muth \abs*{ \calI_\heavy },
    \end{equation}
    which is the first equation.
    
    The second equation is similarly obtained using the fact that $\hat{N}_i \geq N \muth/2$ for $i \in \hat{\calI}_\heavy$. By taking the summation over all $i \in \hat{\calI}_\heavy$, we have
    \begin{equation}
        \hat{N}\rbra*{ \hat{\calI}_\heavy } \geq  \frac{N\muth}{2} \abs*{ \hat{\calI}_\heavy } \geq \frac{\hat{N}\rbra*{ \hat{\calI}_\heavy } \muth}{2} \abs*{ \hat{\calI}_\heavy }.
    \end{equation}
    This implies the second equation.
\end{proof}

From \cref{Lemma:Size_calI}, one may expect that $\hat{\calI}_\heavy$ is a larger set than $\calI_\heavy$. This is indeed the case with high probability if $N$ is sufficiently large. 

\begin{lemma} \label[lemma]{Lemma:calI_inclusion}
    Let $\delta_\heavy \in \interval[open]{0}{1}$. If $N \geq \frac{8}{\muth} \ln \frac{1}{\delta_\heavy \muth}$, then $\calI_\heavy \subseteq \hat{\calI}_\heavy$ with probability at least $1 - \delta_\heavy$.
\end{lemma}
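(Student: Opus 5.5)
The plan is a per-element multiplicative Chernoff bound followed by a union bound over the true heavy set. For each fixed $i \in \calI_\heavy$, the count $\hat{N}_i$ is a sum of $N$ i.i.d.\ Bernoulli($\mu_i$) variables with mean $N\mu_i \geq N\muth$. The event $i \notin \hat{\calI}_\heavy$ means $\hat{N}_i < N\muth/2 \leq N\mu_i/2$, so it is contained in the lower-tail event $\hat{N}_i < (1-\tfrac12)\E[\hat{N}_i]$.

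By the multiplicative Chernoff bound $\Prob\sbra{X \leq (1-\gamma)\E X} \leq \exp\rbra{-\gamma^2 \E X/2}$ with $\gamma = 1/2$, this probability is at most
\begin{equation*}
    \exp\rbra*{-\frac{N\mu_i}{8}} \leq \exp\rbra*{-\frac{N\muth}{8}}.
\end{equation*}

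Next, take a union bound over all $i \in \calI_\heavy$. By \cref{Lemma:Size_calI}, $\abs{\calI_\heavy} \leq 1/\muth$, so the probability that some true heavy index is missing from $\hat{\calI}_\heavy$ is at most
\begin{equation*}
    \frac{1}{\muth}\exp\rbra*{-\frac{N\muth}{8}}.
\end{equation*}
Under the hypothesis $N \geq \frac{8}{\muth}\ln\frac{1}{\delta_\heavy\muth}$, the exponential factor is at most $\delta_\heavy\muth$, and the whole bound is at most $\delta_\heavy$, as required.

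There is no real obstacle here. The only points to check are that the threshold comparison uses $\mu_i \geq \muth$, so that $N\muth/2$ is indeed at most half the mean, and that the constant $8$ matches the Chernoff exponent with $\gamma = 1/2$. The case $\muth = 1$ is trivially consistent with the same computation.
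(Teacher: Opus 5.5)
Your proposal is correct and matches the paper's proof essentially step for step. Both use the multiplicative Chernoff lower tail with deviation $1/2$ to get $e^{-N\mu_i/8}\le e^{-N\muth/8}$ for each $i\in\calI_\heavy$, then a union bound using $\abs{\calI_\heavy}\le 1/\muth$ from \cref{Lemma:Size_calI}, with the hypothesis on $N$ making the total at most $\delta_\heavy$.
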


\begin{proof}[Proof of~\cref{Lemma:calI_inclusion}]
    Clearly, $\Ex\sbra{\hat{N}_i} = \mu_i N$. As $\mu_i \geq \muth$ for $i \in \calI_\heavy$, it holds that $\Ex\sbra{\hat{N}_i} \geq N \muth$. Using the multiplicative Chernoff bound, we obtain that, for $i \in \calI_\heavy$,
    \begin{align}
        \Prob\sbra*{ \hat{N}_i < \frac{N\muth}{2} } &\leq \Prob\sbra*{ \hat{N}_i < \frac{\Ex\sbra*{\hat{N}_i}}{2} }\\
        &\leq e^{-\Ex\sbra*{\hat{N}_i}/8}\\
        &\leq e^{-N \muth/8}.
    \end{align}
    By the union bound, we obtain
    \begin{align}
        \Prob\sbra*{ \exists i \in \calI_\heavy, \hat{N}_i < \frac{N\muth}{2} }
        & \leq \abs*{ \calI_\heavy } e^{-N \muth/8}\\
        & \leq \frac{1}{\muth} e^{-N \muth/8},\\
        & \leq \delta_\heavy,
    \end{align}
    where we have used~\cref{Lemma:Size_calI} in the second line, and the assumption on $N$ in the last line.
    
    As $\hat{\calI}_\heavy$ is a set of $i$ such that $\hat{N}_i \geq \frac{N\muth}{2}$, this implies that $\calI_\heavy \subseteq \hat{\calI}_\heavy$ with probability at least $1 - \delta_\heavy$, completing the proof.
\end{proof}

Note that~\cref{alg:single_copy_state_frame_potential_sample} defines $\hat{\calI}_\heavy$ using the threshold $\hat{N}_i \geq N \muth/2$, rather than $\hat{N}_i \geq N \muth$. This factor of $1/2$ ensures that~\cref{Lemma:calI_inclusion} holds.

\subsubsection{\texorpdfstring{Proof of~\cref{Lemma:FAIL}}{Proof of Lemma}} \label{SSS:ProofOfLemmaFAIL}

We now show~\cref{Lemma:FAIL}, which we restate below for clarity.

\LemmaFAIL*

We begin with the fact that the desired statement holds with high probability if both $\hat{A}(\hat{\calI}_\heavy)$ and $N$ are sufficiently large.

\begin{lemma}\label[lemma]{Lemma:preFAIL}
    Let $\delta_1 \in \interval[open]{0}{1}$. If $\hat{A}\rbra{\hat{\calI}_\heavy} \geq 4M$ and 
    \begin{equation}
        N \geq \frac{2t}{\muth} \rbra*{ 6\ln \frac{2}{\delta_1 \muth} + 1 },
    \end{equation} 
    then the probability that~\cref{alg:single_copy_state_frame_potential_sample} returns $\FAIL$ is at most $\delta_1$.
\end{lemma}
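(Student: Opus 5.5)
The plan is to condition on the outcome of the first (storage) stage, so that $\hat{N}_i$, $\hat{a}_i$, $\hat{\calI}_\heavy$, $\hat{A}\rbra{\hat{\calI}_\heavy}$ and $\hat{\nu}$ are fixed and satisfy the hypotheses. Then the only randomness left is in the labels $L_1,\ldots,L_{2M}$, drawn i.i.d.\ from $\hat{\nu}$. For each $i\in\hat{\calI}_\heavy$, let $C_i \coloneqq \abs{\cbra{m \colon L_m = i}}$. The algorithm returns $\FAIL$ exactly when $C_i > \hat{a}_i$ for some $i \in \hat{\calI}_\heavy$. I will bound $\Prob\sbra{C_i > \hat{a}_i}$ for each fixed $i$ and then take a union bound over $\hat{\calI}_\heavy$. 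Since the bound holds for every admissible first-stage outcome, it also holds unconditionally.

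\emph{Step 1 (mean bound).} We have $C_i \sim \mathrm{Bin}\rbra{2M, \hat{\nu}_i}$ with $\hat{\nu}_i = \hat{a}_i/\hat{A}\rbra{\hat{\calI}_\heavy}$. The hypothesis $\hat{A}\rbra{\hat{\calI}_\heavy}\geq 4M$ gives
\begin{equation}
    \Ex\sbra{C_i} = \frac{2M \hat{a}_i}{\hat{A}\rbra{\hat{\calI}_\heavy}} \leq \frac{\hat{a}_i}{2}.
\end{equation}
Hence the event $C_i>\hat{a}_i$ requires the count to exceed twice an upper bound $\mu_+ \coloneqq \hat{a}_i/2$ on its mean.

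\emph{Step 2 (Chernoff with an upper-bounded mean).} I will use the multiplicative Chernoff bound in the form $\Prob\sbra{C \geq 2\mu_+} \leq e^{-\mu_+/3}$, valid for any $\mu_+ \geq \Ex\sbra{C}$. This form follows either from monotonicity of the binomial upper tail in the success probability (raise $\hat{\nu}_i$ until the mean is exactly $\mu_+$), or from the standard moment-generating-function proof, which only uses an upper bound on the mean. It gives
\begin{equation}
    \Prob\sbra{C_i > \hat{a}_i} \leq e^{-\hat{a}_i/6}.
\end{equation}
Justifying this form cleanly is the only delicate point. If needed, I would pad $\hat{\nu}_i$ upward through a coupling argument.

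\emph{Step 3 (lower bound on $\hat{a}_i$).} For $i\in\hat{\calI}_\heavy$ we have $\hat{N}_i\geq N\muth/2$, so
\begin{equation}
    \hat{a}_i = \lfloor \hat{N}_i/t\rfloor \geq \frac{N\muth}{2t} - 1 \geq 6\ln\frac{2}{\delta_1\muth},
\end{equation}
where the last inequality is exactly the assumed lower bound on $N$. Hence $\Prob\sbra{C_i>\hat{a}_i}\leq \delta_1\muth/2$.

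\emph{Step 4 (union bound).} By \cref{Lemma:Size_calI}, $\abs{\hat{\calI}_\heavy}\leq 2/\muth$. Summing the bound from Step 3 over $\hat{\calI}_\heavy$ gives a total $\FAIL$ probability of at most $\rbra{2/\muth}\cdot\rbra{\delta_1\muth/2}=\delta_1$, which proves \cref{Lemma:preFAIL}.
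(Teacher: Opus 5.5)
Your proposal is correct and follows the paper's proof essentially step for step. Both arguments use the mean bound $\Ex\sbra{C_i} \leq \hat{a}_i/2$ from $\hat{A}\rbra{\hat{\calI}_\heavy} \geq 4M$, a multiplicative Chernoff bound giving $e^{-\hat{a}_i/6}$, the lower bound $\hat{a}_i \geq N\muth/(2t) - 1$, and a union bound using $\abs{\hat{\calI}_\heavy} \leq 2/\muth$.

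The only small difference is in the Chernoff step. The paper applies it with the exact mean and deviation factor $\hat{a}_i/\Ex\sbra{C_i} - 1 \geq 1$, which gives $\exp\rbra{-(\hat{a}_i - \Ex\sbra{C_i})/3} \leq e^{-\hat{a}_i/6}$ directly. This avoids the upper-bounded-mean justification you flagged as the delicate point.
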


\begin{proof}[Proof of~\cref{Lemma:preFAIL}]
    Let $\hat{\boldsymbol{L}}=\rbra{L_1, \ldots, L_{2M}}$ be the sequence of i.i.d. labels sampled from $\hat{\nu}$ by the algorithm.
    Let $\hat{n}_i$ denote the number of occurrences of label $i$ in this sequence. The multiplicative Chernoff bound gives, for any $i \in \hat{\calI}_\heavy$,
    \begin{align}
        \Prob\sbra*{ \hat{n}_i > \hat{a}_i }
        & = \Prob\sbra*{ \hat{n}_i > \frac{\hat{a}_i}{\Ex\sbra*{\hat{n}_i}} \Ex\sbra*{\hat{n}_i} } \\
        & \leq \exp\rbra*{ - \frac{\hat{a}_i/\Ex\sbra*{\hat{n}_i} - 1}{3} \Ex\sbra*{\hat{n}_i} } \\
        & \leq \exp\rbra*{ - \frac{\hat{a}_i - \Ex\sbra*{\hat{n}_i}}{3} }.
    \end{align}
    
    It also holds that 
    \begin{equation}
        \Ex\sbra*{\hat{n}_i}
        = 2M \hat{\nu}_i = 2M \frac{\hat{a}_i}{\hat{A}\rbra*{\hat{\calI}_\heavy}}
        \leq \frac{\hat{a}_i}{2},
    \end{equation}
    where we have used the assumption that $\hat{A}\rbra{\hat{\calI}_\heavy} \geq 4M$. Using this, we obtain
    \begin{align}
        \Prob\sbra*{ \hat{n}_i > \hat{a}_i } 
        \leq  e^{- \hat{a}_i/6} 
        \leq  e^{- \hat{a}_{\min}/6},
    \end{align}
    where $\hat{a}_{\min} \coloneqq \min_{i \in \hat{\calI}_\heavy} \hat{a}_i$.
    Since the algorithm returns $\FAIL$ if there exists $i \in \hat{\calI}_\heavy$ such that $\hat{n}_i > \hat{a}_i$, the union bound leads to
    \begin{align}
        \Prob\sbra*{ \text{\cref{alg:single_copy_state_frame_potential_sample} returns $\FAIL$} }
        & \leq \abs*{ \hat{\calI}_\heavy } e^{- \hat{a}_{\min}/6}\\
        & \leq \frac{2}{\muth} e^{- \hat{a}_{\min}/6},
    \end{align}
    where we used~\cref{Lemma:Size_calI}.
    
    Finally, since $\hat{a}_i = \floor{ \hat{N}_i/t } \geq \floor{ N \muth/\rbra{2t} }$ for every $i \in \hat{\calI}_\heavy$,
    \begin{equation}
        \hat{a}_{\min} \geq \floor*{ \frac{N \muth}{2t} } \geq \frac{N \muth}{2t}-1,
    \end{equation}
    leading to
    \begin{align}
        \Prob\sbra*{ \text{\cref{alg:single_copy_state_frame_potential_sample} returns $\FAIL$} }
        \leq \frac{2}{\muth} e^{- \rbra*{ \frac{N \muth}{2t}-1 }/6}.
    \end{align}
    This is at most $\delta_1$ due to the assumption on $N$.
\end{proof}

We next show that the remaining assumption of~\cref{Lemma:preFAIL}, $\hat{A}\rbra{\hat{\calI}_\heavy }\geq 4M$, holds with high probability when $N$ is sufficiently large.

\begin{lemma} \label[lemma]{Lemma:FAIL_supp}
    Let $\delta_2 \in \interval[open]{0}{1}$. If $N$ satisfies
    \begin{equation}\label{Eq:FAIL_supp_conditions}
        N \geq 8 \max\cbra*{ \frac{1}{\muth} \ln \frac{2}{\delta_2\muth},\ 2 \ln \frac{2}{\delta_2}, \ 2tM + \frac{2t}{\muth} }, 
    \end{equation}
    then $\hat{A}\rbra{ \hat{\calI}_\heavy } \geq 4M$ with probability at least $1 - \delta_2$.
\end{lemma}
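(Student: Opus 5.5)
We need $\hat{A}(\hat{\calI}_\heavy)=\sum_{i\in\hat{\calI}_\heavy}\lfloor \hat N_i/t\rfloor \ge 4M$ with probability at least $1-\delta_2$. The plan is to bound this sum from below by the number of samples landing in the true heavy set, minus rounding losses.

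First, use \cref{Lemma:calI_inclusion} with failure parameter $\delta_2/2$. The first term in \cref{Eq:FAIL_supp_conditions} gives $N\ge \frac{8}{\muth}\ln\frac{2}{\delta_2\muth}$, which is exactly the hypothesis of that lemma with $\delta_\heavy=\delta_2/2$. So, except with probability $\delta_2/2$, we have $\calI_\heavy\subseteq\hat{\calI}_\heavy$. On this event, since every summand is nonnegative, we use $\lfloor x\rfloor\ge x-1$ and \cref{Lemma:Size_calI} ($\abs{\calI_\heavy}\le 1/\muth$) to get
\begin{equation}
\hat{A}\rbra*{\hat{\calI}_\heavy}\ \ge\ \sum_{i\in\calI_\heavy}\rbra*{\frac{\hat N_i}{t}-1}\ \ge\ \frac{\hat N(\calI_\heavy)}{t}-\frac{1}{\muth}.
\end{equation}
It therefore suffices that $\hat N(\calI_\heavy)\ge 4tM+t/\muth$.

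Second, lower bound $\hat N(\calI_\heavy)$. It is a binomial random variable with mean $N\mu(\calI_\heavy)$. By \cref{Lemma:p_heavy_large}, $\mu(\calI_\heavy)\ge 1-\eta\ge 1/2$, since $\eta=\eps/216<1/2$. Hence $\Ex[\hat N(\calI_\heavy)]\ge N/2$. The multiplicative Chernoff bound gives $\Prob[\hat N(\calI_\heavy)<N/4]\le e^{-\Ex/8}\le e^{-N/16}$. The second term of \cref{Eq:FAIL_supp_conditions} gives $N\ge 16\ln(2/\delta_2)$, so this probability is at most $\delta_2/2$.

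Third, combine the two events. On the good event, $\hat N(\calI_\heavy)\ge N/4$, and the third condition $N\ge 16tM+16t/\muth$ yields $N/4\ge 4tM+4t/\muth\ge 4tM+t/\muth$, which is what we need. A union bound over the two failure events gives total failure probability at most $\delta_2$.

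The only delicate points are the following. The rounding loss must be charged only over $\calI_\heavy$, whose size is controlled, rather than over $\hat{\calI}_\heavy$; dropping the extra indices is fine because their terms are nonnegative. In addition, the mass bound $\mu(\calI_\heavy)\ge 1/2$ must be justified from the choice of $\eta$. The constants in \cref{Eq:FAIL_supp_conditions} leave slack, so no sharper estimates are needed.
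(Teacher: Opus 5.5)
Your proof is correct relative to the paper's earlier lemmas and is essentially the paper's own argument: a Chernoff bound using $\mu(\calI_\heavy)\ge 1-\eta\ge 1/2$ to get $\hat{N}(\calI_\heavy)\ge N/4$, then \cref{Lemma:calI_inclusion} at level $\delta_2/2$, a union bound, and $\floor{x}\ge x-1$. The only variation is that you discard the indices in $\hat{\calI}_\heavy\setminus\calI_\heavy$ and pay $\abs{\calI_\heavy}\le 1/\muth$ for rounding, whereas the paper keeps them and pays $\abs{\hat{\calI}_\heavy}\le 2/\muth$ from \cref{Lemma:Size_calI}; so your remark that the loss \emph{must} be charged over $\calI_\heavy$ is overstated, though your version is slightly tighter.

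One caution applies equally to the paper's proof. The mass bound you flag as delicate comes from \cref{Lemma:p_heavy_large}, whose key step $\mu_i^{1-\alpha}<\muth^{1-\alpha}$ for $\mu_i<\muth$ is reversed when $\alpha>1$. For example, with $\mu$ uniform on $K\ge 2$ points one has $\muth>1/K$, hence $\calI_\heavy=\emptyset$. Both arguments are therefore sound only in a regime where that lemma genuinely holds, such as $0<\alpha<1$.
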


\begin{proof}[Proof of~\cref{Lemma:FAIL_supp}]
    We first show that if $N \geq 16\ln \frac{2}{\delta_2}$, which is one of the conditions in~\cref{Eq:FAIL_supp_conditions}, then $\hat{N}\rbra{ \calI_\heavy } \geq N/4$ with probability at least $1 - \delta_2/2$.
    This simply follows from the multiplicative Chernoff bound. As $\Ex\sbra{\hat{N}\rbra{\calI_\heavy}} = N \mu\rbra{\calI_\heavy}$, we have
    \begin{align}
        \Prob \sbra*{ \hat{N}\rbra*{ \calI_\heavy } < \frac{N}{4} }
        & \leq \exp\rbra*{ - \frac{1}{2}\rbra*{ 1 - \frac{1}{4 \mu\rbra*{ \calI_\heavy }} }^2 \Ex\sbra*{ \hat{N}\rbra*{ \calI_\heavy } } }\\
        & = \exp\sbra*{ - \frac{1}{2}\rbra*{ 1 - \frac{1}{4 \mu\rbra*{ \calI_\heavy }} }^2 \mu\rbra*{ \calI_\heavy } N }.
    \end{align}
    By~\cref{Lemma:p_heavy_large}, $\mu\rbra{ \calI_\heavy } \geq 1 - \eta$. Since $\eta \leq 1/2$, we have $\mu\rbra{\calI_\heavy} \geq 1/2$.
    Hence, it holds that
    \begin{equation}
        \Prob\sbra*{ \hat{N}\rbra*{ \calI_\heavy } < \frac{N}{4} } \leq e^{-N/16}.
    \end{equation} 
    Hence, if $N \geq 16\ln \frac{2}{\delta_2}$, then $\hat{N}\rbra{ \calI_\heavy } \geq N/4$ with probability at least $1 - \delta_2/2$.
        
    Moreover, using the condition in~\cref{Eq:FAIL_supp_conditions} that
    \begin{equation}
        N \geq \frac{8}{\muth} \ln \frac{2}{\delta_2\muth},
    \end{equation}
    \cref{Lemma:calI_inclusion} implies that $\calI_\heavy \subseteq \hat{\calI}_\heavy$ with probability at least $1 - \delta_2/2$.
    
    Combining these two events with the union bound, we conclude that $\calI_\heavy \subseteq \hat{\calI}_\heavy$ and $\hat{N}\rbra{\calI_\heavy} \geq N/4$ simultaneously with probability at least $1 - \delta_2$.
    When this is the case, $\hat{N}\rbra{ \hat{\calI}_\heavy } \geq \hat{N}\rbra{ \calI_\heavy } \geq N/4$, and we obtain 
    \begin{align}
        \hat{A}\rbra*{ \hat{\calI}_\heavy } &= \sum_{i \in \hat{\calI}_\heavy} \floor*{ \frac{\hat{N}_i}{t} } \\
        & \geq \sum_{i \in \hat{\calI}_\heavy} \rbra*{ \frac{\hat{N}_i}{t} -1 }\\
        & = \frac{\hat{N}(\hat{\calI}_\heavy)}{t} - \abs*{ \hat{\calI}_\heavy }\\
        & \geq \frac{N}{4t} - \frac{2}{\muth},
    \end{align}
    where we have used~\cref{Lemma:Size_calI} in the last line. Using the last assumption that $N \geq 16tM+8t/\muth$, the desired statement is obtained.
\end{proof}

\cref{Lemma:FAIL} follows from~\cref{Lemma:preFAIL,Lemma:FAIL_supp}.

\begin{proof}[Proof of~\cref{Lemma:FAIL}]
    By setting both $\delta_1$ in~\cref{Lemma:preFAIL} and $\delta_2$ in~\cref{Lemma:FAIL_supp} to $\delta_\FAIL/2$, the following statement follows. If
    \begin{equation}
         N \geq \frac{2t}{\muth} \rbra*{ 6\ln \frac{4}{\delta \muth} + 1 },
    \end{equation}
    and
    \begin{equation}
        N \geq 8 \max\cbra*{ \frac{1}{\muth} \ln \frac{4}{\delta\muth},\ 2 \ln \frac{4}{\delta},\ 2tM + \frac{2t}{\muth} },
    \end{equation}
    then the probability that~\cref{alg:single_copy_state_frame_potential_sample} returns $\FAIL$ is at most $\delta_\FAIL$.
    It is straightforward to observe that these conditions on $N$ are simplified to
    \begin{equation}
         N \geq \max\cbra*{ \frac{2t}{\muth} \rbra*{ 6\ln \frac{4}{\delta \muth} + 1 },\  2tM + \frac{2t}{\muth} },
    \end{equation}
    which holds by assumption.
\end{proof}

\subsubsection{\texorpdfstring{Proof of~\cref{Lemma:Estimate}}{Proof of Lemma}}

\cref{Lemma:Estimate} can be shown in a standard manner.

\LemmaEstimate*

\begin{proof}[Proof of~\cref{Lemma:Estimate}]
    Consider the ideal experiment that performs all $M$ generalized SWAP tests for the i.i.d. labels drawn from $\hat\nu$, irrespective of whether the stored blocks are available. Its estimator satisfies $\Ex\sbra{\hat{\calF}}=\calF_t\rbra{\hat\nu}$. Hoeffding's inequality gives
    \begin{equation}
        \Prob\sbra*{ \abs*{ \hat{\calF} - \calF_t\rbra*{ \hat{\nu} } } > \epsest } \leq 2 e^{-M \epsest^2/2}.
    \end{equation}
    This is at most $\delest$ under~\cref{Eq:ESTIMATE}. Whenever the actual algorithm returns an estimate, it agrees with the ideal estimator on the same labels and measurement outcomes. Therefore, the event that the actual algorithm returns a bad estimate is contained in the ideal experiment's bad-estimate event, proving the claim.
\end{proof}

\subsubsection{\texorpdfstring{Proof of~\cref{Lemma:Bias}}{Proof of Lemma}}

We finally consider~\cref{Lemma:Bias}. The following is a restatement for clarity.

\LemmaBias*

The proof is based on the fact that the difference of the frame potentials can be bounded from above by the $\ell_1$-distance of the probability distributions.

\begin{lemma}\label[lemma]{Lemma:Frame_Potential_Trace_Norm}
    Let $\xi$ and $\zeta$ be discrete probability measures on $\calP$. Then,
    \begin{equation}
        \abs*{ \calF_t\rbra*{\xi} - \calF_t\rbra*{\zeta} }  \leq 2 \Abs*{ \xi-\zeta }_1,
    \end{equation}
    where $\Abs*{ \xi-\zeta }_1$ denotes the $\ell_1$-distance between the two probability distributions.
\end{lemma}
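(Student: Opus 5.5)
The plan is to write the frame potential as a bilinear form in the measure and use that the kernel is bounded by one. For a discrete probability measure $\nu$ on $\calP$, supported on the common support $\cbra{\ket{\psi_i}}$ of $\xi$ and $\zeta$ (pass to the union of supports, padding with zero weights), we have
\begin{equation}
    \calF_t\rbra*{\nu} = \sum_{i,j} \nu_i \nu_j\, k_{ij}, \qquad k_{ij} \coloneqq \abs*{\braket{\psi_i}{\psi_j}}^{2t} \in \interval{0}{1}.
\end{equation}
So $\calF_t(\nu) = \nu^{\top} K \nu$ for a symmetric matrix $K$ with entries in $\interval{0}{1}$.

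Next, split the difference with a telescoping identity:
\begin{equation}
    \calF_t(\xi)-\calF_t(\zeta) = \xi^{\top}K\xi - \zeta^{\top}K\zeta = (\xi-\zeta)^{\top}K\xi + \zeta^{\top}K(\xi-\zeta).
\end{equation}
Each term is bounded by Hölder's inequality between $\ell_1$ and $\ell_\infty$. For the first term,
\begin{equation}
    \abs*{(\xi-\zeta)^{\top}K\xi} \leq \Abs*{\xi-\zeta}_1 \max_i \abs*{(K\xi)_i}.
\end{equation}
Since $(K\xi)_i = \sum_j k_{ij}\xi_j \leq \sum_j \xi_j = 1$, this term is at most $\Abs{\xi-\zeta}_1$. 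The second term is handled the same way using $\zeta$, giving the same bound. Adding the two yields $\abs{\calF_t(\xi)-\calF_t(\zeta)} \leq 2\Abs{\xi-\zeta}_1$, which is the stated constant.

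There is no real obstacle here. The only point needing care is notational: the two measures may have different supports, and repeated states may appear. Indexing both measures over a common finite list of states, with repeated states allowed, makes the bilinear representation valid. The $\ell_1$-distance is unchanged by merging duplicates, or at least does not increase, so the bound holds with the distance computed either way.
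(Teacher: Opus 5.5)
Your proof is correct and is essentially the paper's argument: the paper splits $\xi_i\xi_j-\zeta_i\zeta_j=\xi_i(\xi_j-\zeta_j)+\zeta_j(\xi_i-\zeta_i)$, bounds $\abs{\braket{\psi_i}{\psi_j}}^{2t}\leq 1$, and sums, which is your telescoping identity plus the $\ell_1$--$\ell_\infty$ H\"older step written in coordinates. One small caveat on your closing remark: merging duplicate states can strictly decrease the $\ell_1$-distance, so the valid direction is the reverse of what you wrote. Index over distinct atoms (as your ``union of supports'' does), obtain the bound with the smaller merged distance, and deduce the list-indexed version from it.
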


\begin{proof}[Proof of~\cref{Lemma:Frame_Potential_Trace_Norm}]
    Write $\xi = \cbra{\rbra{\xi_i, \ket{\psi_i}}}_{i=0}^{K-1}$ and $\zeta = \cbra{\rbra{\zeta_i, \ket{\psi_i}}}_{i=0}^{K-1}$, allowing zero weights when their supports differ. Then
    \begin{align}
        \abs*{ \calF_t\rbra*{ \xi } - \calF_t \rbra*{ \zeta } } 
        & = \abs*{ \sum_{i,j=0}^{K-1} \rbra*{ \xi_i \xi_j - \zeta_i \zeta_j } \abs*{ \braket{\psi_i}{\psi_j} }^{2t} }\\
        & \le \sum_{i,j=0}^{K-1} \abs*{ \xi_i \xi_j - \zeta_i \zeta_j }\\
        & \le \sum_{i,j=0}^{K-1} \abs*{ \xi_i  \abs*{ \xi_j - \zeta_j } + \zeta_j  \abs*{ \xi_i - \zeta_i } }\\
        & \le 2 \Abs*{ \xi - \zeta }_1,
    \end{align}
    where we have used the triangle inequality in the second and the third lines, and the fact that $\abs{ \braket{\psi_i}{\psi_j} } \leq 1$ in the second line.
    This completes the proof.
\end{proof}

Below, we bound the $\ell_1$-distance between $\mu$ and $\hat{\nu}$.
To this end, we introduce a couple of distributions.
    By restricting $\mu$ to the heavy set, define the unnormalized measure $\restrictedmu$ by
\begin{align}
        \rbra{\restrictedmu}_i \coloneqq
    \begin{cases}
        \mu_i & \text{if $i \in \calI_\heavy$,}\\
        0 & \text{otherwise,}
    \end{cases}
\end{align}
and its normalized probability distribution $\barrestrictedmu$ by
\begin{align}
    &\barrestrictedmu \coloneqq \frac{\restrictedmu}{\mu\rbra*{ \calI_\heavy }}.
\end{align}

    Similarly, define the normalized restriction $\barrestrictednu$ of $\hat{\nu}$ to $\calI_\heavy$ by
\begin{equation}
    \barrestrictednu \coloneqq
    \begin{cases}
        \hat{a}_i/\hat{A}\rbra*{ \calI_\heavy } & \text{if $i \in \calI_\heavy$,}\\
        0 & \text{otherwise,}
    \end{cases}
\end{equation}
where $\hat{A}\rbra{\calI_\heavy} \coloneqq \sum_{i \in \calI_\heavy}\hat{a}_i$.
As the support of $\hat{\nu}$ is $\hat{\calI}_\heavy$, this restriction makes sense if $\calI_\heavy \subseteq \hat{\calI}_\heavy$.
This is guaranteed by~\cref{Lemma:calI_inclusion} when $N$ is sufficiently large, which is satisfied if the assumption of~\cref{Lemma:Bias} is met.

Finally, we introduce a probability distribution $\tilde{\mu}$ by
\begin{equation}
    \tilde{\mu} = \cbra*{ \tilde{\mu}_i \coloneqq \frac{\hat{N}_i}{\hat{N}\rbra*{ \calI_\heavy }} \colon i \in \calI_\heavy }. 
\end{equation}

For these distributions, the following lemmas hold. The proofs are given later.

\begin{restatable}{lemma}{EllDistanceA}\label[lemma]{Lemma:Ell1_Distance_1}
    Let $\eps_1, \delta_1 \in \interval[open]{0}{1}$. If $N$ satisfies
    \begin{equation}\label{Eq:Cond_N_Ell1_Distance_1}
        N \geq \frac{48}{\eps_1^2 \muth} \ln \frac{2}{\delta_1 \muth},
    \end{equation}
    then $\Abs{ \barrestrictedmu - \tilde{\mu} }_1 \leq \eps_1$ with probability at least $1-\delta_1$.
\end{restatable}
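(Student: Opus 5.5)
We need to bound $\Abs{\barrestrictedmu - \tilde{\mu}}_1$, where $\barrestrictedmu_i = \mu_i/\mu(\calI_\heavy)$ and $\tilde{\mu}_i = \hat{N}_i/\hat{N}(\calI_\heavy)$ for $i \in \calI_\heavy$. The plan is to first control each count $\hat{N}_i$, $i \in \calI_\heavy$, multiplicatively around its mean $N\mu_i$. Then we transfer this to the normalized ratios.

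\textbf{Step 1: multiplicative concentration.} Since $\Ex[\hat{N}_i] = N\mu_i \geq N\muth$ for every $i \in \calI_\heavy$, the two-sided multiplicative Chernoff bound with relative error $\gamma = \eps_1/4$ gives
\[
\Prob\sbra*{\abs{\hat{N}_i - N\mu_i} > \gamma N\mu_i} \leq 2e^{-\gamma^2 N\mu_i/3} \leq 2e^{-\eps_1^2 N\muth/48}.
\]
By \cref{Lemma:Size_calI}, $\abs{\calI_\heavy} \leq 1/\muth$. A union bound over $\calI_\heavy$ therefore bounds the total failure probability by $\frac{2}{\muth}e^{-\eps_1^2 N\muth/48}$. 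This is at most $\delta_1$ exactly under \cref{Eq:Cond_N_Ell1_Distance_1}, which explains the constants $48$ and $2/(\delta_1\muth)$ in the statement.

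\textbf{Step 2: normalized distance.} On the good event, $\hat{N}_i = N\mu_i(1+\gamma_i)$ with $\abs{\gamma_i} \leq \gamma$. Summing over $\calI_\heavy$ gives $\hat{N}(\calI_\heavy) = N\mu(\calI_\heavy)(1+\bar\gamma)$, where $\bar\gamma$ is a $\mu$-weighted average of the $\gamma_i$, so $\abs{\bar\gamma} \leq \gamma$. Hence
\[
\tilde{\mu}_i = \barrestrictedmu_i \cdot \frac{1+\gamma_i}{1+\bar\gamma}, \qquad \abs*{\frac{1+\gamma_i}{1+\bar\gamma} - 1} = \frac{\abs{\gamma_i - \bar\gamma}}{1+\bar\gamma} \leq \frac{2\gamma}{1-\gamma}.
\]
Summing over $i$ with weights $\barrestrictedmu_i$ (which sum to $1$) yields $\Abs{\barrestrictedmu - \tilde{\mu}}_1 \leq 2\gamma/(1-\gamma) \leq 4\gamma = \eps_1$, using $\gamma \leq 1/4$.

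\textbf{Main obstacle.} The only delicate point is matching the constants: the relative error $\gamma$ must be chosen so that the Chernoff exponent $\gamma^2/3$ gives precisely $\eps_1^2/48$ while the normalization step still closes. The choice $\gamma = \eps_1/4$ achieves both, since $\gamma^2/3 = \eps_1^2/48$ and $2\gamma/(1-\gamma) \leq \eps_1$. We also rely on the Chernoff bound being valid for $\gamma < 1$, which holds because $\eps_1 < 1$.
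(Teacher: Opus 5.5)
Your proposal is correct and matches the paper's proof essentially step for step. Both use the two-sided multiplicative Chernoff bound with relative error $\eps_1/4$, a union bound over $\abs{\calI_\heavy}\le 1/\muth$, and the rewriting $\tilde{\mu}_i = \bar{\mu}_i (1+\gamma_i)/(1+\bar{\gamma})$ with $\bar{\gamma}$ the weighted average, which gives $\ell_1$ distance at most $2\gamma/(1-\gamma) = 2\eps_1/(4-\eps_1) \le \eps_1$.
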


\begin{restatable}{lemma}{EllDistanceB}\label[lemma]{Lemma:Ell1_Distance_2}
    Let $\eps_2, \delta_2 \in \interval[open]{0}{1}$. If $N$ satisfies     
    \begin{equation}
        N \geq \frac{8}{1-\eta}\max\cbra*{ \ln \frac{1}{\delta_2}, \ \frac{t}{\eps_2 \muth} },
    \end{equation}
    then $\Abs{ \tilde{\mu} - \barrestrictednu }_1 \leq \eps_2$ with probability at least $1-\delta_2$.
\end{restatable}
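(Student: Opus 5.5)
The bound splits into a deterministic rounding estimate and one concentration step. The rounding part compares $\tilde{\mu}_i = \hat{N}_i/\hat{N}(\calI_\heavy)$ with $\rbra{\barrestrictednu}_i = \hat{a}_i/\hat{A}(\calI_\heavy)$, where $\hat{a}_i = \floor{\hat{N}_i/t}$. The probabilistic part only needs $\hat{N}(\calI_\heavy)$ to be large, which follows from \cref{Lemma:p_heavy_large} and a Chernoff bound.

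\emph{Deterministic step.} For $i \in \calI_\heavy$ write $\hat{a}_i = (\hat{N}_i - t r_i)/t$ with $r_i \in \interval[open right]{0}{1}$. Set $R \coloneqq \sum_{i\in\calI_\heavy} r_i$, so that $R \le \abs{\calI_\heavy} \le 1/\muth$ by \cref{Lemma:Size_calI}. Abbreviate $\hat{N} \coloneqq \hat{N}(\calI_\heavy)$. Then $\hat{A}(\calI_\heavy) = (\hat{N} - tR)/t$, and therefore
\begin{equation}
    \frac{\hat{a}_i}{\hat{A}(\calI_\heavy)} - \frac{\hat{N}_i}{\hat{N}} = \frac{\hat{N}_i - t r_i}{\hat{N} - tR} - \frac{\hat{N}_i}{\hat{N}} = \frac{t\rbra{\hat{N}_i R - r_i \hat{N}}}{\hat{N}\rbra{\hat{N} - tR}} .
\end{equation}
Summing absolute values over $i \in \calI_\heavy$ and using $\sum_i \hat{N}_i = \hat{N}$ and $\sum_i r_i = R$, I will obtain
\begin{equation}
    \Abs{\tilde{\mu} - \barrestrictednu}_1 \le \frac{2tR}{\hat{N} - tR} .
\end{equation}
The right-hand side is at most $\eps_2$ whenever $\hat{N} \ge tR\rbra{2+\eps_2}/\eps_2$. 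Since $R \le 1/\muth$ and $\eps_2 < 1$, it suffices that $\hat{N}(\calI_\heavy) \ge 3t/(\eps_2 \muth)$. This condition also gives $\hat{N} - tR > 0$, so the formula is well defined. The requirement $N \ge 8t/((1-\eta)\eps_2\muth)$ will supply this.

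\emph{Probabilistic step.} $\hat{N}(\calI_\heavy)$ is binomial with mean $N\mu(\calI_\heavy) \ge N(1-\eta)$, by \cref{Lemma:p_heavy_large}. The multiplicative Chernoff bound gives
\begin{equation}
    \Prob\sbra*{\hat{N}(\calI_\heavy) < \tfrac{1}{2}N(1-\eta)} \le e^{-N(1-\eta)/8} \le \delta_2 ,
\end{equation}
where the last inequality uses $N \ge \frac{8}{1-\eta}\ln\frac{1}{\delta_2}$. On the complementary event,
\begin{equation}
    \hat{N}(\calI_\heavy) \ge \frac{N(1-\eta)}{2} \ge \frac{4t}{\eps_2\muth} \ge \frac{3t}{\eps_2\muth},
\end{equation}
using the second term of the maximum. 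Combining this with the deterministic step gives the claim with probability at least $1-\delta_2$.

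\emph{Main obstacle.} The only delicate point is the rounding bookkeeping in the deterministic step. Both numerator and denominator of $\hat{a}_i/\hat{A}(\calI_\heavy)$ are perturbed by the floors, and the total perturbation $tR$ must be controlled by $t/\muth$ through $\abs{\calI_\heavy} \le 1/\muth$. This is exactly what forces the $t/(\eps_2\muth)$ term in the condition on $N$.
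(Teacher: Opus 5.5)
Your proposal is correct and follows essentially the same route as the paper. The paper writes $\hat{N}_i = t\hat{a}_i + r_i$ and derives the deterministic bound $\Abs{\tilde{\mu} - \barrestrictednu}_1 \le 2R/(\hat{N}(\calI_\heavy) - R)$ with $R \le (t-1)\abs{\calI_\heavy}$, then combines it with the Chernoff estimate $\hat{N}(\calI_\heavy) \ge (1-\eta)N/2$ (\cref{Lemma:N_heavy}); your version differs only cosmetically, by normalizing the remainders by $t$ and using the slightly looser $tR \le t/\muth$.
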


\begin{restatable}{lemma}{EllDistanceC}\label[lemma]{Lemma:Ell1_Distance_3}
    Let $\delta_3 \in (0,1)$. If $\calI_\heavy \subseteq \hat{\calI}_\heavy$ and if $N$ satisfies
    \begin{equation}
        N \geq \max\cbra*{ \frac{3}{\eta}, \ \frac{8}{1-\eta} } \ln \frac{2}{\delta_3},
        \quad \text{and} \quad
        N \geq \frac{8t}{\rbra*{ 1-\eta }\muth},
    \end{equation}
    then $\Abs{ \barrestrictednu - \hat{\nu} }_1 \leq \frac{32\eta}{3\rbra{1-\eta}}$ with probability at least $1-\delta_3$.
\end{restatable}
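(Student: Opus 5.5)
The plan is to reduce the $\ell_1$-distance to the $\hat{\nu}$-mass lying outside $\calI_\heavy$, and then bound that mass by a ratio of sample counts. Let $S = \calI_\heavy \subseteq \hat{\calI}_\heavy$. Then $\barrestrictednu$ is exactly $\hat{\nu}$ conditioned on $S$, since $\hat{\nu}_i = \hat{a}_i/\hat{A}\rbra{\hat{\calI}_\heavy}$ and $\barrestrictednu_i = \hat{a}_i/\hat{A}\rbra{\calI_\heavy}$. For any distribution $\nu$ and subset $S$ one has the elementary identity $\Abs{\nu|_S/\nu(S) - \nu}_1 = 2\rbra{1-\nu(S)}$: on $S$ the entries differ by the factor $1/\nu(S) - 1$, giving total $1-\nu(S)$, and off $S$ the mass is $1-\nu(S)$. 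Hence it suffices to show
\begin{align}
    1-\hat{\nu}\rbra{\calI_\heavy} = \frac{\sum_{i\in\hat{\calI}_\heavy\setminus\calI_\heavy}\hat{a}_i}{\hat{A}\rbra{\hat{\calI}_\heavy}} \leq \frac{16\eta}{3\rbra{1-\eta}}
\end{align}
with probability at least $1-\delta_3$.

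For the numerator, use $\hat{a}_i \leq \hat{N}_i/t$ to bound it by $\hat{N}_\tail/t$, where $\hat{N}_\tail$ counts the samples outside $\calI_\heavy$. This count is binomial with mean $N\mu_\tail < N\eta$ by \cref{Lemma:p_heavy_large}. A multiplicative Chernoff upper-tail bound, applied after stochastically dominating by a binomial with mean exactly $N\eta$, gives $\hat{N}_\tail \leq c N\eta$ for a constant $c$ (of order $4/3$ to $2$) except with probability $e^{-\Omega(N\eta)}$. The hypothesis $N \geq (3/\eta)\ln(2/\delta_3)$ makes this failure probability at most $\delta_3/2$.

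For the denominator, write $\hat{A}\rbra{\hat{\calI}_\heavy} \geq \hat{A}\rbra{\calI_\heavy} \geq \hat{N}\rbra{\calI_\heavy}/t - \abs{\calI_\heavy}$. By \cref{Lemma:Size_calI} this is at least $\hat{N}\rbra{\calI_\heavy}/t - 1/\muth$. The count $\hat{N}\rbra{\calI_\heavy}$ has mean $N\mu\rbra{\calI_\heavy} \geq N(1-\eta)$. The lower-tail Chernoff bound, exactly as in the proof of \cref{Lemma:FAIL_supp}, gives $\hat{N}\rbra{\calI_\heavy} \geq N(1-\eta)/2$ except with probability $e^{-N(1-\eta)/8} \leq \delta_3/2$, using $N \geq \frac{8}{1-\eta}\ln\frac{2}{\delta_3}$. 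The remaining condition $N \geq 8t/\rbra{(1-\eta)\muth}$ gives $1/\muth \leq N(1-\eta)/(8t)$. Together these yield $\hat{A}\rbra{\hat{\calI}_\heavy} \geq \frac{3N(1-\eta)}{8t}$.

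Dividing the two bounds and applying a union bound over the two events gives $1-\hat{\nu}\rbra{\calI_\heavy} \leq \frac{8c\,\eta}{3(1-\eta)}$ with probability at least $1-\delta_3$, and hence the $\ell_1$-distance is at most $\frac{16c\,\eta}{3(1-\eta)}$. The main obstacle is pinning down the Chernoff constant in the upper tail so that $c = 2$ under the stated condition $N \geq (3/\eta)\ln(2/\delta_3)$, which gives exactly $32\eta/(3(1-\eta))$. I would use the form $\Prob\sbra{X \geq (1+\gamma)m} \leq e^{-\gamma^2 m/3}$ with $m = N\eta$ and $\gamma = 1$, after replacing $X$ by a dominating binomial of mean $N\eta$. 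If the constants do not line up, I would instead adjust the split between numerator and denominator slack, since only the order $O\rbra{\eta/(1-\eta)}$ matters downstream in \cref{Lemma:Bias}.
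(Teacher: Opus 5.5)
Your proposal is correct and follows essentially the same route as the paper: write the distance as $2A'/\hat{A}$ with $A'=\sum_{i\in\hat{\calI}_\heavy\setminus\calI_\heavy}\hat{a}_i$, bound $A'\le (N-\hat{N}(\calI_\heavy))/t\le 2N\eta/t$ by an upper-tail Chernoff bound and $\hat{A}\ge 3(1-\eta)N/(8t)$ by the lower-tail bound together with $|\calI_\heavy|\le 1/\muth$, and take a union bound. Your hedge about the constants is unnecessary: using the mean bound $N\eta$ with $\gamma=1$ gives failure probability $e^{-N\eta/3}\le\delta_3/2$, which is exactly what the paper obtains (it uses the $e^{-\beta m/3}$ form with the true mean $m$ and $\beta$ chosen so that $(1+\beta)m=2N\eta$), so $c=2$ and the stated constant $32\eta/(3(1-\eta))$ come out exactly.
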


We now prove~\cref{Lemma:Bias} based on these lemmas. 

\begin{proof}[Proof of~\cref{Lemma:Bias}]
    Due to~\cref{Lemma:Frame_Potential_Trace_Norm}, we have
    \begin{equation}\label{Eq:Frame_Pot_Ell1_Distance_Eval}
        \abs*{ \calF_t\rbra*{ \mu } - \calF_t\rbra*{ \hat{\nu} } } \leq 2 \Abs*{ \mu - \hat{\nu} }_1.
    \end{equation}
    Using the triangle inequality, it follows that
    \begin{equation}\label{Eq:Ell1_triangle}
        \Abs*{ \mu - \hat{\nu} }_1 \leq \Abs*{ \mu - \barrestrictedmu }_1 + \Abs*{ \barrestrictedmu - \hat{\nu} }_1.
    \end{equation}
    
    The first term on the right-hand side of~\cref{Eq:Ell1_triangle} is bounded above by $2\eta$:
    \begin{align}
        \Abs*{ \mu - \barrestrictedmu }_1
        & \leq \Abs*{ \mu - \restrictedmu }_1 + \Abs*{ \restrictedmu - \barrestrictedmu }_1 \\
        & = 1 - \mu\rbra*{ \calI_\heavy }  + \rbra*{ 1 - \mu\rbra*{ \calI_\heavy } } \Abs*{ \barrestrictedmu }_1\\
        & \leq 2 \rbra*{ 1 - \mu\rbra*{ \calI_\heavy } }\\
        & \leq 2 \eta, \label{Eq:Ell1_Distance_Eval_1}
    \end{align}
    The equality uses $\restrictedmu = \mu\rbra{\calI_\heavy}\barrestrictedmu$, and the final inequality uses~\cref{Lemma:p_heavy_large}.
    
    To evaluate the second term on the right-hand side of~\cref{Eq:Ell1_triangle}, we again use the triangle inequality, leading to
    \begin{align}
        \Abs*{ \barrestrictedmu - \hat{\nu} }_1
        \leq \Abs*{ \barrestrictedmu - \tilde{\mu} }_1 + \Abs*{ \tilde{\mu} - \barrestrictednu }_1 + \Abs*{ \barrestrictednu - \hat{\nu} }_1.
    \end{align}
    Each term on the right-hand side is bounded from above using~\cref{Lemma:Ell1_Distance_1,Lemma:Ell1_Distance_2,Lemma:Ell1_Distance_3}.
    Setting the parameters in the lemmas to $\eps_1=\eps_2=\epsFP/6$ and $\delta_1=\delta_2=\delta_3=\delFP/3$, and applying the union bound, gives
    \begin{equation}\label{Eq:Ell1_Distance_Eval_2}
        \Abs*{ \barrestrictedmu - \hat{\nu} }_1 \leq \frac{\epsFP}{3} + \frac{32\eta}{3\rbra*{1-\eta}},
    \end{equation}
    with probability at least $1 - \delFP$, if $N$ satisfies all of the following conditions:
    \begin{align}
        N & \geq \frac{1728}{\epsFP^2 \muth} \ln \frac{6}{\delFP \muth},\label{Eq:N_dominant_Trace_Eval}\\
        N & \geq \frac{8}{1-\eta}\max\cbra*{ \ln \frac{3}{\delFP}, \ \frac{6t}{\epsFP\muth} }, \label{Eq:N_rounding_bias}\\
        N & \geq \max\cbra*{ \frac{3}{\eta}, \ \frac{8}{1-\eta} } \ln \frac{6}{\delFP},\label{Eq:N_tail_bias}
    \end{align}
    and if $\calI_\heavy \subseteq \hat{\calI}_\heavy$.
        
    By assumption,~\cref{Eq:N_dominant_Trace_Eval} holds.
    As~\cref{Eq:N_dominant_Trace_Eval} is stronger than the assumption in~\cref{Lemma:calI_inclusion}, $\calI_\heavy \subseteq \hat{\calI}_\heavy$ also holds.  
    Moreover, since the algorithm sets $\eta = \epsFP/108 \leq 1/108$, the condition in~\cref{Eq:N_dominant_Trace_Eval} further implies~\cref{Eq:N_rounding_bias,Eq:N_tail_bias}.
    Hence, all the above assumptions hold.
    In addition, since $\epsFP \leq 1$, the second branch of~\cref{Eq:N_rounding_bias} gives $N \geq \frac{48t}{\rbra{1-\eta}\epsFP\muth} \geq \frac{8t}{\rbra{1-\eta}\muth}$, so the additional condition of~\cref{Lemma:Ell1_Distance_3} is also met.
    
    Substituting~\cref{Eq:Ell1_Distance_Eval_1,Eq:Ell1_Distance_Eval_2} into~\cref{Eq:Ell1_triangle}, we obtain, with probability at least $1-\delFP$,
    \begin{equation}
        \Abs*{ \mu - \hat{\nu} }_1 \leq \frac{\epsFP}{3} + 2\eta + \frac{32\eta}{3\rbra*{ 1 - \eta }}.
    \end{equation}
    This, together with~\cref{Eq:Frame_Pot_Ell1_Distance_Eval}, leads to
    \begin{equation}
        \abs*{ \calF_t\rbra*{ \mu } - \calF_t\rbra*{ \hat{\nu} } } \leq \frac{2}{3}\epsFP + 4\eta + \frac{64\eta}{3\rbra*{ 1-\eta }}.
    \end{equation}
    As $\eta = \epsFP/108$ and $\epsFP \leq 1$, we have $1-\eta \geq 107/108$, so $4\eta + \frac{64\eta}{3\rbra{1-\eta}} \leq \frac{\epsFP}{27} + \frac{64\epsFP}{321} \leq \frac{\epsFP}{3}$, which implies $\abs{ \calF_t\rbra{\mu} - \calF_t\rbra{\hat{\nu}} } \leq \epsFP$ as desired.
\end{proof}

\subsubsection{Proof of \texorpdfstring{\cref{Lemma:Ell1_Distance_1}}{Lemma}}

\EllDistanceA*

\begin{proof}[Proof of~\cref{Lemma:Ell1_Distance_1}]
    For any $i \in \calI_\heavy$, $\Ex\sbra{\hat{N}_i} = \mu_i N$. Using the multiplicative Chernoff bound, we obtain that, for any $i \in \calI_\heavy$,
    \begin{align}
        \Prob\sbra*{ \abs*{ \hat{N}_i - N\mu_i } \geq \frac{\eps_1}{4} N \mu_i }
        & \leq 2 \exp\rbra*{ - \frac{\eps_1^2}{48}N\mu_i } \\
        & \leq 2 \exp\rbra*{ - \frac{\eps_1^2}{48}N \muth },
    \end{align}
    where we used the fact that $\mu_i \geq \muth$ for $i \in \calI_\heavy$.
    By the union bound and $| \calI_\heavy| \leq \muth^{-1}$, we have
    \begin{align}
        \Prob\sbra*{ \text{$\exists i \in \calI_\heavy$ such that} \abs*{ \hat{N}_i - N\mu_i } \geq \frac{\eps_1}{4} N \mu_i }
        & \le 2 \abs*{ \calI_\heavy } \exp\rbra*{ - \frac{\eps_1^2}{48}N \muth }\\
        & \le \frac{2}{\muth} \exp\rbra*{ - \frac{\eps_1^2}{48}N \muth }.
        \end{align}
    Due to the condition on $N$, given by~\cref{Eq:Cond_N_Ell1_Distance_1}, it follows that $\abs{ \hat{N}_i - N\mu_i } \leq \eps_1 N \mu_i/4$ for all $i \in \calI_\heavy$ with probability at least $1-\delta_1$.    
        
    We next bound $\Abs{ \barrestrictedmu - \tilde{\mu} }_1$ on the event that $\abs{ \hat{N}_i - N\mu_i } \leq \eps_1 N \mu_i/4$ for all $i \in \calI_\heavy$.
    Write $\hat{N}_i = \rbra{ 1 + \gamma_i } \mu_i N$, where $\abs{ \gamma_i } \leq \eps_1/4$. Then
    \begin{equation}
        \hat{N}\rbra{ \calI_\heavy } = \sum_{i \in \calI_\heavy} \hat{N}_i = \rbra{ 1+ \bar{\gamma} } \mu\rbra{\calI_\heavy}N,
    \end{equation}
    where $\bar{\gamma} \coloneqq \sum_{i \in \calI_\heavy}\gamma_i\mu_i/\mu\rbra{\calI_\heavy}$.
    As $\cbra{ \mu_i/\mu\rbra{\calI_\heavy} }_{i \in \calI_\heavy}$ is a probability distribution, $\abs{ \bar{\gamma} } \leq \eps_1/4$.    
    Using this notation, we obtain
    \begin{equation}
        \frac{\hat{N}_i}{\hat{N}\rbra*{ \calI_\heavy }} = \frac{\mu_i}{\mu\rbra*{ \calI_\heavy }}\frac{1+\gamma_i}{1+\bar{\gamma}}.
    \end{equation}
    
    We can now directly evaluate the $\ell_1$ distance between $\barrestrictedmu$ and $\tilde{\mu}$:
    \begin{align}
        \Abs*{ \barrestrictedmu - \tilde{\mu} }_1 &= \sum_{i \in \calI_\heavy} \abs*{ \frac{\mu_i}{\mu\rbra*{\calI_\heavy}} - \frac{\hat{N}_i}{\hat{N}\rbra*{\calI_\heavy}} }\\
        & = \sum_{i \in \calI_\heavy} \frac{\mu_i}{\mu\rbra*{ \calI_\heavy }} \abs*{ 1 - \frac{1+\gamma_i}{1+\bar{\gamma}} }\\
        & \leq \sum_{i \in \calI_\heavy} \frac{\mu_i}{\mu\rbra*{ \calI_\heavy }} \frac{\abs*{ \bar{\gamma} - \gamma_i }}{1+\bar{\gamma}}\\
        & \leq \sum_{i \in \calI_\heavy} \frac{\mu_i}{\mu\rbra*{ \calI_\heavy }} \frac{\abs*{ \bar{\gamma} } + \abs*{ \gamma_i }}{1+\bar{\gamma}}\\
        & \leq \frac{2\eps_1}{4-\eps_1}\\
        & \leq \eps_1,
    \end{align}
    where we have used $\abs{ \bar{\gamma} } \leq \eps_1/4$ in the penultimate line.
\end{proof}

\subsubsection{\texorpdfstring{Preliminaries for~\cref{Lemma:Ell1_Distance_2,Lemma:Ell1_Distance_3}}{Preliminaries for Lemmas}}

Before proving~\cref{Lemma:Ell1_Distance_2,Lemma:Ell1_Distance_3}, we establish a concentration bound for $\hat{N}\rbra{ \calI_\heavy }$.

\begin{lemma}\label[lemma]{Lemma:N_heavy}
    Let $\delta' \in \interval[open]{0}{1}$. If $N$ satisfies
    \begin{equation}
        N \geq \frac{8}{1 - \eta} \ln \frac{1}{\delta'},
    \end{equation}
    then $\hat{N}(\calI_\heavy) \geq  \rbra{ 1-\eta }N/2$ with probability at least $1-\delta'$.
\end{lemma}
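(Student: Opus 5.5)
The plan is a single multiplicative Chernoff lower-tail bound applied to $\hat{N}\rbra{\calI_\heavy}$, followed by a check that the resulting exponent beats $\ln\rbra{1/\delta'}$ under the stated condition on $N$.

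\textbf{Step 1: the mean.} The count $\hat{N}\rbra{\calI_\heavy}=\sum_{k=1}^{N}\mathbf{1}\sbra{i_k\in\calI_\heavy}$ is a sum of $N$ i.i.d.\ Bernoulli variables. Each has success probability $\mu\rbra{\calI_\heavy}$, so the mean is $\Ex\sbra{\hat{N}\rbra{\calI_\heavy}}=N\mu\rbra{\calI_\heavy}$. By \cref{Lemma:p_heavy_large}, $\mu\rbra{\calI_\heavy}\geq 1-\eta$, so the mean is at least $\rbra{1-\eta}N$. The target threshold $\rbra{1-\eta}N/2$ is therefore at most half the mean.

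\textbf{Step 2: Chernoff with $\delta=1/2$.} Because the threshold is at most half the mean,
\begin{align}
    \Prob\sbra*{\hat{N}\rbra{\calI_\heavy}<\tfrac{\rbra{1-\eta}N}{2}}
    \leq \Prob\sbra*{\hat{N}\rbra{\calI_\heavy}<\tfrac{1}{2}\Ex\sbra{\hat{N}\rbra{\calI_\heavy}}}
    \leq e^{-\Ex\sbra{\hat{N}\rbra{\calI_\heavy}}/8}.
\end{align}
This is the same form of the bound used in \cref{Lemma:calI_inclusion}. The mean lower bound from Step 1 then gives
\begin{align}
    e^{-\Ex\sbra{\hat{N}\rbra{\calI_\heavy}}/8}\leq e^{-\rbra{1-\eta}N/8}.
\end{align}

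\textbf{Step 3: apply the hypothesis.} The assumption $N\geq \frac{8}{1-\eta}\ln\frac{1}{\delta'}$ makes the right-hand side at most $\delta'$, which proves the claim.

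The only point needing care is the monotonicity step in Step 2. There we replace the event ``below $\rbra{1-\eta}N/2$'' by the larger event ``below half the true mean''. This is valid because $\mu\rbra{\calI_\heavy}\geq 1-\eta$ only makes the true-mean threshold larger. Using the true mean inside the exponent and then lower-bounding it is likewise valid, since the bound $e^{-\Ex/8}$ decreases as the mean grows.
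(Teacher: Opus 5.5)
Your argument matches the paper's proof step for step. Both use a multiplicative Chernoff lower-tail bound at half the mean $\mu\rbra{\calI_\heavy}N$, giving $e^{-\mu\rbra{\calI_\heavy}N/8}$, and then substitute $\mu\rbra{\calI_\heavy}\geq 1-\eta$ from \cref{Lemma:p_heavy_large}. The Chernoff part and your monotonicity remarks are sound. The problem is the imported bound $\mu\rbra{\calI_\heavy}\geq 1-\eta$, which is false for $\alpha>1$. So this step fails in your proof exactly as it does in the paper's, and the statement itself does not hold in the paper's setting.

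The proof of \cref{Lemma:p_heavy_large} uses $\mu_i^{1-\alpha}<\muth^{1-\alpha}$ for $\mu_i<\muth$. Since $1-\alpha<0$, the map $x\mapsto x^{1-\alpha}$ is decreasing, so the inequality actually points the other way. The truth is nearly the opposite of the lemma:
\begin{itemize}
\item If $\muth<1$, then $\muth^{\alpha-1}=\eta^{-1}\sum_i\mu_i^{\alpha}=\eta^{-1}\Ex_{I\sim\mu}\sbra{\mu_I^{\alpha-1}}$. Markov's inequality then gives $\mu\rbra{\calI_\heavy}=\Prob_{I\sim\mu}\sbra{\mu_I^{\alpha-1}\geq\muth^{\alpha-1}}\leq\eta$.
\item If $\muth=1$, the heavy set is empty unless $\mu$ is a point mass.
\end{itemize}
A concrete counterexample to the statement is the uniform distribution on $K\geq 2$ labels. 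There $2^{-\mathrm{H}_\alpha\rbra{\mu}}=1/K$ and $\eta^{-1/\rbra{\alpha-1}}>1$, so $\muth>1/K$ and $\calI_\heavy=\emptyset$. Hence $\hat{N}\rbra{\calI_\heavy}=0<\rbra{1-\eta}N/2$ with probability one, for every $N$. No concentration inequality can rescue the claim for $\alpha>1$. The threshold formula does yield $\mu\rbra{\calI_\heavy}\geq 1-\eta$ when $\alpha\in\interval[open]{0}{1}$, since then $1-\alpha>0$ and the paper's inequality has the correct direction. Under that restriction your proof, like the paper's, is complete.
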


\begin{proof}[Proof of~\cref{Lemma:N_heavy}]
    As $\Ex\sbra{ \hat{N}\rbra{ \calI_\heavy } } = \mu\rbra{ \calI_\heavy } N$, we obtain by the multiplicative Chernoff bound that
    \begin{equation}
        \Prob\sbra*{ \hat{N}\rbra*{ \calI_\heavy } < \frac{\mu\rbra*{ \calI_\heavy }}{2}N }
        \leq \exp\rbra*{ -\frac{\mu\rbra*{ \calI_\heavy } N}{8} }.
    \end{equation}    
    Moreover, as $\mu\rbra{ \calI_\heavy } \geq 1-\eta$ from~\cref{Lemma:p_heavy_large}, it follows that
    \begin{align}
        \Prob\sbra*{ \hat{N}\rbra*{ \calI_\heavy } < \frac{1-\eta}{2}N }
        & \leq \Prob\sbra*{ \hat{N}\rbra*{ \calI_\heavy } < \frac{\mu\rbra*{ \calI_\heavy }}{2}N }\\
        & \leq \exp\rbra*{ - \frac{\mu(\calI_\heavy) N}{8} }\\
        & \leq \exp\rbra*{ - \frac{1-\eta}{8}N }\\
        & \leq \delta',
    \end{align}
    where the last line follows by assumption.
\end{proof}

\subsubsection{Proof of \texorpdfstring{\cref{Lemma:Ell1_Distance_2}}{Lemma}}

\EllDistanceB*

\begin{proof}[Proof of~\cref{Lemma:Ell1_Distance_2}]
    Write $\hat{N}_i = t \hat{a}_i + r_i$, where $r_i \in \cbra{0, \ldots, t-1}$.
    By taking the summation over $i \in \calI_\heavy$, we obtain $\hat{N}\rbra{\calI_\heavy} = t\hat{A}\rbra{\calI_\heavy} + R$, where $R = \sum_{i \in \calI_\heavy} r_i$.
    This leads to
    \begin{equation}
        \hat{A}\rbra{\calI_\heavy} = \frac{\hat{N}\rbra{\calI_\heavy} - R}{t}.
    \end{equation}
    Note that 
    \begin{equation}\label{Eq:Domain_R}
        0 \leq R \leq \rbra*{ t-1 } \abs*{ \calI_\heavy }, \text{\ and\ } R \leq \hat{N}\rbra*{ \calI_\heavy }.
    \end{equation}
    Using this expression of $\hat{A}\rbra{\calI_\heavy}$, the distribution $\barrestrictednu = \cbra{\hat{a}_i/\hat{A}\rbra{\calI_\heavy} }_{i \in \calI_\heavy}$ is given by
    \begin{equation}
        \frac{\hat{a}_i}{\hat{A}\rbra{\calI_\heavy}} =  \frac{\hat{N}_i - r_i}{\hat{N}\rbra{\calI_\heavy}-R}.
    \end{equation}
    
    Using the triangle inequality, we have
    \begin{align}
        \Abs*{ \tilde{\mu} - \barrestrictednu }_1
        & = \sum_{i \in \calI_\heavy} \abs*{ \frac{\hat{N}_i}{\hat{N}\rbra*{\calI_\heavy}} - \frac{\hat{a}_i}{\hat{A}\rbra*{\calI_\heavy}} }\\
        & = \sum_{i \in \calI_\heavy} \abs*{ \frac{\hat{N}_iR}{\hat{N}\rbra*{\calI_\heavy}\rbra*{\hat{N}\rbra*{\calI_\heavy}-R}} - \frac{r_i}{\hat{N}\rbra*{\calI_\heavy}-R} }\\
        & \leq \sum_{i \in \calI_\heavy} \frac{\hat{N}_iR}{\hat{N}\rbra*{ \calI_\heavy }\rbra*{ \hat{N}\rbra*{\calI_\heavy}-R }} +\sum_{i \in \calI_\heavy} \frac{r_i}{\hat{N}\rbra*{\calI_\heavy}-R}\\
        & \leq \frac{2R}{\hat{N}\rbra*{ \calI_\heavy } - R}\\
        & \leq \frac{2\rbra*{t-1}\abs*{ \calI_\heavy }}{\hat{N}\rbra*{\calI_\heavy} - \rbra*{t-1} \abs*{ \calI_\heavy }}, \label{Eq:rounding_distance_bound}
    \end{align}
    where we have used~\cref{Eq:Domain_R} in the third and the second-to-last lines. 
        
    As $N \geq \frac{8}{1-\eta} \ln \frac{1}{\delta_2}$ by assumption, it follows from~\cref{Lemma:N_heavy} that, by setting $\delta'$ to $\delta_2$, $\hat{N}\rbra{\calI_\heavy} \geq \rbra{1-\eta}N/2$ with probability at least $1-\delta_2$.
    Furthermore, $\rbra{1-\eta}N/2 \geq 4t/\rbra{\eps_2 \muth}$ by assumption. Thus, under the conditions in the lemma, it holds that
    \begin{equation}
        \hat{N}\rbra{\calI_\heavy} \geq \frac{4t}{\eps_2 \muth},    
    \end{equation} 
    with probability at least $1-\delta_2$. When this is the case, it also holds that
    \begin{equation}\label{Eq:Nheavy_muth}
        \hat{N}\rbra*{ \calI_\heavy } \geq \frac{4t \abs*{ \calI_\heavy } }{\eps_2},
    \end{equation} 
    as $\muth^{-1} \geq \abs{ \calI_\heavy }$ due to~\cref{Lemma:Size_calI}.
    
    The desired statement is obtained by substituting~\cref{Eq:Nheavy_muth} into~\cref{Eq:rounding_distance_bound}:
    \begin{equation}
        \Abs*{ \tilde{\mu} - \barrestrictednu }_1 \leq \frac{2\rbra*{t-1}\abs*{ \calI_\heavy }}{4t\abs*{ \calI_\heavy }/\eps_2 - \rbra*{t-1} \abs*{ \calI_\heavy }} \leq \eps_2.
    \end{equation}   
\end{proof}

\subsubsection{Proof of \texorpdfstring{\cref{Lemma:Ell1_Distance_3}}{Lemma}}

\EllDistanceC*

\begin{proof}[Proof of~\cref{Lemma:Ell1_Distance_3}]
    As $\calI_\heavy \subseteq \hat{\calI}_\heavy$, it follows that
    \begin{align}
        \hat{\nu} - \barrestrictednu
        = \cbra*{ \frac{\hat{a}_i}{\hat{A}} - \frac{\hat{a}_i}{\hat{A}\rbra*{\calI_\heavy}} }_{i \in \calI_\heavy} \bigcup
        \cbra*{ \frac{\hat{a}_i}{\hat{A}} }_{i \in \hat{\calI}_\heavy \setminus \calI_\heavy}
    \end{align}
    Every component in the first collection is non-positive because $\hat{A}\rbra{\calI_\heavy}\leq\hat{A}$.
    Since the signed measure has total mass zero, its $\ell_1$-norm is twice the sum of its positive components. Hence,
    \begin{equation}\label{Eq:A_prime_hat_A}
        \Abs*{ \barrestrictednu - \hat{\nu} }_1 = 2 \frac{A'}{\hat{A}},
    \end{equation}
    where $A' \coloneqq \sum_{i \in \hat{\calI}_\heavy \setminus \calI_\heavy}\hat{a}_i$. Below, we provide an upper bound on $A'$ and a lower bound on $\hat{A}$.
    
    Since $\hat{a}_i = \floor{ \hat{N}_i/t } \leq \hat{N}_i/t$, it follows that
    \begin{equation}
        A' \leq \sum_{i \in \hat{\calI}_\heavy \setminus \calI_\heavy} \frac{\hat{N}_i}{t} \leq \sum_{i \notin \calI_\heavy} \frac{\hat{N}_i}{t} = \frac{N-\hat{N}(\calI_\heavy)}{t}.
    \end{equation}
    Also, we have $\Ex\sbra{ \hat{N}\rbra{ \calI_\heavy } } = \mu\rbra{ \calI_\heavy }N$.
    Using the multiplicative Chernoff bound, it holds for $\beta \geq 1$ that
    \begin{align}
        \Prob\sbra*{ N-\hat{N}\rbra*{ \calI_\heavy } \geq \rbra*{ 1+\beta }\rbra*{ 1-\mu\rbra*{ \calI_\heavy } } N }\\
        \leq \exp\rbra*{ -\frac{\beta}{3} \rbra*{ 1-\mu\rbra*{ \calI_\heavy } } N }.
    \end{align}
    Setting $\beta$ to $\frac{2N\eta}{ \rbra{ 1-\mu\rbra{ \calI_\heavy } } N} -1$ and using $1 - \mu\rbra{ \calI_\heavy } \leq \eta$ from~\cref{Lemma:p_heavy_large}, we obtain
    \begin{equation}
        \Prob\sbra*{ N-\hat{N}\rbra*{ \calI_\heavy } \geq  2N\eta } \leq \exp\rbra*{ -\frac{N\eta}{3} }.
    \end{equation}
    As $N\eta \geq 3 \ln \frac{2}{\delta_3}$ by assumption, $N-\hat{N}\rbra{\calI_\heavy} < 2N\eta$ with probability at least $1 - \delta_3/2$.
    When this is the case, 
    \begin{equation}\label{Eq:Upper_Bound_A_prime}
        A' \leq \frac{2N\eta}{t}.
    \end{equation}
    
    We next consider $\hat{A}$. As $\calI_\heavy \subseteq \hat{\calI}_\heavy$, we have
    \begin{align}
        \hat{A} 
        & = \sum_{i \in \hat{\calI}_\heavy} \hat{a}_i\\
        & \geq \sum_{i \in \calI_\heavy} \hat{a}_i\\
        & \geq \sum_{i \in \calI_\heavy} \floor*{ \frac{\hat{N}_i}{t} }\\
        & \geq \sum_{i \in \calI_\heavy} \rbra*{ \frac{\hat{N}_i}{t} -1 }\\
        & \geq \frac{\hat{N}\rbra*{\calI_\heavy}}{t} - \abs*{ \calI_\heavy }.
    \end{align}
    As $N \geq \frac{8}{1-\eta} \ln \frac{2}{\delta_3}$ by assumption, it follows from~\cref{Lemma:N_heavy} that $\hat{N}\rbra{ \calI_\heavy } \geq \rbra{ 1-\eta }N/2$ with probability at least $1-\delta_3/2$.
    Moreover, as $N \geq \frac{8t}{\rbra{1-\eta}\muth}$ by assumption and $\abs{ \calI_\heavy } \leq \muth^{-1}$ by~\cref{Lemma:Size_calI}, we have $\abs{ \calI_\heavy } \leq \frac{\rbra{1-\eta}N}{8t}$. When $\hat{N}\rbra{\calI_\heavy} \geq \rbra{1-\eta}N/2$, this gives $\abs{ \calI_\heavy } \leq \frac{\hat{N}\rbra{ \calI_\heavy }}{4t}$.
    Combining these, it holds that
    \begin{equation}\label{Eq:Lower_Bound_hat_A}
        \hat{A} \geq \frac{\hat{N}\rbra{ \calI_\heavy }}{t} - \frac{\hat{N}\rbra{ \calI_\heavy }}{4t} = \frac{3\hat{N}(\calI_\heavy)}{4t} \geq \frac{3(1-\eta)}{8} \frac{N}{t}.
    \end{equation}
    
    Substituting~\cref{Eq:Upper_Bound_A_prime,Eq:Lower_Bound_hat_A} into~\cref{Eq:A_prime_hat_A} and using the union bound gives
    \begin{equation}
        \Abs*{ \barrestrictednu - \hat{\nu} }_1 = \frac{2A'}{\hat{A}} \leq \frac{2 \cdot 2N\eta/t}{3\rbra*{1-\eta}N/\rbra*{8t}} = \frac{32\eta}{3\rbra*{1-\eta}},
    \end{equation}
    with probability at least $1-\delta_3$.
\end{proof}

\section*{Acknowledgments}

J.~B.\ was supported by the Quantum Advantage Pathfinder (QAP) project of UKRI Engineering and Physical Sciences Research Council under grant No.~EP/X026167/1.
W.~F.\ was supported by the Engineering and Physical Sciences Research Council grant EP/X025551/1.
Y.~N.\ was supported by JST PRESTO Grant Number JPMJPR2456 and JST CREST Grant Number JPMJCR23I3.
Q.~W.\ was supported by startup funding from Shanghai Jiao Tong University. 

\addcontentsline{toc}{section}{References}

\bibliographystyle{alphaurl}
\bibliography{main}

@article{RS09,
    author = {Roy, Aidan and Scott, A. J.},
    title = {Unitary designs and codes},
    doi = {10.1007/s10623-009-9290-2},
    journal = {Designs, Codes and Cryptography},
    volume = {53},
    pages = {13--31},
    year = {2009},
}

@article{RS07,
  author        = {Roy, Aidan and Scott, A. J.},
  title         = {Weighted Complex Projective 2-Designs from Bases: Optimal State Determination by Orthogonal Measurements},
  journal       = {Journal of Mathematical Physics},
  volume        = {48},
  number        = {7},
  pages         = {072110},
  year          = {2007},
  doi           = {10.1063/1.2748617},
}

@article{NHMW17,
    author = {Nakata, Yoshifumi and Hirche, Christoph and Morgan, Ciara and Winter, Andreas},
    title = {Decoupling with random diagonal unitaries},
    doi = {10.22331/q-2017-07-21-18},
    journal = {Quantum},
    volume = {1},
    pages = {18},
    year = {2017}
}

@inproceedings{AE07,
  author        = {Ambainis, Andris and Emerson, Joseph},
  title         = {Quantum {$t$}-Designs: {$t$}-Wise Independence in the Quantum World},
  booktitle     = {Proceedings of the 22nd Annual IEEE Conference on Computational Complexity},
  pages         = {129--140},
  year          = {2007},
  doi           = {10.1109/CCC.2007.26},
}

@article{HC22,
  author        = {Ho, Wen Wei and Choi, Soonwon},
  title         = {Exact emergent quantum state designs from quantum chaotic dynamics},
  journal       = {Physical Review Letters},
  volume        = {128},
  pages         = {060601},
  year          = {2022},
  doi           = {10.1103/PhysRevLett.128.060601},
}

@article{CL22,
  author        = {Claeys, Pieter W. and Lamacraft, Austen},
  title         = {Emergent quantum state designs and biunitarity in dual-unitary circuit dynamics},
  journal       = {Quantum},
  volume        = {6},
  pages         = {738},
  year          = {2022},
  doi           = {10.22331/q-2022-06-15-738},
}

@article{Varikuti:2024xeq,
  author        = {Varikuti, Naga Dileep and Bandyopadhyay, Soumik},
  title         = {Unraveling the emergence of quantum state designs in systems with symmetry},
  journal       = {Quantum},
  volume        = {8},
  pages         = {1456},
  year          = {2024},
  doi           = {10.22331/q-2024-08-29-1456},
}

@article{MCR26,
  author        = {Mandal, Saptarshi and Claeys, Pieter W. and Roy, Sthitadhi},
  title         = {Partial projected ensembles and spatiotemporal structure of information scrambling},
  journal       = {Physical Review B},
  volume        = {113},
  number        = {2},
  pages         = {024303},
  year          = {2026},
  doi           = {10.1103/h2q2-yfqs},
}

@article{MSE+24,
  author        = {Mark, Daniel K. and Surace, Federica and Elben, Andreas and Shaw, Adam L. and Choi, Joonhee and Refael, Gil and Endres, Manuel and Choi, Soonwon},
  title         = {Maximum Entropy Principle in Deep Thermalization and in Hilbert-Space Ergodicity},
  journal       = {Physical Review X},
  volume        = {14},
  number        = {4},
  pages         = {041051},
  year          = {2024},
  doi           = {10.1103/PhysRevX.14.041051},
}

@article{CMH+23,
  author        = {Cotler, Jordan S. and Mark, Daniel K. and Huang, Hsin-Yuan and Hern{\'a}ndez, Felipe and Choi, Joonhee and Shaw, Adam L. and Endres, Manuel and Choi, Soonwon},
  title         = {Emergent Quantum State Designs from Individual Many-Body Wave Functions},
  journal       = {PRX Quantum},
  volume        = {4},
  number        = {1},
  pages         = {010311},
  year          = {2023},
  doi           = {10.1103/PRXQuantum.4.010311},
}

@article{Safranek:2019nwk,
  author        = {{\v{S}}afr{\'a}nek, Dominik and Deutsch, J. M. and Aguirre, Anthony},
  title         = {Quantum coarse-grained entropy and thermodynamics},
  journal       = {Physical Review A},
  volume        = {99},
  pages         = {010101},
  year          = {2019},
  doi           = {10.1103/PhysRevA.99.010101},
}

@article{Safranek:2019wml,
  author        = {{\v{S}}afr{\'a}nek, Dominik and Deutsch, J. M. and Aguirre, Anthony},
  title         = {Quantum coarse-grained entropy and thermalization in closed systems},
  journal       = {Physical Review A},
  volume        = {99},
  pages         = {012103},
  year          = {2019},
  doi           = {10.1103/PhysRevA.99.012103},
}

@article{Safranek:2020tgg,
  author        = {{\v{S}}afr{\'a}nek, Dominik and Aguirre, Anthony and Schindler, Joseph and Deutsch, J. M.},
  title         = {A Brief Introduction to Observational Entropy},
  journal       = {Foundations of Physics},
  volume        = {51},
  number        = {5},
  pages         = {101},
  year          = {2021},
  doi           = {10.1007/s10701-021-00498-x},
}

@article{Sinha:2023rwr,
  author        = {Sinha, Shivam and Majumdar, Nripendra and Aravinda, S.},
  title         = {Generalized $\alpha $-observational entropy},
  journal       = {Quantum Information Processing},
  volume        = {24},
  number        = {6},
  pages         = {189},
  year          = {2025},
  doi           = {10.1007/s11128-025-04785-8},
}

@article{CSM+23,
  author  = {Choi, Joonhee and Shaw, Adam L. and Madjarov, Ivaylo S. and Xie, Xin and Finkelstein, Ran and Covey, Jacob P. and Cotler, Jordan S. and Mark, Daniel K. and Huang, Hsin-Yuan and Kale, Anant and Pichler, Hannes and Brand\~{a}o, Fernando G.S.L. and Choi, Soonwon and Endres, Manuel},
  title   = {Preparing random states and benchmarking with many-body quantum chaos},
  journal = {Nature},
  volume  = {613},
  number  = {7944},
  pages   = {468--473},
  year    = {2023},
  doi     = {10.1038/s41586-022-05442-1},
}

@article{SCC+24,
  author  = {Shaw, Adam L. and Chen, Zhuo and Choi, Joonhee and Mark, Daniel K. and Scholl, Pascal and Finkelstein, Ran and Elben, Andreas and Choi, Soonwon and Endres, Manuel},
  title   = {Benchmarking highly entangled states on a 60-atom analogue quantum simulator},
  journal = {Nature},
  volume  = {628},
  pages   = {71--77},
  year    = {2024},
  doi     = {10.1038/s41586-024-07173-x},
}

@article{MHS+25,
  author    = {Mok, Wai-Keong and Haug, Tobias and Shaw, Adam L. and Endres, Manuel and Preskill, John},
  title     = {Optimal Conversion from Classical to Quantum Randomness via Quantum Chaos},
  journal   = {Physical Review Letters},
  volume    = {134},
  pages     = {180403},
  year      = {2025},
  doi       = {10.1103/PhysRevLett.134.180403},
  issue     = {18},
  numpages  = {7},
}

@article{SJA19,
  author  = {Sim, Sukin and Johnson, Peter D. and Aspuru-Guzik, Al{\'a}n},
  title   = {Expressibility and Entangling Capability of Parameterized Quantum Circuits for Hybrid Quantum-Classical Algorithms},
  journal = {Advanced Quantum Technologies},
  volume  = {2},
  number  = {12},
  pages   = {1900070},
  year    = {2019},
  doi     = {10.1002/qute.201900070},
}

@article{HSCC22,
  author    = {Holmes, Zo{\"e} and Sharma, Kunal and Cerezo, Marco and Coles, Patrick J.},
  title     = {Connecting Ansatz Expressibility to Gradient Magnitudes and Barren Plateaus},
  journal   = {PRX Quantum},
  volume    = {3},
  pages     = {010313},
  year      = {2022},
  publisher = {American Physical Society},
  doi       = {10.1103/PRXQuantum.3.010313},
}

@misc{GP22,
  author       = {Gily{\'e}n, Andr{\'a}s and Poremba, Alexander},
  title        = {Improved quantum algorithms for fidelity estimation},
  year         = {2022},
  eprint       = {2203.15993},
  howpublished = {arXiv preprint},
}

@inproceedings{GSLW19,
  author        = {Gily{\'e}n, Andr{\'a}s and Su, Yuan and Low, Guang Hao and Wiebe, Nathan},
  title         = {Quantum singular value transformation and beyond: Exponential improvements for quantum matrix arithmetics},
  booktitle     = {Proceedings of the 51st Annual ACM SIGACT Symposium on Theory of Computing},
  pages         = {193--204},
  year          = {2019},
  doi           = {10.1145/3313276.3316366},
}

@article{RASW23,
  author   = {Rethinasamy, Soorya and Agarwal, Rochisha and Sharma, Kunal and Wilde, Mark M.},
  title    = {Estimating distinguishability measures on quantum computers},
  journal  = {Physical Review A},
  volume   = {108},
  number   = {1},
  pages    = {012409},
  year     = {2023},
  doi      = {10.1103/PhysRevA.108.012409},
  numpages = {36},
}

@article{WZ24,
  author    = {Wang, Qisheng and Zhang, Zhicheng},
  title     = {Fast quantum algorithms for trace distance estimation},
  journal   = {IEEE Transactions on Information Theory},
  volume    = {70},
  number    = {4},
  pages     = {2720--2733},
  year      = {2024},
  publisher = {IEEE},
  doi       = {10.1109/TIT.2023.3321121},
}

@incollection{BHMT02,
  author    = {Brassard, Gilles and H{\o}yer, Peter and Mosca, Michele and Tapp, Alain},
  title     = {Quantum amplitude amplification and estimation},
  booktitle = {Quantum Computation and Information},
  series    = {Contemporary Mathematics},
  volume    = {305},
  number    = {},
  pages     = {53--74},
  year      = {2002},
  publisher = {AMS},
  doi       = {10.1090/conm/305/05215},
}

@inproceedings{Wat02,
  author    = {Watrous, John},
  title     = {Limits on the power of quantum statistical zero-knowledge},
  booktitle = {Proceedings of the 43rd Annual IEEE Symposium on Foundations of Computer Science},
  pages     = {459--468},
  year      = {2002},
  doi       = {10.1109/SFCS.2002.1181970},
}

@article{Wan24,
  author  = {Wang, Qisheng},
  title   = {Optimal trace distance and fidelity estimations for pure quantum states},
  journal = {IEEE Transactions on Information Theory},
  volume  = {70},
  number  = {12},
  pages   = {8791--8805},
  year    = {2024},
  doi     = {10.1109/TIT.2024.3447915},
}

@article{WGL+24,
  author  = {Wang, Qisheng and Guan, Ji and Liu, Junyi and Zhang, Zhicheng and Ying, Mingsheng},
  title   = {New Quantum Algorithms for Computing Quantum Entropies and Distances},
  journal = {IEEE Transactions on Information Theory},
  volume  = {70},
  number  = {8},
  pages   = {5653--5680},
  year    = {2024},
  doi     = {10.1109/TIT.2024.3399014},
}

@article{WZC+23,
  author  = {Wang, Qisheng and Zhang, Zhicheng and Chen, Kean and Guan, Ji and Fang, Wang and Liu, Junyi and Ying, Mingsheng},
  title   = {Quantum Algorithm for Fidelity Estimation},
  journal = {IEEE Transactions on Information Theory},
  volume  = {69},
  number  = {1},
  pages   = {273--282},
  year    = {2023},
  doi     = {10.1109/TIT.2022.3203985},
}

@inproceedings{FW25,
  author    = {Fang, Wang and Wang, Qisheng},
  title     = {Optimal quantum algorithm for estimating fidelity to a pure state},
  booktitle = {Proceedings of the 33rd Annual European Symposium on Algorithms},
  pages     = {4:1--4:12},
  year      = {2025},
  doi       = {10.4230/LIPIcs.ESA.2025.4},
}

@article{BASTS10,
  author  = {Ben-Aroya, Avraham and Schwartz, Oded and Ta-Shma, Amnon},
  title   = {Quantum expanders: Motivation and construction},
  journal = {Theory of Computing},
  volume  = {6},
  pages   = {47--79},
  year    = {2010},
  doi     = {10.4086/toc.2010.v006a003},
}

@inproceedings{GL20,
  author    = {Gily{\'e}n, Andr{\'a}s and Li, Tongyang},
  title     = {Distributional property testing in a quantum world},
  booktitle = {Proceedings of the 11th Innovations in Theoretical Computer Science Conference},
  pages     = {25:1--25:19},
  year      = {2020},
  doi       = {10.4230/LIPIcs.ITCS.2020.25},
}

@article{KLL+17,
  author  = {Kimmel, Shelby and Lin, Cedric Yen-Yu and Low, Guang Hao and Ozols, Maris and Yoder, Theodore J.},
  title   = {Hamiltonian simulation with optimal sample complexity},
  journal = {npj Quantum Information},
  volume  = {3},
  number  = {1},
  pages   = {13},
  year    = {2017},
  doi     = {10.1038/s41534-017-0013-7},
}

@misc{GHS21,
  author       = {Gur, Tom and Hsieh, Min-Hsiu and Subramanian, Sathyawageeswar},
  title        = {Sublinear quantum algorithms for estimating {von Neumann} entropy},
  year         = {2021},
  eprint       = {2111.11139},
  howpublished = {arXiv preprint},
}

@article{SH21,
  author  = {Subramanian, Sathyawageeswar and Hsieh, Min-Hsiu},
  title   = {Quantum algorithm for estimating $\alpha$-{Renyi} entropies of quantum states},
  journal = {Physical Review A},
  volume  = {104},
  number  = {2},
  pages   = {022428},
  year    = {2021},
  doi     = {10.1103/PhysRevA.104.022428},
}

@misc{Wan25,
  author       = {Wang, Qisheng},
  title        = {Information-theoretic lower bounds for approximating monomials via optimal quantum {Tsallis} entropy estimation},
  year         = {2025},
  eprint       = {2509.03496},
  howpublished = {arXiv preprint},
}

@article{EAO+02,
  author  = {Ekert, Artur K. and Alves, Carolina Moura and Oi, Daniel K. L. and Horodecki, Micha{\l} and Horodecki, Pawe{\l} and Kwek, L. C.},
  title   = {Direct estimations of linear and nonlinear functionals of a quantum state},
  journal = {Physical Review Letters},
  volume  = {88},
  number  = {21},
  pages   = {217901},
  year    = {2002},
  doi     = {10.1103/PhysRevLett.88.217901},
}

@misc{CWYZ26,
  author={Chen, Kean and Wang, Qisheng and Yu, Zhan and Zhang, Zhicheng},
  journal={IEEE Transactions on Information Theory}, 
  title={Simultaneous Estimation of Nonlinear Functionals of a Quantum State}, 
  year={2026},
  volume={},
  number={},
  pages={1-1},
  doi={10.1109/TIT.2026.3699531}
}

@article{Liu25,
  author  = {Liu, Yupan},
  title   = {Quantum state testing beyond the polarizing regime and quantum triangular discrimination},
  journal = {Computational Complexity},
  volume  = {34},
  number  = {},
  pages   = {11},
  year    = {2025},
  doi     = {10.1007/s00037-025-00273-8},
}

@article{Hoe63,
  author  = {Hoeffding, Wassily},
  title   = {Probability inequalities for sums of bounded random variables},
  journal = {Journal of the American Statistical Association},
  volume  = {58},
  number  = {301},
  pages   = {13--30},
  year    = {1963},
  doi     = {10.1080/01621459.1963.10500830},
}

@article{NTKD25,
  author    = {Nakata, Yoshifumi and Takeuchi, Yuki and Kliesch, Martin and Darmawan, Andrew},
  title     = {Computational Complexity of Unitary and State Design Properties},
  journal   = {PRX Quantum},
  volume    = {6},
  number    = {3},
  pages     = {030345},
  year      = {2025},
  month     = {Sep},
  publisher = {American Physical Society},
  doi       = {10.1103/21vm-bz3t},
  numpages  = {36},
}

@article{Mel24,
  author    = {Mele, Antonio Anna},
  title     = {Introduction to {H}aar Measure Tools in Quantum Information: {A} Beginner's Tutorial},
  journal   = {Quantum},
  volume    = {8},
  pages     = {1340},
  year      = {2024},
  doi       = {10.22331/q-2024-05-08-1340},
}

@article{SV14,
  author  = {Sushant Sachdeva and Nisheeth K. Vishnoi},
  title   = {Faster Algorithms via Approximation Theory},
  journal = {Foundations and Trends in Theoretical Computer Science},
  volume  = {9},
  number  = {2},
  pages   = {125-210},
  year    = {2014},
  doi     = {10.1561/0400000065},
}

@inproceedings{Bel19,
  author    = {Belovs, Aleksandrs},
  title     = {Quantum algorithms for classical probability distributions},
  booktitle = {Proceedings of the 27th Annual European Symposium on Algorithms},
  pages     = {16:1--16:11},
  year      = {2019},
  doi       = {10.4230/LIPIcs.ESA.2019.16},
}

@article{Zhu22,
  author  = {Zhu, Huangjun},
  title   = {Quantum measurements in the light of quantum state estimation},
  journal = {PRX Quantum},
  volume  = {3},
  number  = {3},
  pages   = {030306},
  year    = {2022},
  doi     = {10.1103/PRXQuantum.3.030306},
}

@article{ATS07,
  author  = {Aharonov, Dorit and Ta-Shma, Amnon},
  title   = {Adiabatic quantum state generation},
  journal = {SIAM Journal on Computing},
  volume  = {37},
  number  = {1},
  pages   = {47--82},
  year    = {2007},
  doi     = {10.1137/060648829},
}

@article{AS05,
  author  = {Atici, Alp and Servedio, Rocco A.},
  title   = {Improved bounds on quantum learning algorithms},
  journal = {Quantum Information Processing},
  volume  = {4},
  number  = {5},
  pages   = {355--386},
  year    = {2005},
  doi     = {10.1007/s11128-005-0001-2},
}

@article{Hel67,
  author  = {Helstrom, Carl W.},
  title   = {Detection theory and quantum mechanics},
  journal = {Information and Control},
  volume  = {10},
  number  = {3},
  pages   = {254--291},
  year    = {1967},
  doi     = {10.1016/S0019-9958(67)90302-6},
}

@article{Hol73,
  author  = {Holevo, Alexander S.},
  title   = {Statistical decision theory for quantum systems},
  journal = {Journal of Multivariate Analysis},
  volume  = {3},
  number  = {4},
  pages   = {337--394},
  year    = {1973},
  doi     = {10.1016/0047-259X(73)90028-6},
}

@inproceedings{vACGN23,
  author    = {van Apeldoorn, Joran and Cornelissen, Arjan and Gily{\'{e}}n, Andr{\'{a}}s and Nannicini, Giacomo},
  title     = {Quantum tomography using state-preparation unitaries},
  booktitle = {Proceedings of the 2023 Annual ACM-SIAM Symposium on Discrete Algorithms},
  pages     = {1265--1318},
  year      = {2023},
  doi       = {10.1137/1.9781611977554.ch47},
}

@article{BJ98,
  author  = {Bshouty, Nader H. and Jackson, Jeffrey C.},
  title   = {Learning {DNF} over the Uniform Distribution Using a Quantum Example Oracle},
  journal = {SIAM Journal on Computing},
  volume  = {28},
  number  = {3},
  pages   = {1136--1153},
  year    = {1998},
  doi     = {10.1137/S0097539795293123},
}

@misc{CWZ25,
  author       = {Chen, Kean and Wang, Qisheng and Zhang, Zhicheng},
  title        = {A list of complexity bounds for property testing by quantum sample-to-query lifting},
  year         = {2025},
  eprint       = {2512.01971},
  howpublished = {arXiv preprint},
}

@article{IH23,
  author  = {Ippoliti, Matteo and Ho, Wen Wei},
  title   = {Dynamical Purification and the Emergence of Quantum State Designs from the Projected Ensemble},
  journal = {PRX Quantum},
  volume  = {4},
  pages   = {030322},
  year    = {2023},
  doi     = {10.1103/PRXQuantum.4.030322},
  issue   = {3},
}

@inproceedings{JLS18,
  author    = {Ji, Zhengfeng and Liu, Yi-Kai and Song, Fang},
  title     = {Pseudorandom quantum states},
  booktitle = {Proceedings of the Advances in Cryptology -- CRYPTO 2018},
  volume    = {10993},
  pages     = {126--152},
  year      = {2018},
  doi       = {10.1007/978-3-319-96878-0_5},
}

@article{HLSW04,
  author  = {Hayden, Patrick and Leung, Debbie and Shor, Peter W. and Winter, Andreas},
  title   = {Randomizing quantum states: Constructions and applications},
  journal = {Communications in Mathematical Physics},
  volume  = {250},
  number  = {2},
  pages   = {371--391},
  year    = {2004},
  doi     = {10.1007/s00220-004-1087-6},
}

@inproceedings{AS04,
  author    = {Ambainis, Andris and Smith, Adam},
  title     = {Small pseudo-random families of matrices: {Derandomizing} approximate quantum encryption},
  booktitle = {Proceedings of the 8th International Workshop on Approximation, Randomization, and Combinatorial Optimization},
  volume    = {3122},
  pages     = {249--260},
  year      = {2004},
  doi       = {10.1007/978-3-540-27821-4_23},
}

@inproceedings{Kre21,
  author    = {Kretschmer, William},
  title     = {Quantum pseudorandomness and classical complexity},
  booktitle = {Proceedings of the 16th Conference on the Theory of Quantum Computation, Communication and Cryptography},
  volume    = {197},
  pages     = {2:1--2:20},
  year      = {2021},
  doi       = {10.4230/LIPIcs.TQC.2021.2},
}

@inproceedings{MY22,
  author    = {Morimae, Tomoyuki and Yamakawa, Takashi},
  title     = {Quantum commitments and signatures without one-way functions},
  booktitle = {Proceedings of the Advances in Cryptology -- CRYPTO 2022},
  volume    = {13507},
  pages     = {269--295},
  year      = {2022},
  doi       = {10.1007/978-3-031-15802-5_10},
}

@inproceedings{AQY22,
  author    = {Ananth, Prabhanjan and Qian, Luowen and Yuen, Henry},
  title     = {Cryptography from pseudorandom quantum states},
  booktitle = {Advances in Cryptology -- CRYPTO 2022},
  volume    = {13507},
  pages     = {208--236},
  year      = {2022},
  doi       = {10.1007/978-3-031-15802-5_8},
}

@inproceedings{MH25,
  author    = {Ma, Fermi and Huang, Hsin-Yuan},
  title     = {How to construct random unitaries},
  booktitle = {Proceedings of the 57th Annual ACM Symposium on Theory of Computing},
  pages     = {806--809},
  year      = {2025},
  doi       = {10.1145/3717823.3718254},
}

@article{SHH25,
  author  = {Schuster, Thomas and Haferkamp, Jonas and Huang, Hsin-Yuan},
  title   = {Random unitaries in extremely low depth},
  journal = {Science},
  volume  = {389},
  number  = {6755},
  pages   = {92--96},
  year    = {2025},
  doi     = {10.1126/science.adv8590},
}

@inproceedings{Sen06,
  author    = {Sen, Pranab},
  title     = {Random measurement bases, quantum state distinction and applications to the hidden subgroup problem},
  booktitle = {Proceedings of the 21st Annual IEEE Conference on Computational Complexity},
  pages     = {274--287},
  year      = {2006},
  doi       = {10.1109/CCC.2006.37},
}

@article{BH13,
  author  = {Brand{\~a}o, Fernando G. S. L. and Horodecki, Micha{\l}},
  title   = {Exponential quantum speed-ups are generic},
  journal = {Quantum Information and Computation},
  volume  = {13},
  number  = {11-12},
  pages   = {901--924},
  year    = {2013},
  doi     = {10.26421/QIC13.11-12-1},
}

@article{BIS+18,
  author  = {Boixo, Sergio and Isakov, Sergei V. and Smelyanskiy, Vadim N. and Babbush, Ryan and Ding, Nan and Jiang, Zhang and Bremner, Michael J. and Martinis, John M. and Neven, Hartmut},
  title   = {Characterizing quantum supremacy in near-term devices},
  journal = {Nature Physics},
  volume  = {14},
  number  = {6},
  pages   = {595--600},
  year    = {2018},
  doi     = {10.1038/s41567-018-0124-x},
}

@article{AAB+19,
  author  = {Arute, Frank and Arya, Kunal and Babbush, Ryan and Bacon, Dave and Bardin, Joseph C. and Barends, Rami and Biswas, Rupak and Boixo, Sergio and Brandao, Fernando G. S. L. and Buell, David A. and Burkett, Brian and Chen, Yu and Chen, Zijun and Chiaro, Ben and Collins, Roberto and Courtney, William and Dunsworth, Andrew and Farhi, Edward and Foxen, Brooks and Fowler, Austin and Gidney, Craig and Giustina, Marissa and Graff, Rob and Guerin, Keith and Habegger, Steve and Harrigan, Matthew P. and Hartmann, Michael J. and Ho, Alan and Hoffmann, Markus and Huang, Trent and Humble, Travis S. and Isakov, Sergei V. and Jeffrey, Evan and Jiang, Zhang and Kafri, Dvir and Kechedzhi, Kostyantyn and Kelly, Julian and Klimov, Paul V. and Knysh, Sergey and Korotkov, Alexander and Kostritsa, Fedor and Landhuis, David and Lindmark, Mike and Lucero, Erik and Lyakh, Dmitry and Mandrà, Salvatore and McClean, Jarrod R. and McEwen, Matthew and Megrant, Anthony and Mi, Xiao and Michielsen, Kristel and Mohseni, Masoud and Mutus, Josh and Naaman, Ofer and Neeley, Matthew and Neill, Charles and Niu, Murphy Yuezhen and Ostby, Eric and Petukhov, Andre and Platt, John C. and Quintana, Chris and Rieffel, Eleanor G. and Roushan, Pedram and Rubin, Nicholas C. and Sank, Daniel and Satzinger, Kevin J. and Smelyanskiy, Vadim and Sung, Kevin J. and Trevithick, Matthew D. and Vainsencher, Amit and Villalonga, Benjamin and White, Theodore and Yao, Z. Jamie and Yeh, Ping and Zalcman, Adam and Neven, Hartmut and Martinis, John M},
  title   = {Quantum supremacy using a programmable superconducting processor},
  journal = {Nature},
  volume  = {574},
  number  = {7779},
  pages   = {505--510},
  year    = {2019},
  doi     = {10.1038/s41586-019-1666-5},
}

@article{BFNV19,
  author  = {Bouland, Adam and Fefferman, Bill and Nirkhe, Chinmay and Vazirani, Umesh},
  title   = {On the complexity and verification of quantum random circuit sampling},
  journal = {Nature Physics},
  volume  = {15},
  number  = {2},
  pages   = {159--163},
  year    = {2019},
  doi     = {10.1038/s41567-018-0318-2},
}

@article{KRT17,
  author  = {Kueng, Richard and Rauhut, Holger and Terstiege, Ulrich},
  title   = {Low rank matrix recovery from rank one measurements},
  journal = {Applied and Computational Harmonic Analysis},
  volume  = {42},
  number  = {1},
  pages   = {88--116},
  year    = {2017},
  doi     = {10.1016/j.acha.2015.07.007},
}

@inproceedings{KL17,
  author    = {Kimmel, Shelby and Liu, Yi-Kai},
  title     = {Phase retrieval using unitary 2-designs},
  booktitle = {Proceedings of the 2017 International Conference on Sampling Theory and Applications},
  pages     = {345--349},
  year      = {2017},
  doi       = {10.1109/SAMPTA.2017.8024414},
}

@misc{KZG16,
  author       = {Kueng, Richard and Zhu, Huangjun and Gross, David},
  title        = {Distinguishing quantum states using Clifford orbits},
  year         = {2016},
  eprint       = {1609.08595},
  howpublished = {arXiv preprint},
}

@article{OAG+16,
  author  = {Oszmaniec, Micha{\l} and Augusiak, Remigiusz and Gogolin, Christian and Ko{\l}ody{\'n}ski, Jan and Ac{\'\i}n, Antonio and Lewenstein, Maciej},
  title   = {Random bosonic states for robust quantum metrology},
  journal = {Physical Review X},
  volume  = {6},
  number  = {4},
  pages   = {041044},
  year    = {2016},
  doi     = {10.1103/PhysRevX.6.041044},
}

@article{Dev05,
  author  = {Devetak, Igor},
  title   = {The private classical capacity and quantum capacity of a quantum channel},
  journal = {IEEE Transactions on Information Theory},
  volume  = {51},
  number  = {1},
  pages   = {44--55},
  year    = {2005},
  doi     = {10.1109/TIT.2004.839515},
}

@article{DW04,
  author  = {Devetak, Igor and Winter, Andreas},
  title   = {Relating quantum privacy and quantum coherence: {An} operational approach},
  journal = {Physical Review Letters},
  volume  = {93},
  number  = {8},
  pages   = {080501},
  year    = {2004},
  doi     = {10.1103/PhysRevLett.93.080501},
}

@article{ADHW09,
  author  = {Abeyesinghe, Anura and Devetak, Igor and Hayden, Patrick and Winter, Andreas},
  title   = {The mother of all protocols: restructuring quantum information's family tree},
  journal = {Proceedings of the Royal Society A},
  volume  = {465},
  number  = {2108},
  pages   = {2537--2563},
  year    = {2009},
  doi     = {10.1098/rspa.2009.0202},
}

@article{DBWR14,
  author  = {Dupuis, Fr{\'e}d{\'e}ric and Berta, Mario and Wullschleger, J{\"u}rg and Renner, Renato},
  title   = {One-shot decoupling},
  journal = {Communications in Mathematical Physics},
  volume  = {328},
  number  = {1},
  pages   = {251--284},
  year    = {2014},
  doi     = {10.1007/s00220-014-1990-4},
}

@article{SDTR13,
  author  = {Szehr, Oleg and Dupuis, Fr{\'e}d{\'e}ric and Tomamichel, Marco and Renner, Renato},
  title   = {Decoupling with unitary approximate two-designs},
  journal = {New Journal of Physics},
  volume  = {15},
  number  = {5},
  pages   = {053022},
  year    = {2013},
  doi     = {10.1088/1367-2630/15/5/053022},
}

@article{HOW05,
  author  = {Horodecki, Micha{\l} and Oppenheim, Jonathan and Winter, Andreas},
  title   = {Partial quantum information},
  journal = {Nature},
  volume  = {436},
  number  = {7051},
  pages   = {673--676},
  year    = {2005},
  doi     = {10.1038/nature03909},
}

@article{HOW07,
  author  = {Horodecki, Micha{\l} and Oppenheim, Jonathan and Winter, Andreas},
  title   = {Quantum state merging and negative information},
  journal = {Communications in Mathematical Physics},
  volume  = {269},
  number  = {1},
  pages   = {107--136},
  year    = {2007},
  doi     = {10.1007/s00220-006-0118-x},
}

@article{NWY21,
  author  = {Nakata, Yoshifumi and Wakakuwa, Eyuri and Yamasaki, Hayata},
  title   = {One-shot quantum error correction of classical and quantum information},
  journal = {Physical Review A},
  volume  = {104},
  number  = {1},
  pages   = {012408},
  year    = {2021},
  doi     = {10.1103/PhysRevA.104.012408},
}

@article{WN23,
  author  = {Wakakuwa, Eyuri and Nakata, Yoshifumi},
  title   = {One-shot triple-resource trade-off in quantum channel coding},
  journal = {IEEE Transactions on Information Theory},
  volume  = {69},
  number  = {4},
  pages   = {2400--2426},
  year    = {2023},
  doi     = {10.1109/TIT.2022.3222775},
}

@article{PSW06,
  title={Entanglement and the foundations of statistical mechanics},
  author={Popescu, Sandu and Short, Anthony J. and Winter, Andreas},
  journal={Nature Physics},
  volume={2},
  number={11},
  pages={754--758},
  year={2006},
  doi={10.1038/nphys444},
}

@article{LPSW09,
  title={Quantum mechanical evolution towards thermal equilibrium},
  author={Linden, Noah and Popescu, Sandu and Short, Anthony J. and Winter, Andreas},
  journal={Physical Review E—Statistical, Nonlinear, and Soft Matter Physics},
  volume={79},
  number={6},
  pages={061103},
  year={2009},
  doi={10.1103/PhysRevE.79.061103},
}

@article{DRHRW16,
  title={Relative thermalization},
  author={Del Rio, L{\'\i}dia and Hutter, Adrian and Renner, Renato and Wehner, Stephanie},
  journal={Physical Review E},
  volume={94},
  number={2},
  pages={022104},
  year={2016},
  doi={10.1103/PhysRevE.94.022104},
}

@article{KYI20,
  title={Characterizing complexity of many-body quantum dynamics by higher-order eigenstate thermalization},
  author={Kaneko, Kazuya and Iyoda, Eiki and Sagawa, Takahiro},
  journal={Physical Review A},
  volume={101},
  number={4},
  pages={042126},
  year={2020},
  doi={10.1103/PhysRevA.101.042126},
}

@article{IH22,
  title={Solvable model of deep thermalization with distinct design times},
  author={Ippoliti, Matteo and Ho, Wen Wei},
  journal={Quantum},
  volume={6},
  pages={886},
  year={2022},
  doi={10.22331/q-2022-12-29-886},
}

@article{BDP23,
  title={Deep thermalization in constrained quantum systems},
  author={Bhore, Tanmay and Desaules, Jean-Yves and Papi{\'c}, Zlatko},
  journal={Physical Review B},
  volume={108},
  number={10},
  pages={104317},
  year={2023},
  doi={10.1103/PhysRevB.108.104317},
}

@article{HP07,
  title={Black holes as mirrors: {Q}uantum information in random subsystems},
  author={Hayden, Patrick and Preskill, John},
  journal={Journal of high energy physics},
  volume={2007},
  number={09},
  pages={120--120},
  year={2007},
  doi={10.1088/1126-6708/2007/09/120},
}

@article{NWK23,
  title={Black holes as clouded mirrors: the {Hayden-Preskill} protocol with symmetry},
  author={Nakata, Yoshifumi and Wakakuwa, Eyuri and Koashi, Masato},
  journal={Quantum},
  volume={7},
  pages={928},
  year={2023},
  doi={10.22331/q-2023-02-21-928},
}

@article{NMK25,
  title={Decoding general error correcting codes and the role of complementarity},
  author={Nakata, Yoshifumi and Matsuura, Takaya and Koashi, Masato},
  journal={npj Quantum Information},
  volume={11},
  number={1},
  pages={4},
  year={2025},
  doi={10.1038/s41534-024-00951-5},
}

@article{SS08,
  title={Fast scramblers},
  author={Sekino, Yasuhiro and Susskind, Leonard},
  journal={Journal of High Energy Physics},
  volume={2008},
  number={10},
  pages={065--065},
  year={2008},
  doi={10.1088/1126-6708/2008/10/065},
}

@article{LSH+13,
  title={Towards the fast scrambling conjecture},
  author={Lashkari, Nima and Stanford, Douglas and Hastings, Matthew and Osborne, Tobias and Hayden, Patrick},
  journal={Journal of High Energy Physics},
  volume={2013},
  number={4},
  pages={1--33},
  year={2013},
  publisher={Springer},
  doi={10.1007/JHEP04(2013)022},
}

@article{MSS16,
  title={A bound on chaos},
  author={Maldacena, Juan and Shenker, Stephen H and Stanford, Douglas},
  journal={Journal of High Energy Physics},
  volume={2016},
  number={8},
  pages={106},
  year={2016},
  doi={10.1007/JHEP08(2016)106},
}

@article{RY17,
  title={Chaos and complexity by design},
  author={Roberts, Daniel A. and Yoshida, Beni},
  journal={Journal of High Energy Physics},
  volume={2017},
  number={4},
  pages={1--64},
  year={2017},
  doi={10.1007/JHEP04(2017)121},
}

@article{PCMCH24,
  title={Hilbert-space ergodicity in driven quantum systems: Obstructions and designs},
  author={Pilatowsky-Cameo, Sa{\'u}l and Marvian, Iman and Choi, Soonwon and Ho, Wen Wei},
  journal={Physical Review X},
  volume={14},
  number={4},
  pages={041059},
  year={2024},
  publisher={APS},
  doi={10.1103/PhysRevX.14.041059},
}

@article{MCS+23,
  title={Benchmarking quantum simulators using ergodic quantum dynamics},
  author={Mark, Daniel K. and Choi, Joonhee and Shaw, Adam L. and Endres, Manuel and Choi, Soonwon},
  journal={Physical Review Letters},
  volume={131},
  number={11},
  pages={110601},
  year={2023},
  publisher={APS},
  doi={10.1103/PhysRevLett.131.110601},
}

@article{EAZ05,
  title={Scalable noise estimation with random unitary operators},
  author={Emerson, Joseph and Alicki, Robert and {\.Z}yczkowski, Karol},
  journal={Journal of Optics B: Quantum and Semiclassical Optics},
  volume={7},
  number={10},
  pages={S347--S352},
  year={2005},
  doi={10.1088/1464-4266/7/10/021},
}

@article{KLR+08,
  title={Randomized benchmarking of quantum gates},
  author={Knill, Emanuel and Leibfried, Dietrich and Reichle, Rolf and Britton, Joe and Blakestad, R. Brad and Jost, John D. and Langer, Chris and Ozeri, Roee and Seidelin, Signe and Wineland, David J.},
  journal={Physical Review A},
  volume={77},
  number={1},
  pages={012307},
  year={2008},
  doi={10.1103/PhysRevA.77.012307},
}

@article{MGE11,
  title={Scalable and robust randomized benchmarking of quantum processes},
  author={Magesan, Easwar and Gambetta, Jay M. and Emerson, Joseph},
  journal={Physical Review Letters},
  volume={106},
  number={18},
  pages={180504},
  year={2011},
  doi={10.1103/PhysRevLett.106.180504},
}

@article{MGE12,
  title={Characterizing quantum gates via randomized benchmarking},
  author={Magesan, Easwar and Gambetta, Jay M and Emerson, Joseph},
  journal={Physical Review A},
  volume={85},
  number={4},
  pages={042311},
  year={2012},
  doi={10.1103/PhysRevA.85.042311},
}

@article{KBC+14,
  title={Optimal quantum control using randomized benchmarking},
  author={Kelly, J. and Barends, R. and Campbell, B. and Chen, Y. and Chen, Z. and Chiaro, B. and Dunsworth, A. and Fowler, A. G. and Hoi, I.-C. and Jeffrey, E. and Megrant, A. and Mutus, J. and Neill, C. and O'Malley, P. J. J. and Quintana, C. and Roushan, P. and Sank, D. and Vainsencher, A. and Wenner, J. and White, T. C. and Cleland, A. N. and Martinis, John M.},
  journal={Physical Review Letters},
  volume={112},
  number={24},
  pages={240504},
  year={2014},
  doi={10.1103/PhysRevLett.112.240504},
}

@article{SBM+16,
  title={Characterizing errors on qubit operations via iterative randomized benchmarking},
  author={Sheldon, Sarah and Bishop, Lev S. and Magesan, Easwar and Filipp, Stefan and Chow, Jerry M. and Gambetta, Jay M.},
  journal={Physical Review A},
  volume={93},
  number={1},
  pages={012301},
  year={2016},
  doi={10.1103/PhysRevA.93.012301},
}

@article{GKL+21,
  title={Experimental implementation of non-Clifford interleaved randomized benchmarking with a controlled-{$S$} gate},
  author={Garion, Shelly and Kanazawa, Naoki and Landa, Haggai and McKay, David C. and Sheldon, Sarah and Cross, Andrew W. and Wood, Christopher J.},
  journal={Physical Review Research},
  volume={3},
  number={1},
  pages={013204},
  year={2021},
  doi={10.1103/PhysRevResearch.3.013204},
}

@article{OWE19,
  title={Randomized benchmarking for individual quantum gates},
  author={Onorati, Emilio and Werner, Albert H. and Eisert, Jens},
  journal={Physical Review Letters},
  volume={123},
  number={6},
  pages={060501},
  year={2019},
  doi={10.1103/PhysRevLett.123.060501},
}

@article{HRO+22,
  title={General framework for randomized benchmarking},
  author={Helsen, Jonas and Roth, Ingo and Onorati, Emilio and Werner, Albert H and Eisert, Jens},
  journal={PRX quantum},
  volume={3},
  number={2},
  pages={020357},
  year={2022},
  publisher={APS},
  doi={10.1103/PRXQuantum.3.020357},
}

@misc{HKR22,
  title={Randomized benchmarking with random quantum circuits},
  author={Heinrich, Markus and Kliesch, Martin and Roth, Ingo},
  journal={ArXiv preprints},
  eprint={2212.06181},
  year={2022},
}

@inproceedings{MPSY24,
  title={Simple constructions of linear-depth t-designs and pseudorandom unitaries},
  author={Metger, Tony and Poremba, Alexander and Sinha, Makrand and Yuen, Henry},
  booktitle={Proceedings of the 65th IEEE 65th Annual Symposium on Foundations of Computer Science},
  pages={485--492},
  year={2024},
  doi={10.1109/FOCS61266.2024.00038},
}

@misc{CSBH25,
  title={Unitary designs in nearly optimal depth},
  author={Cui, Laura and Schuster, Thomas and Brandao, Fernando and Huang, Hsin-Yuan},
  year={2025},
  eprint = {2507.06216},
  howpublished = {ArXiv preprints},
}

\end{document}